\documentclass[twocolumn,notitlepage,nofootinbib,floatfix,superscriptaddress]{revtex4-2}
\usepackage{amsmath, amsfonts, amsthm, amssymb}
\usepackage[margin=1in]{geometry}
\usepackage{graphicx,xcolor,tikz}
\usepackage[inkscapelatex=false]{svg}
\usepackage{hyperref}
\usepackage{qcircuit}
\usepackage{mathrsfs}
\usepackage{soul}
\usepackage[shortlabels]{enumitem}
\usepackage{algorithm}
\usepackage[noend]{algpseudocode}
\usepackage{mathtools}
\usepackage{placeins}

\algblock{REPEAT}{ENDREPEAT}
\algnotext{ENDREPEAT}

%make the footnotes one column in the appendix
\makeatletter 
\renewcommand\onecolumngrid{% <<<<<<
\do@columngrid{one}{\@ne}%
\def\set@footnotewidth{\onecolumngrid}% <<<<<<<<<<<<<<<<
\def\footnoterule{\kern-6pt\hrule width 1.5in\kern6pt}%
}

\renewcommand\twocolumngrid{% <<<<<<
        \def\footnoterule{% restore rule
        \dimen@\skip\footins\divide\dimen@\thr@@
        \kern-\dimen@\hrule width.5in\kern\dimen@}
        \do@columngrid{mlt}{\tw@}
}%
\makeatother  

\newtheorem{theorem}{Theorem}[section]
\newtheorem{lemma}[theorem]{Lemma}
\newtheorem{definition}[theorem]{Definition}
\newtheorem{remark}[theorem]{Remark}
\newtheorem{corollary}[theorem]{Corollary}

\newcommand{\row}{\operatorname{row}}
\newcommand{\floor}[1]{\left\lfloor{#1}\right\rfloor}
\newcommand{\supp}{\operatorname{supp}}
\newcommand{\code}[1]{$[\![ #1 ]\!]$}
\newcommand{\res}[2]{\left. #1 \right|_{#2}}
\newcommand{\mc}{\mathcal}
\def\tr{{\rm Tr}}

\newcommand*{\scrL}{\mathscr{L}}
\newcommand*{\cE}{\mathcal{E}}
\newcommand*{\cG}{\mathcal{G}}
\newcommand*{\cP}{\mathcal{P}}
\newcommand*{\bbZ}{\mathbb{Z}}
\newcommand*{\bx}{\mathbf{x}}

\newcommand{\sL}{\mathscr{L}}
\newcommand{\bp}{\mathbf{p}}
\algnewcommand{\LongState}[1]{\State%
	\parbox[t]{\dimexpr\linewidth -\algorithmicindent}{\strut #1\strut}}
\newcommand{\func}[1]{{\texttt{#1}}}
\DeclareMathOperator*{\ball}{\mathcal{B}}
\newcommand*{\diam}{\text{Diam}}

\linespread{1}
\makeatletter
\renewcommand{\frontmatter@abstractheading}{\vspace{0.1pt}\par}
\makeatother

\begin{document}

\author{Shouzhen Gu}
\thanks{These authors contributed equally to this work.\\\href{mailto:shouzhen.gu@yale.edu}{shouzhen.gu@yale.edu}\\\href{mailto:cahalibor@me.com}{cahalibor@me.com}}
\affiliation{Yale Quantum Institute \& Department of Applied Physics, Yale University, New Haven, CT, USA}
\author{Libor Caha}
\thanks{These authors contributed equally to this work.\\\href{mailto:shouzhen.gu@yale.edu}{shouzhen.gu@yale.edu}\\\href{mailto:cahalibor@me.com}{cahalibor@me.com}}
\affiliation{School of Computation, Information and Technology, Technical University of Munich, Germany \& \\
Munich Center for Quantum Science and Technology, Munich, Germany.}
\author{Shin Ho Choe}
\affiliation{School of Computation, Information and Technology, Technical University of Munich, Germany \& \\
Munich Center for Quantum Science and Technology, Munich, Germany.}
\affiliation{IQM Quantum Computers, Munich, Germany}
\author{Zhiyang He}
\affiliation{Department of Mathematics, Massachusetts Institute of Technology, Cambridge, MA, USA}
\author{Aleksander Kubica}
\affiliation{Yale Quantum Institute \& Department of Applied Physics, Yale University, New Haven, CT, USA}
\author{Eugene Tang}
\affiliation{Departments of Mathematics \& Physics, Northeastern University, Boston, MA, USA}

\title{Layer codes as partially self-correcting quantum memories}

\begin{abstract}
We investigate layer codes, a family of three-dimensional stabilizer codes that achieves optimal scaling of code parameters and energy barrier,
as candidates for self-correcting quantum memories.
First, we introduce two decoding algorithms for layer codes with provable guarantees for local stochastic and adversarial noise, respectively.
We then prove that layer codes constitute partially self-correcting quantum memories which outperform previously analyzed models such as the cubic and welded solid codes.
Notably, we argue that partial self-correction without the requirement of efficient decoding is more common than expected, as it arises solely from a diverging energy barrier. This draws a sharp distinction between partially self-correcting systems and partially self-correcting memories.
Another novel aspect of our work is an analysis of layer codes constructed from random Calderbank–Shor–Steane codes.
We show that these random layer codes have optimal scaling (up to logarithmic corrections) of code parameters and energy barrier.
Finally, we present numerical studies of their memory times and report behavior consistent with partial self-correction.
\end{abstract}

\maketitle

Self-correcting quantum memories (SCQM) are systems that can passively protect quantum information from noise~\cite{Terhal2015,QMFT}.
Development of SCQM in three or fewer spatial dimensions provides an attractive path to fault-tolerant quantum computers~\cite{Shor1996,Preskill1998}, and also offers fundamental insights into quantum many-body physics~\cite{zeng2019quantum}.
Despite their practical and theoretical importance, the existence of SCQM in 3D remains largely unresolved while in 2D, there are many no-go results~\cite{BTbound,HaahPreskill,Alicki2009}.

One may search for SCQM within stabilizer quantum error-correcting codes~\cite{gottesman1997stabilizercodesquantumerror}.
In that setting, the Hamiltonian describing the quantum memory is specified by the parity checks of the code, with the ground space corresponding to the codespace.
It is typically assumed that the system is weakly and locally coupled to a Markovian thermal bath~\cite{QMFT}; then, an important condition associated with self-correction is an energy barrier that grows with the linear system size $L$~\cite{Komar2016,temme2016thermalizationtimeboundspauli}.
The energy barrier captures the largest energy cost the environment needs to pay in order to change the encoded information by applying local errors.
The first example of a SCQM is the 4D toric code~\cite{dennisTopologicalQuantumMemory2002,alicki2008thermalstabilitytopologicalqubit}, with an energy barrier $\Delta=\Theta(L)$
\footnote{We use a mathematical notation, such as $O, \Theta$ and $\Omega$, to describe the limiting behavior of a function.}.
In 3D, the cubic code~\cite{Haah,BravyiHaah} has an energy barrier $\Delta=\Omega(\log L)$ and provided the first example of a partially SCQM, where the memory time increases polynomially in $L$ but only for $L<L^* = e^{\Theta(\beta)}$, where $\beta=1/T$ is the inverse temperature~\cite{bravyi2011analytic}.
Another example of a partially SCQM is the 3D welded solid code, with the memory time increasing subexponentially in $L$ for $L < L^* = e^{\Theta(\beta)}$~\cite{Michnicki_2014,michnicki_thesis,Siva_2017}.

Recently, Williamson and Baspin~\cite{williamson2023layer} introduced the layer code construction. 
Layer codes have parity checks that are geometrically local in 3D and can saturate the Bravyi-Poulin-Terhal-Haah (BPTH) bounds~\cite{BPTbound,BTbound,HaahBound}, achieving optimal scaling of code parameters and energy barrier for 3D stabilizer codes.
A layer code is formed by stacking layers of the surface code~\cite{Kitaev_2003,bravyi1998quantumcodeslatticeboundary} and intricately coupling them along one-dimensional line defects according to the parity-check matrices of an input Calderbank-Shor-Steane (CSS) code~\cite{cssCalderbankShor,Steane1996}.
Since thermal excitations within each layer can move without incurring energy penalty, one may hastily conclude that layer codes are not SCQM~\cite{Lin2024}.
This would be consistent with the fact that the 2D toric code is not a SCQM~\cite{Alicki2009}.
However, when excitations hit line defects, they split and this process is energetically suppressed.
Given a layer code with many line defects, it is unclear which process is dominant, and one might hope for SCQM.

In this work, we explore layer codes as potential SCQM.
Motivated by an intuition that layer codes with many line defects may exhibit better error suppression, we go beyond the scenario from Ref.~\cite{williamson2023layer} and consider layer codes constructed from random CSS codes with dense parity-check matrices.
By developing new proof techniques, we show that, with high probability and up to logarithmic corrections, these random layer codes have optimal scaling of code parameters and energy barrier.
Next, we introduce two decoding algorithms for layer codes:
(i) the cluster decoder, which has a provable threshold against local stochastic noise, and (ii) the concatenated decoder, which corrects against adversarial noise.
Using the concatenated decoder, we then prove that layer codes constructed from quantum Tanner codes~\cite{leverrier2022quantumtannercodes}, as well as random layer codes exhibit partial self-correction stronger than that of the cubic code. In particular, their memory times increase exponentially in $L$ for $L<L^* = e^{\Theta(\beta)}$.
We also argue that the phenomenon of partial self-correction is more common than previously expected, as it arises from a diverging energy barrier. 
Finally, we use statistical-mechanical methods to numerically study the memory times of random layer codes and observe scaling behavior consistent with partial self-correction.

\section{Layer codes}

To construct a layer code $\scrL(C)$, we specify a pair of parity check matrices $C=(H_X,H_Z)$ of any CSS code.
Then, for each qubit, $X$-check, and $Z$-check of $C$ we introduce one layer of the surface code with appropriate boundaries; see Fig.~\ref{fig_layer}.
For brevity, we refer to them as $Q$-, $X$-, and $Z$-layers, respectively.
The couplings between different intersecting layers are determined by $C$; see Appendix~\ref{sec:appendix_a}.
Intuitively, $\scrL(C)$ can be viewed as a concatenation of $C$ with the surface code, where parity checks of $C$ are implemented via lattice surgery~\cite{Horsman2012}.
In what follows, we sometimes use the linear system size $L= \Theta(n)$.

By construction, layer codes have parity checks of weight at most $6$.
Let $C$ be an \code{n,k,d} input code with $\rho_X n$ $X$-checks, $\rho_Z n$ $Z$-checks, maximum parity-check and qubit degree weight $w$, and energy barrier $\Delta_C$, where $\rho_X,\rho_Z\in\left(0,1/2\right)$ are integer multiples of $1/n$.
Then, as shown in Ref.~\cite{williamson2023layer}, the layer code $\scrL(C)$ is a \code{\Theta(n^3),k,\Omega(nd/w)}
code with energy barrier $\Omega(\Delta_C/w^2)$.
Recently, a series of constructions~\cite{TillichZemorHGP,hastings2020fiber,evra2020decodable,PKAlmostLinear,Breuckmann2020} culminated in proofs that quantum low-density parity-check (qLDPC) codes can have asymptotically good parameters~\cite{PK2022asymptoticallygoodquantumlocally,leverrier2022quantumtannercodes,dinur2023,lin2022goodquantumldpccodes,hsieh2025explicitlosslessvertexexpanders}.
For such input codes, layer codes have code parameters \code{\Theta(L^3),\Theta(L),\Theta(L^2)} and energy barrier $\Theta(L)$ that saturate the BPTH bounds in 3D~\cite{BPTbound,BTbound,HaahBound}.
\begin{figure}[t!]
    \centering
    \includegraphics[width=.9\columnwidth]{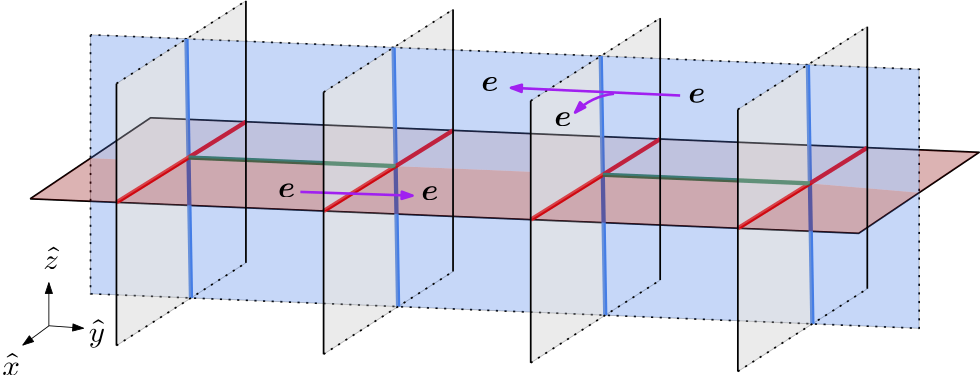}\\
    \caption{
    A layer code for $H_X = H_Z = (1,1,1,1)$.
Gray, red, and blue layers of the surface code correspond to qubits, $X$-checks and $Z$-checks, respectively.
Upon crossing line defects (thick lines), excitations may split, which is determined by the fusion rules; see Appendix~\ref{sec:appendix_a}.
}
    \label{fig_layer}
\end{figure}

\subsection{Quasiparticle excitations}

The memory system associated with a stabilizer code is described by a commuting Hamiltonian $H_\text{mem}$, whose terms correspond to parity checks.
The ground space of $H_\text{mem}$ corresponds to the codespace and excited states correspond to configurations of unsatisfied checks of the stabilizer code.
In the context of the surface code, unsatisfied checks correspond to locations of quasiparticle excitations, $e$ and $m$.
In the bulk, excitations can only be created in pairs, as captured by the fusion rules $e\times e = m\times m = 1$.
For layer codes, we also have $e$ and $m$ excitations.
Within individual layers, the fusion rules are the same as for the surface code except in the vicinity of line defects.
There, excitations can move or split between layers; see Fig.~\ref{fig_layer}.
For layer codes, there is a constant energy penalty to create excitations and their movement within individual layers is not energetically suppressed unless they cross line defects, where they split.

\subsection{Random layer codes}

One may hope that layer codes with many line defects give rise to SCQM, as the movement of excitations can be energetically suppressed.
Constructing such layer codes would require an input code with a dense parity-check matrix, which emphatically is not a qLDPC code.
In what follows, we prove that such layer codes achieve favorable parameters.

Let us define a random CSS code by choosing, uniformly at random, two parity-check matrices $H_X \in \mathbb{F}_2^{\rho_X n\times n}$ and $H_Z \in \mathbb{F}_2^{\rho_Z n\times n}$ satisfying $H_X H_Z^\mathrm{T}=0$.
We denote the resulting ensemble of codes as $\mathrm{CSS}_n(\rho_X,\rho_Z)$.
Note that a random CSS code has maximum check weight $\Theta(n)$ with high probability.
Consequently, the proof technique from Ref.~\cite{williamson2023layer} gives trivial bounds on the parameters of the resulting layer code.
Nevertheless, we are able to show the following theorem; see Appendix~\ref{sec:appendix_d} for details.

\begin{theorem}
\label{thm:random_layer_maintext}
Let $\scrL(C)$ be a layer code with the input code $C \sim \mathrm{CSS}_n(\rho_X,\rho_Z)$.
Then $\scrL(C)$ has distance $d = \Omega(n^2/\log n)$ and energy barrier $\Delta = \Omega(n/\log n)$ with high probability.
\end{theorem}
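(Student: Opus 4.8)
The plan is to bound both quantities by relating low-weight logical operators and low-barrier error paths of $\scrL(C)$ to the random input code $C$, keeping finer track of the geometry of the surface-code patches than the argument of Ref.~\cite{williamson2023layer}. That argument charges a string only the worst-case cost of navigating the line defects of a patch, which loses a factor of $w$ for the distance and $w^2$ for the barrier; when $w=\Theta(n)$ this is vacuous. The new input is a lower bound on the weight, resp.\ barrier cost, of any configuration in $\scrL(C)$ in terms of (i) the distance $d(C)$ and energy barrier $\Delta_C$ of $C$, and (ii) the $\Theta(n)\times\Theta(n)$ linear size of the individual patches, at the cost of a $\log n$ factor that is intrinsic to the refined accounting over the $\Theta(n)$ layers of the $\Theta(n^3)$-qubit code $\scrL(C)$.

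First I would record the properties of $C\sim\mathrm{CSS}_n(\rho_X,\rho_Z)$ that are needed downstream: with probability $1-e^{-\Omega(n)}$ it has $k=\Theta(n)$ logical qubits, distance $d(C)=\Omega(n)$ and energy barrier $\Delta_C=\Omega(n)$ --- the first two by Gilbert--Varshamov-type union bounds over low-weight $\mathbb{F}_2$-vectors and the third from the expansion of a random check matrix --- together with the facts that all row and column weights of $H_X,H_Z$ are $\Theta(n)$ and that small sets of columns expand. Next I would make precise the ``concatenation with the surface code'' picture: the couplings along line defects identify the logical degrees of freedom of the individual patches according to the Tanner graph of $C$, so that a nontrivial logical operator of $\scrL(C)$ projects, layer by layer onto the $Q$-layers, to a nontrivial logical operator $\ell$ of $C$, and any sequence of local Pauli errors dragging $\scrL(C)$ through a logical projects to an error path of $C$. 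The quantitative heart is then to show that each qubit layer carrying a nontrivial string contributes weight polynomially large in $n$ --- within a $\log n$ factor of the ideal $\Theta(n)$ --- so that a logical with small induced $C$-weight must itself be heavy; combined with $|\ell|\ge d(C)=\Omega(n)$ and a union bound over the possible induced $C$-level data and patch-internal routings, this rules out nontrivial logicals of weight below $\Omega(n^2/\log n)$. The barrier bound runs the same accounting along an error path: the path must at some step present a $C$-level syndrome of weight $\Omega(\Delta_C)=\Omega(n)$, each step of which costs $\Omega(1)$ violated checks per contributing layer up to the same $\log n$ loss, giving $\Delta=\Omega(n/\log n)$.

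The main obstacle is this quantitative routing estimate. For a qLDPC input code the line defects of a patch are $\Theta(n/w)=\Theta(n)$ apart, so every defect crossing is expensive and a nontrivial string in a patch has length $\Theta(n)$; for a random dense code they are $\Theta(1)$ apart, so crossings are individually cheap and a string is free to weave through many layers. One must instead argue that any such cheap weaving either forces the induced $C$-level logical (or error path) to be small --- contradicting $d(C),\Delta_C=\Omega(n)$ --- or violates an expansion property of $H_X,H_Z$, and quantifying this trade-off uniformly over all candidate configurations is what produces the logarithmic loss. Carrying this out is the content of Appendix~\ref{sec:appendix_d}.
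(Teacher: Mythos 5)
Your plan correctly identifies that the Williamson--Baspin bound is vacuous at $w = \Theta(n)$ and that a slab-by-slab accounting is needed, but the central quantitative step as you describe it would not close. You propose to show each nontrivial slab contributes $\Omega(n/\log n)$ weight by combining $d(C)=\Omega(n)$ (and $\Delta_C=\Omega(n)$ for the barrier) with a routing estimate. The difficulty is that the $C$-level logical $\ell$ obtained by pushing a slab's $e$-configuration onto the $Q$-layers via the fusion rules can have its Hamming weight blown up by a factor $\Theta(n)$ relative to the slab restriction --- each defect crossing spawns a $\Theta(n)$-weight row of $H_Z$ --- so $|\ell|\ge d(C)=\Omega(n)$ only forces $|\overline{Z}_i|=\Omega(1)$ per slab and a total of $\Omega(n)$, not $\Omega(n^2/\log n)$. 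Neither the Hamming distance of $C$ nor an ``expansion'' property of a dense random $H_X,H_Z$ is the right invariant (dense random matrices are not expanders in the combinatorial sense you would need), and the paper never proves or uses $d(C)=\Omega(n)$ or $\Delta_C=\Omega(n)$.

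What the paper does instead is to introduce a new metric, the $Y$-weight (Definition~\ref{def:YC}), measured against the family of prefix-truncations of the rows of $H_Z$, precisely because the fusion rules \emph{preserve} $Y$-weight up to a constant even though they blow up Hamming weight. Lemma~\ref{lem:f_bound} is then a first-moment bound: the $\le 2^{O(fn)}$ vectors of $Y$-weight $\le fn/\log n$ (the $\log n$ is what makes the binomial count over the size-$\Theta(n^2)$ spanning set single-exponential) contain no nontrivial $Z$-logical with probability $1-2^{-\Omega(n)}$, a genuinely different random-coding statement from Gilbert--Varshamov. Summing over the $\rho_X n$ slabs gives Theorem~\ref{thm:layer_distance}. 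For the barrier there is no reduction to $\Delta_C$ at all: the proof canonicalizes an operator (Lemma~\ref{lem:mapping}), records it as a string-indicator tensor, bounds the number of low-weight boundaries in $\mathfrak{B}_f$ (Lemma~\ref{lem:Bsize}), and applies Hoeffding's inequality to the $\Theta(n)$ independent check parities $\alpha^k(H_X)$ to show the $X$-layer syndrome cannot stay small (Lemma~\ref{lem:energy_main}), then union-bounds over $\mathfrak{B}_f$. Your ``union bound over induced $C$-level data and patch-internal routings'' points in this direction, but without the $Y$-weight and the Hoeffding anti-concentration step the argument does not close.
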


As a corollary, a random layer code has parameters \code{\Theta(n^3),\Theta(n),\Omega(n^2/\log n)} and energy barrier $\Omega(n/\log n)$, with high probability.
Notably, up to logarithmic corrections, this is optimal scaling that saturates the BPTH bounds and is achieved without using asymptotically good qLDPC codes, in contrast with all previous works~\cite{williamson2023layer,portnoy2023localquantumcodessubdivided,lin2024geometricallylocalquantumclassical}.
This has practical implications, as the codes are smaller, making them amenable to numerical simulations.

\section{Layer code decoders}

We now describe two decoding algorithms for layer codes, the cluster decoder and the concatenated decoder.
The cluster decoder builds on the renormalization group~\cite{bravyi2011analytic,DuclosCianci2010} and union-find~\cite{Delfosse2021almostlineartime,delfosse2022toward} decoders.
It partitions the error syndrome into local clusters and tries to remove them independently.
The concatenated decoder is inspired by the matching~\cite{dennisTopologicalQuantumMemory2002,Brown2022} and color code decoders~\cite{Kubica2023}.
It first removes the low-level error syndrome of individual surface code layers, which results in some high-level error syndrome of the input code that is subsequently decoded using any decoder of the input code.

\begin{figure}[t!]
    \centering
    \includegraphics[width=0.9\linewidth]{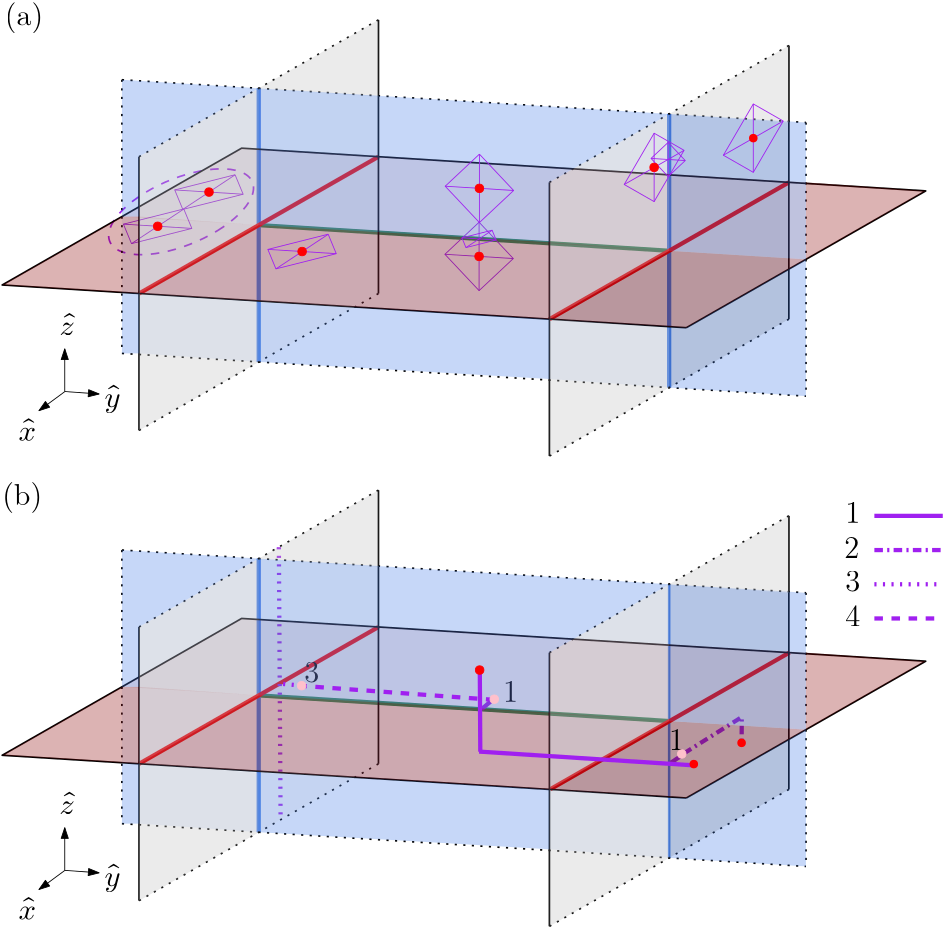}
    \caption{
Decoding algorithms for layer codes.
Red dots represent initial excitations.
(a) The cluster decoder gradually grows and merges clusters of excitations (purple shapes), and removes them if they are neutral (dashed oval); other clusters are not neutral.
(b) The concatenated decoder sequentially matches excitations in the $Z$-, $Q$-, and $X$-layers. At different stages 1--4, the decoder finds Pauli operators (purple lines); new excitations may also be created (pink dots labeled 1 and 3).}
\label{fig_decoders}
\end{figure}

\subsection{Cluster decoder}\label{sec:main_cluster_decoder}

The cluster decoder starts by initializing a list of clusters, each corresponding to the location of an individual excitation.
While the list is nonempty, the cluster decoder:
(i) grows each cluster by adding locations in its neighborhood,
(ii) merges any overlapping clusters, and 
(iii) removes any neutral clusters.
To determine if a cluster is neutral, we check whether a system of linear equations associated with that cluster is solvable.
When the list becomes empty, the cluster decoder returns a recovery operator composed of local operators found in step (iii) (that remove neutral clusters); see Fig.~\ref{fig_decoders}(a).

Using arguments similar to those in Ref.~\cite{Bravyi2013}, we prove that the cluster decoder has a nonzero QEC threshold for locally stochastic noise; that is, for sufficiently small noise strength, the logical error rate goes to zero as the layer code size grows.
This result in particular applies to independent and identically distributed depolarizing noise, viewed as a special case of the locally stochastic model.
See Appendix~\ref{sec:appendix_b}.

\subsection{Concatenated decoder}

The concatenated decoder applies minimum-weight perfect matching (MWPM)~\cite{dennisTopologicalQuantumMemory2002} to different types of surface code layers in stages and matches excitations between them with a decoder of the input code, see Fig.~\ref{fig_decoders}(b).
In particular, to decode $Z$ errors, it proceeds as follows.
\begin{enumerate}[noitemsep, topsep=0pt,parsep=0pt,leftmargin=1.7em,topsep=0.5em] 
\item For each $Z$-layer, apply MWPM to eliminate excitations within this layer; this may create new excitations on the $Q$- and $X$-layers.
\item For each $Q$-layer, apply MWPM to eliminate excitations on this layer; this may create new excitations on the $X$-layers.
\item For each $X$-layer, evaluate the parity of the number of excitations on this layer. Treating the resulting bit string as a syndrome, use a decoder of the input code to obtain a correction, and apply vertical strings on the corresponding $Q$-layers.
\item Match the excitations on the $X$-layers in an arbitrary way.
\end{enumerate}

To analyze partial self-correction of random layer codes, we will also consider a version of the concatenated decoder that matches all excitations on the $Z$- and $Q$-layers to the top boundary in Steps 1 and 2.
For either version, it is important to invoke the decoder of the input code in Step 3 to obtain an even number of excitations on each $X$-layer, as boundaries of the $X$-layers do not condense $e$ excitations. Decoding $X$ errors is analogous.
Note that if the input code decoder is efficient, then so is the concatenated decoder. We prove that the concatenated decoder can correct high-weight adversarial errors.

\begin{theorem}
    Let the input $C$ be an \code{n,k,d} code with parity checks of weight at most $w$.
    Then, given a decoding algorithm for $C$ that corrects all $Z$ errors up to weight $\alpha d$, where $\alpha>0$, the concatenated decoder corrects all $Z$ errors of the layer code $\mathscr L(C)$ of weight $O(\alpha d n /w)$.
\end{theorem}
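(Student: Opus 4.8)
The plan is to follow the error and its syndrome through the four steps of the concatenated decoder, reduce the problem to a decoding instance for the input code $C$, and show that the induced $C$-error is correctable whenever the layer-code error has weight $O(\alpha d n/w)$. Fix a $Z$ error $E$ on $\scrL(C)$ with $|E|\le T:=c\,\alpha d n/w$ for a small constant $c>0$, and let $R_1,\dots,R_4$ be the corrections applied in Steps~1--4, with $E_j=ER_1\cdots R_j$ the residual error. Since Steps~1 and 2 apply MWPM independently on surface-code layers, each correction has weight at most the residual error weight on each layer, so $|R_1|,|R_2|,|E_2|=O(|E|)$; the only extra feature is that matching strings crossing line defects push excitations onto adjacent layers, but a weight-$\le w$ check couples a given layer to only $O(w)$ others, which affects the geometry but not the total weight. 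After Step~2, $E_2$ has excitations only on $X$-layers, and since $X$-layer boundaries do not condense $e$, the parity $s_c\in\mathbb F_2$ of excitations on each $X$-layer $c$ is well defined; one checks, using the fusion rules along the line defects, that $s=H_X v$, where $v\in\mathbb F_2^n$ records which $Q$-layers carry a homologically nontrivial vertical string in $E_2$, so that $v$ is precisely the projection of $E_2$ to the input code.

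The heart of the proof is the bound $|v|=O(w|E|/n)$. I would obtain it from the syndrome-resolved version of the Williamson--Baspin distance estimate: any $Z$ error on $\scrL(C)$ whose projection to $C$ has weight $m$ must itself have weight $\Omega(nm/w)$. The same membrane/cleaning argument used in Ref.~\cite{williamson2023layer} to prove $d_{\scrL}=\Omega(nd/w)$ gives this if one keeps track of $m$ rather than specializing to a single logical representative: each nonzero coordinate of $v$ forces a vertical string threading a region of linear size $\Omega(n/w)$ in the corresponding $Q$-layer, and the strings for distinct coordinates can be charged to disjoint portions of the error. Applied to $E_2$ this yields $|v|=O(w|E_2|/n)=O(w|E|/n)\le\alpha d$ once $c$ is small enough. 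This is the step I expect to be the main obstacle, since the estimate must apply to the \emph{residual} error produced by a myopic local decoder rather than to a clean minimum-weight operator; one has to rule out MWPM manufacturing a cheap long vertical string that was not already ``paid for'' in $E$, which requires care with the splitting of matching strings at the line defects.

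Finally, since $|v|\le\alpha d$, the assumed decoder for $C$ returns in Step~3 a correction $\hat v$ with $H_X\hat v=s$ and $v+\hat v\in\row(H_Z)$, so that $Z^{v+\hat v}$ is a $Z$-stabilizer of $C$. Applying $\hat v$ as vertical $Q$-layer strings makes the parity even on every $X$-layer, so Step~4 pairs up the remaining $X$-layer excitations within each layer and removes all syndrome. It remains to check that no logical error was introduced: $ER_1R_2R_3R_4$ has trivial syndrome and equals the lift of $Z^{v+\hat v}$ --- which is a stabilizer of $\scrL(C)$, since $Z$-checks of $C$ are implemented as products of $\scrL$-stabilizers --- times an operator $P$ of weight $O(|E|)$ collecting $R_1,R_2,R_4$ and the local parts of $E_2$. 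Hence $P$ has trivial syndrome and weight $O(|E|)<d_{\scrL}=\Omega(nd/w)$ for $c$ small, so $P\in\mathcal S(\scrL(C))$, and therefore $ER_1R_2R_3R_4\in\mathcal S(\scrL(C))$, as desired. The case of $X$ errors is analogous, with the layer order $X$-, $Q$-, $Z$- reversed.
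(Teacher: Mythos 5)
There is a genuine gap at exactly the step you flag as the main obstacle, and the route you sketch to fill it is not the one the paper takes (and would need substantial repair to work).

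You propose to bound $|v|=O(w|E|/n)$ by a ``syndrome-resolved'' version of the distance bound applied to the residual error $E_2$ after Steps~1--2. The problem, which you yourself notice, is that $E_2$ is not a minimum-weight operator: MWPM applied per layer, blind to the defect structure, could in principle produce a long vertical string in $E_2$ that was not present in $E$. Your remedy --- ``each nonzero coordinate of $v$ forces a vertical string threading a region of linear size $\Omega(n/w)$ \emph{in $E_2$}, and these can be charged to disjoint portions of the error'' --- only works if the portions of $E_2$ being charged are controlled by $E$, which is precisely what needs proving. The paper avoids this circularity by never invoking a distance estimate for the residual. Instead it exploits the per-layer minimum-weight guarantee of MWPM directly: on a $Z$-layer $z$, if $|e_0|_z| < L/2$ then the MWPM correction satisfies $|\hat f_z| \le |e_0|_z|$, so $e_0|_z + \hat f_z$ has weight $< L$ and cannot wrap the layer (Lemma~\ref{lem:Zlayercleaning}); the same reasoning on $Q$-layers (Lemma~\ref{lem:Qlayercleaning}) shows that a homologically nontrivial vertical string can survive only on $Q$-layers $q$ for which $|e_0|_{q\cup \mathbf z_q}| \ge L/4$, a set $R_Q'$ defined purely in terms of the \emph{original} error. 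A one-line averaging argument (Lemma~\ref{lem:inputdecodervalid}) then gives $|R_Q'| \le 4w|e_0|/L \le \alpha d$. This is what ``rules out MWPM manufacturing a cheap long vertical string'': not an appeal to a strengthened distance bound, but the observation that MWPM can only place a spanning string on a layer where the original error already paid $\Omega(L)$.

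Your final step is also looser than it should be: you want to write $ER_1R_2R_3R_4$ as a stabilizer of $\scrL(C)$ times a low-weight trivial-syndrome operator $P$, then argue $|P|<d_{\scrL}$. But $R_3$ has weight $\Theta(n|\hat v|)$, which for $|\hat v|$ near $\alpha d$ is comparable to $d_{\scrL}=\Theta(nd/w)$, so the split into ``big stabilizer part'' plus ``small $P$'' is delicate and you do not verify that the big part is exactly a stabilizer of $\scrL(C)$. The paper sidesteps this by a structural argument (Lemmas~\ref{lem:Qlayerstringsstabilizer} and \ref{lem:RXopisstab}): a vertical-string lift of any $Z$-stabilizer of $C$ is stabilizer-equivalent to an operator on $R_X$, and any trivial-syndrome $Z$-operator supported on $R_X$ is automatically a stabilizer because $X$-layer boundaries do not condense $e$. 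No weight-versus-distance comparison is needed in the final step.
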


As a corollary, we obtain that for any layer code constructed from a qLDPC code, the concatenated decoder corrects adversarial errors of weight linear in the layer code distance.

\section{Self-correction and quantum memories}

We now focus on the performance of finite-temperature systems in terms of storing quantum information.
Let $\{C_n\}$ be a family of stabilizer codes.
Each $C_n$ is associated with a memory system described by a Hamiltonian $H^{(n)}_\text{mem}$.
We work in the Davies weak-coupling limit~\cite{Davies}, i.e., each memory system is weakly-coupled to its own thermal bath at inverse temperature $\beta$, which is Markovian and acts locally on the physical degrees of freedom of the memory system.
At time $t=0$, the state $\rho^{(n)}(0)$ of the memory system is in the ground space of $H^{(n)}_{\rm mem}$.
The evolution of $\rho^{(n)}(t)$ is determined by the Lindblad master equation
\begin{align}
\frac{d\rho^{(n)}(t)}{dt}=-i\left[H^{(n)}_{\rm mem}, \rho^{(n)}(t)\right]+\mathcal{L}^{(n)}(\rho(t)),
\end{align}
where $\mathcal{L}^{(n)}$ is a Lindblad generator describing energy dissipation from the memory system to the bath and it is chosen so that the evolution drives $\rho^{(n)}(t)$ to the canonical Gibbs state $\rho^{(n)}_{\beta}\propto \exp\left(-\beta H^{(n)}_{\rm mem}\right)$; see Refs.~\cite{Lindblad,GKS,BreuerPetruccione}.

To analyze how the ability to recover $\rho^{(n)}(0)$ from $\rho^{(n)}(t)$ changes in time $t$, we need a decoder $\Phi$, which is a quantum channel implementing an appropriate recovery and returning
$\Phi\left(\rho^{(n)}(t)\right)$ that, hopefully, is close to $\rho^{(n)}(0)$.
For concreteness, we consider decoders of the form
\begin{align}
\Phi(\cdot)= \sum_\sigma P_\sigma \Pi_\sigma (\cdot) \Pi_\sigma P_\sigma,
\end{align}
where the summation is over different error syndromes, $P_\sigma$ is a Pauli operator with syndrome $\sigma$, and $\Pi_\sigma$ is a projector onto the subspace with syndrome $\sigma$.
Let $\epsilon\in(0,1)$ be a constant.
We define the memory time $t_\text{mem}$ as the maximum time such that for all $t<t_\text{mem}$, there exists $\Phi$ satisfying
\begin{equation}
\left\|\Phi\left(\rho^{(n)}(t)\right) - \rho^{(n)}(0)\right\|_1 \leq \epsilon
\end{equation}
for any initial state $\rho^{(n)}(0)$.
We then say that the family $\{C_n\}$ is self-correcting at inverse temperature $\beta$ if the memory time diverges with $n$.

It may happen that $t_\text{mem}$ grows with $n$ for $n\leq n^*$ and is not guaranteed to increase for $n>n^*$, where $n^*$ is some cutoff size. 
This motivates the following definition: we say that the family $\{C_n\}$ is partially self-correcting if the maximum memory time $t^*_\text{mem}$ scales superexponentially in $\beta$, i.e., $t^*_\text{mem} = \exp(\omega(\beta))$; see Definition~\ref{def:pscqm}. For instance, the 4D toric code is self-correcting with $t_\text{mem} = \Theta(\exp (\beta n^{1/4}))$; the 3D cubic code and 3D welded solid code are partially self-correcting with 
$t^*_\text{mem} = \exp(\Omega(\beta^2))$ and $t^*_\text{mem} = \exp(\exp(\Omega(\beta))$, respectively, and with cutoff size $n^*=\exp(\Omega(\beta))$.

We arrive at the following general result.
\begin{theorem}
\label{thm:partial_self_corr}
Let $\{C_n\}$ be a family of qLDPC codes, where $C_n$ has $n$ physical and $k_n$ logical qubits, and energy barrier $\Delta_n$.
If $\Delta_n = \Omega(\max(k_n,\log n))$, then $\{C_n\}$ is partially self-correcting with maximum memory time $t^*_\mathrm{mem} = \exp[\Omega(\Delta_{n^*}\beta)]$ and cutoff size $n^* = \exp(\beta/2)$.
\end{theorem}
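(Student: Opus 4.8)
The plan is to exhibit a decoder $\Phi$ of the stated syndrome-based form and to show that, provided the system size obeys $n\le n^{*}=e^{\beta/2}$, its failure probability after time $t$ is at most $\mathrm{poly}(n)\,t\,e^{-\Omega(\beta\Delta_n)}$. This forces $t_{\mathrm{mem}}(n)=e^{\Omega(\beta\Delta_n)}$ for all $n\le n^{*}$, and optimizing over $n\le n^{*}$ while using $\Delta_{n^{*}}=\Omega(\log n^{*})=\Omega(\beta)$ yields the claimed $t^{*}_{\mathrm{mem}}=e^{\Omega(\beta\Delta_{n^{*}})}=\exp(\omega(\beta))$.

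First I would unravel the Davies evolution into a mixture of Pauli error trajectories, $\rho^{(n)}(t)=\sum_\gamma p(\gamma)\,E_\gamma\,\rho^{(n)}(0)\,E_\gamma^{\dagger}$, where each $E_\gamma$ is assembled from single-site jumps, so that its partial products form a walk in Pauli space with single-qubit steps whose energy is (twice) the syndrome weight. The decoder, given an observed syndrome $\sigma$, applies a Pauli $P_\sigma$ joined to the identity by a walk whose syndrome weight stays strictly below $\Delta_n$ (and an arbitrary correction when no such Pauli exists). The crucial point, which is essentially the definition of the energy barrier, is that any two Paulis of syndrome $\sigma$ that are both reachable from the identity along sub-$\Delta_n$ walks differ by a stabilizer: splicing the two walks (after multiplying one of them by a fixed Pauli) produces a sub-$\Delta_n$ walk from the identity to a trivial-syndrome Pauli, which by definition of $\Delta_n$ cannot be a nontrivial logical operator. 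Hence $P_\sigma$ is well defined modulo $\mathcal{S}$, and on any trajectory whose syndrome weight never reaches $\Delta_n$ on $[0,t]$ the decoder returns $\rho^{(n)}(0)$ exactly; therefore $\big\|\Phi(\rho^{(n)}(t))-\rho^{(n)}(0)\big\|_{1}\le 2\,\Pr[\tau_n\le t]$, where $\tau_n$ is the first time the syndrome weight reaches $\Delta_n$.

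Next I would estimate $\Pr[\tau_n\le t]$ by a detailed-balance and Peierls-type argument, using the qLDPC structure: the number of checks is $M=O(n)$, the Hamiltonian terms and jump operators have $O(1)$ weight, the jump rates are bounded, and an energy-raising move of gap $\delta$ carries rate suppressed by $e^{-\beta\delta}$ relative to its reverse. Union-bounding over the $\binom{M}{\Delta_n}=e^{O(\Delta_n\log n)}$ subsets of $\Delta_n$ simultaneously excited checks — and, when one also records which logical sector a prospective fault realizes, over the $\le 4^{k_n}$ logical classes — one obtains
\begin{equation}
\Pr[\tau_n\le t]\;\le\;\mathrm{poly}(n)\,t\,\exp\!\left[O(\Delta_n\log n)+O(k_n)-c\,\beta\Delta_n\right]
\end{equation}
for an absolute constant $c>0$. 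Turning this heuristic into a rigorous first-passage bound for the Davies generator and controlling the entropic prefactor is the technical core of the argument and the step I expect to be the main obstacle.

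Finally, for $n\le n^{*}=e^{\beta/2}$ one has $\log n\le\beta/2$, so $O(\Delta_n\log n)\le\tfrac12\beta\Delta_n+O(\Delta_n)$; combined with the hypothesis $\Delta_n=\Omega(\max(k_n,\log n))$, this renders the $O(k_n)$ term and the prefactor $\mathrm{poly}(n)\le\mathrm{poly}(n^{*})=e^{O(\beta)}$ subdominant for all sufficiently large $\beta$, leaving $\Pr[\tau_n\le t]\le t\,e^{-\Omega(\beta\Delta_n)}$. Imposing $\le\epsilon/2$ gives $t_{\mathrm{mem}}(n)=e^{\Omega(\beta\Delta_n)}$ for every $n\le n^{*}$; taking $n=n^{*}$ and using $\Delta_{n^{*}}=\Omega(\beta)$ produces $t^{*}_{\mathrm{mem}}=e^{\Omega(\beta\Delta_{n^{*}})}=\exp(\Omega(\beta^{2}))=\exp(\omega(\beta))$, so $\{C_n\}$ is partially self-correcting with the stated cutoff and maximum memory time. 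For $n>n^{*}$ the entropic factor $e^{O(\Delta_n\log n)}$ may overwhelm $e^{-c\beta\Delta_n}$, which is exactly why the guarantee cannot be pushed past $n^{*}$.
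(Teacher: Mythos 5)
Your decoder coincides with the paper's energy--barrier decoder, and your argument that it is well defined modulo stabilizers (by splicing two sub-$\Delta_n$ walks and observing that the result is a sub-$\Delta_n$ walk to a trivial-syndrome operator, which cannot be a nontrivial logical) is a correct and complete version of the observation the paper invokes only implicitly. The overall contour — decoder succeeds whenever the trajectory stays below the barrier, so the trace distance is controlled by a first-passage probability, then plug in $n^* = e^{\beta/2}$ and $\Delta_{n^*} = \Omega(\beta)$ — also matches.

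The genuine gap is the probabilistic bound, and it is not just a missing technical lemma: your sketched route would give a strictly weaker inequality than the one the proof actually needs. A Peierls-style union bound over the $\binom{M}{\Delta_n}$ syndrome patterns of weight $\Delta_n$ produces an entropic prefactor $\exp\!\bigl(O(\Delta_n\log n)\bigr)$, and at the cutoff $\log n^* = \beta/2$ this becomes $\exp\!\bigl(\Theta(\beta\Delta_n)\bigr)$ — it competes head-on with the Boltzmann suppression $e^{-c\beta\Delta_n}$. Whether the product decays depends on the unknown ratio of the constants $c$ and the $O(\cdot)$; the step you call a simplification, $O(\Delta_n\log n)\le \tfrac12\beta\Delta_n + O(\Delta_n)$, silently sets the hidden constant to~$1$, which you cannot do. The paper avoids this entirely by citing Lemma~1 and Eq.~(26) of Bravyi--Haah \cite{bravyi2011analytic}, whose bound
\begin{align*}
\bigl\|\Phi(\rho(t)) - \rho(0)\bigr\|_1 \le O(t)\,2^{k}\,n\,e^{-am\beta}, \qquad n \le e^{(1-a)\beta},
\end{align*}
has only a \emph{polynomial} prefactor in $n$ (coming from the $O(n)$ jump operators) rather than an entropic factor exponential in $\Delta_n$; that difference is precisely what lets one take $a=1/2$, hence $n^* = e^{\beta/2}$, independently of any constant ratio. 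Bravyi and Haah get this by following individual Pauli trajectories under the Davies generator and paying the Boltzmann penalty at the barrier-crossing step, not by summing over syndrome configurations at the barrier. So while your decoder and your final bookkeeping at $n = n^*$ agree with the paper, the first-passage estimate at the heart of the theorem needs the Bravyi--Haah trajectory analysis (or an equivalently sharp substitute), not a naive entropy count — and you correctly flag this as the step you cannot complete.
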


\begin{proof}
We start by introducing the energy barrier decoder $\Phi$.
If $\sigma$ is a syndrome for some Pauli error $E$ with energy barrier smaller than $\Delta_n$, then set $P_\sigma = E$; otherwise, choose as $P_\sigma$ any Pauli operator with syndrome $\sigma$.
We remark that the decoder $\Phi$ is primarily of conceptual interest, as the task of finding an appropriate Pauli error $E$ may be computationally intractable.

By definition of energy barrier, the decoder $\Phi$ successfully corrects all Pauli errors with energy barrier smaller than $\Delta_n$.
Using Lemma~1 and Eq.~(26) in Ref.~\cite{bravyi2011analytic}, we obtain
\begin{equation}
\left\|\Phi\left(\rho^{(n)}(t)\right) - \rho^{(n)}(0)\right\|_1 \le O(t) n 2^{k_n} e^{-\beta(\Delta_n-f)/2}
\end{equation}
for all $n\le n^* = e^{\beta/2}$, where $f$ is a constant depending on locality of $C_n$.
We equate the right-hand side of the inequality to a constant $\epsilon$ and find the maximum memory time
\begin{equation}
t^*_\mathrm{mem} \ge \exp(\beta(\Delta_n - f)/2 - k_n\log 2 - \log n).
\end{equation}
Since $\Delta_n=\Omega(\max(k_n,\log n)) = \omega(f)$, the term $\beta\Delta_n/2$ dominates in the exponent and the result $t^*_\mathrm{mem} = \exp[\Omega(\Delta_{n^*}\beta)]$ follows. In particular, $\Delta_n = \Omega(\log n)$ implies $t^*_\mathrm{mem}=\exp(\Omega(\beta^2))$, so the definition of partial self-correction is satisfied.
\end{proof}

As a corollary, by invoking the energy barriers from Ref.~\cite{williamson2023layer} and Theorem~\ref{thm:random_layer_maintext}, we obtain that quantum Tanner layer codes and random layer codes are partially self-correcting; their maximum memory times are $t^*_\mathrm{mem} = \exp(\exp(\Omega(\beta)))$ with cutoff sizes $n^* = \exp(\beta/2)$.
Such long memory times may obviate the need for full self-correction in practice.
Additionally, Theorem~\ref{thm:partial_self_corr} immediately implies that the codes of Refs.~\cite{portnoy2023localquantumcodessubdivided,lin2024geometricallylocalquantumclassical} are partially self-correcting.

Conventionally, a quantum memory requires the existence of a computationally efficient decoder~\cite{PhysRevA.91.032303,Brell_2016,QMFT}.
Thus, we distinguish between (partially) self-correcting systems and (partially) self-correcting \emph{memories} by the efficiency of the decoder.
The proof of Theorem~\ref{thm:partial_self_corr} uses the energy barrier decoder, which is not necessarily efficient.
Thus, whether the codes in Refs.~\cite{portnoy2023localquantumcodessubdivided,lin2024geometricallylocalquantumclassical} are partially SCQM is still unknown.
By using the concatenated decoder, we obtain the following stronger result about layer codes.

\begin{theorem}[informal]\label{thm:scqm_informal}
Quantum Tanner layer codes and random layer codes are partially SCQM with respect to the concatenated decoder.
\end{theorem}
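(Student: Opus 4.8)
The plan is to run the argument of Theorem~\ref{thm:partial_self_corr} with the (inefficient) energy-barrier decoder replaced by the concatenated decoder, or, for random layer codes, by its modified version that matches to the top boundary. The first step is to reduce the finite-temperature question to a purely combinatorial one about which errors the decoder corrects, using the Davies-generator machinery of Refs.~\cite{bravyi2011analytic,BravyiHaah}. Concretely: for a layer code of linear size $L$ (block length $\Theta(L^3)$), with $k$ logical qubits and energy barrier $\Delta$, and for $L\le L^\star=e^{\Theta(\beta)}$, one shows that with probability at least $1-O(t)\,\mathrm{poly}(L)\,2^{k}\,e^{-\Omega(\beta\delta)}$ the error accumulated by time $t$ is, modulo a stabilizer, a \emph{sparse} error: a product of local operators supported on pairwise well-separated regions of diameter $O(1)$. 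Here $\delta$ is the smallest energy penalty the bath must pay to grow one such region into a configuration on which the decoder fails. Thus it suffices to establish (a) the (modified) concatenated decoder corrects every sufficiently sparse error, and hence (b) $\delta=\Omega(\Delta)$, and then to track the counting prefactors so that $\beta\Delta$ dominates the exponent.

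For the decoding claim (a) I would run the concatenated decoder stage by stage on a sparse $Z$-error $E$ with $m$ clusters. In Steps~1--2, MWPM acting on each $Z$- and $Q$-layer sees within that layer only a union of small, separated clusters, and annihilates each by a local operator; the key point is to show that a cluster straddling a line defect splits across layers in a way that leaves only $O(1)$ residual excitations per downstream layer, so the accumulated high-level syndrome on the $X$-layers, read off in Step~3, has input-code weight $O(m)$. Below the cutoff the thermal budget forces $m$ to be far smaller than $d/w$, with $d$ the input-code distance, so the input-code decoder in Step~3 returns a correction whose vertical lifts make the total layer-code residual a stabilizer. For quantum Tanner layer codes this uses the efficient linear-distance decoder of Ref.~\cite{leverrier2022quantumtannercodes}; for random layer codes, where no efficient full-distance decoder is available, I would instead use the modified decoder and argue directly that a sparse error produces a sparse residual high-level syndrome, correctable without invoking worst-case distance. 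Then $\delta=\Omega(\Delta)$ because any configuration on which the decoder fails must contain a would-be logical string that threads and splits across $\Omega(\Delta)$ worth of line defects, which (by the energy-barrier lower bounds of Ref.~\cite{williamson2023layer} and Theorem~\ref{thm:random_layer_maintext}) costs energy $\Omega(\Delta)$. Substituting into the tail bound gives failure probability at most $O(t)\,\mathrm{poly}(L)\,2^{k}\,e^{-\Omega(\beta\Delta)}$, so equating to $\epsilon$ yields $t^\star_\mathrm{mem}\ge\exp\!\big(\Omega(\beta\Delta)-k\log 2-O(\log L)\big)$; since both families satisfy $\Delta=\Omega(\max(k,\log L))$ (quantum Tanner: $\Delta=\Theta(L)$, $k=\Theta(L)$; random: $\Delta=\Omega(L/\log L)$, $k=\Theta(L)$) and $L\le L^\star$ keeps $\beta\Delta$ the leading term, the exponent is $\Omega(\beta\Delta)$. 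Optimizing over $L\le L^\star=e^{\Theta(\beta)}$, i.e. taking $L=L^\star$, gives $\Delta_{L^\star}=e^{\Omega(\beta)}$ and hence $t^\star_\mathrm{mem}=\exp(\exp(\Omega(\beta)))$ with cutoff $L^\star=e^{\Theta(\beta)}$, which matches Theorem~\ref{thm:scqm_informal} and beats the cubic code.

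The main obstacle is step (a): proving that the concatenated decoder corrects \emph{all} sparse errors, not merely low-weight ones. The adversarial guarantee of the concatenated decoder (correctability up to weight $O(\alpha d n/w)$) does not suffice, since thermal errors below the cutoff can have weight far exceeding the layer-code distance while still being benign; one must exploit their clustered geometry rather than their weight. The delicate sub-steps are: controlling how a local cluster at a line defect fragments across layers under MWPM so its footprint on every subsequent layer stays $O(1)$; bounding the induced input-code syndrome by $O(m)$ rather than by something growing with cluster diameter; and, for random layer codes, replacing the hypothesis "the input-code decoder corrects up to $\alpha d$'' — which is not available efficiently — by a direct argument that sparse high-level syndromes are correctable by the modified decoder. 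A secondary but necessary piece of bookkeeping is keeping the $2^{k}$ and $\mathrm{poly}(L)$ prefactors from the Davies-generator estimate subdominant to $e^{\beta\Delta}$ throughout the range $L\le L^\star$, which is exactly why the energy-barrier lower bounds of Ref.~\cite{williamson2023layer} and Theorem~\ref{thm:random_layer_maintext} enter the argument.
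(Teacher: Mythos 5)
Your high-level skeleton — replace the energy-barrier decoder with the (modified) concatenated decoder, invoke the Davies-generator machinery of Ref.~\cite{bravyi2011analytic}, and argue that the decoder corrects the errors the bath typically produces below the cutoff — is the right skeleton, and your bottom-line scaling $t^\star_{\rm mem}=\exp(\exp(\Omega(\beta)))$ with cutoff $L^\star=e^{\Theta(\beta)}$ is what the paper proves. But the intermediate reduction you propose is wrong, and it is wrong in a way that hides where the actual work is.

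The paper does not reduce to ``sparse'' errors; it reduces to errors of small \emph{energy barrier}. The relevant tool is Lemma~1 of Ref.~\cite{bravyi2011analytic} (reproduced as Lemma~\ref{lem:BH}): if the decoder $\Phi$ corrects every Pauli error with energy barrier at most $m+2f$, then $\|\Phi(\rho(t))-\rho(0)\|_1\le O(t)\,2^k\,n\,e^{-am\beta}$ for $n\le e^{(1-a)\beta}$. There is no claim that the accumulated error is a product of $O(1)$-diameter, well-separated clusters, and this matters: a single long, connected string operator in a $Q$-layer has energy barrier $O(1)$ (it can be grown from nothing while keeping the instantaneous syndrome weight bounded), is created by the bath with non-negligible probability, and is manifestly not sparse. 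Your proposed reduction would exclude such errors from consideration, yet they dominate the low-temperature dynamics of layer codes. Low energy barrier does not imply sparseness, so step (a) of your plan — ``the decoder corrects all sparse errors'' — is proving the wrong statement. The statement you actually need is: the (modified) concatenated decoder corrects every error of energy barrier at most $cL$ (or $cn/\log n$ in the random case).

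The paper establishes precisely this (Lemma~\ref{lem:ldpc_decoder_barrier} for quantum Tanner input, Lemma~\ref{lem:HDPCdecoderEnergyBarrier} for random input), and the mechanism is different from what you sketch. Rather than tracking cluster footprints and arguing that each downstream layer receives $O(1)$ residual excitations, the paper tracks the \emph{energy barrier} through the decoder phases: each MWPM phase multiplies the excitation count by at most $\mathrm{poly}(w)$, and the energy-barrier-preserving map of Ref.~\cite{williamson2023layer} (Lemma~\ref{lem:energy_map}) carries the layer-code energy barrier to an input-code energy barrier, which is then handled by the soundness of the quantum Tanner code (Corollary~14 of Ref.~\cite{gu2022efficient}). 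For random input, there is no soundness property and no efficient decoder, so the paper defines a modified concatenated decoder matching everything to the top boundary, works with the $Y$-weight machinery of Appendix~\ref{sec:appendix_d}, and invokes Lemma~\ref{lem:ak_rowHz_atleasthalf} to show that any operator with low energy barrier has a trivial $e$-configuration on some slab; this is what replaces ``correctability up to $\alpha d$.'' None of this is the cluster-fragmentation counting you outline.

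Two smaller issues. First, for the random family used in Theorem~\ref{thm:psc_random_layer}, $\rho_X=\rho_Z=\tfrac12(1-1/\log n)$ gives $k=\Theta(n/\log n)=\Theta(L/\log L)$, not $\Theta(L)$; the paper's Table~\ref{tab:results} makes this explicit, and the distinction matters for keeping $\beta\Delta$ dominant over $k\log 2$. Second, the modified concatenated decoder for random input invokes a minimum-$Y$-weight decoder of a random CSS code, which is not known to be efficient; the paper is careful (Remark~\ref{rem:psc_random_layer_efficient}) to conclude only a partially self-correcting \emph{system} in the doubly exponential regime, reserving the SCQM label for the restricted setting $k=O(\log n)$, $\Delta=O(\log n)$, which yields only $\exp(\Omega(\beta^2))$. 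Your proposal treats both families as yielding an efficient SCQM at the $\exp(\exp(\Omega(\beta)))$ level, which overstates the result for the random case.
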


Layer codes, due to their higher energy barrier, also attain better memory time scaling than the cubic code and welded solid code, the two known examples of partially SCQM in 3D (which also only encode a constant number of logical qubits).  As the existence of a fully SCQM in 3D is an open problem, layer codes may be optimal.
For a precise statement and proofs, see Theorem~\ref{thm:psc_ldpc_layer}, Theorem~\ref{thm:psc_random_layer} and Remark~\ref{rem:psc_random_layer_efficient}.

\section{Numerical simulations}

\begin{figure}[t!]
\centering
\includegraphics[width=.9\columnwidth]{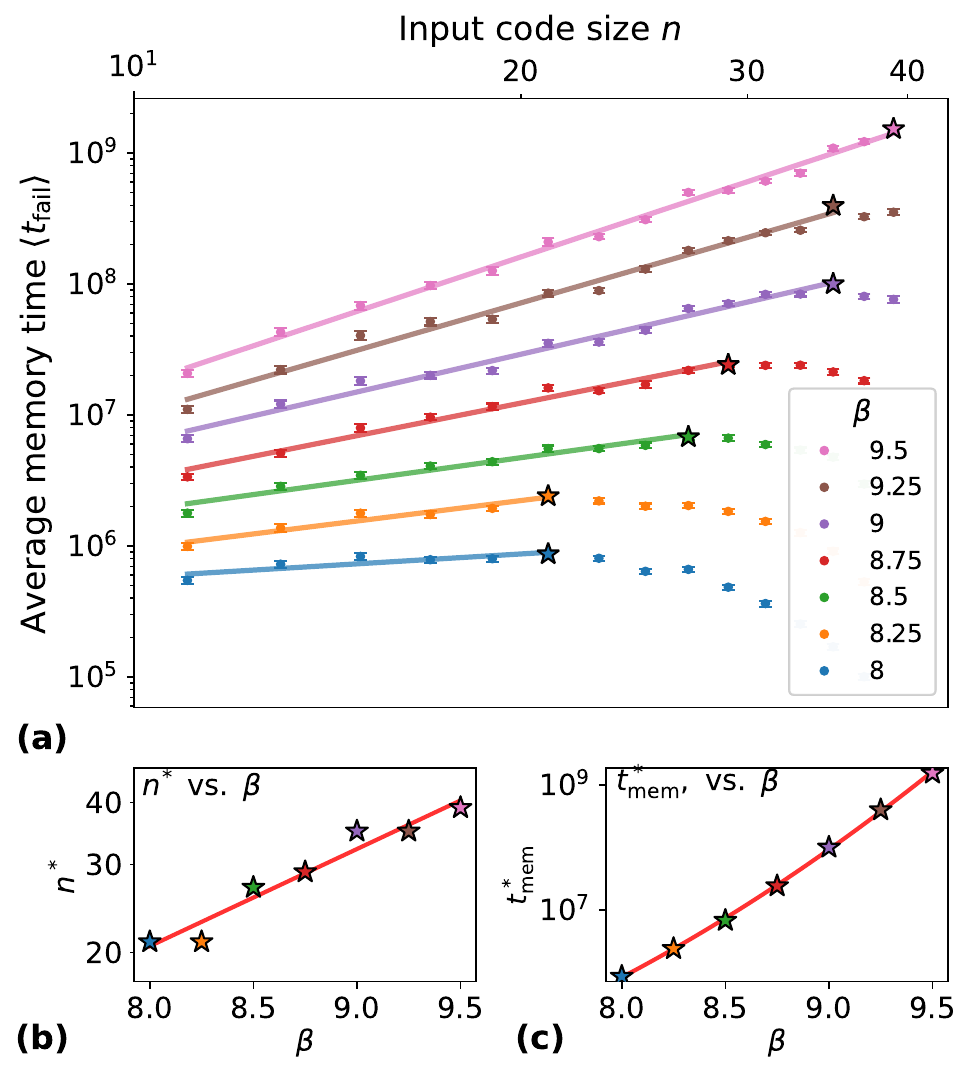}
\caption{
(a) Numerical estimates $\langle t_\text{fail}\rangle$ of the memory time $t_\text{mem}$ for random layer codes and the cluster decoder as a function of the input code size $n$ for various inverse temperatures $\beta$.
For each $\beta$, we find the maximum memory time $t^*_\text{mem}$ at the cutoff size $n^*$ (marked by $\star$).
For $n\le n^*$, we fit the data with a numerical ansatz $\log\langle t_\text{fail}\rangle \approx a \log n + b$, where $a$ and $b$ are fitting parameters.
In (b) and (c), we analyze the scaling of $n^*$ and $t^*_\text{mem}$ as a function of $\beta$, finding $n^* \approx \exp(0.448\beta-0.562)$ and $ t_\text{mem}^* \approx \exp(0.695\beta^2-7.112\beta+26.073)$.
}
\label{fig_memory}
\end{figure}

To complement the analytical results, we perform numerical simulations of the memory time of random layer codes; see Fig.~\ref{fig_memory}.
We consider a family of layer codes constructed from an ensemble $\left\{C_{n,i}\right\}$ of CSS codes, where, for given odd $n$, we sample $R=2000$ times from $\text{CSS}_n\left(\tfrac{n-1}{2},\tfrac{n-1}{2}\right)$, and keep $r=20$ codes with $k=1$ logical qubit and the highest distance.
Since the construction treats Pauli $X$ and $Z$ similarly, for concreteness, we focus on $X$ errors and consider a Hamiltonian $H^{(n,i)}_\text{mem}$, whose terms correspond to $Z$-checks of $\scrL(C_{n,i})$.

To estimate the memory time, we initialize the memory state in a ground state of $H^{(n,i)}_\text{mem}$ and perform Monte Carlo simulations to analyze its thermal evolution.
We model the thermal noise as a sequence of single-qubit $X$ errors,
and use Glauber evolution~\cite{GlauberDynamics} with continuous time rejection free algorithm~\cite{BORTZ1975,Young_1966}.
We then use the cluster decoder at geometrically increasing time intervals to probe the memory state (without implementing any recovery operator) and record the time $t_\text{fail}$ of the first decoder failure.
We repeat the process multiple times and take the sample average $\langle t_\text{fail}\rangle$ to estimate the memory time $t_\text{mem}$.
We observe that the behavior of random layer codes is consistent with partial self-correction (albeit, we find $t^*_\mathrm{mem} = \exp(\Theta(\beta^2))$ instead of the analytically-derived bound for the concatenated decoder $t_{\rm mem}^*=\exp(\exp(\Omega(\beta)))$.
To accentuate the qualitative differences between partial self-correction and self-correction, we also simulate the 3D toric code, which is self-correcting (but only for one type of Pauli errors). 
Additional details on the simulations can be found in Appendix~\ref{sec:numerics}.

\section{Conclusions}

We show that layer codes provide a concrete example of partially SCQM, with code parameters and memory time better than for previous 3D constructions.
We also argue that partial self-correction (without an efficient decoder) is more common than expected---it arises from a diverging energy barrier and its interplay with code parameters, as exemplified by Theorem~\ref{thm:partial_self_corr}.

Fundamentally, layer codes are notable because they saturate the BPTH bounds with an optimal energy barrier.
Practically, though, layer codes constructed from generic families of qLDPC codes are prohibitively qubit-intensive.
To mitigate this issue, we introduce random layer codes and prove their code parameters and energy barrier.
We thus view random layer codes as a step toward more resource-efficient constructions.

The memory time is the most important aspect for the storage of information, as there are generic methods to extract quantum information for computation~\cite{He25extractors}.
Nevertheless, identifying specific families of layer codes, developing better decoders, and designing more efficient fault-tolerant logical gates are relevant near-term challenges.
Progress on these fronts would transform layer codes from a theoretical construct into a practical quantum architecture.

\acknowledgements
We thank D.~Williamson for helpful discussions on proving partial self-correction of layer codes, in particular pointing out Ref.~\cite{Bravyi2013}. We also thank N.~Baspin, S.~Bravyi and J.~Haah for discussions on self-correction.
S.G. and A.K. acknowledge support from the NSF (QLCI, Award No.~OMA-2120757), IARPA and the Army Research Office (ELQ Program, Cooperative Agreement No.~W911NF-23-2-0219).
L.C. and S.C. acknowledge support from the European Research Council (Project EQUIPTNT, Grant No.~101001976).
L.C. is supported by Munich Quantum Valley (supported by the Bavarian state government with funds from the Hightech Agenda Bayern Plus). 
Z.H. is supported by the NSF Graduate Research Fellowship (Grant No. 2141064).
The authors gratefully acknowledge the computational and data resources provided by the Leibniz Supercomputing Centre (\url{www.lrz.de}) and the Research Center for Quantum Information, Slovak Academy of Sciences.
Part of this work was completed in the spring of 2024, while Z.H., A.K and E.T. were visiting the Simons Institute for the Theory of Computing.

\emph{Note added.---}
We would like to bring the reader’s attention to independent and concurrent work by D.~Williamson~\cite{WilliamsonPartial}, which demonstrates that layer codes constructed from quantum Tanner codes are partially SCQM.

While finalizing the manuscript, we became aware of a result by N. Baspin~\cite{baspin2025freeenergybarriereyringpolanyi}, which establishes that layer codes are not SCQM.
This is an asymptotic result (in the system size) for any fixed temperature, as such it does not apply to partial self-correction. The free energy derived is also not applicable to the setting of random layer codes.

\section*{Data availability}
Memory-time simulation data for random layer codes, together with the input-code family used in these simulations and the Python script used to generate layer code stabilizers from an input code, are available at \cite{random_layer_codes_2025}.

\clearpage
\onecolumngrid

\appendix

\section*{APPENDICES}

In what follows, we provide detailed explanations of the results from the main text.
First, in Appendix~\ref{sec:appendix_a}, we review the layer code construction.
Then, in Appendix~\ref{sec:appendix_b}, we introduce two layer code decoders and establish guarantees on their performance.
In Appendix~\ref{sec:appendix_c}, we prove that layer codes constructed from quantum Tanner codes are partially SCQM.
In Appendix~\ref{sec:appendix_d}, we describe random layer codes and prove bounds on their code parameters, demonstrating that they are partially self-correcting.
Finally, in Appendix~\ref{sec:numerics}, we present the results of our numerical simulations of layer codes.

\section{Layer Codes}\label{sec:appendix_a}

In this appendix, we provide a short overview of the layer code construction and associated background. The layer code construction is based on coupling different surface code patches in three-dimensional space. For us, a surface code is defined by a two-dimensional square grid, where qubits are placed on edges, $X$-checks associated with vertices, and $Z$-checks associated with plaquettes. The boundaries of a surface code patch may be rough or smooth, allowing $Z$ or $X$ strings, respectively, to terminate without causing non-trivial syndromes.

A CSS code $C$ is a pair of parity-check matrices $H_X\in \mathbb{F}_2^{n_X \times n}$ and $H_Z\in\mathbb{F}_2^{n_Z \times n}$ satisfying the orthogonality condition $H_XH_Z^\mathrm{T}=0$. The rows of $H_X$ (resp. $H_Z$) define the $X$-type (resp. $Z$-type) stabilizer generators, or checks, of the code. In particular, we will always regard a CSS code as coming equipped with a particular choice of stabilizer generators, and we will generally denote a CSS code explicitly by its parity-check matrices, e.g., $C=(H_X,H_Z)$. The \emph{sparsity} $w$ of a CSS code is defined to be the maximum number of non-zero entries in a row or a column of $H_X$ or $H_Z$. An infinite family of codes with constant sparsity $w=O(1)$ is said to be a quantum low-density parity-check (qLDPC) code family.

Given an arbitrary CSS code $C=(H_X,H_Z)$, the layer code $\mathscr{L}(C)$ is a 3-dimensional topological CSS code defined as follows: 
\begin{enumerate}
\item Fix an arbitrary integer $K\ge 2$, which is an implicit constant specifying the distance between surface code layers.

\item To each physical qubit of the input code $C$ we associate a ``qubit layer'', or ``$Q$-layer''. A $Q$-layer is a surface code patch parallel to the $xz$-plane with rough boundaries along the $x$-direction and smooth boundaries along the $z$-direction. The size of the surface code lattice is defined so that the distance between the smooth boundaries (resp. the rough boundaries) is $(n_Z + 1)K$ (resp. $(n_X + 1)K$).

If the input code $C$ has $n$ physical qubits, then the layer code $\scrL(C)$ consists of $n$ distinct qubit layers which we will take to be located at $y$-coordinates $y = K, 2K, \cdots, nK$. The $Q$-layer located at $y= jK$ is called the $j$-th $Q$-layer. The $x$ and $z$ coordinates of each $Q$-layer are $0\le x\le (n_Z+1)K$ and $0\le z\le (n_X+1)K$.

\item For each $X$-type stabilizer generator of the input code $C$ we associate an ``$X$-check layer,'' or ``$X$-layer.'' An $X$-layer is a surface code patch parallel to the $xy$-plane with all smooth boundaries. For $i\in [n_X]$, the $i$-th $X$-layer is located at $z = iK$ and has $x$ and $y$ coordinates in the range $0\le x\le (n_Z+1)K$ and $0\le y\le (n+1)K$.

\item For each $Z$-type stabilizer generator of the input code $C$ we associate a ``$Z$-check layer,'' or ``$Z$-layer.'' A $Z$-layer is a surface code patch parallel to the $yz$-plane with all rough boundaries. For $i\in [n_Z]$, the $i$-th $Z$-layer is located at $x = iK$ and has $y$ and $z$ coordinates in the range $0\le y\le (n+1)K$ and $0\le z\le (n_X+1)K$.

\item The arrangement of $Q$-, $X$-, and $Z$-layers defined above will intersect at various junctions. Not all intersections will be considered non-trivial:
\begin{itemize}
\item Each $X$-layer will only have non-trivial intersection with $Q$-layers corresponding to qubits in the support of the $X$-check that defines it.

\item Likewise, each $Z$-layer will only have non-trivial intersection with $Q$-layers in its support.

\item The intersection of $X$- and $Z$-layers is more intricate. Fix an arbitrary $X$-layer and $Z$-layer. The fact that their respective check operators in $C$ commute implies that there exists an even number of $Q$-layers which belong to the support of both layers. Pair off the $Q$-layers in the common support so that the first is paired with the second, the third with the fourth, and so on. Then the $X$-layer and $Z$-layer will have non-trivial intersection only along the line segments joining paired $Q$-layers.
\end{itemize}

The line (segments) defined by non-trivial intersections are called \emph{line defects}. Multiple line defects meet at \emph{point defects}.

\item The physical qubits of the layer code are defined as the union of all physical qubits in the surface codes defining the $Q$-, $X$-, and $Z$-layers. The stabilizers of the layer code coincide with ordinary surface code stabilizers everywhere away from the defects. The stabilizers are modified at the defects so as to support non-trivial fusion rules. The exact form of the stabilizers themselves will not be of great importance to us; we will interface with the stabilizers primarily through the fusion rules that they define (see section~\ref{sec:fusion}). We refer to the original paper~\cite{williamson2023layer} for detailed descriptions of defects and the modified stabilizers. 
\end{enumerate}

Fig.~\ref{fig_layer} of the main text illustrates an example of the layer code constructed from the \code{4,2,2} input code. In the figure, solid (dashed) lines on the boundaries of layers indicate smooth (rough) boundaries. Red, blue, and green lines on the intersections of the layers are $X$-$Q$, $Z$-$Q$, and $X$-$Z$ defect lines, respectively.

It will often be convenient to think of the layer code as a map $\scrL:\mathrm{CSS}\rightarrow \mathrm{CSS}$ which takes an arbitrary input CSS code and returns the associated 3D topological code defined above.

Note that different choices of parity-check matrices defining the same codespace define distinct layer codes, so the input is really the pair of matrices $(H_X,H_Z)$. The linear lengths of $\scrL(C)$ are $(n_Z+1)K\times (n+1)K\times (n_X+1)K$, so each length is $\Theta(n)$, assuming $n_X, n_Z = \Omega(n)$. For stating some results, we also define $L=K\min(n_X+1, n_Z+1)=\Theta(n)$ to be the minimum length.

We remark that our definition of layer codes deviates slightly from the one in Ref.~\cite{williamson2023layer}. In particular, the $X$- and $Z$-layers of the original construction terminate on the first and last $Q$-layers that intersect with them non-trivially. Extending the layers so that all surface code patches have comparable size does not change the essential properties of the code but is needed for our concatenated decoder to work properly and is convenient for stating our other results. In particular, the results in the rest of this section, which are proven in Ref.~\cite{williamson2023layer} for the original construction, also hold after extending the $X$- and $Z$-layers.

Many properties of the layer code are related to that of the input code, including the number logical qubits, the distance, and the energy barrier.
Let us formally define energy barrier of Pauli operators on a CSS code and of the code itself.

\begin{definition}[Energy Barrier]\label{def:energy_barrier}
Let $C=(H_X,H_Z)$ be a qLDPC CSS code. We say that $\mathcal{P}=\{P(t)\}_{t=0}^T$ is a \emph{Pauli-$Z$ path} if each $P(t)$ is a Pauli-$Z$ operator with $P(0)=I$ and $|P(t+1)P(t)| = 1$ for all $t$. 

The \emph{energy} $\Delta_Z(\cP)$ of a Pauli-$Z$ path $\cP=\{P(t)\}_{t=0}^T$ is the weight of the largest syndrome associated with the operators in the path, i.e.,
\begin{align}
\Delta_Z(\cP) = \max_{0 \le t \le T}|\sigma(P(t))|,
\end{align}
where $\sigma(Q)=H_XQ$ denotes the \emph{syndrome} of the operator $Q\in P_Z(C)$ regarded as a vector in $\mathbb{F}_2^n$.

The $Z$-\emph{energy barrier of an operator} $P_0\in P_Z(C)$, denoted $\Delta_Z(P_0)$, is the minimum energy of any Pauli-$Z$ path terminating on $P_0$, i.e.,
\begin{align}
\Delta_Z(P_0) = \min_{\cP: P(T)=P_0}\Delta_Z(\cP).
\end{align}
The $Z$-\emph{energy barrier} of the code $C$, denoted $\Delta_Z$, is the minimum energy barrier of any non-trivial logical $Z$ operator.
\begin{align}
    \Delta_Z(C) = \min_{\overline{Z}\in L_Z(C)} \Delta_Z(\overline{Z}).
\end{align}

All the notions defined here for Pauli-$Z$ operators have associated $X$ versions, and the energy barrier $\Delta_C$ of the code $C$ is the minimum of the $X$- and $Z$-energy barriers, i.e., $\Delta_C = \min(\Delta_X(C),\Delta_Z(C))$.
\end{definition}

The key properties of the layer code construction are summarized below.

\begin{theorem}[Layer Codes~\cite{williamson2023layer}]\label{thm:layer_codes}
The layer code construction, applied to an input \code{n,k,d} CSS code $C$ with:
\begin{itemize}
\item $n_X$ $X$-checks, 
\item $n_Z$ $Z$-checks, 
\item max check weight and qubit degree $w$,
\item energy barrier $\Delta_C$,
\end{itemize}
produces a $3$-dimensional topological CSS code with parameters \code{\Theta(nn_Xn_Z),k,\Omega(d/w\cdot\min(n_X,n_Z))}, max check weight $6$, and energy barrier $\Delta_{\scrL(C)}\ge 2\Delta_C/w^2$.
\end{theorem}

\begin{remark}Note that Ref.~\cite{williamson2023layer} uses a slightly different definition for the energy barrier of the input code $\Delta_C'$, which allows consecutive operators in the Pauli-$Z$ path to differ by constant weight. They obtain a lower bound $\Delta_{\scrL(C)}\ge 4\Delta_C'/w^2$. A simple modification to their argument gives $\Delta_{\scrL(C)}\ge 4\Delta_C/w^2 - 1 \ge 2\Delta_C/w^2$.
\end{remark}

When applied to asymptotically good qLDPC codes, the layer code construction produces a family of \code{\Theta(L^3),\Theta(L),\Theta(L^2)} codes with energy barrier $\Theta(L)$. These codes saturate the BPT bound~\cite{BPTbound} for topological codes in 3 dimensions.

\subsection{Fusion Rules}\label{sec:fusion}

We will work with the layer code primarily through its fusion (or condensation) rules. Similar to the surface code, the layer code supports topological excitations, which are also called anyons. A violation of an $X$-type (resp. $Z$-type) check defines an $e$-type (resp. $m$-type) excitation. These excitations are topological in that a single one cannot be created or annihilated by a local operators except in the vicinity of a boundary.

Fusion rules characterize the local structure of the layer code by identifying the types of excitations that can be locally annihilated -- or \emph{condensed} -- at a given location. Fig.~\ref{fig:fusionrules}(b) shows the fusion rules at defect lines of layer codes, where we define the abstract $\bullet$- and $\oplus$-junction in Fig.~\ref{fig:fusionrules}(a).
The regions away from defect lines, which include trivial intersections between layers, support the usual fusion rules $e\times e = m\times m = 1$ for the surface code. The fusion rules at the boundaries of layers are also the same as for the surface code, i.e., rough boundaries condense $e$ anyons and smooth boundaries condense $m$ anyons.

\begin{figure}[H]
	\centering
    \includegraphics[width=0.9\linewidth]{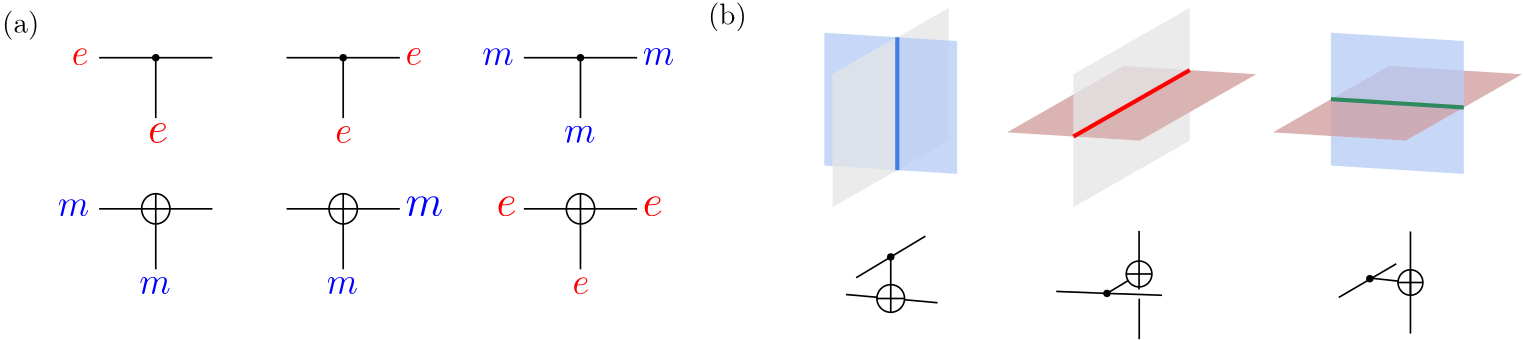}
    \caption{(a) Fusion rules for $e$ and $m$ excitations at $\bullet$- and $\oplus$-junctions. (b) Fusion rules at defect lines of layer codes defined in terms of $\bullet$- and $\oplus$-junctions.
    }
    \label{fig:fusionrules}
\end{figure}

\subsection{Logical Operators}\label{sec:LC:logicals}

In this section we collect some basic facts about the logical operators of the layer code.

Given a CSS code $C$, let $P_Z(C)$ be the set of all Pauli-$Z$ operators on $C$, and we denote the set of all non-trivial $Z$-type logical operators by $L_Z(C)$. Similarly, the $X$-type Pauli operators and non-trivial logical operators are denoted $P_X(C)$ and $L_X(C)$, respectively. As $\scrL(C)$ is also a CSS code, $P_Z(\scrL(C))$, $P_X(\scrL(C))$, $L_Z(\scrL(C))$, and $L_X(\scrL(C))$ are defined in the same way.

The logical operators of a layer code are essentially string operators whose intersections with certain surfaces are characterized by logical operators the input code.

To make this statement more precise, we first define a partition of the layer code into \emph{slabs} as well as the corresponding configuration of excitations associated with these slabs. A slab is a horizontal slice containing a single $X$-layer.

Let $\{A_1,\cdots,A_{n_X}\}$ be the partition the qubits of $\mathscr{L}(C)$, where each $A_i$ is the subset of qubits in the slab containing the $i$-th $X$-layer. Given an operator $\hat Z \in P_Z(\mathscr{L}(C))$, we denote the restriction of $\hat Z$ to $A_i$ by $\hat Z_i$.

The restriction $\hat Z_i$ defines a set of point excitations on the top boundary of $A_i$. We can identify all possible sets of such excitations with the abelian group $\bbZ^n\oplus \bbZ^{(n+1)n_Z}$ by keeping track of the number of point excitations on each $Q$-layer, and on each of the $n+1$ segments of every $Z$-layer between consecutive $Q$-layers (or the boundary).

Since excitations are free to move and merge within each $Q$-layer and within each segment of a $Z$-layer, we usually only need to keep track of their parities. This motivates the next definition.

\begin{definition}
    An \emph{$e$-configuration} is an equivalence class of $e$-excitations on slab boundaries modulo $2$. The set of all possible $e$-configurations will be denoted $\mathcal{E} = \mathbb{F}_2^n\oplus \mathbb{F}_2^{(n+1)n_Z}$. We will also denote the subspace of $e$-configurations supported on the $Q$-layers (resp. $Z$-layers) by $\mathcal{E}_Q=\mathbb{F}_2^n$ (resp. $\mathcal{E}_Z=\mathbb{F}_2^{(n+1)n_Z}$).
\end{definition}

The fusion rules for $e$-excitations induce analogous fusion rules on the set of $e$-configurations. We say that two $e$-configurations $E_1,E_2 \in \cE$ are \emph{boundary-equivalent}, or equivalent for short, denoted $E_1\approx E_2$, if one can be obtained from the other by fusion rules. We will denote the $e$-configuration associated with $\hat Z_i$ by $E_i(\hat Z)$. Equivalent $e$-configurations may arise in the following way.

\begin{lemma}[Lemma 3 of Supplementary information of Ref.~\cite{williamson2023layer}]\label{lem:equivconfigurations}
If $\hat Z\in P_Z(\scrL(C))$ leaves no excitations in the $Q$- or $Z$-layers within the $i$-th slab, then $E_{i-1}(\hat Z)\approx E_i(\hat Z)$.
\end{lemma}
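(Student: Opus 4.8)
The plan is to unpack what it means for $\hat Z$ to leave no excitations in the $Q$- and $Z$-layers of the $i$-th slab and then track how the $e$-excitations on the top boundary of slab $A_{i-1}$ relate to those on the top boundary of slab $A_i$ through the $X$-layer $X_i$ that sits between them. Concretely, the top boundary of $A_{i-1}$ is (up to the position of the $X$-layer) the bottom boundary of $A_i$, so it suffices to understand how an $e$-configuration on the bottom of the slab propagates to the top of the slab when the interior of the slab (i.e., the $Q$- and $Z$-layers restricted to $A_i$, together with $X_i$) carries no unsatisfied $X$-checks.

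The key steps, in order, are as follows. First, I would observe that within the slab $A_i$ the only places $e$-excitations can move or split are the line defects lying in that slab: the $X$-$Q$ defects of $X_i$ (which connect $X_i$ to the $Q$-layers in its support) and the $X$-$Z$ defects of $X_i$ (which connect $X_i$ to the $Z$-layers). Away from these defects the fusion rules are those of an ordinary surface code, where $e$-charge is simply conserved along vertical ($y$-directed) strings. Second, using the hypothesis that $\hat Z_i$ creates no excitations in the $Q$- or $Z$-layers inside the slab and no excitation on $X_i$ itself, I would argue that along each $Q$-layer the vertical segment of $\hat Z$ between the top of $A_{i-1}$ and the top of $A_i$ has a well-defined $e$-charge entering from below and leaving from above, and conservation at each junction it passes through forces the parity leaving the top to equal the parity entering from the bottom \emph{after applying the fusion rules at the $X_i$ defects}. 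In other words, the map from $E_{i-1}(\hat Z)$ to $E_i(\hat Z)$ is exactly a composition of the elementary fusion moves at the defects of $X_i$. Third, I would package this by invoking the explicit fusion rules of Fig.~\ref{fig:fusionrules}: at a $\bullet$- or $\oplus$-junction of $X_i$ with two paired $Q$-layers (or a $Q$-layer and a $Z$-segment), the charge can be rerouted, which is precisely the generating relation of the equivalence $\approx$ on $\cE$. Chaining these over all defects of $X_i$ yields $E_{i-1}(\hat Z)\approx E_i(\hat Z)$.

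Since this statement is quoted directly as Lemma 3 of the supplementary information of Ref.~\cite{williamson2023layer}, the cleanest route is simply to cite it; but to make the excerpt self-contained I would give the argument above. The main obstacle is bookkeeping at the point defects, where several line defects meet: one must check that the composite fusion rule there is still generated by the elementary $\bullet$/$\oplus$ moves and does not, for instance, create a net charge on $X_i$ that the hypothesis forbids. This amounts to verifying that the modified stabilizers at point defects are consistent with charge conservation into/out of $X_i$ — a finite local check, but the one place where the informal ``excitations split at line defects'' picture has to be pinned down carefully. Once that local consistency is established, the global statement follows by composing local equivalences along $X_i$, and the lemma is proved.
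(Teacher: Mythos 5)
The paper does not prove this lemma itself; it cites it verbatim from the supplementary material of Ref.~\cite{williamson2023layer}, so there is no in-paper argument to compare against.

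Your sketch, however, contains a genuine misreading of the hypothesis that matters in the way the paper actually uses the lemma. You repeatedly assume ``no excitation on $X_i$ itself'' (``the interior of the slab\dots together with $X_i$\dots carries no unsatisfied $X$-checks''; ``no excitation on $X_i$ itself''; ``does not\dots create a net charge on $X_i$ that the hypothesis forbids''). The hypothesis forbids $e$-excitations only in the $Q$- and $Z$-layers inside the slab and says nothing about $X_i$. This is not a pedantic distinction: the lemma is invoked in the paper (for example, inside the proof of Lemma~\ref{lem:HDPCdecoderEnergyBarrier}) precisely in a situation where all excitations on $Q$- and $Z$-layers have been cleaned to the top boundary, leaving $e$-excitations \emph{on the $X$-layers} that are later matched in Step~4 of the modified concatenated decoder. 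An argument that requires $X_i$ to be excitation-free could not be applied there.

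Concretely, your charge-conservation bookkeeping at the $X_i$ junctions should not attempt to force the parity entering $X_i$ to vanish; rather, it should observe that any $e$-charge deposited on $X_i$ by strings entering from $Q$- or $Z$-layers is simply not tracked by $\cE$ (which is only $\mathbb{F}_2^n\oplus\mathbb{F}_2^{(n+1)n_Z}$, indexed by $Q$-layers and $Z$-layer segments), and that the net change to the $e$-configuration as one passes from just below to just above $X_i$ is exactly a sequence of elementary moves generating $\approx$ on $\cE$. The part of your sketch that moves the cut plane through the slab and tracks the configuration across $QZ$ defects is the right idea; the fix is to replace the claim ``no charge may land on $X_i$'' with ``charge landing on $X_i$ is permitted and irrelevant to the $\approx$-class on $\cE$.''
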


We can now discuss the relation between logical operators and their $e$-configurations.

\begin{lemma}[Corollary 1 of Supplementary information of Ref.~\cite{williamson2023layer}]\label{lem:layer_slab_support}
Let $\mathscr{L}(C)$ be a layer code. Any $\overline{Z}\in L_Z(\mathscr{L}(C))$ has non-trivial support on all slabs and we have $E_i(\overline{Z}) \approx E_j(\overline{Z})$ for all $i,j\in [n_X]$.
Moreover, $\overline{Z}$ is not boundary-equivalent to the trivial, i.e., empty, configuration.
\end{lemma}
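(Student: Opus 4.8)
The plan is to analyze how a logical $Z$-operator $\overline{Z}$ of $\scrL(C)$ interacts with the slab decomposition $\{A_1,\dots,A_{n_X}\}$, and to show that its $e$-configuration is both nontrivial and constant (up to boundary equivalence) across all slabs. First I would argue that $E_i(\overline{Z})\approx E_{i+1}(\overline{Z})$ for every $i$: since $\overline{Z}$ is a logical operator it commutes with every stabilizer, in particular with all the surface-code plaquette checks lying strictly inside the $i$-th slab on the $Q$- and $Z$-layers. One must be slightly careful, because $\overline{Z}$ need not literally leave \emph{zero} excitations inside the slab — rather, being a valid codeword means the excitations it creates inside each slab are confined to the $X$-layer region. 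I would therefore decompose $\overline{Z}\vert_{A_i}$ as a product of a part creating only $m$-type excitations on the $X$-layer (which do not contribute to $e$-configurations) and a part that is ``closed'' with respect to the $Q$- and $Z$-layer checks, then apply Lemma~\ref{lem:equivconfigurations} to the latter. Chaining these equivalences gives $E_i(\overline{Z})\approx E_j(\overline{Z})$ for all $i,j$.

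Next I would show $\overline{Z}$ has nontrivial support on every slab. Suppose $\overline{Z}_i = I$ for some $i$. Then $E_i(\overline{Z})$ is the empty configuration, so by the previous paragraph $E_j(\overline{Z})\approx 0$ for all $j$. I would then show that an operator whose $e$-configuration is boundary-trivial on every slab boundary can be written, up to stabilizers, as a product of operators each localized within a single slab; and an operator localized within a single slab that is boundary-equivalent to the trivial configuration is itself a product of stabilizers restricted to that slab together with boundary operators, hence cannot be a nontrivial logical. (This is essentially the ``cleaning lemma'' for the slab partition: the only logical information living in a slab is topologically protected, so trivializing all $e$-configurations kills the logical content.) This simultaneously establishes the final sentence, that $\overline{Z}$ is not boundary-equivalent to the empty configuration, since the empty configuration on all slabs corresponds to a non-logical operator.

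The key technical input throughout is Lemma~\ref{lem:equivconfigurations}, together with the structure of the fusion rules from Section~\ref{sec:fusion}: the $e$-configuration on a slab boundary is only defined modulo fusion moves, so ``equivalent'' must be handled carefully rather than ``equal,'' and one needs that fusion rules at the line defects genuinely implement the parity-tracking described in the $\cE = \mathbb{F}_2^n \oplus \mathbb{F}_2^{(n+1)n_Z}$ bookkeeping. I expect the main obstacle to be the cleaning-type argument in the second part: making precise that ``$e$-configuration trivial on all slabs'' forces the operator to be logically trivial requires correctly accounting for the $m$-type excitations living on the $X$-layers (which the $e$-configuration does not see), and for the possibility that the operator ends on the outer boundaries of the code. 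The cleanest route is probably to invoke the correspondence, already developed in Ref.~\cite{williamson2023layer}, between logical $Z$-operators of $\scrL(C)$ and logical $Z$-operators of the input code $C$ read off from a single $e$-configuration: if the configuration is trivial on one slab, the associated input-code operator is a stabilizer of $C$, hence $\overline{Z}$ is a stabilizer of $\scrL(C)$, contradicting that it is a nontrivial logical.
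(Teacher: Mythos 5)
This lemma is not proved in the paper at all---it is imported verbatim, cited as Corollary~1 of the Supplementary information of Ref.~\cite{williamson2023layer}, so there is no in-paper proof to compare against. Evaluating your attempt on its own merits, the overall skeleton (chain Lemma~\ref{lem:equivconfigurations} slab-by-slab, then run a contradiction argument through the correspondence with the input code) is the right shape, but there are two substantive problems.

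First, the detour in your opening paragraph is based on a misconception. You worry that $\overline{Z}\vert_{A_i}$ might create ``$m$-type excitations on the $X$-layer'' and propose a decomposition to handle them. A $Z$-type Pauli operator only ever anticommutes with $X$-checks, so it can only create $e$-type excitations; a $Z$-operator produces no $m$-type excitations anywhere. Moreover, Lemma~\ref{lem:equivconfigurations} asks whether the \emph{global} operator $\hat Z$ leaves excitations on the $Q$- or $Z$-layers in the $i$-th slab, not whether its restriction $\hat Z_i$ does. Since $\overline{Z}$ is a logical operator it commutes with all stabilizers and leaves no excitations at all, so the hypothesis of Lemma~\ref{lem:equivconfigurations} is satisfied trivially for every slab, and $E_{i-1}(\overline{Z})\approx E_i(\overline{Z})$ follows directly with no decomposition needed.

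Second, there is a latent circularity in your ``cleanest route.'' You invoke the correspondence between logical $Z$-operators of $\scrL(C)$ and logical $Z$-operators of $C$ read off from a single $e$-configuration, which in this paper's exposition is Lemma~\ref{lem:layer_input_map}. But the second sentence of Lemma~\ref{lem:layer_input_map}---that the $e$-configuration of any $\overline{Z}\in L_Z(\scrL(C))$ maps to a \emph{non-trivial} logical of $C$---is essentially a restatement of the ``moreover'' clause you are trying to prove, and the ordering in the source material has Lemma~\ref{lem:layer_slab_support} precede it. Using it here assumes the conclusion. What you actually need to supply, and what you correctly flag as the obstacle, is the cleaning-type argument proved from scratch: if $E_i(\overline{Z})\approx 0$ for all $i$, then $\overline{Z}$ can be reduced by stabilizers to a product of slab-local pieces each having trivial $e$-configuration on both slab boundaries, and any such piece is itself a stabilizer; hence $\overline{Z}$ would be a stabilizer, contradicting non-triviality. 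You describe this step only heuristically, and it is precisely the step the external reference establishes; without it, your proof is a correct reduction but not a proof. One more small point: your contradiction hypothesis is ``$\overline{Z}_i = I$,'' which gives $E_i(\overline{Z}) = 0$ on the nose, but the ``moreover'' clause requires ruling out the weaker condition $E_i(\overline{Z})\approx 0$; you should state the contradiction hypothesis at that level of generality so that the argument covers both conclusions at once.
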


It follows that we may associate to each logical operator of the layer code a non-trivial equivalence class of $\mathcal{E}/\approx$. Since we may always use the fusion rules to move all the excitations from $Z$-layers onto the $Q$-layers, there always exist representatives in $\cE_Q$ for any equivalence class of $e$-configurations. We can identify each $E \in \cE_Q$ with a corresponding Pauli-$Z$ operator on the input code $C$ in the obvious way. Let $P(E)$ denote the Pauli operator defined by $E$.

\begin{lemma}[Remark 2 of Supplementary information of Ref.~\cite{williamson2023layer}]\label{lem:layer_input_map}
Given $E_1,E_2 \in \cE_Q$, we have $E_1\approx E_2$ if and only if $P(E_1)$ and $P(E_2)$ are stabilizer-equivalent in $C$. If $E\in \cE_Q$ is boundary-equivalent to the $e$-configuration of some $\overline{Z}\in L_Z(\scrL(C))$, then $P(E)$ is stabilizer-equivalent to a non-trivial logical operator in $C$.
\end{lemma}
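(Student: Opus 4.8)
The final statement to prove is Lemma~\ref{lem:layer_input_map}, which establishes a dictionary between $e$-configurations on $Q$-layers (modulo boundary equivalence) and Pauli-$Z$ operators on the input code $C$ (modulo stabilizer equivalence). My plan is to analyze the generators of the boundary-equivalence relation $\approx$ on $\cE_Q$ and match them against the generators of stabilizer equivalence in $C$.

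\textbf{Approach.} First, I would pin down precisely what moves generate $\approx$ when restricted to configurations supported on the $Q$-layers. By Lemma~\ref{lem:equivconfigurations}, applying a Pauli-$Z$ operator $\hat Z$ in a single slab that leaves no residual excitations in the $Q$- or $Z$-layers of that slab identifies the incoming and outgoing $e$-configurations. The minimal such moves come from the fusion rules at defect lines (Fig.~\ref{fig:fusionrules}): pushing an excitation through an $X$-$Q$ line defect converts a single excitation on one $Q$-layer into excitations on the $Q$-layers paired to it by that $X$-check, while pushing through a $Z$-$Q$ line defect moves excitations between a $Q$-layer and a $Z$-layer segment. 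I would argue that, after using $Z$-layer fusions to push everything back onto $Q$-layers, the net effect of any sequence of moves within a slab that returns to a $Q$-supported configuration is exactly to add (mod $2$) the indicator vector of the support of one row of $H_X$ — i.e., an $X$-type stabilizer generator of $C$ acting as a relation on $\cE_Q \cong \mathbb{F}_2^n = P_Z(C)$. Conversely every such generator is realizable. This gives: $E_1 \approx E_2$ in $\cE_Q$ iff $P(E_1) + P(E_2) \in \row(H_X)$, which is precisely the statement that $P(E_1)$ and $P(E_2)$ are stabilizer-equivalent as Pauli-$Z$ operators in $C$ (stabilizer-$Z$ equivalence being addition of rows of $H_X$).

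\textbf{Key steps in order.} (1) Reduce to $Q$-supported configurations using $Z$-layer fusion rules, showing the map $E \mapsto P(E)$ from $\cE_Q$ to $P_Z(C)$ is a well-defined bijection of vector spaces before quotienting. (2) Identify the elementary slab-local moves from the defect fusion rules and verify each one changes the $Q$-configuration by a row of $H_X$ (using the pairing structure of common support between $X$- and $Z$-layers from the construction in Appendix~\ref{sec:appendix_a}, step 5). (3) Show completeness: any slab-local $\hat Z$ with no residual $Q$/$Z$ excitations decomposes into these elementary moves, so $\approx$ on $\cE_Q$ is generated exactly by $\row(H_X)$; combine with Lemma~\ref{lem:equivconfigurations} for the "if" direction and the converse. (4) For the second sentence, suppose $E \approx E_i(\overline{Z})$ for some $\overline{Z}\in L_Z(\scrL(C))$. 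By the first part $P(E)$ is stabilizer-equivalent to $P(E_i(\overline{Z}))$, so it suffices to show $P(E_i(\overline{Z}))$ is a non-trivial logical in $C$: it commutes with all $Z$-checks of $C$ because $\overline{Z}$ commutes with the layer-code stabilizers localized near $X$-layers (which encode rows of $H_Z$ via the line-defect structure), and it is not a stabilizer because, by Lemma~\ref{lem:layer_slab_support}, $E_i(\overline{Z})$ is not boundary-equivalent to the empty configuration, hence by part one $P(E_i(\overline{Z})) \notin \row(H_X)$.

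\textbf{Main obstacle.} The crux is step (2)--(3): carefully translating the geometric fusion rules at $\bullet$- and $\oplus$-junctions along line and point defects into the algebraic statement that slab-local equivalence is generated by exactly the rows of $H_X$, no more and no less. The subtlety is that a single slab-local operator can interact with several line defects and point defects at once, and one must track how excitations split and recombine across the $X$-$Z$ defect lines (which depend on the arbitrary pairing of common-support $Q$-layers) without introducing spurious relations or missing any. I expect this to reduce to a bookkeeping argument using the fact that the pairing is a perfect matching on an even-size set, so the parity on each $Q$-layer is the only invariant — but making this rigorous requires care with the point-defect fusion rules. Since Ref.~\cite{williamson2023layer} proves the analogous statements for the original (unextended) construction and the remark in Appendix~\ref{sec:appendix_a} notes these carry over after extending the $X$- and $Z$-layers, I would lean on that analysis and only verify that extending the layers does not alter the $Q$-restricted fusion structure.
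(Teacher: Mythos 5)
The paper does not actually prove this lemma; it is imported verbatim from Ref.~\cite{williamson2023layer} (Remark 2 of their Supplementary Information), with only the observation in Appendix~\ref{sec:appendix_a} that the cited properties survive after extending the $X$- and $Z$-layers. So there is no in-paper proof to compare against, and I will assess your proposal on its own terms.

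Your overall plan --- match the generators of $\approx$ restricted to $\cE_Q$ against the generators of stabilizer-equivalence in $C$ --- is the right one, but a persistent $H_X\leftrightarrow H_Z$ swap runs through the argument. Inside a slab the only defects are the $Z$-$Q$ line defects (the $X$-$Q$ and $X$-$Z$ defects sit on the $X$-layers, i.e., on the slab boundaries themselves), and the elementary relation one obtains by pushing an $e$-excitation from a $Q$-layer onto an adjacent $Z$-layer and sweeping it to the rough boundary is $s\approx 0$ where $s$ is a row of $H_Z$, not $H_X$; there is no move within a slab that realizes a row of $H_X$ as a relation, since $X$-checks only constrain what happens \emph{across} the $X$-layers. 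Correspondingly, stabilizer-equivalence of two $Z$-type Paulis $P_1,P_2\in P_Z(C)$ means $P_1+P_2\in\row(H_Z)$, not $\row(H_X)$ --- rows of $H_X$ are the $X$-type generators and cannot appear in a $Z$-type Pauli. These two swaps happen to cancel in the ``iff,'' so the conclusion of your steps (2)--(3) still comes out right, but they do not cancel in step (4). There you assert that $P(E_i(\overline{Z}))$ ``commutes with all $Z$-checks of $C$'' --- vacuously true for any $Z$-type operator --- when what you need is $P(E_i(\overline{Z}))\in\ker H_X$, i.e., commutation with all $X$-checks. That condition follows from the parity constraint at each $X$-layer: all $X$-layer boundaries are smooth, so the number of $e$-excitations on each $X$-layer is conserved mod 2, and for a logical $\overline{Z}$ (zero syndrome) this parity is zero, which is exactly $H_X\,P(E_i(\overline{Z}))=0$. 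The non-triviality conclusion should likewise read $P(E_i(\overline{Z}))\notin\row(H_Z)$. A smaller slip: the $X$-$Q$ defect does not ``pair $Q$-layers''; the pairing you have in mind is a feature of the $X$-$Z$ intersections (step 5 of the construction). Once the $H_X/H_Z$ and $X$-check/$Z$-check roles are corrected throughout, the argument is sound.
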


Finally, given any $g\in L_Z(C)$, we can define a canonical logical operator $\overline{Z}_g \in L_Z(\scrL(C))$ such that $g=P(E_i(\overline{Z}_g))$ for all $i\in [n_X]$. The operator $\overline{Z}_g$, called the \emph{quasiconcatented representative} associated with $g$, is defined by replacing each Pauli-$Z$ in the support of $g$ with a full-length vertical $Z$-string operator in the corresponding $Q$-layer. Any excitations created at defect crossings are then merged using horizontal string operators in the affected $X$-layers.

Lemmas~\ref{lem:layer_slab_support} and \ref{lem:layer_input_map}, together with the existence of a quasiconcatented representative for any $g\in L_Z(C)$, imply that there exists a bijection between stabilizer equivalence classes of $L_Z(\scrL(C))$ and $L_Z(C)$. The forward map follows by taking the $e$-configuration of a logical operator $\overline{Z} \in L_Z(\scrL(C))$. The inverse map constructs the quasiconcatenated representative associated with $g\in L_Z(C)$. See Fig.~\ref{fig:econfiguration} for an illustration of stabilizer-equivalent logical operators of $\scrL(C)$ and their corresponding boundary-equivalent $e$-configurations.

\begin{figure}[ht]
	\centering
	(a)\includegraphics[width=.45\linewidth]{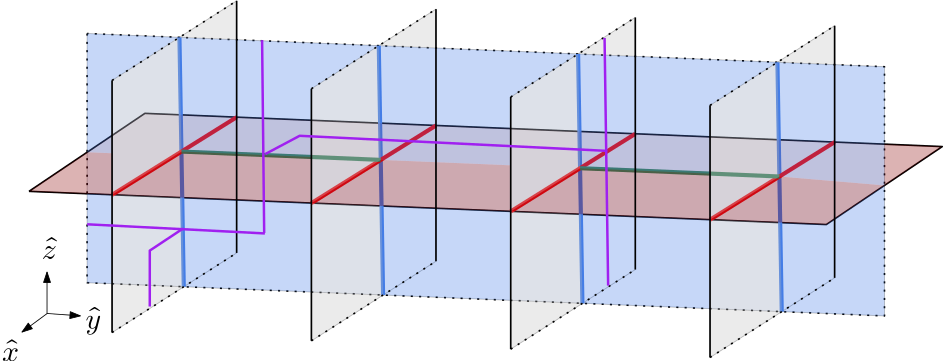}
	(b)\includegraphics[width=.45\linewidth]{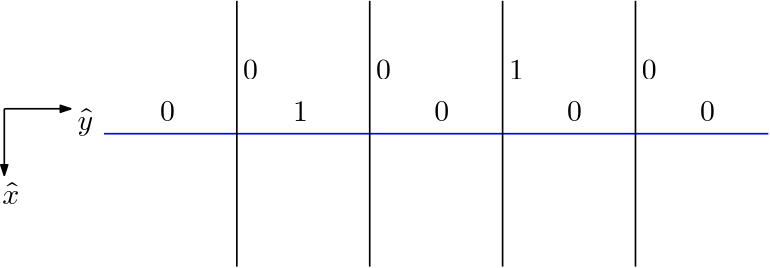}
	(c)\includegraphics[width=.45\linewidth]{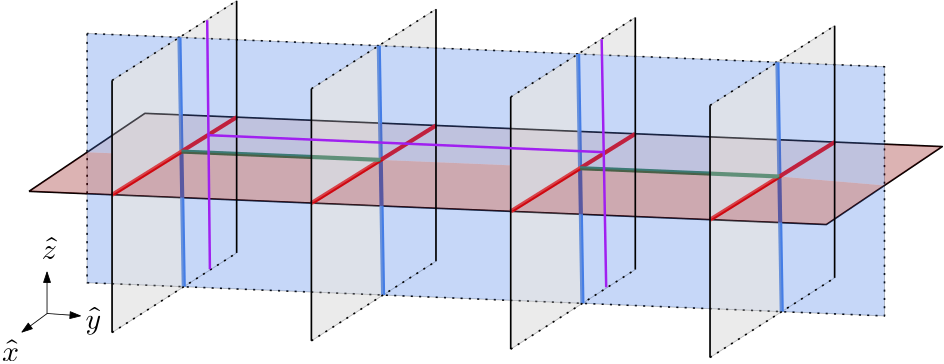}
	(d)\includegraphics[width=.45\linewidth]{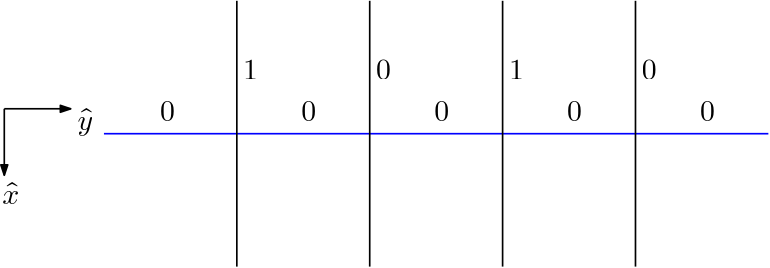}
    \caption{(a) A $Z$ logical operator of a layer code, illustrated as the purple strings. (b) A top-down view of the $e$-configuration at a slab boundary right above the $X$-layer. (c) A quasi-concatenated representative of the logical operator in (a). (d) The corresponding $e$-configuration, which is boundary-equivalent to the one in (b).
    }\label{fig:econfiguration}
\end{figure}

\section{Decoders for Layer Codes}\label{sec:appendix_b}

In this appendix, we define and prove properties of two decoders for the layer code construction. The first, called the cluster decoder, is a renormalization-group (RG) based decoder which has provable threshold against stochastic noise. The second, called the concatenated decoder, is suitable for adversarial noise when the layer code is defined with a qLDPC code as input. We will also use the concatenated decoder as a key ingredient in the proof that layer codes are partially self-correcting.
For the cluster decoder, we consider $X$ errors as the description is cleaner compared to that with $Z$ errors. For the concatenated decoder we analyze $Z$ errors. The cases for the opposite bases are mostly symmetric. Since layer codes are CSS codes, we can correct $X$ and $Z$ errors independently. Consequently, our threshold result is applicable to the setting of the depolarizing noise too.

\subsection{Cluster Decoder}\label{sec:decoder_cluster}

The cluster decoder builds upon the RG decoder of Ref.~\cite{bravyi2011analytic} and the union-find decoder of Ref.~\cite{Delfosse2021almostlineartime}. The key idea is to partition the syndrome into local, bounded-diameter clusters and then annihilate these clusters independently with a local decoder.  We follow this general idea to develop a \textit{cluster decoder} and invoke the 
arguments in Appendix~B of Ref.~\cite{bravyi2011analytic} to prove the existence of a threshold under independent stochastic noise.

Consider a layer code $\sL$
with $Z$-stabilizer generators $S_Z$. 
To decode an $X$ error with $Z$-syndrome $\sigma$, we begin by building the decoding hypergraph $\mathcal{G} = (V, E)$. 
Every vertex of $\cG$ corresponds to a $Z$-check in $S_Z$ (which are plaquettes in layer codes), and every hyperedge of $\cG$ corresponds to a qubit in $\sL$, connecting the checks (vertices) it participates in. 
The syndrome $\sigma$ then corresponds to a set of \textit{excited vertices} $V(\sigma)$, sometimes referred to as \textit{excitations}. 
We define a \textit{cluster} to be a set of vertices, some of which are excited.

\begin{definition}[Correctable Clusters]
    For a cluster $T$, let $E(T)$ denote the set of all hyperedges completely supported on vertices in $T$.
    A cluster $T$ is correctable if there is a set of hyperedges in $E(T)$ which is a matching for the set of excited vertices in $T$. In other words, this collection of hyperedges touch every excited vertex in $T$ an odd number of times, and all other vertices an even number of times. 
\end{definition}

At a high level, the cluster decoder initializes a size-one cluster at each excitation, and then iterates the following routine: 
at every iteration, grow every cluster by radius one (measured in terms of graph distance) and merge any clusters that overlap. For every cluster, check whether it is correctable; if it is, correct the excitations within the cluster and remove them. 
This main routine is described in Algorithm~\ref{cluster:main}.

\begin{algorithm}[H]
	\caption{Cluster Decoder: Main Routine}
    \label{cluster:main}
		 \textbf{Input:} A decoding hypergraph $\cG$, and a set of excitations \\
		\textbf{Output:} A correction that removes all excitation.
		\begin{algorithmic}[1]
			\State Initialization: for every excited vertex $v$, create cluster $T$ that contains only $v$. 
            Let $L$ be the list of active clusters, which currently contains all initialized size-one clusters.
            Set time $t = 0$.
            \While{there are active clusters}
                \For{each active cluster $T$}
                \State \func{Grow} $T$ by radius~1.
                \EndFor
                \State \func{Merge} all active clusters that overlap. Replace overlapping clusters by their union in $L$.~\label{mer}
            \For{each active cluster $T$}
                \State \func{Check correctability} of $T$. If $T$ is correctable,  \func{Correct} $T$ and remove it from $L$.
            \EndFor
            Set $t = t+1$.
            \EndWhile           
			\State \Return All corrections accumulated.
		\end{algorithmic}
\end{algorithm}
We now detail these steps formally. 
For clarity, we omit implementation details such as data structures as well as certain straightforward optimizations.\footnote{For instance, there is no need to check correctability of every active cluster---only those that grow to a layer boundary or are the result of merging need to be checked.} The most important definition we need is that of a \textit{region}.

\begin{definition}[Region]
A region is a set of $Z$-check plaquettes enclosed by layer boundaries and line defects. 
We call the enclosing layer boundaries and line defects the sides of the region.
The regions of $\sL$ form a partition of its $Z$-checks.
\end{definition}
Let us enumerate the regions of $\sL$ as $R_1, \cdots, R_r$. 
For a $Z$-check $v$ (which is a vertex in $\mc G$), we let $R(v)$ denote the region it belongs to.
We observe that based on the layer code construction, qubits in $\sL$ (hyperedges in $\cG$) can be placed into three categories based on the regions they interact with, i.e., the regions with a $Z$-check that the qubit participates in.
\begin{enumerate}[(1)]
    \item A qubit $q$ on a layer boundary interacts with exactly one region. In terms of $\cG$, the hyperedge $h_q$ contains one or two vertices, depending on whether the boundary is smooth or rough, and all vertices are from the same region. 
    We call these qubits (hyperedges) \textit{smooth/rough boundary qubits (boundary hyperedges)}.
    
    \item Another category of qubits (hyperedges) is the \textit{regional qubits (regional hyperedges)}, which is the subset of the remaining qubits that correspond to all hyperedges $h$ which contain exactly two vertices, both from the same region.
    
    \item All remaining qubits (hyperedges), which we call \textit{defect qubits (defect hyperedges)}, interact with multiple regions. In particular, a defect hyperedge contains at least 2 vertices, each belonging to a distinct region.
\end{enumerate}

The motivation behind these definitions is as follows: the $Z$-check plaquettes in a region $R$ form a surface code patch with boundary conditions prescribed by the sides of $R$.
Therefore, any even number of excitations in the same region $R$ (and in the case where one side of $R$ is a smooth boundary, any number of excitations in $R$) can be annihilated by corrections supported on boundary and regional qubits of $R$. 
For this reason, for a cluster $T$ and a region $R$, we often study the intersection of $T$ and $R$, denoted $T(R)$, which is simply the set of vertices of $T$ which belongs to $R$. We call $T(R)$ a \textit{subregion}.

Before we discuss \func{Grow} or \func{Merge}, let us discuss \func{Check correctability} and \func{Correct} as they imposes conditions on what we need to keep track of in \func{Grow} and \func{Merge}. 
Consider a cluster $T$ supported on regions $R_1, \cdots, R_s$. 
If $T(R_i)$ contains an even number of excitations for all $i\in [s]$, then clearly $T$ is correctable, and the correction is supported on regional and boundary qubits.
In general, when some subregions $T(R_i)$ contain an odd number of excitations, we may apply corrections to defect qubits contained in $T$, which changes the excitation parities of subregions in $T$. 
Therefore, the correctability of $T$ depends on the defect qubits contained in $T$.
We formalize these discussions with the following definitions.

\begin{definition}[Type]
    For a defect hyperedge $h = (v_1, \cdots, v_w)$, where $v_j$ belongs to region $R_{i_j}$ for $i_j\in [r]$, the type of $h$ is the tuple $(i_1, \cdots, i_w)\in [r]^w$. 
\end{definition}

As we \func{Grow} and \func{Merge} clusters, for each cluster $T$, we keep a list $D(T)$ of defect qubits contained in $T$, each of a distinct type, such that for every defect hyperedge $h$ in $T$, there exists a defect qubit $h'\in D(T)$ such that $h, h'$ have the same type.
We call $D(T)$ the list of \textit{defect representatives} of $T$.

\begin{definition}[Correctability Equations]
    Consider a cluster $T$ supported on regions $R_1, \cdots, R_s$, with defect representatives $D(T)$. 
    We construct a system of linear equations $M\bx = \bp$ as follows. 
    \begin{enumerate}[itemsep = 0pt]
        \item Initialize a binary $|D(T)|\times s$ matrix $M'$. Enumerate the defect hyperedges in $D(T)$ as $h_1, \cdots, h_{|D(T)|}$. For $\ell\in [|D(T)|]$ and $i\in [s]$, if $h_\ell$ contains a vertex in region $R_i$, set $M'_{\ell,i} = 1$. Otherwise set $M'_{\ell,i} = 0$.

        \item For every region $R_i$, if $T(R_i)$ contains a smooth boundary qubit (a boundary hyperedge of weight 1), remove the $i$-th column of $M'$. 
        Without loss of generality, suppose the remaining columns correspond to regions $R_1, \cdots, R_{s'}$. 

        \item Remove empty or redundant rows from the modified matrix, and let $M$ be the final matrix.

        \item Initialize binary vector $\bp$ of length $s'$. For $i\in [s']$, set $\bp_i$ to be the parity of excitations in $T(R_i)$.
    \end{enumerate}
    We call $M\bx = \bp$ the correctability equations of $T$.
\end{definition}

\begin{lemma}~\label{lem:correctability}
The cluster $T$ is correctable if and only if there is a solution to its correctability equations. 
\end{lemma}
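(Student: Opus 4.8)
The plan is to prove both directions of the equivalence by unpacking what a correction for $T$ actually does to the excitation parities of the subregions. First I would establish the easy structural fact underlying everything: within a single region $R$, since the $Z$-check plaquettes of $R$ form a surface code patch with boundary conditions prescribed by the sides of $R$, any collection of excitations in $R$ of even total parity can be annihilated using only regional and boundary qubits of $R$, and if $R$ has a smooth side then \emph{any} number (even odd) can be annihilated. This is the standard surface-code-with-boundary fact, and it means that the \emph{only} obstruction to correcting $T$ is getting the parity in each subregion right (or irrelevant, if that subregion touches a smooth boundary qubit in $T$). So correctability of $T$ reduces to: can we choose a subset of the defect qubits in $E(T)$ whose application flips exactly the parities of those subregions $T(R_i)$ that currently have odd parity and are not ``free'' (no smooth boundary qubit)?

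Next I would translate that reduction into the linear algebra of the correctability equations. A subset $S$ of defect hyperedges, applied as a Pauli-$Z$ correction, changes the parity of $T(R_i)$ by $\sum_{h\in S}[\,h \text{ contains a vertex of } R_i\,] \bmod 2$. Encoding $S$ as a $0/1$ vector $\bx$ indexed by defect hyperedges and the region-incidence data as the matrix $M'$, the parity change vector is exactly $M'^{\mathrm T}\bx$ — so after choosing the convention that the rows of $M$ are indexed by the (distinct-type) defect representatives and columns by the non-free regions, solvability of $M\bx = \bp$ is precisely the statement that some defect subset fixes all the constrained parities. Two bookkeeping points need justification here: (i) restricting attention to the defect \emph{representatives} $D(T)$ rather than all defect hyperedges of $T$ loses nothing, because two defect hyperedges of the same type have identical region-incidence, hence identical effect on subregion parities — so toggling a representative stands in for toggling any hyperedge of that type, and conversely any parity pattern achievable with the full defect set is achievable with representatives; and (ii) deleting redundant/empty rows and the smooth-boundary columns does not change the solution set for the quantity we care about (the constrained parities), since a smooth-boundary subregion imposes no constraint and redundant rows are implied by the others.

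Putting the two halves together gives both directions. If $M\bx=\bp$ has a solution, pick the corresponding defect qubits, apply them; each non-free subregion is now even, each free subregion can be cleaned using its smooth boundary regardless of parity, and then every non-free subregion — being even — is annihilated by regional/boundary qubits via the surface-code fact; hence $T$ is correctable. Conversely, if $T$ is correctable, take any valid correction supported on $E(T)$, restrict it to its defect-qubit part, and observe that (after the regional/boundary part, which cannot change subregion parities modulo the smooth-boundary escape) this defect part must realize the parity vector $\bp$ on the non-free subregions; reducing modulo type gives a solution $\bx$ to $M\bx=\bp$.

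I expect the main obstacle to be the careful handling of the smooth-boundary subregions and the reduction to defect \emph{representatives} — i.e., verifying rigorously that column/row deletions in the construction of $M$ exactly mirror ``this parity is unconstrained'' and ``this hyperedge type is already accounted for,'' so that the matrix equation is neither too strong (forbidding a correctable cluster) nor too weak (permitting an uncorrectable one). The surface-code-patch-with-boundaries input is standard and I would cite it rather than reprove it; everything else is linear algebra over $\mathbb{F}_2$ once the dictionary between ``defect-qubit subset'' and ``parity-change vector'' is set up cleanly.
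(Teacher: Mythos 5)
Your proposal is correct and follows essentially the same route as the paper: reduce correctability to the parity of excitations in each subregion (using the surface-code-patch-with-boundary fact), observe that only defect qubits can change those parities, and encode the resulting parity-fixing problem as the $\mathbb{F}_2$ linear system. The only difference is that you spell out the bookkeeping — that defect hyperedges of the same type are interchangeable and that removing smooth-boundary columns and redundant rows preserves solvability — which the paper leaves implicit.
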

\begin{proof}
As argued above, for a subregion $T(R)$, if $T(R)$ contains a smooth boundary, then the excitations contained within the subregion can all be matched to the boundary qubit. If it contains an even number of excitations then these excitations can be matched inside the subregion. 
Therefore, the hypergraph matching problem representing correctability of a cluster can be reduced to choosing defect qubits to flip such that every subregion without a smooth boundary has even number of excitations.
This is precisely captured by the correctability equations.
Given a solution $\bx$ to the correctability equations, we will simply add the defect qubits indicated by $\bx$ to the correction, and perform matching within each (correctable) subregion. 
\end{proof}

\begin{remark}
    Note that when a cluster touches two opposite smooth boundaries of a layer, we may simply declare decoding failure as the cluster has percolated. In our implementation we continue the decoding process to completion.
\end{remark}

Our \func{Check correctability} function would therefore construct the correctability equations of $T$ and try to find a solution. The \func{Correct} function will output the correction detailed in Lemma~\ref{lem:correctability}. 
In later discussions, we sometimes use the word \textit{neutral} in place of correctable.

\func{Growing} of a cluster $T$ can be done by breadth-first search (BFS). 
At any time $t$, for every active cluster $T$, we keep a list of its outmost vertices.\footnote{In typical BFS these vertices are called boundary vertices. We use ``outmost'' to prevent overloading the word ``boundary.''} 
We visit the hyperedges of these outmost vertices, add new vertices to $T$, and update the list of outmost vertices. 
If we reach vertices inside another active cluster $T'$, or vertices reached (at time $t$) by another active cluster $T'$, we know that $T$ and $T'$ need to be merged. 
If we reach a defect hyperedge, we add it to $D(T)$ (which we keep as a sorted list), unless a defect hyperedge of the same type is already present.

\func{Merging} of clusters $T_1, \cdots, T_i$ can be done by merging the list of vertices, excitations, and defect representatives of the clusters.

By Lemma~\ref{lem:correctability}, we see that either the cluster decoder will successfully annihilate all syndromes, or we will eventually grow to a cluster that contains the entire decoding graph $\cG$ (unless we intentionally terminate the decoder), which would be correctable. 
Therefore the cluster decoder is a valid decoder. 

Our cluster decoder shares high-level similarities with the decoder of Ref.~\cite{eggerickx2024almost}, which was designed for the geometrically local codes constructed in Ref.~\cite{lin2024geometricallylocalquantumclassical}. 
Indeed, our decoders can both be viewed as adaptations of the union-find decoder of Ref.~\cite{Delfosse2021almostlineartime,delfosse2022toward}.

\subsubsection{Existence of Threshold} 

In the previous section, we described the decoder as growing the radius of clusters by one in every timestep. 
For proof of threshold, we consider a slightly different version of the decoder, where the clusters grows exponentially: at time $t$, we grow the radius of the clusters to $4^t$ before merging, checking correctability and correcting. 
In practice, we expect growing the radius by one per timestep to perform better.

In Appendix B of Ref.~\cite{bravyi2011analytic}, Bravyi and Haah showed that for a stabilizer code defined over a 3D lattice with topological order, an RG decoder has a threshold against locally stochastic noise. 
While their definitions do not exactly capture our setting, we adapt their arguments to prove a threshold against locally stochastic noise for our cluster decoder.

We use two different metrics in the argument.
The first metric is the \textit{lattice metric} $d_L$, which for two $Z$-checks (or two qubits) measures the $\ell_\infty$ distance between the checks (qubits) on the 3D lattice.
The second metric is the \textit{graph metric} $d_G$, which for two $Z$-checks $u,v$, measures the distance between $u,v$ in the hypergraph $G$.
For two qubits $e$ and $f$, $d_G$ measures the distance between $e$ and $f$ in the dual hypergraph\footnote{For a hypergraph $G = (V, E)$, the dual hypergraph $G^\top$ has one vertex $x_e$ for every hyperedge $e\in E$, and one hyperedge $h_v$ for every vertex $v\in V$. The hyperedge $h_v$ is incident to all vertices $x_e$ where $v\in e$ in $G$.} $G^\top$ of $G$.

\begin{definition}
    A collection $S$ of $Z$-checks (or qubits) is $r$-connected if $S$ cannot be partitioned into two disjoint subsets that are distance $r$ apart. 
    A $r$-connected component is a maximal $r$-connected set. 
\end{definition}
If the distance is measured with the lattice metric, we say that $S$ is $r$-lattice-connected or $S$ is a $r$-lattice-connected component. 
Similarly if the distance is measured with the graph metric, we say that $S$ is $r$-graph-connected or $S$ is a $r$-graph-connected component. 
It is evident from the construction of layer codes that the lattice metric lower bounds the graph metric.
Therefore, an $r$-graph-connected set is also $r$-lattice-connected. This observation will be used tacitly throughout this proof.

The threshold proof of Ref.~\cite{bravyi2011analytic} utilizes the notion of \textit{chunks} to bound the support of a locally stochastic error.
We include their definitions and key lemmas here.
\begin{definition}
    Let $B$ be a collection of qubits, which should be thought of as the support of an error $P$. 
    A qubit in $B$ is a level-$0$ chunk. 
    Choose an integer $Q \gg 1$.
    A subset of $B$ is a level-$n$ chunk if it is a disjoint union of two level-$(n-1)$ chunks and its diameter is at most $Q^n/2$. 
    Note that a level-$n$ chunk contains exactly $2^n$ qubits. 
\end{definition}
Similar to above, we denote the chunks lattice-chunks and graph-chunks depending on the metrics used. 
Note that a level-$n$ graph-chunk is a level-$n$ lattice-chunk.

We now partition the error support $B$ according to the levels. 
Let $B_n$ denote the union of all level-$n$ chunks. Since a level-$n$ chunk is a union of two level-$(n-1)$ chunks, we see that
\begin{align}
    B = B_0 \supseteq B_1\cdots \supseteq B_m,
\end{align}
where $m$ is the smallest integer such that $B_{m+1} = \varnothing$.
Let $F_i = B_i\setminus B_{i-1}$, then $B = F_0\sqcup F_1\cdots \sqcup F_m$ is a partition of $B$. 
Bravyi and Haah called this the \textit{chunk decomposition} of $B$. 
Using a percolation argument, they proved that the probability for a locally stochastic error to include a level-$n$ chunk is doubly exponentially decaying in $n$.
\begin{lemma}[Proof of Theorem~2, Appendix~B of Ref.~\cite{bravyi2011analytic}]\label{lem:BH-chunk-probability}
    Suppose the error support $B$ is locally stochastically distributed\footnote{We note that Bravyi and Haah assumed the error is independently and identically distributed, but their arguments apply to local stochastic noise as well.}
    with strength $\epsilon$.
    On a $D$-dimensional lattice\footnote{In their proof, Bravyi and Haah assumed that the lattice is translationally invariant. Their arguments can be easily adapted to remove this assumption, which enables us to apply this lemma to layer codes.}
    of linear size $L$, the probability that $B$ includes a level-$n$ lattice-chunk is bounded by $O(L^D)((3Q)^{2D}\epsilon)^{2^n}$.
\end{lemma}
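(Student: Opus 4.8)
The plan is to prove the bound by a union bound over all level-$n$ lattice-chunks, using only two elementary inputs. First, a level-$n$ chunk contains exactly $2^n$ qubits: a level-$0$ chunk is a single qubit, and a level-$n$ chunk is a disjoint union of two level-$(n-1)$ chunks, so the count doubles at each level. Second, by the defining property of locally stochastic noise of strength $\epsilon$, the probability that $B$ contains any fixed set of $2^n$ qubits is at most $\epsilon^{2^n}$. Hence $\Pr[B\text{ includes a fixed level-}n\text{ chunk}]\le\epsilon^{2^n}$, and the whole lemma reduces to showing that the number of level-$n$ lattice-chunks on a $D$-dimensional lattice of linear size $L$ is at most $O(L^D)\big((3Q)^{2D}\big)^{2^n}$.

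To count the chunks, I would assign to each chunk a canonical root---say its lexicographically smallest qubit---and let $h(n)$ be an upper bound, uniform over lattice sites, on the number of level-$n$ chunks having a given site as root; the total number of level-$n$ chunks is then at most $(\text{number of sites})\cdot h(n)=O(L^D)h(n)$. For the recursion, write a level-$n$ chunk as a disjoint union $C_1\sqcup C_2$ of two level-$(n-1)$ chunks of combined diameter at most $Q^n/2$, taking $C_1$ to be the half that contains the root $r$; then $r$ is also the root of $C_1$, and the root of $C_2$ lies within $\ell_\infty$-distance $Q^n/2$ of $r$, hence inside a box containing at most $(Q^n+1)^D\le(2Q^n)^D$ sites (using $Q^n\ge1$). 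Since each site is the root of at most $h(n-1)$ level-$(n-1)$ chunks, this yields the recursion $h(n)\le 2^D Q^{Dn}\,h(n-1)^2$ with $h(0)=1$.

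Iterating the recursion, the accumulated exponent of $Q^D$ is $\sum_{j=1}^{n}j\,2^{n-j}=2^{n+1}-n-2\le 2\cdot 2^n$, and the accumulated exponent of the constant $2^D$ is $2^n-1\le 2^n$, so $h(n)\le\big(2^D Q^{2D}\big)^{2^n}=\big((2Q^2)^D\big)^{2^n}\le\big((3Q)^{2D}\big)^{2^n}$. Multiplying by the $O(L^D)$ factor for the root position and by the per-chunk probability $\epsilon^{2^n}$, the union bound gives $\Pr[B\text{ includes a level-}n\text{ lattice-chunk}]\le O(L^D)\big((3Q)^{2D}\epsilon\big)^{2^n}$, which is the claim.

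The step I expect to require the most care---though not a genuine obstacle, since this is essentially the percolation-style counting argument of Appendix~B of Ref.~\cite{bravyi2011analytic}---is the solve of the quadratic recursion: one must verify that the $Q^{Dn}$ factors accumulate to an exponent $\Theta(2^n)$ on $Q^{2D}$ (not on $Q^D$), and that the resulting geometric constant fits inside the stated factor $(3Q)^{2D}$, with the ``$3$'' (i.e.\ $9^D$, which comfortably dominates the constant $2^D$) absorbing the box-counting constant and the rounding $(Q^n+1)^D\le(2Q^n)^D$. A secondary point worth flagging is that layer-code lattices have boundaries and defects and hence are not translationally invariant, unlike in the original argument; but the box-counting step is only an upper bound on the number of available sites---there are fewer near a boundary---so the bound $h(n)$ holds uniformly over sites, which is all the union bound needs.
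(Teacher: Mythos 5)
Your proof is correct and reconstructs the standard percolation-style counting argument from Bravyi--Haah (Appendix~B of Ref.~\cite{bravyi2011analytic}), which the paper itself cites rather than reproves. The union bound over chunks, the per-chunk probability $\epsilon^{2^n}$ from the local stochastic property, the canonical rooting, and the quadratic recursion $h(n)\le 2^D Q^{Dn} h(n-1)^2$ are all sound; the solve correctly accumulates $\sum_{j\le n} j\,2^{n-j}\le 2\cdot 2^n$ and $\sum_{j\le n} 2^{n-j}\le 2^n$, giving $h(n)\le\bigl((2Q^2)^D\bigr)^{2^n}\le\bigl((3Q)^{2D}\bigr)^{2^n}$ as required. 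Your remark about non-translation-invariance is also the right observation --- it is exactly the adaptation flagged in the paper's footnote, and as you note, fewer available root positions near a boundary or defect only strengthens the upper bound.
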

Note that this is also an upper bound on the probability that $B$ includes a level-$n$ graph-chunk.
To complete their threshold proof, Bravyi and Haah showed that the RG decoder will correct all errors that do not include high-level chunks.
Specifically, they assumed that the stabilizer code being decoded has topological order at scale $L_{tqo}$ (see Definition~1,~\cite{bravyi2011analytic}), where $L_{tqo} \geq L^{\gamma}$, and proved the following.
\begin{lemma}[Lemma~5 of Ref.~\cite{bravyi2011analytic}]
    Let $Q\geq 10$. If the length $m$ of the lattice-chunk decomposition of $B$ satisfies $Q^{m+1}\le L_{tqo}$, then $B$ is corrected by the RG decoder.
\end{lemma}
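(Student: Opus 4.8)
\textit{Proof proposal.} The plan is to carry over the percolation-type argument underlying Lemma~5 of Ref.~\cite{bravyi2011analytic}, checking that it still goes through when the abstract RG decoder is replaced by the concrete cluster decoder of Appendix~\ref{sec:decoder_cluster} run in its exponential-growth mode. Fix the (graph-)chunk decomposition $B = F_0 \sqcup F_1 \sqcup \cdots \sqcup F_m$, and recall that at time step $t$ the exponential cluster decoder grows every active cluster to radius $4^t$ before merging, testing correctability, and correcting. I would induct on $t$, maintaining the invariant that after step $t$ the residual error $E(t)$ --- the product of the true error with every correction applied so far --- satisfies: (i) every surviving excitation sits inside a cluster whose initial seed lies in $F_{t+1}\cup\cdots\cup F_m$; and (ii) on every region of the lattice of diameter at most $L_{tqo}$, $E(t)$ agrees with the true error up to multiplication by an element of the stabilizer group of $\sL$, so that no logical fault has been accumulated.

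For the inductive step I would first extract from the chunk decomposition the geometric fact that the level-$\le (t+1)$ portion $F_0\cup\cdots\cup F_{t+1}$ of the error splits into graph-connected pieces of diameter $O(Q^{t+1})$, pairwise separated at scale $\Omega(Q^{t+1})$ from one another and from the higher-level error. Taking $Q$ large compared with the decoder's growth base $4$ then forces the active clusters at step $t+1$ to be exactly the thickened neighborhoods of these pieces: each such cluster $T$ swallows the whole support of the piece it surrounds, has not touched any higher-level error, and has not percolated to opposite smooth boundaries of a layer. It remains to see that $T$ is \emph{correctable} in the precise sense of Lemma~\ref{lem:correctability}: this is automatic, because the true error restricted to the region covered by $T$ already produces exactly the residual syndrome inside $T$, and reading off which defect qubits it touches yields an explicit solution $\bx$ of the correctability equations $M\bx=\bp$ of $T$. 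The decoder therefore finds \emph{some} in-cluster correction annihilating the syndrome inside $T$; since $\diam(T)=O(Q^{t+1}) = O(L_{tqo})$ and $\sL$ has topological order at scale $L_{tqo}$, this correction differs from the true error restricted to $T$ by a stabilizer, which re-establishes invariant~(ii). Invariant~(i) follows because after these corrections the only remaining excitations are sourced by $F_{t+2}\cup\cdots\cup F_m$.

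At the final step $t=m$ there is no higher-level error left, so every active cluster is a neighborhood of a top-level chunk-component and gets corrected; the hypothesis $Q^{m+1}\le L_{tqo}$ is exactly what keeps invariant~(ii) intact through this last and largest stage, since the clusters formed there never exceed diameter $O(Q^{m+1})$. Hence when the decoder halts, $E(m)$ is a stabilizer of $\sL$, i.e., $B$ has been corrected.

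The step I expect to be the main obstacle is the geometric bookkeeping that matches the decoder's growth schedule to the chunk diameters: one must show that at every step the collection of active clusters coincides with the connected components of the appropriate level of the error, never prematurely merging with higher-level chunks and never percolating across a layer. This is where the careful choice of constants --- $Q$ versus the growth base $4$, and the layer-separation constant $K$ --- and the separation properties of the chunk decomposition are used, and where one must check that the Bravyi--Haah argument survives both the loss of translation invariance and the replacement of plain hypergraph matching by the correctability-equations test specific to layer codes. A subsidiary point is to confirm that layer codes genuinely have topological order at some scale $L_{tqo}\ge L^\gamma$; this follows from the distance lower bounds of Theorem~\ref{thm:layer_codes} (and Theorem~\ref{thm:random_layer_maintext} in the random case), which control the size of the smallest non-correctable, syndrome-free cluster.
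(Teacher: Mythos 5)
The statement you were asked about is Lemma~5 of Ref.~\cite{bravyi2011analytic}, which the paper quotes verbatim and cites as an external result; the paper provides no proof of it. Your proposal explicitly replaces the abstract RG decoder on a translation-invariant topologically-ordered lattice code by the paper's cluster decoder on layer codes, which means you are not proving the cited Bravyi--Haah lemma at all. What you are sketching is in effect the paper's own Lemma~\ref{lem:cluster-decoder-correct}, which the paper proves separately and which it uses \emph{in place of} the cited lemma (because layer codes do not strictly satisfy the Bravyi--Haah TQO hypothesis). The two should not be conflated: the cited lemma is input to the overall argument, and Lemma~\ref{lem:cluster-decoder-correct} is the adaptation the authors derive for their setting.

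Even as a sketch of Lemma~\ref{lem:cluster-decoder-correct}, your argument has a genuine gap in the inductive invariant. You propose to maintain that after step $t$ every surviving excitation is seeded in $F_{t+1}\cup\cdots\cup F_m$, i.e., that the decoder has cleared chunk levels $0,\dots,t$ by step $t$. But the decoder's ball radius at step $t$ is $4^t$, while a $Q^t$-connected component $F_{t,\alpha}$ can have diameter as large as $Q^t$, and within it, consecutive excitations can be separated by up to $Q^t$. Since the proof requires $Q\gg 4$ (the paper takes $Q\ge180$), a level-$t$ chunk is neither covered by nor connected at the decoder's scale at step $t$; making $Q$ ``large compared to the growth base $4$'' makes this mismatch worse, not better. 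The claim ``each such cluster $T$ swallows the whole support of the piece it surrounds'' therefore fails. The paper's invariant tracks chunk \emph{scale}, not chunk \emph{level index}: after step $t$ the residual error is (up to stabilizers) supported in $\bigcup_{j,\alpha}\ball(F_{j,\alpha},4^t+1)$ over all $j$ with $Q^j\ge 4^t/3$, and the level that drops out at step $t+1$ is the one with $4^t/3\le Q^j<4^{t+1}/3$, whose diameter is below $4^{t+1}/3$ and thus genuinely fits in a radius-$4^{t+1}$ cluster. That scale-matching is what makes the geometry close, and it is the piece your sketch is missing. Your subsidiary concern about layer codes having a TQO-like property is well taken and is addressed in the paper by the observation that any syndrome-free Pauli supported in a lattice box of diameter $L/2$ is a stabilizer, following from the distance of $\scrL(C)$ --- but again, that observation belongs to the proof of Lemma~\ref{lem:cluster-decoder-correct}, not of the cited Bravyi--Haah lemma.
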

The existence of threshold follows from the above two lemmas. 
While layer codes do not strictly satisfy their definition of topological order\footnote{Specifically, their definition only captures codes with periodic boundary conditions (see footnote~32 of Ref.~\cite{bravyi2011energy-arxiv}), while layer codes have many boundaries.},
a similar condition holds: any Pauli operator with no syndrome which is supported within a lattice-box of diameter $L/2$ is a product of stabilizers. 
This condition is a direct consequence of the fact that any logical operator of the layer code must cross an entire $Q$-layer, which has length at least $L$.
Utilizing this condition, we prove the following correctness lemma for the cluster decoder, which will imply a threshold.
\begin{lemma}\label{lem:cluster-decoder-correct}
    Let $Q\ge 180$, and let $P$ be an error supported on $B$.
    If the length $m$ of the graph-chunk decomposition of $B$ satisfies $Q^{m}\le L/60$, then $P$ is corrected by the cluster decoder.
\end{lemma}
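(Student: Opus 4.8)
The plan is to mirror the structure of the Bravyi--Haah threshold argument (their Lemma~5), adapted to the cluster decoder and the graph-chunk decomposition. Write $B = F_0 \sqcup F_1 \sqcup \cdots \sqcup F_m$ for the graph-chunk decomposition, so that each connected component of $F_i$ (in an appropriate metric) has diameter at most $Q^i/2$ and the components of $F_i$ are well-separated from the rest of $B$ at scale roughly $Q^i$. The key structural fact to exploit is the one highlighted just before the statement: any Pauli operator with trivial syndrome supported inside a lattice-box of diameter $L/2$ is a stabilizer, because every logical operator must sweep across a full $Q$-layer of length at least $L$. So as long as we can argue that the cluster decoder only ever commits corrections supported inside boxes of diameter bounded well below $L/2$, the net operator $P \cdot (\text{correction})$ is a product of stabilizers and decoding succeeds.

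First I would argue inductively on the level $i$ that, by the time the clusters have grown to radius $\approx Q^i$, every excitation created by $F_0 \cup \cdots \cup F_i$ lies in a cluster of bounded diameter (something like $O(Q^i)$) and, crucially, that each such cluster is correctable. Correctability at this stage is where the region structure and Lemma~\ref{lem:correctability} enter: a bounded-diameter cluster produced by a level-$i$ chunk touches only $O(1)$ regions and has even excitation parity in the appropriate sense (excitations come in pairs created by the error, modulo excitations that can be condensed at boundaries), so its correctability equations $M\bx = \bp$ are solvable. Then I would show that once these clusters are corrected and removed, the residual syndrome is exactly what would be produced by the error $F_{i+1} \cup \cdots \cup F_m$ together with the committed correction — and because the committed corrections are each supported in small boxes disjoint from the remaining error at scale $Q^{i+1}$, they do not interfere with the higher-level analysis. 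Iterating up to level $m$, using the hypothesis $Q^m \le L/60$ to guarantee that no cluster ever grows large enough to percolate across a full $Q$-layer (i.e. to reach diameter comparable to $L/2$), every correction committed stays inside a sub-$L/2$ box, so the total applied operator is a stabilizer.

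The constants $Q \ge 180$ and $Q^m \le L/60$ are bookkeeping artifacts of this induction: growing the radius geometrically as $4^t$ means a level-$i$ chunk is fully absorbed into a single cluster only after $O(i)$ rounds, and each round of growth-then-merge can enlarge a cluster's diameter by a bounded multiplicative factor; chasing these factors through the separation-scale requirements of the chunk decomposition yields the explicit numbers. I would not grind through them in detail but would state the geometric-series estimate and the merge-growth bound and let the constants fall out.

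The main obstacle, and the place where real care is needed beyond translating Bravyi--Haah, is the interplay between the non-topological boundary structure of layer codes (many rough/smooth boundaries and line defects, which is precisely why their Definition~1 of topological order does not literally apply) and the correctability of clusters. Specifically, a bounded-diameter cluster might straddle a line defect or a boundary, and then whether it is correctable is governed by the fusion rules and the rank structure of the correctability matrix $M$ rather than by a naive parity count. The delicate step is to prove that every cluster arising from a genuinely local (bounded-level-chunk) error is always correctable at the scale at which it stabilizes — i.e. that the syndrome of a locally-supported error restricted to any bounded union of regions always lies in the image of $M$ — using that such a syndrome is the boundary of an actual local Pauli operator and hence automatically satisfies all the global constraints (neutrality in each region modulo condensing boundaries). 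Once that local correctability is established, the percolation/chunk machinery of Lemma~\ref{lem:BH-chunk-probability} combines with this correctness lemma to deliver the threshold exactly as in Ref.~\cite{bravyi2011analytic}.
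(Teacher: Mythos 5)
Your proposal follows essentially the same approach as the paper's proof: the same graph-chunk decomposition, the same level-by-level induction in which higher-level balls are shown to be well-separated via Proposition~7 of Ref.~\cite{bravyi2011analytic}, and the same use of the fact that a trivial-syndrome operator inside a sub-$L/2$ box is a stabilizer. The ``delicate step'' you flag about cluster correctability is resolved in the paper exactly as you suggest at the end of your third paragraph --- the error restricted to the ball is itself contained in the cluster and is a valid matching, so the correctability equations are automatically solvable --- so there is no genuine gap between your sketch and the written proof.
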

To prove this lemma, we characterize the behavior of the cluster decoder.

For a $Z$-check $v$, let $\ball(v, r)$ denote the ball of radius $r$, measured in graph metric $d_G$, around $v$. Similarly denote a ball around a qubit $e$. 
We emphasize that for all of this proof, balls are only defined with respect to the graph metric.
For a collection $S$ of checks (qubits), let
\begin{equation}
\ball(S, r) = \bigcup_{x\in S}\ball(x, r).
\end{equation}
We can then bound the support of output correction of the cluster decoder. 

\begin{lemma}\label{lem:cluster-correction-radius}
    Consider running the cluster decoder with an input error $P$ supported on $B$. Let $C_t$ denote the output correction accumulated up to time $t$. Then 
    $P\cdot C_t$ is supported within $\ball(B, 4^t+1)$.
\end{lemma}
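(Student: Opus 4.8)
The plan is to prove Lemma~\ref{lem:cluster-correction-radius} by induction on the time step $t$, tracking how the support of the accumulated correction can grow with each iteration of the main routine. The base case $t=0$ is immediate: no clusters have been grown or corrected yet, $C_0 = I$, and $P\cdot C_0 = P$ is trivially supported within $\ball(B,1) \supseteq B$. For the inductive step, I would argue that at time $t$ every active cluster is contained in $\ball(B, 4^{t})$ (or a slightly smaller radius), so that when we grow each cluster by radius $4^{t}$ in the exponentially-growing version of the decoder, merge, and then correct, any newly flipped qubits lie within $\ball(B, 4^{t} + 4^{t} + O(1)) \subseteq \ball(B, 4^{t+1})$, using $4^{t}+4^{t} = 2\cdot 4^t \le 4^{t+1}$ with room to spare for the additive constant.

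The key intermediate claim is a geometric one: if a cluster $T$ is declared correctable at time $t$, then the correction applied to remove it is supported on qubits \emph{within $T$} — in fact on the boundary qubits, regional qubits, and defect representatives contained in $T$, as specified by Lemma~\ref{lem:correctability} and the \func{Correct} routine. Thus the support of the correction never extends beyond the cluster itself. Since a cluster at time $t$ is obtained by starting from the initial excitations (each within $\ball(B,1)$, as excitations created by $P$ are endpoints of the error string and hence adjacent to $B$), growing by radii $4^0, 4^1, \dots, 4^{t-1}$ in successive steps, and merging, an easy geometric sum shows every cluster at time $t$ is contained in $\ball(B, 1 + \sum_{s<t} 4^s) \subseteq \ball(B, 4^{t})$. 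Intersecting the newly grown region at step $t$ then lands inside $\ball(B, 4^t + 4^t) \subseteq \ball(B, 4^{t+1})$, and the additive $+1$ in the lemma statement absorbs the fact that excitations sit one hyperedge away from $B$.

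I would organize the writeup as: (i) observe that $P$ creates excitations only on $\ball(B,1)$, so all size-one initial clusters are contained in $\ball(B,1)$; (ii) prove by induction that every cluster present at the start of iteration $t$ is contained in $\ball(B, 4^t)$, using the growth radii and the fact that merging does not enlarge the union beyond the union of the individual balls plus the overlap slack; (iii) invoke Lemma~\ref{lem:correctability} to conclude the \func{Correct} step only flips qubits inside the relevant cluster; (iv) combine to get $P\cdot C_t$ supported on $\ball(B, 4^t+1)$.

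The main obstacle I anticipate is bookkeeping the radii carefully across the \func{Merge} step: when two clusters overlap after growing, their union could in principle have larger diameter than either summand, and one must verify that iterating this does not cause the radius bound to blow up faster than $4^t$. The clean way to handle this is to bound each cluster not by its diameter but by its containment in $\ball(B, r_t)$ for a single shared radius $r_t$ — containment in a common ball around $B$ is automatically preserved under union, so merging is free, and only the \func{Grow} step increments $r_t$. This reduces the argument to the recursion $r_{t+1} = r_t + 4^t$ with $r_0 = 1$, giving $r_t = 1 + (4^t-1)/3 \le 4^t + 1$, which is exactly the claimed bound. The only remaining subtlety is confirming that the graph-metric balls behave well with respect to the BFS implementation of \func{Grow} (i.e., that growing ``by radius $4^t$'' in the algorithm genuinely corresponds to $\ball(\cdot, 4^t)$ in $d_G$), which follows directly from how \func{Grow} is defined via breadth-first search on $\cG$.
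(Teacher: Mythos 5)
Your overall framework is right: excitations created by $P$ sit within distance $1$ of $B$, the \func{Correct} step only flips qubits inside the corresponding cluster (your Lemma~\ref{lem:correctability} observation), and bounding each cluster by containment in a common ball $\ball(B, r)$ — rather than by diameter — neatly disposes of the \func{Merge} bookkeeping. These are all ingredients the paper uses.

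However, you misread the growth schedule, and this produces a genuine gap. The paper's exponential variant grows each cluster's radius \emph{to} $4^t$ at round $t$ — i.e., the radius is reset to $4^t$, so at round $t$ each cluster is simply $\ball(\sigma(P), 4^t) \subseteq \ball(B, 4^t + 1)$. Combined with the fact that \func{Correct} never introduces new syndromes (so all clusters at all rounds $\le t$ are centered on subsets of the original excitations), the lemma follows directly, with no induction and no recursion. You instead assume clusters grow \emph{by} an additional $4^t$ at round $t$, which gives the recursion $r_{t+1}=r_t+4^t$ and a cumulative radius of $1 + \sum_{s=0}^{t}4^s = 1+(4^{t+1}-1)/3$ after round $t$. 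This exceeds $4^t+1$ whenever $t\ge 1$, so under your reading the claimed bound would actually fail. Your closing check $r_t = 1 + (4^t-1)/3 \le 4^t+1$ slips past this only because of an off-by-one in the indexing: your $r_t$ is the radius \emph{before} round $t$, whereas $C_t$ in the lemma (and, crucially, as it is invoked in the induction of Lemma~\ref{lem:cluster-decoder-correct}) must account for the corrections applied \emph{through} round $t$. With the correct reading of the growth schedule your inductive machinery is unnecessary, and with your reading the bound doesn't close.
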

\begin{proof}
    Recall that we are considering a version of the cluster decoder where the cluster graph-radius grows exponentially with $t$.
    Note that all syndromes caused by $P$ must be adjacent to qubits in $B$. The cluster decoder grows graph-radius $t$ balls around these syndromes, and annihilate clusters of syndromes when possible. Since the decoder never introduces new syndromes, any accumulated correction must be supported within balls of radius $t$ centered at the input syndrome $\sigma(P)$.
\end{proof}

We also make use of the following lemma. Let $\diam(\cdot)$ denote the diameter of sets, with subscripts $G, L$ when we are considering the graph or lattice metrics, respectively.
\begin{lemma}[Proposition~7 of Ref.~\cite{bravyi2011analytic}]\label{lem:BH-chunk-diameter}
    Let $Q \ge 6$ and let $M$ be a $Q^n$-connected component of $F_n$. Then $\diam(M)\leq Q^n$ and $M$ is separated from $E_n\setminus M$ by distance more than $\frac{1}{3}Q^{n+1}$. 
\end{lemma}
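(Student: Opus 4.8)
The plan is to prove this abstractly, as a purely combinatorial statement about the chunk decomposition in a metric space, with no reference to layer codes and working with whichever of the two metrics ($d_L$ or $d_G$) is in force. I would fix the natural strict conventions throughout --- a level-$n$ chunk has diameter strictly below $Q^n/2$, and a set $S$ is $Q^n$-connected iff it is connected in the graph with edges $\{a,b\}$ whenever $d(a,b)<Q^n$ (Bravyi--Haah's precise conventions can be matched the same way). Recall that $F_n$, written $E_n$ in the statement, is $B_n\setminus B_{n+1}$: the error qubits lying in some level-$n$ chunk but in no level-$(n+1)$ chunk. Everything reduces to a single \emph{gap lemma}: for all $x,z\in F_n$, either $d(x,z)<Q^n$ or $d(x,z)>\tfrac13 Q^{n+1}$, with no distance permitted in between. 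The two assertions of the lemma then follow immediately.

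To prove the gap lemma: since $F_n\subseteq B_n$, each of $x,z$ lies in a level-$n$ chunk; fix $C(x)\ni x$ and $C(z)\ni z$, each of diameter $<Q^n/2$. If $C(x)\cap C(z)\neq\varnothing$, a common point $w$ gives $d(x,z)\le d(x,w)+d(w,z)\le\diam(C(x))+\diam(C(z))<Q^n$. If $C(x)\cap C(z)=\varnothing$, then $C(x)\cup C(z)$ is a disjoint union of two level-$n$ chunks but is not a level-$(n+1)$ chunk --- otherwise $x\in C(x)\subseteq B_{n+1}$, contradicting $x\in F_n$ --- so it must violate the diameter bound: $\diam(C(x)\cup C(z))\ge Q^{n+1}/2$. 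Therefore
\begin{equation*}
d(x,z)\ \ge\ d\bigl(C(x),C(z)\bigr)\ \ge\ \diam\bigl(C(x)\cup C(z)\bigr)-\diam(C(x))-\diam(C(z))\ >\ \tfrac12 Q^{n+1}-Q^n\ \ge\ \tfrac13 Q^{n+1},
\end{equation*}
the final inequality using $Q\ge 6$.

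Given the gap lemma, the diameter bound is a short induction. Let $M$ be a $Q^n$-connected component of $F_n$ and $x,y\in M$; connectedness supplies a chain $x=p_0,\dots,p_k=y$ inside $M$ with $d(p_i,p_{i+1})<Q^n$, and I claim $d(x,p_j)<Q^n$ for every $j$. The inductive step gives $d(x,p_{j+1})\le d(x,p_j)+d(p_j,p_{j+1})<2Q^n\le\tfrac13 Q^{n+1}$, again by $Q\ge 6$, so the gap lemma --- which forbids distances in $[Q^n,\tfrac13 Q^{n+1}]$ --- forces $d(x,p_{j+1})<Q^n$. As $M$ is finite, $\diam(M)\le Q^n$. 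For the separation, take $x\in M$ and $z\in F_n\setminus M$; lying in distinct $Q^n$-connected components they satisfy $d(x,z)\ge Q^n$, so the gap lemma forces $d(x,z)>\tfrac13 Q^{n+1}$, i.e., $M$ is separated from $F_n\setminus M$ by more than $\tfrac13 Q^{n+1}$.

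I do not expect a substantive obstacle: the only real content is the two-case split of the gap lemma. The one point deserving care is the constant. Here $Q\ge 6$ is invoked twice --- as $\tfrac12 Q^{n+1}-Q^n\ge\tfrac13 Q^{n+1}$ (so that distinct components of $F_n$ are far apart) and as $2Q^n\le\tfrac13 Q^{n+1}$ (so that a connected chain within a component cannot jump across the lower half of the gap) --- and $Q=6$ is precisely the common solution of these two requirements, which is what makes the stated constant $\tfrac13$ in the separation work. The remaining care is purely bookkeeping: keeping the strict versus non-strict conventions (in ``diameter $<Q^n/2$'', in ``$Q^n$-connected'', and in the resulting gap) consistent so that the endpoint case $d(x,z)=Q^n$ cannot occur for pairs in $F_n$.
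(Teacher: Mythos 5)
Your proof is correct: the two-case ``gap lemma'' (either the two level-$n$ chunks containing $x,z$ intersect, forcing $d(x,z)\lesssim Q^n$, or their disjoint union would be a level-$(n+1)$ chunk unless its diameter exceeds $Q^{n+1}/2$, forcing $d(x,z)>\tfrac13 Q^{n+1}$ via $Q\ge 6$) is exactly the content behind this statement, and your handling of the strict/non-strict convention at the endpoint $d(x,z)=Q^n$ is the right bookkeeping. Note that the paper itself offers no proof---it imports the result verbatim as Proposition~7 of Bravyi--Haah (the ``$E_n$'' in the statement is their notation for what the paper calls $F_n$)---and your argument is essentially a faithful reconstruction of that original chunk-decomposition proof.
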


\begin{proof}[Proof of Lemma~\ref{lem:cluster-decoder-correct}]
    We follow the high-level ideas of the proof in Ref.~\cite{bravyi2011analytic}.
    Let $B = F_0\sqcup F_1\cdots \sqcup F_m$ be the graph-chunk decomposition of $B$. Let $F_{j,\alpha}$ enumerate the $Q^j$-graph-connected component in $F_j$.
    Note that $F_{j,\alpha}$ form a partition of $F_j$.

    We prove the following claim by induction on $t$: $P\cdot C_t$ is stabilizer-equivalent to an operator supported in the union of all balls $\ball(F_{j,\alpha}, 4^t+1)$ for all $j$ such that $Q^j \geq 4^t/3$.

    Note that the base case where $t = 0$ is a direct consequence of Lemma~\ref{lem:cluster-correction-radius}.

    Suppose the claim holds for some $t$. From Lemma~\ref{lem:BH-chunk-diameter}, we know that
    \begin{align}
        \diam_G(F_{j,\alpha})\le Q^j, \quad d_G(F_{j,\alpha}, F_{k,\alpha}) \geq \frac{1}{3}Q^{1+\min(j,k)}.
    \end{align}
    Consequently, the distance between any pair of higher level balls is large. For $j$, $k$ such that $Q^j\ge 4^t/3$ and $Q^k\ge 4^t/3$, we have
    \begin{align}
        d_G(\ball(F_{j,\alpha}, 4^t+1), \ball(F_{k,\beta}, 4^t+1))
        \geq \frac{1}{3}Q^{1+\min(j,k)} - 2(4^t+1) \geq 60Q^{\min(j,k)} - 3\cdot 4^t > 2\cdot 4^{t+1}.
    \end{align}
    At timestep $t+1$, the cluster decoder considers all $(2\cdot 4^{t+1})$-graph-connected components of syndromes caused by $P\cdot C_t$.
    Since higher level balls are far apart, any such component of syndromes must be caused by 
    the support of $P\cdot C_t$ within a single ball $\ball(F_{j,\alpha}, 4^t+1)$.
    
    Let $j$ be such that $4^{t}/3 \leq Q^j < 4^{t+1}/3$. 
    We will show that the support of $P\cdot C_t$ in all balls $\ball(F_{j,\alpha}, 4^t+1)$ are corrected up to stabilizers by the cluster decoder at time $t+1$.
    Fix one of these balls, we consider its diameter. 
    \begin{align}
        \diam_G(\ball(F_{j,\alpha}, 4^t+1))
        \leq Q^j + 2(4^t+1) < 4^{t+1}.
    \end{align}
    Since the diameters are less than $4^{t+1}$, the syndrome $S$ caused by errors supported in this ball is $(2\cdot 4^{t+1})$-connected, which means it will be examined by the cluster decoder at time $t+1$. 
    The cluster decoder will further determine that this cluster of syndromes is correctable, as a valid correction (namely, the error itself) is supported within the ball $\ball(F_{j,\alpha}, 4^t+1)$, which is contained in the cluster. 
    Therefore, the decoder will correct $S$. 
    The correction must be supported within $\ball(F_{j,\alpha}, 4^{t+1}+1)$, because any correction must be within radius $4^{t+1}$ of the syndromes $S$ and $S$ must be in the 1-neighborhood of $F_{j,\alpha}$. 
    To see this last claim, note that the cluster decoder never introduce new syndromes. Therefore, $S$ must be a subset of the input syndrome, which means it must be in the 1-neighborhood of the input error $F_{j,\alpha}$.

    The diameter of these correction balls are again small.
    \begin{align}
        \diam_G(\ball(F_{j,\alpha}, 4^{t+1}))\le Q^j + 2\cdot 4^{t+1} < 30Q^{j} < L/2.
    \end{align}
    Therefore the support of $P\cdot C_t$ in all balls $\ball(F_{j,\alpha}, 4^t+1)$ are corrected up to stabilizers.
    Consequently, by Lemma~\ref{lem:cluster-correction-radius}, $P\cdot C_{t+1}$ will be stabilizer-equivalent to an operator supported on the union of all balls $\ball(F_{k,\beta}, 4^{t+1}+1)$ where $k > j$.
    This completes our induction.
    By taking $t$ such that $4^t > 3Q^m$, we see that the cluster decoder corrects $P$ up to stabilizers. 
\end{proof}

Combining the above lemmas, we conclude that the cluster decoder has a threshold.
\begin{theorem}\label{thm:cluster-decoder-threshold}
    There exists a constant threshold $\epsilon_*$ such that for all $\epsilon < \epsilon_*$, for a layer code of linear size $L$ experiencing locally stochastic noise of strength $\epsilon$, the cluster decoder fails with probability upper bounded by $\exp(-\Theta(L^{\eta}))$ for some constant $\eta$.
\end{theorem}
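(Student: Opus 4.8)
The plan is to derive the threshold purely by combining the two facts already established: Lemma~\ref{lem:BH-chunk-probability}, which bounds the probability that a locally stochastic error support contains a high-level graph-chunk, and Lemma~\ref{lem:cluster-decoder-correct}, which guarantees the cluster decoder succeeds whenever the graph-chunk decomposition of the error support is short. This is the Bravyi--Haah threshold argument, transported to the layer-code setting with the graph metric; here the ambient lattice dimension is $D = 3$ and the linear size is $L$.

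First I would fix the chunk parameter $Q \ge 180$ so that Lemma~\ref{lem:cluster-decoder-correct} applies, and then choose the threshold $\epsilon_* > 0$ small enough that $\lambda := (3Q)^{2D}\epsilon_* < 1$; any $\epsilon < \epsilon_*$ then also has $(3Q)^{2D}\epsilon < 1$. Set $m_0 := \lceil \log_Q(L/60)\rceil$. By Lemma~\ref{lem:cluster-decoder-correct}, if the length $m$ of the graph-chunk decomposition of the error support $B$ satisfies $Q^{m}\le L/60$ then the cluster decoder corrects $P$; hence a decoding failure forces $m > \log_Q(L/60)$, i.e.\ $m \ge m_0$. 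Since by definition a level-$m$ chunk is a disjoint union of two level-$(m-1)$ chunks, it contains a level-$m'$ chunk for every $m'\le m$, so a failure implies that $B$ contains a level-$m_0$ graph-chunk. Applying Lemma~\ref{lem:BH-chunk-probability} (which, as noted there, also upper bounds the probability of a level-$n$ \emph{graph}-chunk) with $n = m_0$ gives
\begin{equation}
\Pr[\text{decoder fails}] \le \Pr[B \text{ contains a level-}m_0\text{ graph-chunk}] \le O(L^{D})\,\bigl((3Q)^{2D}\epsilon\bigr)^{2^{m_0}}.
\end{equation}

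To finish, set $\eta := \log_Q 2 \in (0,1)$ and observe that $2^{m_0} \ge 2^{\log_Q(L/60)} = (L/60)^{\eta}$, so the right-hand side is at most $O(L^{3})\,\lambda^{(L/60)^{\eta}}$ with $\lambda < 1$; the polynomial prefactor is absorbed into the stretched-exponential tail, yielding $\Pr[\text{decoder fails}] \le \exp(-\Theta(L^{\eta}))$, which is the claim. The only points requiring care are bookkeeping ones: verifying that the constants $Q \ge 180$, $Q^{m}\le L/60$, and the choice $m_0 = \lceil\log_Q(L/60)\rceil$ are mutually consistent, and confirming that Lemma~\ref{lem:BH-chunk-probability} indeed holds for graph-chunks on the layer-code lattice despite its many boundaries and lack of translation invariance---this is exactly what the footnotes accompanying that lemma assert, so the work is in checking those footnoted adaptations. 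I do not expect a genuine obstacle: the one conceptual step, namely that the cluster decoder's exponentially growing clusters resolve each connected family of chunks correctly without percolating across a full $Q$-layer, has already been carried out in the proof of Lemma~\ref{lem:cluster-decoder-correct}, and what remains is the union bound and constant-chasing above.
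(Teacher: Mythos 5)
Your proposal is correct and follows the same route as the paper's proof: combine Lemma~\ref{lem:BH-chunk-probability} with Lemma~\ref{lem:cluster-decoder-correct}, set the threshold $\epsilon_* = (3Q)^{-2D}$, choose $m_0 \approx \log_Q(L/60)$, and turn $2^{m_0}$ into the stretched-exponential factor $(L/60)^{\log_Q 2}$. Your argument actually spells out one small step the paper elides, namely that a decoding failure forces the existence of a level-$m_0$ chunk because higher-level chunks contain all lower-level ones; otherwise it matches the paper's proof.
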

\begin{proof}
    By Lemma~\ref{lem:BH-chunk-probability}, the probability that the error support includes a level-$n$ lattice-chunk is at most $O(L^D)((3Q)^{6}\epsilon)^{2^n}$. Since a level-$n$ graph-chunk is a level-$n$ lattice-chunk, the same upper bound applies. 
    Set $\epsilon_* = (3Q)^{-6}$.
    Choose $m = O(\log L/\log Q)$ such that $Q^m < L/60$. 
    By Lemma~\ref{lem:cluster-decoder-correct}, we see that if the error does not include level-$m$ graph-chunks, then the cluster decoder succeeds in correcting the error. 
    Therefore, the decoding failure probability is bounded by $O(L^D)\exp(-\Theta(2^m)) = \exp(-\Theta(L^{\eta}))$ for $\eta = 1/\log Q$.
\end{proof}

\subsection{Concatenated Decoder}\label{sec:decoder_concatenated} 

The concatenated decoder proceeds by first decoding ``low-level'' syndromes on the surface code patches in order to obtain a ``high-level'' syndrome which is a valid syndrome of the input code. We then apply a decoder of the input code to find a high-level correction, which, when lifted to a corresponding correction on the surface code layers, annihilates all syndromes on the layer code. We prove that the concatenated decoder succeeds against adversarial errors of weight at most $O(d)$, where $d$ is the distance of the layer code defined with a qLDPC input code (cf. Theorem~\ref{thm:concatenateddecoder}). When the layer code is instantiated with an asymptotically good qLDPC code, the concatenated decoder also succeeds against errors with sufficiently small energy barrier, which is the key property required for partial self-correction (cf. Appendix~\ref{sec:appendix_c}).

Let $e_0$ be a $Z$ error affecting the layer code $\mathscr{L}(C)$, and let $\sigma_0$ be its syndrome (decoding $X$ errors is analogous). We define $R_Z$, $R_Q$, and $R_X$ to be the $Z$-, $Q$-, and $X$-layers of the code, respectively.
The decoder operates by applying minimum-weight perfect matching (MWPM) to these different layers and using a decoder $\operatorname{Dec}_C^Z$ for $Z$ errors of the input code when it gets stuck.
There are four main steps to the decoding procedure:
\begin{enumerate}
	\item Considering only the $Z$-layers, apply MWPM on $R_Z$ to eliminate all excitations on those layers. Because the boundaries of the $Z$-layers are all rough, we may match to any of the four boundaries. Note that we ignore defect lines with the other layers when performing the matching. In particular, this step may create additional excitations on the $Q$- and $X$-layers.
	\item Considering only the $Q$-layers, apply MWPM on $R_Q$ to eliminate all excitations on those layers (both the original excitations from $\sigma_0$ and the new ones from the correction in Step 1). We may match to the top and bottom boundaries, which condense $e$-excitations. We again ignore defect lines when performing the matching.
	\item Let $\sigma_C$ be the indices of all $X$-layers that now contain an odd number of excitations. Viewing $\sigma_C$ as a syndrome of the input code, we apply the decoder $\operatorname{Dec}_C^Z$ to find a $Z$ correction $\hat f_C$ with that syndrome. For every qubit in $\hat f_C$, apply a vertical string operator from the top to the bottom of the corresponding $Q$-layer of the layer code, creating additional excitations on the $X$-layers.
	\item Apply MWPM on $R_X$ to eliminate the remaining excitations on the $X$-layers (from $\sigma_0$ and the three previous steps). We may not match to any of the boundaries because $X$-layer boundaries do not condense $e$-excitations.
\end{enumerate}

The first step eliminates all excitations on $R_Z$. The second eliminates all excitations on $R_Q$. Since the second and third steps only apply corrections within $R_Q$, the fusion rules imply that no new excitations are created in $R_Z$. Finally, the fourth step eliminates all excitations in $R_X$ while not creating additional excitations in $R_Z$ or $R_Q$. Therefore, as long as all the steps are achievable, the decoder will output a correction that returns the state back to the codespace. The first and second steps are always achievable because we may match to a boundary. The non-trivial result is to show that the input decoder $\operatorname{Dec}_C^Z$ can correct the syndrome $\sigma_C$ in the third step and that the resulting number of excitations on every $X$-layer is even so that MWPM can be applied in the fourth step.

Pseudocode for the concatenated decoder is presented in Algorithm~\ref{alg:concatenateddecoder}.

\begin{algorithm}[H]
\caption{Concatenated decoder}
\label{alg:concatenateddecoder}
    {\textbf{Require:}} A decoder $\operatorname{Dec}_C^Z$ for the input code and the $\operatorname{MWPM}$ decoder.\\
	\textbf{Input:} The syndrome $\sigma_0$ of a $Z$ error on a layer code $\mathscr{L}(C)$.\\
	\textbf{Output:} A $Z$ correction $\hat f$ of syndrome $\sigma_0$.
	\begin{algorithmic}[1]
		\State $\sigma_Z \gets \res{\sigma_0}{R_Z}$
		\State $\hat f_Z \gets \operatorname{MWPM}(\sigma_Z, R_Z)$ \label{algstep:fZmatching}
		\State $\sigma_Q \gets \res{\left(\sigma_0 + \sigma(\hat f_Z)\right)}{R_Q}$ \label{algstep:sigmaQ}
		\State $\hat f_{Q}^1 \gets \operatorname{MWPM}(\sigma_Q, R_Q)$ \label{algstep:fQmatching}
		\State $\sigma_C \gets$ indices of $X$-layers such that $\sigma_0 + \sigma(\hat f_Z + \hat f_{Q}^1)$ has an odd number of excitations \label{algstep:sdef}
		\State $\hat f_C \gets \operatorname{Dec}_C^Z(\sigma_C)$ \label{algstep:classicaldecoder}
		\State $\hat f_{Q}^2 \gets$ vertical strings of $Z$ operators on all $Q$-layers specified by $\hat f_C$
		\State $\sigma_X \gets \res{\left(\sigma_0 + \sigma(\hat f_Z + \hat f_{Q}^1 + \hat f_{Q}^2)\right)}{R_X}$
		\State $\hat f_X \gets \operatorname{MWPM}(\sigma_Q, R_X)$ \label{algstep:fXmatching}
		\State \Return $\hat f_Z + \hat f_{Q}^1 + \hat f_{Q}^2 + \hat f_X$
	\end{algorithmic}
\end{algorithm}

To analyze the decoder, we first prove two lemmas about cleaning the support of the residual error from $Z$- and $Q$-layers after applying the matching. 
Illustrations of the process are presented in Fig.~\ref{fig:coloringintersection}.

\begin{figure}[htpb]
	\centering
    \includegraphics[width=0.9\linewidth]{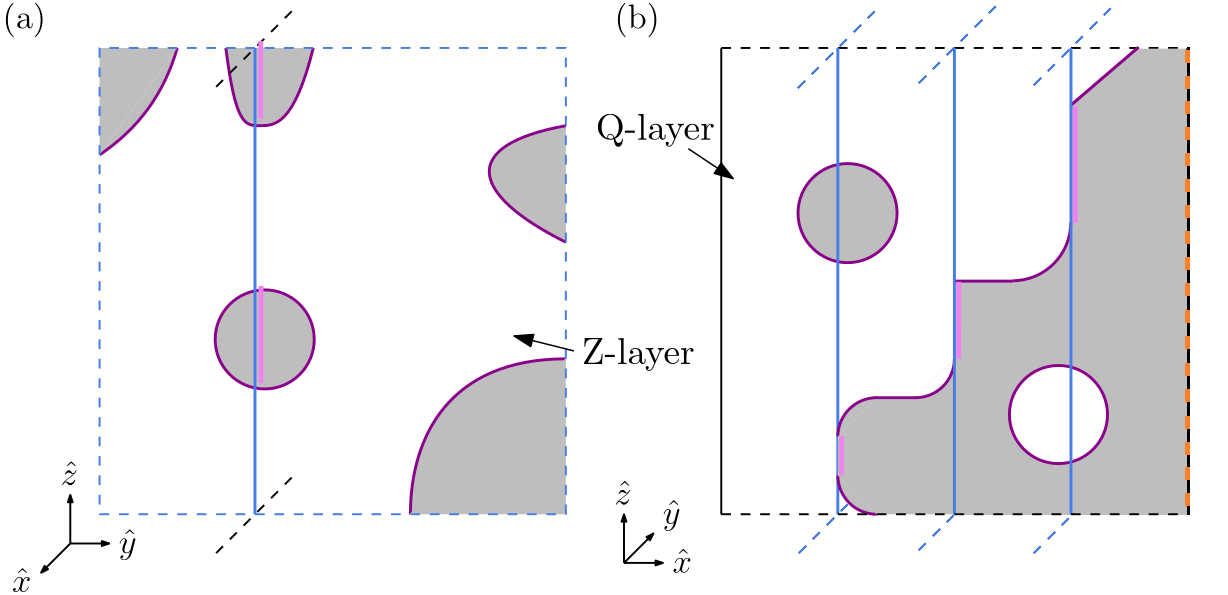}
	\caption{Cleaning the support of operators from $Z$- and $Q$-layers.
    (a) An example of how $e_z+\hat f_z$ is cleaned from the $Z$-layer $z$ in Lemma~\ref{lem:Zlayercleaning}. The dark purple lines represent the error and correction $e_z + \hat f_z$, the gray regions represent the stabilizer $s_z$, and the light purple lines represent the cleaned operator $(e_z + \hat f_z + s_z)|_{q} = \res{s_z}{q}$ for a $Q$-layer $q$.
    (b) Placing the residual error on a $Q$-layer $q$ into the form in Lemma~\ref{lem:Qlayercleaning}. The light purple lines along the defect lines with the $Z$-layers $\mathbf z_q$ is the result of cleaning the support of $e_0+\hat f_Z$ from $\mathbf z_q$ to $q$ in (a). The dark purple lines represent $(e_0 + \hat f_Q^1)\big|_{q}$. The shaded regions indicate the stabilizer $s_Q$. After multiplication by $s_Q$, what remains is the dashed orange string spanning the height of the layer (drawn to coincide with the right boundary here). This string is only present when $\big|e_0|_{q \cup \mathbf z_{q}}\big| \ge L/4$.
    }
	\label{fig:coloringintersection}
\end{figure}

\begin{lemma}
	\label{lem:Zlayercleaning}
	Let $z$ be a $Z$-layer and $\mathbf q$ be the $Q$-layers with non-trivial intersection with $z$. There exists a stabilizer $s_z$ such that $e_z + \hat f_z + s_z$ is supported on $\mathbf q$ and $R_X$, where $e_z = \res{e_0}{z}$ and $\hat f_z = \hat f_Z\big|_z$. Furthermore, if $|e_z| < L/2$, the stabilizer can be chosen so that $\big|(e_z + \hat f_z + s_z)|_{q}\big| = |\res{s_z}{q}| \le 2|e_z|$ for all $q\in \mathbf q$.
\end{lemma}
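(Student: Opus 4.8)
The plan is to use the fact that $z$, regarded as a standalone surface code patch, is topologically a disk with entirely rough boundary and therefore encodes no logical qubits, so its $Z$-stabilizer group is generated by the plaquette operators. Since the matching in Step~1 of the concatenated decoder ignores the line defects, $e_z+\hat f_z$ restricted to $z$ has trivial syndrome on all $X$-checks of $z$, i.e.\ it is a syndromeless $Z$-operator on the patch. Hence there is a set $S$ of $Z$-plaquettes of $z$ such that $(e_z+\hat f_z)|_z = \prod_{p\in S} s_p^{\mathrm{sc}}$, where $s_p^{\mathrm{sc}}$ denotes the standalone surface-code plaquette operator. We then define $s_z := \prod_{p\in S} s_p$, the product of the \emph{layer-code} stabilizers indexed by the same plaquettes. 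Because away from the line defects the layer-code stabilizer $s_p$ agrees with $s_p^{\mathrm{sc}}$, the operator $e_z+\hat f_z+s_z = \prod_{p\in S}\bigl(s_p+s_p^{\mathrm{sc}}\bigr)$ is supported only where these differ, namely along the line defects in which $z$ participates. By the layer code construction those defects join $z$ to $Q$-layers (precisely the layers in $\mathbf q$) and to $X$-layers (a subset of $R_X$), which proves the first claim.

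For the weight bound, fix $q\in\mathbf q$. The $Z$-layer $z$ (parallel to the $yz$-plane) and the $Q$-layer $q$ (parallel to the $xz$-plane) meet in a single line defect $\ell=\ell_{z,q}$ that runs along the entire $z$-extent, so $|\ell|\ge L$. One then inspects the explicit form of the defect plaquettes of $z$ along $\ell$: the factors $s_p+s_p^{\mathrm{sc}}$ for plaquettes $p\in S$ adjacent to $\ell$ telescope, so that $(e_z+\hat f_z+s_z)|_q=s_z|_q$ is a union of $Z$-strings on $q$, one for each maximal run of $S$-plaquettes along $\ell$, with the endpoints of these strings pinned to the support of $(e_z+\hat f_z)|_{\ell}$. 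We are free to replace $S$ by its complement $S^{c}$, which changes $s_z$ by the product of all $Z$-plaquettes of $z$ (a stabilizer). By optimality of MWPM we have $|\hat f_z|\le|e_z|$, hence $|e_z+\hat f_z|\le 2|e_z|<L\le|\ell|$ under the hypothesis $|e_z|<L/2$; consequently no run of $S$ (for the better of $S,S^{c}$) is forced to wrap the whole line $\ell$, and we obtain $|s_z|_q|\le |(e_z+\hat f_z)|_{\ell}|\le 2|e_z|$, as required.

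I expect the main obstacle to be the second paragraph: making rigorous that $s_z|_q$ is a union of strings whose total length is controlled by the weight of $e_z+\hat f_z$ \emph{crossing} $\ell$, rather than by the area of $S$. This needs a careful treatment of the weight-$6$ defect stabilizers of $z$ along $\ell$ and the associated telescoping, together with the observation that the hypothesis $|e_z|<L/2$ is exactly what rules out the degenerate situation in which $S$ (or $S^{c}$) contains a full length-$\Theta(n)$ column of plaquettes along $\ell$ — which would otherwise produce a string on $q$ spanning the height of the layer, of the kind appearing in Lemma~\ref{lem:Qlayercleaning}.
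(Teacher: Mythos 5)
Your first paragraph matches the paper's proof almost verbatim: expressing the syndromeless operator $e_z+\hat f_z$ as a product of plaquette $Z$-stabilizers of $z$ indexed by a set $S$ is exactly the paper's ``two-coloring of the areas bounded by the strings,'' and defining $s_z$ as the product of the corresponding \emph{layer-code} stabilizers is the same construction. This part is fine.

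The second paragraph has a genuine gap, precisely where you flag uncertainty. You claim the telescoping of defect stabilizers pins the support of $s_z|_q$ to the \emph{crossings} $(e_z+\hat f_z)|_\ell$, giving $|s_z|_q|\le|(e_z+\hat f_z)|_\ell|$. This is not how the modified stabilizers behave: each plaquette of $z$ adjacent to $\ell$ that lies in $S$ contributes its own qubit on $q$, and these contributions do not cancel between consecutive plaquettes. Consequently $s_z|_q$ is supported on the \emph{entire} intersection of the shaded region with $\ell$, i.e., $|s_z|_q|=|S\cap\ell|$, which can be vastly larger than the number of crossings. For example, if $e_z+\hat f_z$ is a thin rectangular loop of height $h$ straddling $\ell$, then $|(e_z+\hat f_z)|_\ell|=O(1)$ while $|s_z|_q|=\Theta(h)$ for the color enclosed by the loop — so your claimed bound fails. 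The inequality that is actually needed (and is what the paper establishes) is $|S\cap\ell|\le|e_z+\hat f_z|$, a weaker-looking but correct statement, and it is here that the hypothesis $|e_z|<L/2$ and the color choice enter. Since $|e_z+\hat f_z|<L$ and the layer has linear extent $\ge L$, there is a full-height column of $z$ disjoint from the string; choose $S$ to be the color \emph{not} containing that column (this is what the paper's ``smaller areas'' heuristic is getting at). Then every site $t\in S\cap\ell$ is separated, along its horizontal row, from that disjoint column by at least one crossing of $e_z+\hat f_z$, and crossings at distinct heights $t$ are distinct qubits of the string; hence $|s_z|_q|=|S\cap\ell|\le|e_z+\hat f_z|\le 2|e_z|$, where the last inequality uses MWPM optimality $|\hat f_z|\le|e_z|$. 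So the weight-bound step should not be by telescoping, but by the area-of-intersection argument together with this injective matching to string qubits.
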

\begin{proof}
	The fact that $e_z + \hat f_z$ leaves no excitation on $z$ means that it consists of closed loops and strings that terminate on the boundaries of $z$. Coloring the areas in $z$ bounded by the strings and the surface code boundaries using two colors, we let $s_z$ be the product of all stabilizers on the plaquettes of one color. Because all boundaries of $z$ condense $e$ anyons, the operator $e_z + \hat f_z + s_z$ has no support on $z$ and is only supported on $\mathbf q$ and $R_X$ (stabilizers defined by a plaquette in $z$ will also have support on $Q$- and $X$-layers that intersect $z$ non-trivially).
	
    If $|e_z| < L/2$, then the minimum-weight condition of the matching in Line~\ref{algstep:fZmatching} of Algorithm~\ref{alg:concatenateddecoder} implies that $|\hat f_z| \le |e_z|$, so $|e_z + \hat f_z| < L$. Thus, no string from $e_z + \hat f_z$ crosses the entire layer. We choose $s_z$ to be defined by the color of the smaller areas. Since $(e_z + \hat f_z + s_z)|_q$ has support which is the intersection of the area with the $qz$-defect line, its size is at most $|e_z + \hat f_z| \le 2|e_z|$.
\end{proof}

\begin{lemma}
	\label{lem:Qlayercleaning}
	For any $Q$-layer $q$, let $\mathbf z_q$ denote the $Z$-layers that intersect it non-trivially. Let $R_Q'$ denote the set of $Q$-layers $q$ such that $|\res{e_0}{q \cup \mathbf z_{q}}| \ge L/4$. Then $e_0 + \hat f_Z + \hat f_{Q}^1$ is stabilizer-equivalent to an operator $p$ supported only on $R_Q'$ and $R_X$, and $\res{p}{R_Q'}$ consists of single strings going from the top to the bottom of certain $Q$-layers in $R_Q'$.
\end{lemma}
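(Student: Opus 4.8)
The plan is to clean the residual error $e_0 + \hat f_Z + \hat f_Q^1$ layer-by-layer in the $Q$-layers, using the same coloring trick as in Lemma~\ref{lem:Zlayercleaning}. First I would observe that, after Steps 1 and 2 of the decoder, the residual operator $e_0 + \hat f_Z + \hat f_Q^1$ leaves no excitations on any $Z$-layer or any $Q$-layer. Fix a $Q$-layer $q$. By Lemma~\ref{lem:Zlayercleaning} applied to each $Z$-layer $z \in \mathbf z_q$ with $|e_z| < L/2$, we may multiply by stabilizers supported on those $Z$-layers to push the support of $e_0 + \hat f_Z$ off of $\mathbf z_q$ and onto the defect lines on $q$ (and onto $R_X$), contributing weight at most $2|e_z|$ along each $qz$-defect line. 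Combining this with $(e_0 + \hat f_Q^1)|_q$, we obtain an operator on $q$ whose boundary lies on the surface-code boundary of $q$, on the defect lines with $R_X$, and on the defect lines with those $Z$-layers $z$ for which $|e_z| \ge L/2$. Crucially, since $e_0 + \hat f_Z + \hat f_Q^1$ has no excitations on $q$, the restriction of this operator to the interior of $q$ consists of closed loops and open strings terminating on the boundary of $q$ or on defect lines.

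Next I would run the coloring argument inside $q$: the strings and boundaries of $q$ partition $q$ into regions that can be 2-colored, and letting $s_Q$ be the product of plaquette stabilizers of one color, the operator $(e_0 + \hat f_Z + \hat f_Q^1 + s_Q)|_q$ is supported only where the chosen regions meet the rough/smooth boundaries of $q$ or meet the defect lines — and since the top and bottom boundaries of $q$ condense $e$, any string not running the full height of $q$ can be absorbed. The key quantitative input is the matching bound: if $|\res{e_0}{q \cup \mathbf z_q}| < L/4$, then the minimum-weight conditions in Lines~\ref{algstep:fZmatching} and \ref{algstep:fQmatching} force $|\hat f_Z|_{\mathbf z_q}|$ and $|\hat f_Q^1|_q|$ to be comparably small, so the total weight of the operator on $q$ (including the cleaned $Z$-layer contributions, each bounded by $2|e_z|$) is strictly less than $L$; hence no residual string spans the full height $(n+1)K \ge L$ of $q$, and after multiplying by $s_Q$ chosen as the color of the smaller regions, $q$ is entirely cleared. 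Thus only $Q$-layers in $R_Q'$ can retain support, and what remains on each such layer, having no excitations and no room to terminate except on the top and bottom boundaries, must be a union of single vertical strings from top to bottom.

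I would then handle the bookkeeping that these cleaning stabilizers, applied across all $Z$-layers and all $Q$-layers, do not reintroduce support on $R_Z$ or on the already-cleared $Q$-layers: stabilizers defined by plaquettes in a $Z$-layer act on that $Z$-layer, on the $Q$-layers it meets, and on $R_X$, while stabilizers defined by plaquettes in a $Q$-layer act on that $Q$-layer and on $R_X$ — so the only overlap is on $R_X$, which is allowed by the statement, and on defect lines, which is precisely where the cleaning was designed to cancel. The main obstacle I anticipate is the geometric case analysis at the defect lines: one must verify that the 2-coloring of a $Q$-layer is compatible with the coloring induced on it by the cleaning of each incident $Z$-layer, i.e., that the colored regions can be chosen consistently so that the cancellations along $qz$-defect lines actually occur rather than leaving stray segments. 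This is really a statement about the fusion rules at point defects (where multiple defect lines meet), and getting the orientations/colors to match up globally — rather than just locally on each layer in isolation — is the delicate part; I expect it to be handled by carefully choosing the colorings in a fixed order (say, $Z$-layers first, then propagating the induced boundary data onto each $Q$-layer) so that the $Q$-layer coloring is forced and the cancellation is automatic.
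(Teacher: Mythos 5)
Your proposal follows essentially the same route as the paper's proof: apply the $Z$-layer cleaning stabilizers $s_Z = \sum_z s_z$ from Lemma~\ref{lem:Zlayercleaning} first, then 2-color each $Q$-layer based on the post-$s_Z$ operator, pick $s_q$ accordingly, and use the minimum-weight matching bounds together with the $2|e_z|$ estimate to show that for $q\notin R_Q'$ the residual weight on $q$ is below $L$ and hence no string spans the full height. One cosmetic difference: the paper fixes the coloring by declaring the region touching the front boundary to be white (so the residual, if any, is a single string along the back boundary), whereas you propose reusing the ``smaller regions'' heuristic from Lemma~\ref{lem:Zlayercleaning}; the weight bound on $s_q$ isn't actually needed here, and the paper's convention makes the ``either empty or a single vertical string'' dichotomy cleaner, but both work. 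Your worry about compatibility of colorings at defect lines and point defects is a reasonable thing to flag, but it isn't a genuine obstacle: because the $Z$-layer cleanings are all stabilizers, the operator $e_0+\hat f_Z+s_Z+\hat f_Q^1$ still has no excitations on any $Q$-layer, so each $Q$-layer coloring is determined independently by the operator restricted to that layer --- which is exactly the sequential ordering (clean $Z$-layers, then color $Q$-layers) that you yourself propose, and that the paper adopts implicitly.
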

\begin{proof}
Define $s_z$ as in Lemma~\ref{lem:Zlayercleaning}, and let $s_Z = \sum_z s_z$. We apply similar reasoning as in the previous proof. The fact that $e_0 + \hat f_Z + s_Z + \hat f_Q^1$ leaves no excitations on the $Q$-layers means that it consists of closed loops and strings that terminate on the top or bottom boundaries of the $Q$-layers. For each $Q$-layer $q$, color the enclosed areas gray and white, with the area touching the front boundary of each layer being white. Let $s_q$ be the product of all stabilizers on the plaquettes colored gray, and define $s_Q = \sum_q s_q$.
	
Let us now analyze the operator $p = e_0 + \hat f_Z + s_Z + \hat f_Q^1 + s_Q$, which is stabilizer-equivalent to $e_0 + \hat f_Z + \hat f_Q^1$. By construction, $p$ has no support on $R_Z$. For a given $Q$-layer $q$, there is no support on $q$ if the area touching the back boundary is also white; otherwise, the support is a single vertical string along the back boundary of $q$. In particular, if $(e_0 + \hat f_Z + s_Z + \hat f_Q^1)|_q$ does not contain any strings that stretch the whole layer vertically, then $p$ does not have support on $q$.
	
Now suppose $q\notin R_Q'$, i.e., $|\res{e_0}{q\cup \mathbf z_q}| < L/4$. Then $|e_z| < L/4$ for all $z\in \mathbf z_q$, so Lemma~\ref{lem:Zlayercleaning} implies that $\big|(e_z + \hat f_z + s_z)|_{q}\big| \le 2|e_z|$. Thus, we have
\begin{align}
\big|(e_0+\hat f_Z + s_Z)|_{q}\big| &= \bigg|\res{e_0}{q} + \sum_{z\in \mathbf z_q}(e_z + \hat f_z + s_z)|_{q}\bigg|\\
		&\le \big|\res{e_0}{q}\big| + 2\sum_{z\in \mathbf z_q}|e_z|\\
		&\le 2|\res{e_0}{q\cup \mathbf z_q}|\\
		&< L/2.
\end{align}
	
From the minimum-weight condition of the matching in Line~\ref{algstep:fQmatching} of Algorithm~\ref{alg:concatenateddecoder} as well as the fact that $(e_0 + \hat f_Z + s_Z)|_{q}$ is an operator supported on $q$ with the syndrome $\res{\sigma_Q}{q}$, we also have $\big|\hat f_Q^1|_{q}\big| < L/2$. Therefore, $(e_0 + \hat f_Z + s_Z + \hat f_Q^1)|_{q}$ has weight less than $L$ and cannot contain a string stretching the whole layer vertically. This shows that the only $Q$-layers where $p$ can have a vertical string are the ones in $R_Q'$.
\end{proof}

Now, we are ready to prove that the input code decoder can be applied when the initial error is sufficiently small.

\begin{lemma}
	\label{lem:inputdecodervalid}
	Let $\alpha > 0 $ be a constant. If $4w|e_0| \le \alpha d_{\mathrm{in}}L$, then the error $e_C$ of the input code $C$ corresponding to the $Q$-layers on which $p$ from Lemma~\ref{lem:Qlayercleaning} is supported has weight at most $\alpha d_{\mathrm{in}}$, and syndrome $\sigma(e_C)=\sigma_C$ in Line~\ref{algstep:sdef} of Algorithm~\ref{alg:concatenateddecoder}.
\end{lemma}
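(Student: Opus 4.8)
The plan is to track how the weight of the original error $e_0$ translates into the weight of the induced input-code error $e_C$, and to verify the syndrome-matching condition as a bookkeeping consequence of the cleaning lemmas. First I would recall from Lemma~\ref{lem:Qlayercleaning} that $e_0 + \hat f_Z + \hat f_Q^1$ is stabilizer-equivalent to an operator $p$ whose support on the $Q$-layers is a union of full vertical strings, one for each $Q$-layer in $R_Q'$; these strings are exactly the qubits that, when hit with vertical $Z$-strings in Step~3 (via $\hat f_Q^2$), must cancel the excitations on the $X$-layers. By definition, $e_C$ is the indicator vector on the input code of the set of $Q$-layers carrying such a string, i.e., $e_C \subseteq R_Q'$, so $|e_C| \le |R_Q'|$.

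The core estimate is to bound $|R_Q'|$. Each $Q$-layer $q \in R_Q'$ satisfies $|\res{e_0}{q \cup \mathbf z_q}| \ge L/4$ by definition of $R_Q'$. I would now argue that the regions $q \cup \mathbf z_q$ have bounded overlap multiplicity: a given physical qubit of $\mathscr L(C)$ lies in at most one $Q$-layer, and lies in a $Z$-layer $z$ only if it is in the surface-code patch of $z$; since each qubit of the layer code belongs to at most a constant number $c$ of layers (this is part of the geometric-locality/$3$D structure, with $c$ depending only on $K$), and the $Z$-layers appearing in $\mathbf z_q$ across different $q$ are controlled by the column weight of $H_Z$, the sets $\{q \cup \mathbf z_q\}_{q \in R_Q'}$ cover each qubit of $e_0$ at most $O(w)$ times — more precisely at most $2w$ times, counting the one $Q$-layer and up to the column-weight-many $Z$-layers through any qubit. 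Summing the lower bound over $q \in R_Q'$ and comparing with the total weight gives
\begin{align}
\frac{L}{4}\,|R_Q'| \le \sum_{q \in R_Q'} \big|\res{e_0}{q \cup \mathbf z_q}\big| \le 2w\,|e_0|,
\end{align}
hence $|e_C| \le |R_Q'| \le \frac{8w|e_0|}{L} \le 2\alpha d_{\mathrm{in}}$ under the hypothesis $4w|e_0| \le \alpha d_{\mathrm{in}} L$; I would then sharpen the overlap constant (using that the white/back-boundary coloring picks out $Z$-layers only on one side, or simply redefining $\alpha$ by the harmless constant factor) to land the stated bound $|e_C| \le \alpha d_{\mathrm{in}}$.

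For the syndrome claim, I would observe that the $X$-layers carrying an odd number of excitations after Steps~1--2 are precisely those whose defining $X$-check anticommutes with $p|_{R_Q'}$: a vertical $Z$-string in $Q$-layer $q$ creates one excitation on $X$-layer $z$ exactly when $q$ is in the non-trivial support of that $X$-layer and the fusion rules route an odd number of anyons onto it, which by the layer-code construction happens iff qubit $q$ participates in the corresponding $X$-check of $C$. Thus the parity pattern $\sigma_C$ on the $X$-layers equals $H_X e_C$, i.e., $\sigma(e_C) = \sigma_C$, where I use that Steps~1 and~2 leave no residual $Z$- or $Q$-layer excitations and (by the fusion rules, as in Lemma~\ref{lem:equivconfigurations}) do not change the $X$-layer parities beyond what $p$ accounts for. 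The main obstacle I anticipate is making the overlap-counting argument for $|R_Q'|$ fully rigorous — in particular pinning down the exact constant relating the multiplicity of the cover $\{q \cup \mathbf z_q\}$ to the column weight $w$ of $H_Z$ and the interlayer spacing $K$ — since everything else is a direct consequence of Lemmas~\ref{lem:Zlayercleaning} and~\ref{lem:Qlayercleaning} together with the fusion rules; if the naive bound gives $2\alpha$ rather than $\alpha$, the cleanest fix is simply to state the lemma with the constant absorbed, which does not affect any downstream use.
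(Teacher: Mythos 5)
Your approach matches the paper's essentially exactly: the core estimate is the overlap-counting inequality bounding $|R_Q'|$, followed by the observation that $e_C \subseteq R_Q'$ and a fusion-rule argument for the syndrome. The only wrinkle is your multiplicity bound of $2w$ on the cover $\{q \cup \mathbf z_q\}_q$, which is both slightly loose and slightly misreasoned. The correct counting distinguishes two cases by where the qubit of $e_0$ lives: a $Q$-layer qubit appears in $q \cup \mathbf z_q$ for exactly one $q$ (multiplicity $1$), and a $Z$-layer qubit on layer $z$ appears in $q \cup \mathbf z_q$ for each $q$ in the support of the $Z$-check defining $z$ (multiplicity at most $w$, governed by the \emph{row} weight of $H_Z$, not the column weight as you wrote — $|\mathbf z_q|$ is a column weight, but the dual question of how many $q$ contain a fixed $z$ in $\mathbf z_q$ is a row weight). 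This gives
\begin{equation}
\frac{L}{4}|R_Q'| \le \sum_{q\in R_Q'} \left|\res{e_0}{q\cup \mathbf z_q}\right| \le \left|\res{e_0}{R_Q}\right| + w\left|\res{e_0}{R_Z}\right| \le w|e_0| \le \frac{\alpha d_{\mathrm{in}} L}{4},
\end{equation}
which lands the constant $\alpha d_{\mathrm{in}}$ directly rather than $2\alpha d_{\mathrm{in}}$, so no absorption or redefinition is needed. Your syndrome argument is the same as the paper's and is fine.
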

\begin{proof}
First, note that $R_Q'$ consists of at most $\alpha d_{\mathrm{in}}$ layers, since
\begin{equation}
	\frac{L}{4}|R_Q'| \le \sum_{q\in R_Q'} \left|\res{e_0}{q\cup \mathbf z_q}\right| \le \left|\res{e_0}{R_Q}\right| + w\left|\res{e_0}{R_Z}\right| \le w|e_0| \le \frac{\alpha d_{\mathrm{in}} L}{4}.
\end{equation}
The syndrome $\sigma_C$ is defined as the $X$-layers with an odd number of excitations for the operator $e_0 + \hat f_Z + \hat f_Q^1$, which is stabilizer-equivalent to $p$ as defined in Lemma~\ref{lem:Qlayercleaning}. Because an operator supported on $R_X$ cannot change the parity of the number of excitations on $X$-layers, we only need to consider $p|_{R_Q'}$. Let $e_C$ be the $Q$-layers in $R_Q'$ where $p$ contains a vertical string. The set $e_C$ contains at most $\alpha d_{\mathrm{in}}$ layers. The string from $p$ in each $q\in e_C$ creates an excitation in each $X$-layer that intersects $q$ non-trivially. Therefore, the $X$-layers with an odd number of excitations is exactly the syndrome of $e_C$, viewed as an error of the input code.
\end{proof}

Next, we clean the remaining error onto the $X$-layers and analyze the resulting operator.

\begin{lemma}
	\label{lem:Qlayerstringsstabilizer}
	If $s_C$ is a $Z$ stabilizer of the input code $C$, then the operator $\ell$ on $\mathscr L(C)$ consisting of a vertical string on all $Q$-layers corresponding to $s_C$ is stabilizer-equivalent to an operator supported on $R_X$.
\end{lemma}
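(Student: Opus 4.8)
The plan is to build from $\ell$ an operator $\widetilde\ell$ of \emph{trivial} syndrome that agrees with $\ell$ up to an operator supported on $R_X$, and then to prove $\widetilde\ell$ is a stabilizer of $\mathscr L(C)$ by computing its $e$-configuration and invoking the logical-operator correspondence of Section~\ref{sec:LC:logicals}. No case analysis on $s_C$ is needed beyond the CSS relation $H_XH_Z^{\mathrm T}=0$.

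\emph{Step 1 (the operator $\widetilde\ell$).} Since $s_C$ is a $Z$-stabilizer of $C$, $H_Xs_C^{\mathrm T}=0$, so on every $X$-layer of $\mathscr L(C)$ the vertical strings of $\ell$ create an even number of $e$-excitations, namely $|\supp(s_C)\cap\supp(\text{row }i\text{ of }H_X)|$. This is precisely the condition under which the quasiconcatenated-representative construction of Section~\ref{sec:LC:logicals} can be run: feeding that construction the input-code operator $s_C$ in place of a logical $g$ — replace each $Z$ in $\supp(s_C)$ by a full-height vertical $Z$-string on the corresponding $Q$-layer, then merge the excitations created at the defect crossings by horizontal $Z$-strings inside the affected $X$-layers — should produce an operator $\widetilde\ell$ of trivial syndrome. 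By construction $\widetilde\ell=\ell\cdot r$ where $r$ is supported on $R_X$. I expect nothing in that construction to use $g$ being a logical rather than an arbitrary element of the $Z$-normalizer of $C$, so it should apply verbatim; the one point that genuinely needs the explicit fusion rules of Ref.~\cite{williamson2023layer} — and is the main obstacle — is that the merging strings inside the $X$-layers can be chosen so as not to spill excitations back onto the $Q$- or $Z$-layers when they cross the defect lines interior to an $X$-layer, so that $r$ really is supported only on $R_X$ and $\widetilde\ell$ really has trivial syndrome.

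\emph{Step 2 ($e$-configuration of $\widetilde\ell$, and conclusion).} Because $r$ is supported on the $X$-layers, each of which sits in the interior of its slab, $r$ contributes nothing to the $e$-configuration read off at any slab boundary, so $E_i(\widetilde\ell)=E_i(\ell)$ for all $i$. The restriction of $\ell$ to a slab is a union of vertical $Z$-string segments on the $Q$-layers of $\supp(s_C)$, hence $E_i(\ell)$ carries one $e$-excitation on each of those $Q$-layers and none on the $Z$-layers; that is, $E_i(\widetilde\ell)\in\cE_Q$ and $P\big(E_i(\widetilde\ell)\big)=s_C$ under $\cE_Q=\mathbb F_2^n$. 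Now $\widetilde\ell$ has trivial syndrome, so it is either a stabilizer of $\mathscr L(C)$ or a non-trivial $Z$-logical. Since $s_C$ is stabilizer-equivalent to the identity in $C$, Lemma~\ref{lem:layer_input_map} gives $E_i(\widetilde\ell)\approx\varnothing$; but by Lemma~\ref{lem:layer_slab_support} the $e$-configuration of any non-trivial $Z$-logical of $\mathscr L(C)$ is not boundary-equivalent to $\varnothing$. Hence $\widetilde\ell$ is a stabilizer of $\mathscr L(C)$, and $\ell=\widetilde\ell\cdot r$ is stabilizer-equivalent to $r$, which is supported on $R_X$.

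\emph{Alternative to Step 1.} One can instead stay inside the layer code: reduce by linearity to $s_C$ a single $Z$-check $c$ (different choices of the vertical string on a $Q$-layer differ by $Z$-plaquette stabilizers of that layer, and supports on a $Q$-layer shared by two checks cancel), and then take the product of the modified $Z$-plaquette stabilizers of $\mathscr L(C)$ lying in the $Z$-layer $z_c$ over a suitable two-dimensional region whose boundary is the union of the defect lines $z_c\cap q_j$; this product should cancel $\ell|_{z_c}$ on $z_c$ while restricting on each participating $Q$-layer $q_j$ to a vertical string cancelling the original one, leaving an operator on $R_X$. As in the first route, the delicate point is verifying that the $Q$-layer contributions cancel, which needs the explicit form of the defect stabilizers.
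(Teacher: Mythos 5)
Your main route is genuinely different from the paper's, and Step~2 --- passing to $e$-configurations and invoking Lemmas~\ref{lem:layer_slab_support} and \ref{lem:layer_input_map} to rule out the possibility that $\widetilde\ell$ is a non-trivial logical --- is correct and conceptually appealing. But the gap you flag in Step~1 is the real content of the lemma, and it is not small. You need the assertion that the quasiconcatenated construction, applied to a \emph{stabilizer} $s_C$ rather than to a logical $g$, terminates with merging strings that live entirely in $R_X$ and spill no excitations off the $X$-layers when they cross the internal $XZ$ defect lines. That claim is not part of the material you can quote (which states the construction only for $g\in L_Z(C)$), and justifying it requires the same explicit bookkeeping with the defect stabilizers and fusion rules that would prove the lemma outright.

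The paper's proof does not route through the quasiconcatenated construction at all and is closer to your ``Alternative to Step~1,'' but is both more specific and more economical. It writes $s_C = \sum_{g\in S} g$ over a set $S$ of $Z$-check generators, takes $s$ to be the product of \emph{all} layer-code stabilizers on the $Z$-layers indexed by $S$ (this has no support on $R_Z$ and, by the parity identity $[q\in\supp(s_C)]=\sum_{g\in S}[q\in\supp(g)]\pmod 2$, leaves an even number of vertical strings on each $Q$-layer), and then pairs consecutive strings on each $Q$-layer and takes $s'$ to be the product of stabilizers between each pair. The operator $\ell+s+s'$ is then supported on $R_X$, and since $s+s'$ is by construction a stabilizer, $\ell$ is stabilizer-equivalent to it --- no detour through the logical-operator correspondence is needed. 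In short: if you supply the explicit stabilizer bookkeeping that Step~1 requires, Step~2 becomes redundant, because at that point the cleaning operator is manifestly a product of stabilizers; whereas without that bookkeeping, Step~1 is an unproved assertion that is equivalent in difficulty to the lemma itself.
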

\begin{proof}
	Let $s_C = \sum_{g\in S} g$ for a set of $Z$ stabilizer generators $S\subseteq S_Z(C)$. Define $s$ to be the product of all stabilizers on the $Z$-layers corresponding to elements in $S$. Then $\ell + s$ consists of an even number of vertical strings on every $Q$-layer and has no support on $R_Z$. Pair up consecutive strings on the $Q$-layers, and let $s'$ be the product of the stabilizers between the paired up strings. Then $\ell + s + s'$ is supported on $R_X$.
\end{proof}

\begin{lemma}
	\label{lem:RXopisstab}
	Let $p$ be a Pauli $Z$ operator supported on $R_X$ with zero syndrome. Then $p$ is a stabilizer.
\end{lemma}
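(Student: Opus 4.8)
The plan is to show that $p$, having zero syndrome, cannot be a non-trivial logical operator of $\mathscr L(C)$, and must therefore be a product of stabilizers. I would deduce this from the $e$-configuration machinery of Section~\ref{sec:LC:logicals} together with Lemma~\ref{lem:layer_slab_support}, rather than by analyzing the surface-code patches directly.

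The first step is a locality reduction. Write $p = \sum_i p_i$, where $p_i$ is the restriction of $p$ to the $i$-th $X$-layer. Distinct $X$-layers lie in parallel planes separated by a distance at least $K\ge 2$, while every $X$-check of $\mathscr L(C)$ — including the ones modified at line and point defects — has weight at most $6$ and is geometrically local, so no single $X$-check is supported on more than one $X$-layer. Consequently the syndromes $\sigma(p_i)$ have pairwise disjoint supports, and $\sigma(p) = \sum_i \sigma(p_i) = 0$ forces $\sigma(p_i) = 0$ for every $i$. Thus each $p_i$ is itself a zero-syndrome $Z$-operator supported entirely within one $X$-layer.

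The second step computes the $e$-configuration of $p$. Fix a slab $A_i$; since each slab contains exactly one $X$-layer and $p$ is supported on $R_X$, the restriction $p|_{A_i}$ equals $p_i$, which by the previous step creates no excitations anywhere — in particular none on the top boundary of $A_i$. Hence the associated $e$-configuration $E_i(p)$ is the empty configuration, for every $i$. Now $p$ has zero syndrome, so $p$ is either a product of stabilizers or a non-trivial logical operator of $\mathscr L(C)$. If it were the latter, Lemma~\ref{lem:layer_slab_support} would force $E_i(p)$ to be not boundary-equivalent to the empty configuration, contradicting what we just computed. Therefore $p$ is a stabilizer.

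The one point needing care is the claim in the locality reduction that a $Z$-operator confined to a single $X$-layer produces a syndrome that cannot leak onto a neighbouring layer, even through the defect-modified stabilizers; I expect this bounded-weight, geometric-locality bookkeeping to be the only nontrivial (though routine) ingredient. An alternative, more elementary route avoids the $e$-configuration language entirely: each $X$-layer is a surface-code patch with all-smooth boundaries and hence carries no non-trivial $Z$-homology, so a zero-syndrome $Z$-operator $p_i$ on it bounds a region of plaquettes, and the product of the corresponding (possibly defect-modified) $Z$-plaquette stabilizers of $\mathscr L(C)$ reproduces $p_i$; summing over $i$ exhibits $p$ as a product of stabilizers. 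Either way, handling the defects is the crux.
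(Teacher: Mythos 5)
Your main argument (via $e$-configurations and Lemma~\ref{lem:layer_slab_support}) is correct and takes a genuinely different route from the paper's. The paper's proof is a two-line direct argument: since $X$-layers have all-smooth boundaries (which do not condense $e$-excitations), a zero-syndrome $Z$-operator on them must consist of closed loops, and each loop is the product of the $Z$-plaquette stabilizers it encloses; the paper then asserts, parenthetically, that $Z$-type stabilizers on $X$-layers have no support on other layers, so this product reproduces $p$ exactly. Your route~1 instead splits the syndrome layerwise (which you justify by the $K\ge 2$ spacing between $X$-layers together with the bounded weight and geometric locality of the checks, so no single $X$-check can straddle two distinct $X$-layers), concludes $E_i(p)$ is the empty configuration for every $i$, and invokes Lemma~\ref{lem:layer_slab_support} to rule out $p$ being a non-trivial logical; since $p$ has zero syndrome it must then be a stabilizer. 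This never constructs an explicit stabilizer decomposition, and in particular it does not rely on the paper's parenthetical claim about the support of $Z$-type plaquette stabilizers on $X$-layers—so it is arguably more robust, at the cost of importing the heavier $e$-configuration machinery. Your alternative route~2 is essentially the paper's argument verbatim; the ``defect'' concern you flag there is precisely what the paper's parenthetical addresses by fiat, and you would need to confirm it from the explicit defect stabilizers of Ref.~\cite{williamson2023layer} to make that route self-contained.
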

\begin{proof}
	Since the boundaries of $X$-layers do not condense $e$-excitations, $p$ must consist of closed loops. Thus, $p$ is the product of all stabilizers enclosed by these loops ($Z$-type stabilizers on $X$-layers do not have support on other layers).
\end{proof}

Combining the lemmas, we conclude that the decoder successfully corrects adversarial errors.

\begin{theorem}
	\label{thm:concatenateddecoder}
	Let $C$ be an \code{n,k,d_{in}} input CSS code with sparsity $w$ and let $\operatorname{Dec}_C^Z$ be a decoder that successfully corrects all $Z$ errors of $C$ up to weight $\alpha d_{\mathrm{in}}$ for some $\alpha>0$. Then the concatenated decoder in Algorithm~\ref{alg:concatenateddecoder} successfully corrects all $Z$ errors of the layer code $\mathscr L(C)$ of weight at most $\alpha d_{\mathrm{in}}L/(4w)$.
\end{theorem}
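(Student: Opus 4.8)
The plan is to chain the four cleaning lemmas into a single stabilizer-equivalence argument: I will show that when the $Z$ error $e_0$ on $\mathscr{L}(C)$ has weight at most $\alpha d_{\mathrm{in}}L/(4w)$, the operator $e_0+\hat f$ is a stabilizer, where $\hat f=\hat f_Z+\hat f_Q^1+\hat f_Q^2+\hat f_X$ is the correction produced by Algorithm~\ref{alg:concatenateddecoder}. The assumed weight bound is exactly the hypothesis $4w|e_0|\le\alpha d_{\mathrm{in}}L$ needed to invoke Lemma~\ref{lem:inputdecodervalid}, so all the earlier results are available.

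First I would check that every step of the decoder is executable. Steps~1 and~2 match $e$-excitations to boundaries that condense them (all $Z$-layer boundaries are rough, and the top and bottom boundaries of a $Q$-layer condense $e$), so $\hat f_Z$ and $\hat f_Q^1$ always exist. For Step~4 I must verify that after Step~3 every $X$-layer carries an even number of excitations: by the definition of $\sigma_C$ in Line~\ref{algstep:sdef}, the $i$-th $X$-layer has excitation parity $(\sigma_C)_i$ just before $\hat f_Q^2$ is applied, and each vertical string of $\hat f_Q^2$ contributes one excitation to every $X$-layer meeting the corresponding $Q$-layer, so $\hat f_Q^2$ flips the parity of the $i$-th $X$-layer by $(H_X\hat f_C)_i=\sigma(\hat f_C)_i=(\sigma_C)_i$; the residual parity is therefore even on every $X$-layer, and $\hat f_X$ exists as a matching internal to the $X$-layers (and is supported on $R_X$).

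Next I would assemble the equivalence chain. By Lemma~\ref{lem:Qlayercleaning}, $e_0+\hat f_Z+\hat f_Q^1$ is stabilizer-equivalent to an operator $p$ supported only on $R_Q'$ and $R_X$, with $\res{p}{R_Q'}$ a union of full vertical strings on the $Q$-layers forming the input-code error $e_C$ of Lemma~\ref{lem:inputdecodervalid}; that lemma (using $4w|e_0|\le\alpha d_{\mathrm{in}}L$) gives $|e_C|\le\alpha d_{\mathrm{in}}$ and $\sigma(e_C)=\sigma_C$. Since $\operatorname{Dec}_C^Z$ corrects all $Z$ errors of $C$ of weight at most $\alpha d_{\mathrm{in}}$, the correction $\hat f_C=\operatorname{Dec}_C^Z(\sigma_C)$ satisfies $e_C+\hat f_C=s_C$ for some $Z$-stabilizer $s_C$ of $C$; because $\hat f_Q^2$ consists of vertical strings on exactly the $Q$-layers of $\hat f_C$, the operator $\res{p}{R_Q'}+\hat f_Q^2$ is a union of vertical strings on the $Q$-layers of $s_C$, hence stabilizer-equivalent to an operator on $R_X$ by Lemma~\ref{lem:Qlayerstringsstabilizer}. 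Collecting terms, $e_0+\hat f_Z+\hat f_Q^1+\hat f_Q^2$ is stabilizer-equivalent to a $Z$ operator supported entirely on $R_X$, and adding $\hat f_X$ keeps the total on $R_X$. Because the decoder removes all excitations, $e_0+\hat f$ has trivial syndrome, so this $R_X$-supported operator has trivial syndrome, and Lemma~\ref{lem:RXopisstab} forces it to be a stabilizer. Thus $e_0+\hat f$ is a stabilizer and the decoder corrects $e_0$.

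The layer-by-layer bookkeeping and the executability of the matchings are routine once the lemmas are in hand; the genuinely delicate ingredient, already dispatched by Lemmas~\ref{lem:Qlayercleaning} and~\ref{lem:inputdecodervalid}, is the use of the minimum-weight property of MWPM to show that after the matching passes an uncorrected vertical string can only survive on a $Q$-layer that, together with the $Z$-layers meeting it, carried $\Theta(L)$ of the original error, which is what caps the induced input-code error at weight $\alpha d_{\mathrm{in}}$. The corollary for qLDPC input codes is then immediate: with $w=O(1)$ the correctable weight $\alpha d_{\mathrm{in}}L/(4w)$ is a constant fraction of the layer-code distance $\Theta(d_{\mathrm{in}}L/w)$ from Theorem~\ref{thm:layer_codes}.
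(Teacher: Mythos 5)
Your proof is correct and follows essentially the same route as the paper's: show that $p = e_0 + \hat f_Z + s_Z + \hat f_Q^1 + s_Q$ is (by Lemma~\ref{lem:Qlayercleaning} and Lemma~\ref{lem:inputdecodervalid}) stabilizer-equivalent to vertical strings on the $Q$-layers of $e_C$ plus an $R_X$ piece; use the correctness of $\operatorname{Dec}_C^Z$ and Lemma~\ref{lem:Qlayerstringsstabilizer} to absorb those strings into $R_X$; and finally invoke Lemma~\ref{lem:RXopisstab} on the residual $R_X$-supported, syndrome-free operator. The only micro-level deviation is your argument for Step~4 executability: you compute the parity flip induced by $\hat f_Q^2$ on the $i$-th $X$-layer directly as $(H_X\hat f_C)_i = (\sigma_C)_i$, cancelling the pre-existing parity, whereas the paper derives even parity from the fact that $p+\hat f_Q^2+s$ is stabilizer-equivalent to an $R_X$-supported operator and the parity of $e$-excitations on an $X$-layer is a stabilizer invariant (smooth boundaries do not condense $e$). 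Both reach the same conclusion and rely on the same lemmas; yours is perhaps slightly more elementary for that sub-step, since it only uses $\sigma(\hat f_C)=\sigma_C$ rather than the full stabilizer-equivalence $\hat f_C\sim e_C$, but this is a stylistic variation rather than a genuinely different proof strategy.
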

\begin{proof}
	Let $p = e_0 + \hat f_Z + s_Z + \hat f_Q^1 + s_Q$ be the operator defined in Lemma~\ref{lem:Qlayercleaning}. It follows by Lemma~\ref{lem:inputdecodervalid} and the correctness of the input decoder that $\operatorname{Dec}_C^Z$ returns a correction $\hat{f}_C$ which is stabilizer-equivalent to $e_C$. Lemma~\ref{lem:Qlayerstringsstabilizer} therefore implies that $p+\hat{f}_Q^2$ is stabilizer-equivalent, say through the stabilizer $s$, to an operator supported on $R_X$. By the correctness of $\operatorname{Dec}_C^Z$, it also follows that $p+\hat{f}_{Q}^2+s$ has an even number of excitations on every $X$-layer (the parity is well-defined since $X$-layers do not condense $e$-excitations) so that MWPM can be applied in Line~\ref{algstep:fXmatching} of Algorithm~\ref{alg:concatenateddecoder}. The final result $p + \hat f_Q^2 + \hat f_X + s$ is therefore an operator supported on $R_X$ with zero syndrome, so it is a stabilizer by Lemma~\ref{lem:RXopisstab}. This completes the proof.
\end{proof}

\section{Partial Self-Correction of Layer Codes with Quantum Tanner Code Input}\label{sec:appendix_c}

\newcommand{\mem}{{\rm mem}}
\newcommand{\bath}{{\rm bath}}
\newcommand{\inter}{{\rm int}}

\subsection{Definition of Partial Self-Correction}

Unlike active memories, which rely on frequent feedback-based error-correction loops, passive (self-correcting) quantum systems preserve information through the intrinsic dynamics of the physical system and its coupling to a thermal bath at sufficiently low temperature $T$. The joint unitary evolution of such a system with a thermal bath is generated by a Hamiltonian acting on the composite Hilbert space $\mathcal{H}_{\mem}\otimes \mathcal{H}_{\bath}$ of the form
\begin{align}
    H=H_{\mem}\otimes I_{\bath}+ I_{\mem}\otimes H_{\bath} + H_{\inter},\label{eq:memory_bath_hamiltonian}
\end{align}
where $H_{\mem}$ is the memory Hamiltonian, $H_{\bath}$ is a heat bath Hamiltonian, and $H_{\inter}$ is a memory-bath interaction term.

We will consider memory Hamiltonians which are constructed from the checks of a stabilizer code.
For a CSS code $C=(H_X,H_Z)$, let $S_X$ be the set of $X$-type stabilizer generators and $S_Z$ the $Z$-type stabilizer generators. Then the associated memory Hamiltonian is defined by
\begin{align}
    H_{\mem} = -\sum_{g_x\in S_X} g_x -\sum_{g_z\in S_Z} g_z . \label{eq:H_mem}
\end{align}
This Hamiltonian is designed so that the ground space is exactly the codespace of $C$.

At time $t=0$, we initialize the state of the memory in a ground state $\rho(0)$ of the Hamiltonian $H_{\mem}$. The thermal evolution generated by the joint Hamiltonian $H$ in Eq.~\eqref{eq:memory_bath_hamiltonian} drives the quantum state of the memory subsystem towards the canonical Gibbs state $\rho_{\beta}=e^{-\beta H_{\mem}}/\tr(e^{-\beta H_{\mem}})$, where $\beta=1/T$ is the inverse temperature (we set $k_B=1$).

For the study of self-correcting quantum memories, the dynamics generated by $H$ is typically modeled using the Davies weak-coupling limit~\cite{Davies}, which makes a few natural assumptions about the bath and the coupling term. The bath is assumed to be Markovian, and the interaction Hamiltonian is weakly-coupled $\|H_{\inter}\|\ll \|H_{\mem}\|$ and of the form
\begin{align}
H_{\inter} = \sum_{\alpha} A_\alpha\otimes B_\alpha,
\end{align}
where the jump operators $A_\alpha$ are $O(1)$-local, meaning they act on a constant number of qubits of the memory system. We assume without loss of generality that the jump operators $A_\alpha$ are Hermitian operators with operator norm $\|A_\alpha\|\le 1$. The Davies weak-coupling limit~\cite{Davies} leads to system dynamics governed by the Lindblad equation
\begin{align}
\dot{\rho}(t)=-i[H_{\mem}, \rho(t)]+\sum_{\alpha,\omega} h(\alpha,\omega)\left(A_{\alpha}(\omega)\rho A^\dagger_{\alpha}(\omega) - \frac12\{\rho,A_{\alpha}^\dagger(\omega) A_{\alpha}(\omega)\}\right),\label{eq:lindblad}
\end{align}
where $\omega$ are the Bohr frequencies and $A_{\alpha}(\omega)$ are Fourier components of $A_\alpha$. The coefficients $h(\alpha,\omega)$ capture the rates of these jumps and satisfy $\max_{\alpha,\omega} h(\alpha,\omega)=O(1)$. The rate function $h(\alpha, \omega)$ satisfies the detailed balance condition
\begin{align}
    h(\alpha,-\omega)=e^{-\beta \omega}h(\alpha,\omega),
\end{align}
which ensures that the canonical Gibbs state $\rho_{\beta}$ is a fixed-point of the dynamics. If, additionally, the Lindbladian satisfies certain ergodicity conditions, $\rho_{\beta}$ is the unique fix point of the dynamics~\cite{Spohn} and the system thermalizes $\lim_{t\to\infty}\rho(t)=\rho_\beta$.

Under this setup, we show that layer codes are partially self-correcting quantum memories (SCQM). 

\begin{remark}
While the scaling of the memory time with system size for a genuine self-correcting quantum system is well defined asymptotically, the breakdown of a partially self-correcting system at finite system size means that there is no precise way to talk about the functional scaling of the memory time; it is meaningless to say, for example, that a function ``scales exponentially'' when restricted to a finite interval. Due to this fact, we adopt a more technical definition of partial self-correction, as explained in the main text.
\end{remark}

\begin{definition}[Partially Self-Correcting Quantum System and Memory]\label{def:pscqm}
Fix a constant $\epsilon \in (0,1)$. Let $\{C_n\}$ be a family of quantum codes. We define the memory time $t_{\mathrm{mem}}(n,\beta)$ as the maximum time such that, for all $t<t_{\mathrm{mem}}$, there exists some decoder $\Phi$ satisfying
\begin{align}
\left\|\Phi\left(\rho^{(n)}(t)\right) - \rho^{(n)}(0)\right\|_1 \le \epsilon
\end{align}
for all initial states $\rho^{(n)}(0)$ in the codespace $C_n$, where $\rho^{(n)}(t)$ is the time-evolved state under the Lindblad equation~\eqref{eq:lindblad} with bath temperature $T=1/\beta$. Let $t^*_{\mathrm{mem}}(\beta)$ denote the maximum value of $t_{\mathrm{mem}}(n,\beta)$ over $n$. Then we say that $\{C_n\}$ is a \emph{partially self-correcting quantum system} if $t^*_{\mathrm{mem}}(\beta)$ scales superexponentially with $\beta$, i.e., $t^*_{\mathrm{mem}}(\beta)=\exp(\omega(\beta))$. Furthermore, we say that $\{C_n\}$ is a partially self-correcting quantum \emph{memory} (SCQM) if there exists an efficient, i.e., $\mathrm{poly}(n)$ runtime, decoder $\Phi$ with respect to which the system is partially self-correcting.
\end{definition}

The definition of partial self-correction is therefore associated with the scaling behavior of the optimal memory time as a function of $\beta$, which we require to grow sufficiently quickly to rule out trivial systems.
For the purposes of proving partial self-correction, we often consider a cutoff system size $n^*(\beta)$, which provides a lower bound $t^*_{\mathrm{mem}}(\beta) \ge t_{\mathrm{mem}}(n^*(\beta), \beta)$.
Note that we also distinguish partially self-correcting \emph{systems} and partially self-correcting \emph{memories}. The former is an intrinsic statement about the memory time scaling of a quantum system, whereas the latter has the additional requirement of admitting an efficient decoder with respect to which it is self-correcting.

In the remainder of this appendix, we prove that layer codes instantiated with quantum Tanner codes form a partially self-correcting memory.

In Appendix~\ref{sec:partial_self_corr_HDPC} we prove an analogous result for random layer codes.

\subsection{Proof of Partial Self-Correction}

We will make use of the following energy barrier preserving map of Pauli operators between a layer code and its input code; see the ``Energy barrier'' section of Ref.~\cite{williamson2023layer} for more details.

\begin{lemma}[\cite{williamson2023layer}]\label{lem:energy_map}
Let $\mathscr{L}(C)$ be a layer code. There exists a map $P\mapsto P'$ taking a Pauli-$Z$ operator $P\in P_Z(\mathscr{L}(C))$ of the layer code to a Pauli-$Z$ operator $P'\in P_Z(C)$ of its input code such that 
\begin{align}
2\Delta(P') \le w^2\Delta(P),
\end{align}
where $w$ is the maximum sparsity of the input code $C$. Furthermore, if $P$ has no support on $Z$- or $Q$- layers, then $P'$ is stabilizer-equivalent to any of the $e$-configurations defined by $P$.
\end{lemma}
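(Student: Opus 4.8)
The plan is to build the map $P\mapsto P'$ out of the $e$-configuration machinery of Appendix~\ref{sec:LC:logicals} and then to lift an energy-optimal Pauli-$Z$ path for $P$ on $\mathscr{L}(C)$ to a Pauli-$Z$ path for $P'$ on the input code $C$ whose energy grows by at most a factor $w^2/2$. Concretely, fix once and for all a horizontal cut $\gamma$, say the bottom boundary of the first slab $A_1$. For $\hat Z\in P_Z(\mathscr{L}(C))$, let $E(\hat Z)\in\mathcal{E}$ be the $e$-configuration obtained by recording, modulo $2$, the number of $Z$-strings of $\hat Z$ crossing $\gamma$ on each $Q$-layer and on each segment of each $Z$-layer; using the fusion rules one pushes all $Z$-layer excitations onto the $Q$-layers to obtain a representative in $\mathcal{E}_Q=\mathbb{F}_2^n$, and defines $\phi(\hat Z):=P(E(\hat Z))\in P_Z(C)$, the input-code operator named by that configuration in the sense of Lemma~\ref{lem:layer_input_map}. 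Set $P':=\phi(P)$. The ``furthermore'' clause is then immediate: if $P$ is supported only on the $X$-layers, no string of $P$ crosses $\gamma$ or any parallel horizontal cut, so $E(P)$ is exactly the $e$-configuration of $P$ read off on any slab boundary, and $\phi(P)$ is stabilizer-equivalent to it by Lemma~\ref{lem:layer_input_map}.

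For the energy bound I would proceed in three steps. \emph{(i) Syndrome comparison.} One shows that for every $\hat Z\in P_Z(\mathscr{L}(C))$, $|\sigma_C(\phi(\hat Z))|\le \tfrac12 w^2\,|\sigma(\hat Z)|$, where $\sigma(\hat Z)$ is the $e$-syndrome of $\hat Z$ on $\mathscr{L}(C)$ and $\sigma_C$ the syndrome of the input code. The point is that an input $X$-check is violated by $\phi(\hat Z)$ only if the $Q$-layers in its support carry, in total, an odd number of the vertical $Z$-strings defining $E(\hat Z)$; tracing such a string through the corresponding $X$-layer, whose boundaries are smooth and hence cannot condense an $e$-anyon, forces an excitation to remain on that layer, and distinct violated input checks live on distinct $X$-layers. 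The two factors of $w$ enter from the degree-$\le w$ branching of strings at point defects (when one pushes $Z$-layer excitations onto $Q$-layers, a single layer-code excitation can be responsible for up to $w$ flipped input qubits) and from the fact that a single input qubit lies in at most $w$ checks. \emph{(ii) Path transfer.} Let $\mathcal{P}=\{P(t)\}_{t=0}^{T}$ be a Pauli-$Z$ path on $\mathscr{L}(C)$ with $P(T)=P$ and $\Delta(\mathcal{P})=\Delta(P)$, and set $P'(t):=\phi(P(t))$. A single-qubit change of $P(t)$ alters $E(P(t))$ in only $O(1)$ entries after pushing to the $Q$-layers, because the branching at a $Z$–$Q$ junction is bounded; hence consecutive $P'(t),P'(t+1)$ differ in $O(1)$ input qubits, and interpolating those flips one at a time produces an honest Pauli-$Z$ path $\mathcal{P}'$ on $C$ from $I$ to $P'$. \emph{(iii) Combining.} Along the interpolation between $P'(t)$ and $P'(t+1)$, each single input-qubit flip changes $\sigma_C$ by at most $w$ and there are only $O(1)$ such flips, so the syndrome along $\mathcal{P}'$ stays within an additive $O(w)$ of $\max_t|\sigma_C(P'(t))|\le\tfrac12 w^2\,\Delta(P)$; absorbing the lower-order term exactly as in the remark following Theorem~\ref{thm:layer_codes} yields $\Delta(P')\le\Delta(\mathcal{P}')\le\tfrac12 w^2\,\Delta(P)$, i.e.\ $2\Delta(P')\le w^2\Delta(P)$. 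The Pauli-$X$ case is identical with the roles of $X$- and $Z$-layers swapped.

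The main obstacle is step (i), together with the branching bookkeeping in (ii): one must use the explicit fusion rules at the $\bullet$- and $\oplus$-junctions of Fig.~\ref{fig:fusionrules} to verify that moving or splitting an excitation at a point defect inflates the relevant counts by at most $w$ and never more, and that sliding the cut $\gamma$ across a single $X$-layer slab leaves the $e$-configuration class unchanged (Lemma~\ref{lem:equivconfigurations}), so the choice of $\gamma$ is immaterial; for a logical operator this well-definedness is precisely the content of Lemma~\ref{lem:layer_slab_support}. A secondary subtlety is the boundary accounting: strings of $P$ terminating on the rough boundaries of $Q$-layers or on the boundaries of $Z$-layers must be checked to contribute nothing to $E(P)$, which is consistent with those boundaries condensing the relevant anyons. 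Once these combinatorial facts are pinned down, the remainder is routine bookkeeping of the kind carried out in the ``Energy barrier'' section of Ref.~\cite{williamson2023layer}.
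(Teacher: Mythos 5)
The paper does not actually give its own proof of Lemma~\ref{lem:energy_map}: it is imported verbatim from Ref.~\cite{williamson2023layer}, with only the remark following Theorem~\ref{thm:layer_codes} commenting on how the constant is adjusted. So there is no in-paper argument to compare you against; I can only assess your sketch on its own terms and against the related machinery the paper does develop in Appendices~\ref{sec:appendix_a} and~\ref{sec:appendix_d}.

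Your three-step plan (define $\phi$ via an $e$-configuration, lift a minimum-energy path, combine) is the right shape, but there are concrete gaps. Most seriously, the syndrome comparison in step~(i), $|\sigma_C(\phi(\hat Z))|\le\tfrac12 w^2\,|\sigma(\hat Z)|$, is asserted via a sketch that presupposes that the strings of $\hat Z$ crossing your fixed cut $\gamma$ run vertically through every $X$-layer. That is true for the quasiconcatenated canonical form, not for a general $\hat Z$: the $e$-configuration read off at $\gamma$ and the $e$-configuration at the slab boundary adjacent to a given $X$-layer $k$ are only boundary-equivalent up to the excitations that $\hat Z$ leaves on $Q$- and $Z$-layers in the intervening slabs (this is exactly Lemma~\ref{lem:equivconfigurations}). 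So the parity $\sum_j (H_X)_{kj}E^\gamma_j$ that decides whether check $k$ of $\phi(\hat Z)$ is violated is not directly controlled by what $\hat Z$ does on $X$-layer $k$; the discrepancy must be bounded by the excitation count below layer $k$, inflated by the branching factor when pushed onto $\mathcal{E}_Q$. This bookkeeping is precisely what the Mapping Lemma~\ref{lem:mapping} and Lemma~\ref{lem:Xsyndrome} do via the canonical forms $\tilde P$ and $\hat P$, and your proposal needs something equivalent before (i) can be claimed. Second, you state that a single weight-1 move changes the pushed $e$-configuration in ``$O(1)$'' entries; in fact a single crossing of a $Z$--$Q$ junction changes up to $w$ entries of $\mathcal{E}_Q$ (this is exactly the weight-$\le w$ elementary vectors $Y_C$ of Definition~\ref{def:YC}), and interpolating through those $O(w)$ flips costs $O(w^2)$ in syndrome per step. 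Whether that additive $O(w^2)$ can be absorbed into the claimed $\tfrac12 w^2\Delta(P)$ is precisely where the factor gets degraded from $4/w^2$ to $2/w^2$ per the remark after Theorem~\ref{thm:layer_codes}; your sketch does not account for this, and the $\tfrac12$ is never derived. The ``furthermore'' clause you get essentially for free under the literal hypothesis, since then $E^\gamma(P)$ coincides with the (trivially equivalent) per-slab configurations and Lemma~\ref{lem:layer_input_map} applies; that part is fine.
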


Lemma~\ref{lem:energy_map} allows us to control the energy barrier of corresponding operators of the input code as we decode the layer code. The concatenated decoder does not increase the syndrome weight significantly, so if we start with an error of the layer code with low energy barrier, the input code decoder also receives an error of low energy barrier. Such an error can be corrected, leading to successful decoding of the layer code.

\begin{lemma}\label{lem:ldpc_decoder_barrier}
Let $\mathscr{L}$ be the layer code family defined with quantum Tanner code input $C$, which has parameters \code{\Theta(L^3),\Theta(L),\Theta(L^2)}. There exists a constant $c>0$ such that all errors with energy barrier $\Delta(e) \le cL$ are correctable by the concatenated decoder (using the potential function decoder of the input code~\cite{gu2022efficient}).
\end{lemma}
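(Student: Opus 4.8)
The plan is to run the concatenated decoder (Algorithm~\ref{alg:concatenateddecoder}) with $\operatorname{Dec}_C^Z$ instantiated by the potential‑function decoder of the quantum Tanner code $C$, and to propagate the \emph{energy barrier} of the error (rather than its weight) through the decoding steps. First I would fix a $Z$-error $e$ on $\mathscr L(C)$ with $\Delta(e)\le cL$, the constant $c$ to be pinned down at the end. Then $|\sigma(e)|\le\Delta(e)\le cL$, and since $cL$ is below the $\Theta(L)$ energy barrier of $\mathscr L(C)$ (Theorem~\ref{thm:layer_codes}), $e$ is not a non‑trivial logical operator. Exactly as in the proof of Theorem~\ref{thm:concatenateddecoder}, after Steps~1--2 the operator $e+\hat f_Z+\hat f_Q^1$ is stabilizer‑equivalent to the operator $p$ of Lemma~\ref{lem:Qlayercleaning}, supported as full vertical strings on certain $Q$-layers in $R_Q'$ together with an operator on $R_X$; writing $e_C$ for the associated input‑code error, Lemma~\ref{lem:inputdecodervalid} gives $\sigma(e_C)=\sigma_C$ (only this syndrome identity is needed, not the weight bound, which may fail here because $|e|$ can be large). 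As in Theorem~\ref{thm:concatenateddecoder}, the concatenated decoder then succeeds precisely when $\operatorname{Dec}_C^Z$ returns a correction $\hat f_C$ lying in the logical coset of $e_C$: in that case Lemmas~\ref{lem:Qlayerstringsstabilizer} and~\ref{lem:RXopisstab} force the overall $\hat f\cdot e$ to be a stabilizer.

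Next I would pin down the logical coset of $e_C$ using the energy‑barrier‑preserving map of Lemma~\ref{lem:energy_map}. Applying it to $e$ yields an input‑code operator $e'$ with $2\Delta(e')\le w^2\Delta(e)\le w^2cL$; since $w=O(1)$ and $L=\Theta(d_{\mathrm{in}})$ for quantum Tanner input, $\Delta(e')=O(c\,d_{\mathrm{in}})$. The claim to establish is that $e_C$ is stabilizer‑equivalent to $e'$ in $C$. By Lemma~\ref{lem:Qlayercleaning} the $e$-configuration of $p$ equals that of $e_C$ (the $R_X$-part contributes trivially to slab‑boundary configurations); $p$ is boundary‑equivalent to $e+\hat f_Z+\hat f_Q^1$; and the matching corrections $\hat f_Z,\hat f_Q^1$ leave no excitations on the $Z$- or $Q$-layers, so Lemma~\ref{lem:equivconfigurations} together with the fusion rules at the layer boundaries shows their $e$-configuration is boundary‑trivial. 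Hence the $e$-configuration of $p$ agrees with that of $e$, which by Lemma~\ref{lem:energy_map} and Lemma~\ref{lem:layer_input_map} corresponds to $e'$; in particular $\sigma_C=\sigma(e_C)=\sigma(e')$ and $e_C$ shares its logical coset with $e'$.

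Finally I would invoke the relevant property of the potential‑function decoder of Ref.~\cite{gu2022efficient}: on any syndrome of weight $O(d_{\mathrm{in}})$ it returns a correction of weight $O(|\sigma|)$ (a consequence of the linear confinement of quantum Tanner codes underlying its analysis), and since quantum Tanner codes have energy barrier $\Theta(d_{\mathrm{in}})$ this implies it correctly decodes every $Z$-error of energy barrier at most $\gamma d_{\mathrm{in}}$ for a fixed $\gamma>0$: if $\Delta(g)\le\gamma d_{\mathrm{in}}$ then the returned correction $f$ has $|f|=O(|\sigma(g)|)=O(\gamma d_{\mathrm{in}})$, so $f\cdot g$ has zero syndrome and energy barrier $O(\gamma d_{\mathrm{in}})<\Delta_C$, hence is a stabilizer. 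Choosing $c$ small enough that $w^2cL/2\le\gamma d_{\mathrm{in}}$, this applies with $g=e'$ (equivalently, to the syndrome $\sigma_C$), so $\operatorname{Dec}_C^Z$ returns a correction in the coset of $e_C$, and by the first paragraph $e$ is corrected. This is exactly the input needed in Appendix~\ref{sec:appendix_c} for partial self‑correction.

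The hard part will be the middle step. Because the minimum‑weight matchings in Steps~1--2 can have weight far exceeding $\Delta(e)$ when $|e|$ is large, one cannot bound the weight — or even the energy barrier — of $e_C$ directly; instead one must argue at the level of logical cosets that the operator handed to the input decoder is, modulo stabilizers of $C$, the well‑controlled image $e'$ of $e$, which is where the fusion‑rule bookkeeping showing that $\hat f_Z$ and $\hat f_Q^1$ are boundary‑trivial is essential. A secondary point requiring care is extracting from Ref.~\cite{gu2022efficient} the statement that the potential‑function decoder returns corrections of weight $O(|\sigma|)$, so that it handles low‑energy‑barrier errors rather than merely low‑weight ones; once that is in hand, the combination with the $\Theta(d_{\mathrm{in}})$ energy barrier of quantum Tanner codes is immediate.
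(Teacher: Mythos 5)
Your proposal diverges from the paper's proof at the most important step, and the replacement argument you give has a genuine gap. The paper's proof works directly with the residual error $e_2 = e_0 + \hat f_Z + \hat f_Q^1$ produced after the two MWPM stages. Its key quantitative insight is a bound on how many new excitations MWPM can create: because each matching string has a single vertical and a single horizontal segment, and (for a qLDPC input) each $Z$-layer meets at most $w$ $Q$-layers and at most $w^2$ $X$-layers, each excitation crosses at most $O(w^2)$ defect lines, giving $\delta_2 \le 2w^3\delta_0 \le 2w^3\Delta(e_0)$. This yields $\Delta(e_2)\le\Delta(e_0)+\delta_2 = O(\Delta(e_0))$, and Lemma~\ref{lem:energy_map} is then applied to $e_2$ (not to $e_0$), producing an input-code operator $e_2'$ with $\Delta(e_2')\le 2w^5\Delta(e_0)$, to which the soundness of the quantum Tanner decoder applies. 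Your proposal instead applies Lemma~\ref{lem:energy_map} to the original error $e$ and tries to show that the input-code error $e_C$ handed to $\operatorname{Dec}_C^Z$ is stabilizer-equivalent to the image $e'$ of $e$. You flag this as ``the hard part,'' and indeed it is where the proposal breaks.

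Concretely, your justification for $e_C\sim e'$ is that ``the matching corrections $\hat f_Z,\hat f_Q^1$ leave no excitations on the $Z$- or $Q$-layers, so Lemma~\ref{lem:equivconfigurations} \ldots\ shows their $e$-configuration is boundary-trivial.'' This is not correct. The operator $\hat f_Z$ by itself has $Z$-layer syndrome equal to that of $e$ (that is how it cancels those excitations), and it creates new excitations on the $Q$- and $X$-layers where its matching strings cross defects; similarly $\hat f_Q^1$ creates new excitations. Lemma~\ref{lem:equivconfigurations} requires the operator in question to have no $Q$- or $Z$-layer excitations, which holds for the sum $e + \hat f_Z + \hat f_Q^1$ but not for $\hat f_Z$ or $\hat f_Q^1$ individually, so it cannot be used to conclude that the matchings have boundary-trivial $e$-configurations. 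More fundamentally, because $e$ itself has nonzero syndrome, its $e$-configurations at different slab boundaries are not boundary-equivalent, so ``the $e$-configuration of $e$'' is not a well-defined object to compare against; and the MWPM step is free to move excitations across slabs in a way that need not agree with the cleaning/branching map underlying Lemma~\ref{lem:energy_map}. Without the excitation-counting argument that the paper uses, the energy barrier of the error presented to the input decoder is not controlled, and the proof does not close. Your concluding paragraph's reading of the potential-function decoder (corrections of weight $O(|\sigma|)$, handling low-energy-barrier rather than merely low-weight inputs) is in the right spirit and aligned with the soundness property the paper cites, but it cannot compensate for the missing middle step.
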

\begin{proof}
Let $w = O(1)$ be the maximum sparsity of the quantum Tanner code. Let $e_0$ be an initial error of energy barrier $\Delta(e_0)$. We may without loss of generality assume that $e$ is an $Z$-type error, with $X$-type errors treated analogously. Suppose that $e_0$ has $\delta_0$ total excitations. Note that $\delta_0 \le \Delta(e_0)$ by assumption.

Running the concatenated decoder, we can track the evolution and generation of excitations throughout the decoding process and upper bound their weight:
\begin{enumerate}
\item In the first phase of the concatenated decoder the excitations on the $Z$-layers are condensed using MWPM, possibly producing additional excitations on the $Q$ and $X$-layers. Note that the matching string produced by MWPM consists of a single vertical segment and a single horizontal segment. Each $Z$-layer has non-trivial intersection with at most $w$ $Q$-layers and with at most $w^2$ $X$-layers.\footnote{For a given $Z$-layer, each non-trivial $XZ$ defect line must begin at one of the $w$ intersecting $Q$-layers, and each $Q$-layer supports at most $w$ intersecting $X$-layers. It follows that each $Z$-layer supports at most $w^2$ distinct non-trivial $XZ$ line defects.} It follows that each excitation needs to cross at most $w+w^2\le 2w^2$ defect lines during the matching process. The maximal syndrome weight after the completion of the first phase is therefore $\delta_1 \le 2w^2\delta_0$.

\item The second phase proceeds like the first, with the matching done on $Q$-layers. Again, additional excitations may be produced on the $X$-layers. Since every $Q$-layer intersects at most $w$ $X$-layers non-trivially, each excitation needs to cross at most $w$ defect lines during the matching process. The maximal syndrome weight after the completion of the second phase is then $\delta_2 \le w\delta_1 \le 2w^3\delta_0$.
\end{enumerate}

In the third phase, we run the decoder of the input quantum Tanner code to obtain a correction $\hat{f}$. The correctness of this step relies on the success of the decoder for the quantum Tanner code. For the correctness of the input decoder, we must ensure that the input syndrome to the quantum Tanner code decoder is associated with a correctable error, i.e., we must obtain an energy-barrier version of Lemma~\ref{lem:inputdecodervalid}. We will show that the input syndrome is associated with a correctable error of the quantum Tanner code, provided that $\Delta(e_0) \le cL$ for some sufficiently small constant $c > 0$. 

Let $e_2$ denote the error obtained from $e_0$ after the completion of the first two phases of the concatenated decoder. By Lemma~\ref{lem:Qlayercleaning}, $e_2$ is stabilizer-equivalent to an operator supported on $X$-layers as well as vertical strings on a subset of $Q$-layers. From the map defined in Lemma~\ref{lem:energy_map}, this subset, when considered as a Pauli-$Z$ error of $C$, is stabilizer-equivalent to some $e_2'$ with $2\Delta(e_2')\le w^2\Delta(e_2)$. Note that since the excitations of $e_2$ are supported solely on the $X$-layers, the syndrome of $e_2'$ in the quantum Tanner code is precisely the input to the concatenated decoder. It suffices to prove that $e_2'$ is correctable for the quantum Tanner code decoder if $\Delta(e_0) \le cL$. From Lemma~\ref{lem:energy_map}, we find
\begin{align}
\Delta(e_2') \le \frac{w^2}{2}\Delta(e_2) \le \frac{w^2}{2}(\Delta(e_0)+\delta_2) \le 2w^5\Delta(e_0),  
\end{align}
where the penultimate inequality relates $\Delta(e_2)$ and $\Delta(e_0)$ by the fact that at most $\delta_2$ excitations are produced on the path between them. By the soundness of the quantum Tanner code (see Corollary 14 of Ref.~\cite{gu2022efficient}), there exists a constant $c'>0$ such that $\Delta(e'_2)\le c'L$ implies the correctability of $e_2'$ (indeed, any Pauli path ending on $e_2'$ must then consist of correctable operators). Taking $c = c'/(2w^5)$ gives the desired constant.

Since the concatenated decoder is guaranteed to succeed provided the input decoder succeeds, this implies that the concatenated decoder will successfully correct all errors of energy barrier $\Delta(e_0)\le cL$, as required.
\end{proof}

We now combine Lemma~\ref{lem:ldpc_decoder_barrier} with Lemma 1 and Eq.~(26) in Ref.~\cite{bravyi2011analytic} to prove that the layer codes $\scrL$ form partially SCQM. 

\begin{lemma}[Lemma 1 of Ref.~\cite{bravyi2011analytic}]\label{lem:BH}
Let $C$ be a qLDPC stabilizer code with parameters \code{n,k}. Let $f$ be the maximum energy barrier of Pauli operators that appear in the expansion of the jump operators $A_\alpha(\omega)$ and $A^\dagger_\alpha(\omega)A_\alpha(\omega)$. Note that an $O(1)$-local Lindbladian implies that $f=O(1)$. Suppose a decoder $\Phi$ corrects any Pauli error $e$ with energy barrier $\Delta(e) \le m+2f$, where $m$ is an arbitrary energy cutoff\footnote{In the proof of Theorem~\ref{thm:partial_self_corr} of the main text, $f$ was redefined to absorb the factor $2$.}. Then for all $a\in (0,1)$, we have
\begin{align}
\|\Phi(\rho(t)) - \rho(0)\|_1 \le O(t)2^kne^{-am\beta} \label{eq:BHtracedistbound}
\end{align}
for all $n\le e^{(1-a)\beta}$ and all initial state $\rho(0)$ in the codespace.
\end{lemma}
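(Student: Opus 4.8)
The plan is to follow the Davies‑generator analysis of Bravyi and Haah~\cite{bravyi2011analytic}. The first step is to recast the Lindblad dynamics~\eqref{eq:lindblad} as a classical continuous‑time Markov jump process on Pauli errors. Since $H_{\mem}$ is a commuting stabilizer Hamiltonian, expanding each Fourier component $A_\alpha(\omega)$ — and each product $A_\alpha^\dagger(\omega)A_\alpha(\omega)$ appearing in the anticommutator term — into Pauli operators (this is precisely where the definition of $f$ as the maximal barrier over \emph{these} Paulis is used) lets one write the time‑evolved state as a convex combination $\rho(t)=\sum_E p_E(t)\,E\rho(0)E^\dagger$ over Pauli errors $E$, with the populations $p_E(t)$ obeying a classical master equation whose rates inherit the detailed‑balance relation $h(\alpha,-\omega)=e^{-\beta\omega}h(\alpha,\omega)$; off‑diagonal coherences in the error basis never feed back into the populations (equivalently, one may restrict to Pauli jump operators). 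Consequently the induced process on the (syndrome‑weight) energy $\mathcal E(s):=|\sigma(E(s))|$ of the accumulated error is biased downward whenever $\mathcal E(s)>0$: a transition that creates new violated checks is suppressed relative to its reverse by a factor $e^{-\beta\omega}$ in the Bohr frequency.

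The second step reduces a decoding failure to an event about the realized error trajectory $\{E(s)\}_{0\le s\le t}$. Because $\Phi$ has the CSS form $\sum_\sigma P_\sigma\Pi_\sigma(\cdot)\Pi_\sigma P_\sigma$ and each $E\rho(0)E^\dagger$ is supported in the syndrome‑$\sigma(E)$ subspace, one gets $\Phi(\rho(t))=\sum_E p_E(t)\,(P_{\sigma(E)}E)\rho(0)(P_{\sigma(E)}E)^\dagger$; on every $E$ that $\Phi$ corrects, $P_{\sigma(E)}E$ is a stabilizer and that term equals $p_E(t)\rho(0)$, so $\|\Phi(\rho(t))-\rho(0)\|_1\le 2\Pr[\mathrm{fail}]$, where $\mathrm{fail}$ is the event that $E(t)$ is not corrected by $\Phi$. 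By hypothesis $\Phi$ corrects every $E$ with $\Delta(E)\le m+2f$, and the realized trajectory visits a discrete sequence $I=E(t_0),E(t_1),\dots,E(t)$ in which consecutive errors differ by a jump Pauli of barrier at most $f$; refining each jump into a single‑qubit walk along that Pauli produces a bona fide Pauli path in the sense of Definition~\ref{def:energy_barrier} from $I$ to $E(t)$ of maximal syndrome weight at most $\sup_{s\le t}\mathcal E(E(s))+2f$. Hence $\Delta(E(t))\le \sup_{s\le t}\mathcal E(E(s))+2f$, so $\Pr[\mathrm{fail}]\le\Pr[\sup_{s\le t}\mathcal E(E(s))>m]$, and a union bound over the at most $2^k$ logical cosets into which a failure could land the state contributes the factor $2^k$.

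The third step — the technical heart — bounds $\Pr[\sup_{s\le t}\mathcal E(E(s))>m]$ by a renewal/excursion argument. The number of excursions of $\mathcal E$ away from its ground level during $[0,t]$ is at most the number of jumps, whose mean is $O(nt)$ since there are $O(n)$ local jump channels each firing at rate $O(1)$. Within a single excursion, reaching energy $m$ requires $\Theta(m)$ net upward transitions; comparing the process against a one‑dimensional birth–death chain on energy levels — whose per‑level up/down ratio is bounded by $e^{-\beta\omega_0}$ for the minimal gap $\omega_0$ — by way of the stationary Gibbs weights $\propto e^{-\beta\mathcal E(E)/2}$, and accounting for the $O(n)$ transitions available per level (a consequence of $C$ being qLDPC), yields a per‑excursion bound of the shape $n^{O(m)}e^{-\beta m}$. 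Multiplying by the $O(nt)$ excursions and by $2^k$, then invoking the hypothesis $n\le e^{(1-a)\beta}$ to convert $n^{O(m)}e^{-\beta m}$ into $e^{-am\beta}$ while absorbing the $O(f)=O(1)$ shift coming from the $2f$ buffer, gives $\|\Phi(\rho(t))-\rho(0)\|_1\le O(t)\,2^k n\,e^{-am\beta}$, which is~\eqref{eq:BHtracedistbound}.

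The main obstacle is the per‑excursion estimate in the third step: one must make the intuition "the energy is a downward‑biased walk" quantitatively precise even though the state space is exponentially large — so the comparison to a one‑dimensional chain has to be routed through a potential function (or through the Gibbs weights) rather than a literal projection — and even though individual jumps can change the energy by more than the minimal gap, which is exactly what forces the $2f$ buffer and the requirement $f=O(1)$. A secondary subtlety is the classical reduction of the first step: for general $O(1)$‑local Hermitian jump operators (rather than single‑qubit Paulis) one must verify that error‑basis coherences are irrelevant to the decoding‑failure probability, which is why the definition of $f$ ranges over the Pauli content of both $A_\alpha(\omega)$ and $A_\alpha^\dagger(\omega)A_\alpha(\omega)$.
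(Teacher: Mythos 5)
The paper does not prove Lemma~\ref{lem:BH}: it is stated as a citation of Lemma~1 (together with Eq.~(26)) of Bravyi and Haah~\cite{bravyi2011analytic}, and no proof is supplied in the manuscript. Your proposal is therefore a reconstruction of the external Bravyi--Haah argument rather than something that can be checked against an in-paper proof.

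As a reconstruction, your first two steps are sound and track the Bravyi--Haah strategy: the reduction of Davies dynamics to a classical Markov process on Pauli errors, and the reduction of decoding failure to the event that the accumulated error's trajectory ever reaches syndrome weight $> m$, with the $2f$ buffer soaking up the per-jump barrier of the local jump Paulis. Your closing bookkeeping --- absorbing an entropy factor $n^{m}$ against $e^{-\beta m}$ under $n\le e^{(1-a)\beta}$ to produce $e^{-am\beta}$ --- is the right arithmetic and explains the role of the free parameter $a$. Two issues remain. First, attributing the $2^k$ factor to a union bound over logical cosets is off: each failed Pauli term already contributes at most $2$ to the trace norm regardless of which coset it lands in, so the factor must arise inside the bound on $\Pr[\mathrm{fail}]$ (in Bravyi and Haah it enters through the ground-space degeneracy in the Gibbs-weight comparison), not in a union bound downstream of it. Second, and more importantly, your Step 3 is a plan, not a proof. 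You flag the comparison to a one-dimensional birth--death chain, "routed through a potential function," as the main obstacle --- and it is precisely the load-bearing step. Bravyi and Haah actually avoid the chain comparison: they integrate in time the instantaneous rate of probability inflow into the high-syndrome region, control each transition rate via detailed balance (producing the $e^{-\beta\omega}$ factors), and count accessible transitions, of which there are $O(n)$ per step since the code is qLDPC, to produce the entropy factor. Without carrying out that computation, the exponent in your $n^{O(m)}$ is undetermined, and the conversion to $e^{-am\beta}$ under $n\le e^{(1-a)\beta}$ requires the exponent to be exactly $m$ --- which is what the omitted calculation must establish. So the sketch identifies the right ingredients but leaves the decisive estimate unproved.
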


\begin{corollary}\label{cor:partial_self_corr_decoder}
    Let $C$, $\Phi$, and $m$ be defined as in Lemma~\ref{lem:BH}. If $m=\Omega(\max(k,\log n))$, then at cutoff size $n^* = \exp(\beta/2)$ with energy cutoff $m^*$, maximum memory time $t^*_\mathrm{mem} = \exp[\Omega(m^*\beta)]$ can be achieved with the decoder $\Phi$.
\end{corollary}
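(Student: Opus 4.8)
The plan is to specialize the trace-distance bound of Lemma~\ref{lem:BH} at a convenient value of the free parameter $a$ and then invert it for $t$. First I would set $a = 1/2$ in Eq.~\eqref{eq:BHtracedistbound}, so that the validity constraint $n \le e^{(1-a)\beta}$ becomes exactly $n \le e^{\beta/2} = n^*$. Taking $n = n^*$ and writing $m^* = m(n^*)$ for the energy cutoff at this system size, Lemma~\ref{lem:BH} yields
\begin{align}
\left\|\Phi\!\left(\rho^{(n^*)}(t)\right) - \rho^{(n^*)}(0)\right\|_1 \le O(t)\, 2^{k}\, n^*\, e^{-m^*\beta/2}.
\end{align}

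Next I would equate the right-hand side to the fixed constant $\epsilon$ and solve for $t$, which gives
\begin{align}
t^*_\mathrm{mem}(\beta) \;\ge\; t_\mathrm{mem}(n^*,\beta) \;\ge\; \exp\!\left(\tfrac{1}{2}m^*\beta - k\log 2 - \log n^* - O(1)\right).
\end{align}
Substituting $\log n^* = \beta/2$ and invoking the hypothesis $m^* = \Omega(\max(k,\log n^*)) = \Omega(\max(k,\beta))$, the subtracted terms $k\log 2$, $\log n^* = \beta/2$, and the additive $O(1)$ are all $O(m^*) = o(m^*\beta)$ as $\beta\to\infty$, so the term $\tfrac12 m^*\beta$ dominates the exponent. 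Hence $t^*_\mathrm{mem} = \exp[\Omega(m^*\beta)]$, as claimed, with the stated cutoff $n^* = \exp(\beta/2)$.

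There is no substantial obstacle here: this is the argument behind Theorem~\ref{thm:partial_self_corr} of the main text, repackaged in terms of Lemma~\ref{lem:BH}. The only points needing care are (i) choosing the split $a=1/2$ so that the decay rate $am = m/2$ stays $\Omega(m)$ while the admissible size $e^{(1-a)\beta}$ remains $e^{\beta/2}$ — any fixed $a\in(0,1)$ works, but this is the cleanest balance — and (ii) verifying that the hypothesis $m = \Omega(\max(k,\log n))$ is precisely what makes both the linear-in-$\beta$ correction $\log n^* = \beta/2$ and the qubit-number correction $k\log 2$ subdominant, so that they can be absorbed into the $\Omega(\cdot)$ in the exponent.
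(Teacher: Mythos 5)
Your proof is correct and follows essentially the same route as the paper, which simply cites the proof of Theorem~\ref{thm:partial_self_corr} with the energy barrier decoder replaced by $\Phi$: choose $a=1/2$ in Lemma~\ref{lem:BH} so that the admissible size bound becomes $n\le e^{\beta/2}=n^*$, set the right-hand side equal to $\epsilon$ and invert for $t$, then use $m^*=\Omega(\max(k,\log n^*))$ to show the $\tfrac12 m^*\beta$ term dominates. One minor caveat in your side remark: once $n^*=\exp(\beta/2)$ is fixed, the constraint $n^*\le e^{(1-a)\beta}$ forces $a\le 1/2$ (not arbitrary $a\in(0,1)$), and $a=1/2$ is the unique optimal choice — but this does not affect the substance of the argument.
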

\begin{proof}
    The proof is identical to that of Theorem~3, replacing the energy barrier decoder with $\Phi$.
\end{proof}

\begin{theorem}[Partially SCQM of quantum Tanner layer codes]\label{thm:psc_ldpc_layer}
    Let $\mathscr{L}$ be the layer code family defined with quantum Tanner code input, which has parameters \code{\Theta(L^3),\Theta(L),\Theta(L^2)}. Let $\Phi$ be the concatenated decoder. Then $\scrL$ with the decoder $\Phi$ is a partially SCQM. In particular, maximum memory time $t^*_\mathrm{mem} = \exp(\exp(\Omega(\beta)))$ can be achieved with cutoff length $L^*=\exp(\Theta(\beta))$.
\end{theorem}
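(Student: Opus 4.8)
The plan is to read the statement off from the decoder guarantee of Lemma~\ref{lem:ldpc_decoder_barrier} and the Bravyi--Haah estimate packaged in Corollary~\ref{cor:partial_self_corr_decoder}; essentially all of the substantive work has already been done, and what remains is to choose the energy cutoff correctly and to translate between the linear size $L$ and the block length $n=\Theta(L^3)$.

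Concretely, I would proceed as follows. Let $c>0$ be the constant from Lemma~\ref{lem:ldpc_decoder_barrier}, let $f=O(1)$ be the locality constant of Lemma~\ref{lem:BH}, and set the energy cutoff $m=m(L)=cL-2f$. Then Lemma~\ref{lem:ldpc_decoder_barrier} states precisely that the concatenated decoder $\Phi$ (instantiated with the potential-function decoder of the quantum Tanner input code~\cite{gu2022efficient}) corrects every error $e$ with $\Delta(e)\le m+2f=cL$; since layer codes are CSS this is applied sector-by-sector, which is exactly the decoder hypothesis required by Lemma~\ref{lem:BH}. Next I would check the growth condition $m=\Omega(\max(k,\log n))$ of Corollary~\ref{cor:partial_self_corr_decoder}: for the quantum Tanner layer code family $m=\Theta(L)$, $k=\Theta(L)$, and $\log n=\Theta(\log L)$, so $\max(k,\log n)=\Theta(L)$ and the condition holds. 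Corollary~\ref{cor:partial_self_corr_decoder} then yields, at cutoff block length $n^*=\exp(\beta/2)$ with energy cutoff $m^*=m(L^*)$, the bound $t^*_\mathrm{mem}=\exp[\Omega(m^*\beta)]$. Because $n^*=\Theta((L^*)^3)$ we get $L^*=\exp(\Theta(\beta))$, hence $m^*=cL^*-2f=\exp(\Theta(\beta))$, and therefore
\begin{equation}
t^*_\mathrm{mem}=\exp\!\left[\Omega\!\left(\beta\exp(\Theta(\beta))\right)\right]=\exp(\exp(\Omega(\beta))),
\end{equation}
which both matches the stated scaling and certifies partial self-correction per Definition~\ref{def:pscqm}. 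Finally, since the quantum Tanner code admits a $\mathrm{poly}(n)$-time decoder~\cite{gu2022efficient} and MWPM is efficient, $\Phi$ runs in polynomial time, so $\mathscr{L}$ with $\Phi$ is a partially self-correcting \emph{memory} and not merely a system.

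There is no genuinely hard step remaining here: the real content lives in Lemma~\ref{lem:ldpc_decoder_barrier} (which rests on the soundness of quantum Tanner codes) and in the Bravyi--Haah trace-distance bound. The only point that deserves care is that the $2^k n$ prefactor in Eq.~\eqref{eq:BHtracedistbound}, with $k=\Theta(L)$, must be overwhelmed by the $e^{-am\beta}$ suppression; this is exactly why one needs the energy cutoff to satisfy $m=\Omega(\max(k,\log n))$ rather than merely $m\to\infty$, and it is the reason the cutoff is $n^*=\exp(\beta/2)$ and the resulting exponent is $\Omega(m^*\beta)$. Everything else is constant-chasing and the $L\leftrightarrow n$ conversion.
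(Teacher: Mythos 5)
Your proposal is correct and follows essentially the same route as the paper: invoke Lemma~\ref{lem:ldpc_decoder_barrier} for the energy-barrier correction guarantee $m=\Theta(L)$, feed it into Corollary~\ref{cor:partial_self_corr_decoder} at cutoff $n^*=\exp(\beta/2)$, and convert $n^*$ to $L^*=\exp(\Theta(\beta))$ to obtain $t^*_\mathrm{mem}=\exp[\Omega(m^*\beta)]=\exp(\exp(\Omega(\beta)))$. The only difference is that you spell out the growth check $m=\Omega(\max(k,\log n))$ and the efficiency of $\Phi$ explicitly, which the paper leaves implicit.
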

\begin{proof}
    Lemma~\ref{lem:ldpc_decoder_barrier} shows that the $\Phi$ can correct errors of energy barrier up to $m=\Theta(L)$. At cutoff size $\exp(\beta/2)$, the linear system size is $L^*=\exp(\Theta(\beta))$ and the energy cutoff is $m^*=\Theta(L^*)=\exp(\Theta(\beta))$. The result then follows from Corollary~\ref{cor:partial_self_corr_decoder}.
\end{proof}

\begin{remark}\label{rem:Lscaling}
    Although scaling is not rigorously defined for finite system sizes, we may consider the memory time for $L\le L^* = \exp(\Theta(\beta))$. To achieve constant error, Lemma~\ref{lem:BH}, combined with Lemma~\ref{lem:ldpc_decoder_barrier} and the parameters of $\scrL(C)$, gives the bound
    \begin{equation}
        t_{\mathrm{mem}} \ge \exp(a\beta(cL - 2f) - \Theta(L)\log 2 - \Theta(\log L)).
    \end{equation}
    Therefore, for sufficiently large $\beta$ and $L$ (but still satisfying $L<L^*$), we conclude that $t_{\mathrm{mem}} = \exp(\Omega(\beta L))$.
\end{remark}

\section{Random Layer Codes}\label{sec:appendix_d}

Theorem~\ref{thm:layer_codes} implies that the layer code construction applied to qLDPC inputs provides outputs with favorable parameters. In this section, we establish analogous results for \emph{random} CSS inputs. We first formalize what we mean by a random CSS code.

\begin{definition}[Random CSS Code]
A random CSS code with block length $n \in \mathbb{N}$ and rate parameters $\rho_X,\rho_Z \in (0,1/2)$ is defined by first choosing a uniformly random $\rho_Zn \times n$ binary matrix $H_Z \in \mathbb{F}_2^{\rho_Z n\times n}$, and then a uniformly random $\rho_Xn\times n$ binary matrix $H_X \in \mathbb{F}_2^{\rho_Xn\times n}$ from the subspace of matrices orthogonal to $H_Z$, i.e., satisfying the condition $H_XH_Z^\mathrm{T}=0$. We denote the ensemble of random CSS codes defined this way by $\mathrm{CSS}_n(\rho_X,\rho_Z)$.
\end{definition}

In the remainder of this appendix we assume without loss of generality that every $Q$-layer intersects at least one $Z$-layer and one $X$-layer non-trivially. The probability of sampling some $(H_X,H_Z)$ for which this condition fails to hold is $2^{-\Omega(n)}$, which is negligible in the limit of large $n$ and can be excluded from the probabilistic arguments that follow.

Note that a random CSS code is equivalently defined by choosing, uniformly at random, a pair of binary matrices $(H_X,H_Z)$ satisfying $H_XH_Z^\mathrm{T}=0$, so there is no asymmetry in the definition. A random CSS code has maximum check weight $\Theta(n)$ with high probability, so the distance and energy barrier guarantee of Theorem~\ref{thm:layer_codes} are reduced by a factor of $n$ and $n^2$, respectively, rendering them trivial. Nevertheless, we show that the parameter scaling of Theorem~\ref{thm:layer_codes} still hold when using a random CSS input code with high probability, up to a logarithmic reduction.

\begin{theorem}[Random Layer Codes]\label{thm:random_layer_appx}
Let $\scrL(C)$ be a layer code with random CSS input $C \sim \mathrm{CSS}_n(\rho_X,\rho_Z)$. Then there exists a constant $c_0>0$ such that $\scrL(C)$ has distance $d\ge c_0n^2/\log n$ and energy barrier $\Delta_{\scrL(C)} \ge c_0n/\log n$ with probability $1 - 2^{-\Omega(n)}$ as $n\rightarrow \infty$.
\end{theorem}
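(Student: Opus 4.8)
The plan is to bypass Theorem~\ref{thm:layer_codes} entirely --- its $w$- and $w^2$-factors are $\Theta(n)$ and $\Theta(n^2)$ for a dense random input and hence give nothing --- and instead analyze logical operators and Pauli paths of $\scrL(C)$ through the slab decomposition of Appendix~\ref{sec:LC:logicals}, feeding in a robustness property of random parity-check matrices that qLDPC inputs lack. Both conclusions share one engine: a ``per-slab'' cost of $\Omega(n/\log n)$, which, summed over the $n_X=\Theta(n)$ slabs, yields the distance, and, tracked along a path, yields the energy barrier. The probabilistic core I would establish first: with probability $1-2^{-\Omega(n)}$ over $C\sim\mathrm{CSS}_n(\rho_X,\rho_Z)$, there is a constant $c_1>0$ such that for \emph{every} coordinate set $S\subseteq[n]$ with $|S|\le c_1 n/\log n$, the subspace $\mathbb{F}_2^S$ contains no non-zero element of $\row(H_Z)$ and every non-zero $v\in\mathbb{F}_2^S$ satisfies $|H_X v|=\Omega(n)$; in particular no non-trivial $Z$-logical of $C$ is supported inside $S$. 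This is a Chernoff bound --- after conditioning on $H_X H_Z^{\mathrm T}=0$ the rows of $H_X$ are i.i.d.\ uniform in $\ker H_Z$, so each coordinate of $H_X v$ is an independent fair coin whenever $v\notin\row(H_Z)$ --- combined with a union bound over the $2^{o(n)}$ relevant pairs $(S,v)$; the window constraint $|S|\le c_1 n/\log n$ is exactly what keeps this count subexponential, and is the source of the logarithmic loss. The symmetric statement for $H_Z$ and $X$-logicals follows identically.

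For the distance I would fix a non-trivial $\overline Z\in L_Z(\scrL(C))$, assume it has no excitations inside any slab, and examine its restriction $\overline Z_i$ to slab $A_i$. By Lemmas~\ref{lem:equivconfigurations}, \ref{lem:layer_slab_support} and \ref{lem:layer_input_map}, the $e$-configuration $E_i(\overline Z)$ is boundary-equivalent to a fixed non-trivial $g\in L_Z(C)$; pushing all of its excitations onto the $Q$-layers via the fusion rules yields a representative of $g$ whose support lies inside the $Q$-layers touched by $\overline Z_i$ together with the $Q$-layers flanking the $Z$-layer segments it touches --- a set of size $O(|\overline Z_i|)$. If $|\overline Z_i|$ were below a small multiple of $n/\log n$, this would confine a non-trivial $Z$-logical of $C$ to a set of size $\le c_1 n/\log n$, contradicting the robustness statement; hence $|\overline Z_i|=\Omega(n/\log n)$, and summing over $i\in[n_X]$ gives $d=\Omega(n^2/\log n)$.

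For the energy barrier I would run the analogue along a Pauli-$Z$ path $\{P(t)\}$ from $I$ to a non-trivial $\overline Z$. Localizing the map of Lemma~\ref{lem:energy_map} to a neighborhood of a single slab boundary produces a Pauli-$Z$ path of $C$ ending on a representative of $g$, whose operators have input-code syndromes controlled --- crucially without the full $w^2$ factor, by exploiting that each slab is built from bounded-degree surface-code patches --- by the layer-code syndromes of $P(t)$. If those syndromes stayed $o(n/\log n)$ throughout, the induced path would have to reach a non-trivial logical of $C$ through single-qubit steps while never escaping a ``forbidden'' small set except at a cost of $\Omega(n)$ syndrome, which the robustness statement rules out. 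Hence $\Delta_{\scrL(C)}=\Omega(n/\log n)$; applying this to both error types gives the claimed bound.

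The main obstacle is not the probabilistic lemma, which is routine, but making the two reductions above rigorous: one must show that a light logical operator --- or a low-syndrome path --- inside a slab genuinely pins the associated input-code object to a small coordinate set, which requires a careful accounting of how vertical $Q$-layer strings, $X$-layer reconciliation strings and $Z$-layer segment excitations interact under the fusion rules of Section~\ref{sec:fusion}, and this is precisely where the lossy $w$ and $w^2$ factors of Theorem~\ref{thm:layer_codes} must be defeated. In particular, the energy-barrier step needs a localized refinement of Lemma~\ref{lem:energy_map} whose blow-up factor does not grow with the (now linear) sparsity $w$, together with care that the single-qubit-step structure of a Pauli path is preserved under localization and under pushing excitations onto $Q$-layers.
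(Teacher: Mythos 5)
Your high-level plan (bypass Theorem~\ref{thm:layer_codes}, work slab-by-slab, bound against a probabilistic robustness statement for $(H_X,H_Z)$, union-bound over a subexponential family) is the same skeleton the paper uses, and your observation that the rows of $H_X$ are i.i.d.\ uniform in the orthogonal complement of $\row(H_Z)$ so that $|H_Xv|$ concentrates is also the right probabilistic engine. But the reduction you hinge everything on is incorrect, and your proposed robustness lemma is the wrong one to fill the gap. You assert that pushing the $e$-configuration $E_i(\overline Z)$ onto the $Q$-layers ``yields a representative of $g$ whose support lies inside the $Q$-layers touched by $\overline Z_i$ together with the $Q$-layers flanking the $Z$-layer segments it touches --- a set of size $O(|\overline Z_i|)$.'' This is false for a dense input code: moving a single $e$-excitation on a $Z$-layer segment to the boundary (which is required to land in $\cE_Q$) forces it to branch at \emph{every} non-trivially intersecting $Q$-layer along the way, of which there are $\Theta(n)$ for a random $H_Z$. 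The resulting vector in $\mathbb{F}_2^n$ is a truncated row of $H_Z$, which has Hamming weight $\Theta(n)$. So a slab restriction $\overline Z_i$ of weight $O(n/\log n)$ pins the input-code logical $g$ not to a coordinate set of size $O(n/\log n)$, but to a vector of Hamming weight potentially $\Theta(n^2/\log n)$; your lemma about small support sets $S\subseteq[n]$ with $|S|\le c_1 n/\log n$ simply never applies. What the slab restriction \emph{does} control is the number of truncated-stabilizer-generator summands, and this is exactly what the paper captures with its $Y$-weight (Definition~\ref{def:YC}): $g$ is a sum of at most $O(|\overline Z_i|)$ elements of $Y_C$, each element being a prefix-truncation of a row of $H_Z$. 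The counting then happens over the $\le 2^{O(fn)}$ vectors of small $Y$-weight, not over small coordinate sets, and Lemma~\ref{lem:f_bound} is the corresponding probabilistic statement. Without this reweighting, your union bound and your reduction do not meet.

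The energy barrier sketch has the same problem compounded. The ``localized refinement of Lemma~\ref{lem:energy_map} whose blow-up factor does not grow with $w$'' that you call for is not what the paper does; instead it puts each $P(t)$ into a canonical form $\tilde P(t)$ supported only on $X$- and $Z$-layers (Lemma~\ref{lem:mapping}), records a string indicator tensor $c$ and its boundary $b=\partial c$, and exploits two structural facts your proposal lacks: that $c$ restricted to $Z$-layers depends only on $H_Z$ and not on $H_X$, and that along the path there is a time $t'$ at which \emph{exactly half} of the slab $e$-configurations are non-trivial. That last fact is what makes the $X$-syndrome a sum of $\Theta(n)$ independent Bernoulli variables and lets Hoeffding close the argument after a union bound over the $\le 2^{10fn}$ admissible boundaries $b$. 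Neither the ``half nontrivial'' pigeonhole nor the independence-from-$H_X$ observation appears in your plan, and your catch-all phrase ``the robustness statement rules out'' does not survive scrutiny for the same small-support reason as above. In short, the probabilistic ingredient you wrote down is sound but aimed at the wrong combinatorial object; the $Y$-weight and the string-indicator/boundary machinery are exactly the tools needed to make the per-slab reduction true, and you would have to reinvent them.
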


We will call a layer code with random input a \emph{random layer code}. Note that a random layer code has block length $N=\Theta(n^3)$, so Theorem~\ref{thm:random_layer_appx} implies that a random layer code has parameters \code{\Theta(n^3),\Theta(n),\Omega(n^2/\log n)} and energy barrier $\Delta_{\scrL(C)}=\Omega(n/\log n)$ with high probability. This matches the results of Theorem~\ref{thm:layer_codes} up to logarithmic factors.

We separate the proof of Theorem~\ref{thm:random_layer_appx} into two parts. In Section~\ref{sec:layer_distance} we establish the $\Theta(n^2/\log n)$ distance property, culminating in Theorem~\ref{thm:layer_distance}. In Section~\ref{sec:layer_energy} we establish the $\Theta(n/\log n)$ energy barrier in Theorem~\ref{thm:layer_energy}. Together, Theorems~\ref{thm:layer_distance} and \ref{thm:layer_energy} imply Theorem~\ref{thm:random_layer_appx}.

\subsection{Distance of Random Layer Codes}\label{sec:layer_distance}

In this section, we prove that random layer codes have distance $d=\Omega(n^2/\log n)$ with high probability.

\begin{definition} \label{def:YC}
Let $C = (H_X,H_Z)$ be a CSS code of length $n$. Let $S_Z$ be the set of $Z$-type stabilizer generators for $C$, i.e., the rows of $H_Z$. For each $i \in [n]$, let $Y_C^i$ denote the set of stabilizer generators with all entries of index greater than $i$ set to zero, i.e., 
\begin{align}
Y_{C}^i \coloneq \{y \in \mathbb{F}_2^n \mid  \exists s\in S_Z:\ (\forall j \le i:\  y_j = s_j\ \text{ and }\ \forall j > i:\  y_j = 0)\}.
\end{align}
Let $Y_C$ be the union of the $Y_C^i$ for all $i\in [n]$.
We define the \emph{$Y$-weight} of a string $w\in \mathbb{F}_2^n$ as a modification of the Hamming weight with respect to the spanning set $Y_C$:
\begin{align}
|w|_{Y} \coloneq \min\left\{|S|: S\subseteq Y_C \text{ and } x = \sum_{s\in S} s\right\},
\end{align}
For $f \in (0,1)$, let $B_C(f) \subseteq \mathbb{F}^n_2$ denote the set of all elements having $Y$-weight at most $fn/\log n$, i.e.,
\begin{align}
    B_C(f)\coloneq \{w\in \mathbb{F}^n_2 : |w|_Y\le fn/\log n\} .
\end{align}
\end{definition}

Note that the sets $Y_C$ and $B_C(f)$ depends on $C$ only through $H_Z$. We will write $Y_{H_Z}$ and $B_{H_Z}(f)$ when we want to emphasize independence from $H_X$. For motivation on the definition of $Y_C$, see Fig.~\ref{fig:distance1} and the accompanying caption.

\begin{figure}[H]
	\centering
	\includegraphics[width=.8\linewidth]{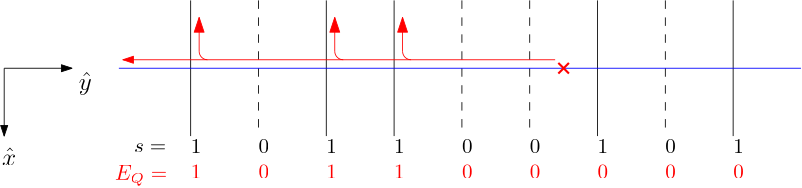}
	\caption{
    A top-down view of converting an elementary $e$-configuration $E_Z\in\mc E_Z$ consisting of a single excitation to an equivalent $e$-configuration $E_Q\in\mc E_Q$. The horizontal line represents the $Z$-layer with associated $Z$ stabilizer $s\in S_Z$ viewed from above, and solid (dashed) vertical lines depict the $Q$-layers which intersect non-trivially (trivially) with the the $Z$-layer.    
    By the fusion rules of $e$ excitations, $E_Q$ can be obtained by pushing the excitation to the left until it condenses on the boundary, where it will branch at each intersecting $Q$-layer. Thus, $E_Q$ is precisely $s$ with all entries of index greater than $i$ set to zero. The elements of $Y_C$ therefore represent elementary vectors in $\mc E_Z$ when converted to equivalent $e$-configurations in $\mc E_Q$. 
    }
\label{fig:distance1}
\end{figure}

\begin{lemma}\label{lem:f_bound}
Let $C \sim \mathrm{CSS}_n(\rho_X,\rho_Z)$ be a random CSS code. For any constant $f\in(0,1)$, we have
\begin{align}
\Pr_{C}\left(B_C(f) \cap L_Z(C) \neq \emptyset\right) \le 2^{-n(\rho_X-6f)}
\end{align}
for all sufficiently large $n$.
\end{lemma}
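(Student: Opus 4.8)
The plan is to reveal the randomness of $C=(H_X,H_Z)$ in two stages, first $H_Z$ and then $H_X$, and to run a union bound over the set $B_C(f)$, which at that point is already fixed. The structural observation driving this is that $B_C(f)=B_{H_Z}(f)$ and $\row H_Z$ depend on $H_Z$ alone, whereas, conditioned on $H_Z$, the $\rho_X n$ rows of $H_X$ are i.i.d.\ uniform vectors in the dual space $(\row H_Z)^{\perp}$ --- this is simply the statement that the uniform distribution on $\{M : MH_Z^{\mathrm T}=0\}$ has independent rows, each uniform in $(\row H_Z)^{\perp}$. I would then split $B_{H_Z}(f)$ according to membership in $\row H_Z$: any $w\in\row H_Z$ is a $Z$-stabilizer, hence never a nontrivial logical operator, so such $w$ cannot lie in $B_C(f)\cap L_Z(C)$ and may be discarded; for $w\notin\row H_Z$, membership in $L_Z(C)$ is equivalent to $w$ having trivial $X$-syndrome, i.e.\ $H_X w=0$, and only this event needs to be bounded.

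For a fixed $w\notin\row H_Z$ this probability is exactly $2^{-\rho_X n}$: since $(\row H_Z)^{\perp\perp}=\row H_Z$ over $\mathbb{F}_2$, the hypothesis $w\notin\row H_Z$ says precisely that the linear functional $r\mapsto\langle r,w\rangle$ does not vanish identically on $(\row H_Z)^{\perp}$, so a uniform $r$ in that space lies in its kernel with probability $\tfrac12$, and the rows of $H_X$ are independent. Consequently, conditionally on $H_Z$,
\begin{align}
\Pr_{H_X}\!\big[\,B_C(f)\cap L_Z(C)\neq\emptyset \mid H_Z\,\big]
&\le \sum_{\substack{w\in B_{H_Z}(f)\\ w\notin\row H_Z}}\Pr_{H_X}\!\big[\,H_X w=0 \mid H_Z\,\big] \\
&\le \big|B_{H_Z}(f)\big|\cdot 2^{-\rho_X n}.
\end{align}

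What remains --- and this is the only quantitative step --- is a deterministic bound $\big|B_{H_Z}(f)\big|\le 2^{6fn}$ valid for every $H_Z$ and all large $n$. Each $Y_{H_Z}^i$ consists of truncations of the $\rho_Z n$ rows of $H_Z$ to their first $i$ coordinates, so $|Y_{H_Z}^i|\le\rho_Z n$ and $|Y_{H_Z}|\le\rho_Z n\cdot n\le n^2$. Since $B_{H_Z}(f)$ is by definition the set of $\mathbb{F}_2$-sums of at most $t:=fn/\log n$ elements of $Y_{H_Z}$, and any such sum is specified (with overcounting) by a length-$t$ word over the alphabet $Y_{H_Z}\cup\{0\}$, we get, for $n$ large,
\begin{align}
\big|B_{H_Z}(f)\big|
\le \big(|Y_{H_Z}|+1\big)^{t}
\le (n^2+1)^{t}
= 2^{\,t\log_2(n^2+1)}
\le 2^{6fn}.
\end{align}
This is exactly where the $1/\log n$ built into the $Y$-weight earns its keep: it cancels the $\log n$ coming from the polynomial size of the universe $Y_{H_Z}$, converting a count of the shape $\mathrm{poly}(n)^{\,n/\log n}$ into $2^{O(n)}$; the constant $6$ is deliberately loose and absorbs any ambiguity about the base of the logarithm. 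Substituting into the previous display gives $\Pr_{H_X}[\,\cdot\mid H_Z\,]\le 2^{-(\rho_X-6f)n}$ uniformly in $H_Z$, and averaging over $H_Z$ yields the claim.

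I expect the only genuinely delicate point to be the bookkeeping of which objects are random at which stage: it is essential that $B_C(f)$ depends only on $H_Z$ (so the union bound is run against the independent randomness of $H_X$), and that the per-vector bound $2^{-\rho_X n}$ is used only for $w$ that are not already $Z$-stabilizers. Beyond that, both ingredients --- the per-vector syndrome probability and the cardinality estimate --- are elementary, and no distance or expansion property of the input code is needed here.
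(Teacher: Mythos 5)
Your proof is correct and follows the same high-level structure as the paper's: reveal $H_Z$ first (noting that $B_C(f)=B_{H_Z}(f)$ depends on $H_Z$ alone), bound $|B_{H_Z}(f)|\le 2^{6fn}$, discard vectors in $\row(H_Z)$, and union-bound the remaining vectors against the independent randomness of $H_X$. Two implementation details differ. For the cardinality bound, you count ordered words of length $fn/\log n$ over $Y_{H_Z}\cup\{0\}$, giving $\big(|Y_{H_Z}|+1\big)^{fn/\log n}\le (n^2+1)^{fn/\log n}\le 2^{6fn}$; the paper instead sums binomial coefficients and controls the resulting binary entropy. Both are elementary and yield the same exponent. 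More substantively, your per-vector probability is cleaner: you obtain exactly $\Pr[H_Xw=0\mid H_Z]=2^{-\rho_X n}$ by observing that the rows of $H_X$ are i.i.d.\ uniform in $(\row(H_Z))^{\perp}$ and that $w\notin\row(H_Z)=(\row(H_Z))^{\perp\perp}$ forces the functional $r\mapsto\langle r,w\rangle$ to be non-vanishing on $(\row(H_Z))^{\perp}$. The paper instead asserts that $\ker(H_X)$ is uniformly distributed over $(1-\rho_X)n$-dimensional subspaces containing $\row(H_Z)$ and so writes the per-vector probability as $\big(2^{(1-\rho_X-\rho_Z)n}-1\big)/\big(2^{(1-\rho_Z)n}-1\big)$; that characterization in fact holds only after conditioning on $H_X$ having full row rank, and the resulting expression is strictly smaller than the true conditional probability $2^{-\rho_X n}$. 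Since the paper's final step anyway majorizes that expression by $2^{-\rho_X n}$, the stated lemma is unaffected, but your row-by-row computation gives the exact conditional probability and sidesteps the rank subtlety entirely.
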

\begin{proof}
First, consider a fixed $H_Z$ defining the $Z$-checks. There are a total of $\rho_Zn$ stabilizer generators, so the size of $Y_{H_Z}$ is bounded above by $\rho_Zn^2$. The size of $B_{H_Z}(f)$ is therefore bounded above by
\begin{align}
|B_{H_Z}(f)| \le \sum_{k=0}^{fn/\log n}\binom{\rho_Zn^2}{k} \le 2^{\rho_Zn^2H_2\left(\frac{f}{\rho_Zn\log n}\right)},
\end{align}
where $H_2(p) = -p\log_2(p) - (1-p)\log_2(1-p)$ is the binary entropy function. We can further bound the binary entropy function by
\begin{align}
H_2\left(\frac{f}{\rho_Zn\log n}\right) &\le \frac{f}{\rho_Zn\log(2)\log n}\left(1-\log\left(\frac{f}{\rho_Zn\log n}\right)\right)\\
&\le \frac{f}{\rho_Zn\log(2)\log n}\log\left(\frac{e\rho_Zn\log n}{f}\right)\\
&\le \frac{6f}{\rho_Zn},
\end{align}
where the last inequality holds for $n\log n \ge e\rho_Z/f$. Therefore it follows that
\begin{align}
|B_{H_Z}(f)| \le 2^{\rho_Zn^2H_2\left(\frac{f}{\rho_Zn\log n}\right)} \le 2^{6fn}
\end{align}
for sufficiently large $n$.

Now let us sample $H_X$ uniformly at random from the space of matrices orthogonal to $H_Z$ and let $C=(H_X,H_Z)$ denote the resulting CSS code. Then $\ker(H_X)$ is uniformly random among the subspaces of dimension $(1-\rho_X)n$ containing $\row(H_Z)$. It follows that for any fixed vector $z\in \mathbb{F}^n$, we either have $z \in L_Z(C)$ with probability zero if $z \in \row(H_Z)$, or else
\begin{align}
\Pr_{H_X}(z \in L_Z(C) \mid H_Z) &= \frac{|\text{\{subspaces of dim }(1-\rho_X)n \text{ containing }z \text{ and }\row(H_Z)\}|}{|\text{\{subspaces of dim }(1-\rho_X)n \text{ containing }\row(H_Z)\}|}\\
&=\frac{\binom{(1-\rho_Z)n-1}{(1-\rho_X-\rho_Z)n-1}_2}{\binom{(1-\rho_Z)n}{(1-\rho_X-\rho_X)n}_2}\\
&=\frac{2^{(1-\rho_X-\rho_Z)n}-1}{2^{(1-\rho_Z)n}-1}
\end{align}
for $z\in\mathbb{F}^n\backslash \mathrm{row}(H_Z)$, where the second equality is the $2$-binomial coefficient. Taking a union bound, we find
\begin{align}
\Pr_{H_X}(B_{H_Z}(f) \cap L_Z(C) \neq \emptyset\mid H_Z) &\le \sum_{z \in B_{H_Z}(f)}\Pr_{H_X}(z \in L_Z(C) \mid H_Z)\\
&\le |B_{H_Z}(f)|\cdot \frac{2^{(1-\rho_X-\rho_Z)n}-1}{2^{(1-\rho_Z)n}-1}\\
&\le 2^{6fn}\cdot \frac{2^{(1-\rho_X-\rho_Z)n}-1}{2^{(1-\rho_Z)n}-1}\\
&\le 2^{-n(\rho_X-6f)}.
\end{align}
Since this bounds holds independently of the choice of $H_Z$, it follows that
\begin{align}
\Pr_{C}(B_C(f) \cap L_Z(C)\neq\emptyset) \le 2^{-n(\rho_X-6f)}.
\end{align}
\end{proof}

\begin{theorem}[Random Layer Code Distance]\label{thm:layer_distance}
Let $\mathscr{L}(C)$ be a layer code with random CSS input $C \sim \mathrm{CSS}_n(\rho_X,\rho_Z)$. There exists a constant $c_0 > 0$ such that $\mathscr{L}(C)$ has distance $d \ge c_0n^2/\log n$ with probability $1 - 2^{-\Omega(n)}$ as $n\rightarrow \infty$.  
\end{theorem}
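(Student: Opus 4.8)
The plan is to combine the slab decomposition of logical operators (Lemmas~\ref{lem:layer_slab_support} and~\ref{lem:layer_input_map}) with the probabilistic estimate of Lemma~\ref{lem:f_bound}. Fix $f = \min(\rho_X,\rho_Z)/12$, so that both $\rho_X-6f$ and $\rho_Z-6f$ are bounded away from zero. Applying Lemma~\ref{lem:f_bound} and its $X$-analogue (which holds by the $X\leftrightarrow Z$ symmetry of the ensemble $\mathrm{CSS}_n(\rho_X,\rho_Z)$), together with the easy fact that every column of $H_X$ and $H_Z$ is non-zero with probability $1-2^{-\Omega(n)}$, I would condition on the event $\mathcal{G}$ — which holds with probability $1-2^{-\Omega(n)}$ — that $B_C(f)\cap L_Z(C)=\emptyset$, that $B_C(f)\cap L_X(C)=\emptyset$, and that all columns of $H_X,H_Z$ are non-zero. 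It then remains to show deterministically that, on $\mathcal{G}$, every non-trivial logical operator of $\scrL(C)$ has weight $\ge c_0 n^2/\log n$ for a suitable constant $c_0$; I will argue this for $Z$-logicals, the $X$ case being identical.

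So let $\overline Z\in L_Z(\scrL(C))$ be non-trivial, let $A_1,\dots,A_{n_X}$ be the slab decomposition with $\overline Z_i=\overline Z|_{A_i}$, and let $E_i(\overline Z)\in\mathcal{E}=\mathbb{F}_2^n\oplus\mathbb{F}_2^{(n+1)n_Z}$ be the induced $e$-configuration on the top boundary of $A_i$. First I would convert $E_i(\overline Z)$ to an element $w_i\in\mathcal{E}_Q=\mathbb{F}_2^n$ by pushing every $Z$-layer excitation onto the $Q$-layers via the fusion rules, exactly as in Fig.~\ref{fig:distance1}. By Lemma~\ref{lem:layer_slab_support} the common class of the $E_i(\overline Z)$ is non-trivial, so by Lemma~\ref{lem:layer_input_map} the operator $P(w_i)$ is stabilizer-equivalent to a non-trivial logical of the input code, i.e. $w_i\in L_Z(C)$; conditioning on $\mathcal{G}$ then gives $|w_i|_Y>fn/\log n$. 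On the other hand I would bound $|w_i|_Y$ from above by the Hamming weight $|E_i(\overline Z)|$: each excitation of $E_i(\overline Z)$ lying on a $Q$-layer contributes a standard basis vector $e_j$, and $|e_j|_Y\le 2$ whenever column $j$ of $H_Z$ is non-zero (writing $e_j = s^{(j)}+s^{(j-1)}$ for any $Z$-stabilizer $s$ with $s_j=1$, where $s^{(i)}$ denotes the truncation of $s$ to its first $i$ coordinates, an element of $Y_C^i$); and each excitation on a $Z$-layer segment contributes, after the push, a single element of $Y_C$ (or a sum of two, depending on the push direction) by the very definition of $Y_C$. Hence $|w_i|_Y\le 2|E_i(\overline Z)|$, and combining, $|E_i(\overline Z)|>fn/(2\log n)$ for every slab $i$.

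The last ingredient is a local, deterministic counting bound $\sum_{i=1}^{n_X}|E_i(\overline Z)|\le C_1|\overline Z|$ for a universal constant $C_1$, coming from the surface-code structure near a slab boundary: each excitation counted in $E_i(\overline Z)$ is a $Z$-check straddling the boundary between $A_i$ and $A_{i+1}$ that is violated by $\overline Z_i$, hence is incident to at least one qubit of $\overline Z$ lying in $A_i\cup A_{i+1}$; since each qubit of the layer code belongs to $O(1)$ checks and is adjacent to at most $O(1)$ slab boundaries, each qubit of $\overline Z$ can be charged only $O(1)$ times in total over all slabs. Putting it together,
\begin{equation}
|\overline Z|\ \ge\ C_1^{-1}\sum_{i=1}^{n_X}|E_i(\overline Z)|\ >\ C_1^{-1}\, n_X\cdot \frac{fn}{2\log n}\ =\ \Omega\!\left(\frac{n^2}{\log n}\right),
\end{equation}
using $n_X=\rho_X n$; taking the minimum with the identical estimate for $L_X(\scrL(C))$ yields the distance bound with $c_0 = \rho_X f/(2C_1)$ (adjusted if necessary so the same constant works for both bases).

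I expect the main obstacle to be making the local counting bound $\sum_i|E_i(\overline Z)|\le C_1|\overline Z|$ fully rigorous: this requires care about how the slab boundaries are placed relative to the square lattices of the $Q$-, $X$-, and $Z$-layers, about reading off the $e$-configuration of a \emph{restricted} operator (in particular, that it records only parities of string crossings at the boundary, and that these parities are each witnessed by at least one qubit of $\overline Z$), and about verifying that the charging scheme assigns each qubit of $\overline Z$ bounded multiplicity. The remaining steps — the probabilistic conditioning via Lemma~\ref{lem:f_bound}, and the upper bound $|w_i|_Y\le 2|E_i(\overline Z)|$ obtained by pushing to $\mathcal{E}_Q$ — are, by contrast, routine once the geometry has been pinned down.
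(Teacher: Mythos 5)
Your proposal is correct and follows essentially the same approach as the paper: restrict $\overline{Z}$ to slabs, read off the $e$-configurations $E_i(\overline{Z})$, push them into $\mathcal{E}_Q$ to obtain a non-trivial logical of the input code with small $Y$-weight, rule this out probabilistically via Lemma~\ref{lem:f_bound}, and sum over slabs. Your $Y$-weight bound $|w_i|_Y \le 2|E_i(\overline{Z})|$ is the same quantity the paper derives, obtained by a slightly different route (you handle $Q$-layer excitations directly via $e_j = s^{(j)}+s^{(j-1)}$, whereas the paper first moves $Q$-layer excitations onto adjacent $Z$-layer segments and then pushes everything left), but both give factor $2$.

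The one place you made things harder than necessary is the final counting step, which you flag as "the main obstacle." You posit a charging argument with an unspecified universal constant $C_1$, worrying about how many times a qubit near a slab boundary can be charged. This is avoidable: the paper simply observes that $|\overline{Z}_i| \ge |E_i(\overline{Z})|$ (each non-zero entry of $E_i(\overline{Z})$ is a parity of string crossings in a spatially disjoint region of the slab boundary, hence is witnessed by at least one distinct qubit of $\overline{Z}_i$), and since the slabs $\{A_i\}$ partition the qubits of $\mathscr{L}(C)$, we have $|\overline{Z}| = \sum_i |\overline{Z}_i| \ge \sum_i |E_i(\overline{Z})|$. That is, your $C_1$ equals $1$ once you use the partition structure, and the geometric bookkeeping you anticipated is unnecessary. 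A minor slip: the excitations recorded in $E_i(\overline{Z})$ are $e$-type (violated $X$-checks), not violated $Z$-checks as you wrote, since $\overline{Z}$ is a $Z$-type operator — this does not affect the argument but is worth fixing.
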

\begin{proof}
We will prove that the $Z$-distance of $\mathscr{L}(C)$ satisfies $d_Z \ge c_0n^2/\log n$ with high probability. A completely analogous argument will then imply that the $X$-distance satisfies the same bound with high probability, and the full result then follows by the union bound.\\

First, consider a fixed input code $C$ of length $n$. Let $\overline{Z} \in L_Z(\mathscr{L}(C))$ be a minimal weight logical $Z$ operator for $\mathscr{L}(C)$. For any slab $A_i$, let $\overline{Z}_i$ denote the restriction of $\overline{Z}$ to $A_i$. Then we have
\begin{align}
\left|\overline{Z}_i\right| \ge \left|E_i(\overline{Z})\right|.
\end{align}
Applying the fusion rules to convert any excitation on a $Q$-layer into two excitations on the adjacent $Z$-layer segments for an intersecting $Z$-layer, it follows that there exists $E' \in \cE_Z$ such that $E_i(\overline{Z}) \approx E'$ and $|E'| \le 2|E_i(\overline{Z})|$. Applying the fusion rules one more time to move any excitations on a $Z$-layer segment onto all intersecting $Q$-layers to its left, it follows from Lemma~\ref{lem:layer_input_map} and Fig.~\ref{fig:distance1} that there exists some $E'' \in \cE_Q$ such that:
\begin{enumerate}
\item $P(E'') \in L_Z(C)$,
\item $P(E'')$ is a linear combination of at most $|E'|$ elements of $Y_C$, i.e., $P(E'') \in B_C(|E'|\log n/n)$.
\end{enumerate}
In particular, it follows that
\begin{align}
B_C(|E'|\log n/n) \cap L_Z(C) \neq \emptyset.
\end{align}
Let us now bound the probability over $C$ that there exists some $\overline{Z} \in L_Z(\mathscr{L}(C))$ such that $|\overline{Z}_i| \le fn/\log n$ for some constant $f$ and index $i \in [\rho_Xn]$. Since $|\overline{Z}_i| \le fn/\log n$ implies that $|E'| \le 2fn/\log n$, it follows from the discussion above and Lemma~\ref{lem:f_bound} that
\begin{align}
\Pr_C(\exists \overline{Z} :\ |\overline{Z}_i| \le fn/\log n \text{ for some }i) \le \Pr_C(B_C(2f) \cap L_Z(C)\neq \emptyset) \le 2^{-n(\rho_X-12f)} 
\end{align}
for sufficiently large $n$. For any fixed $\rho_X$, we may choose $f$ such that $12f < \rho_X$. For such a choice of $f$, it follows that we will have $|\overline{Z}_i| > fn/\log n$ for all $i\in [\rho_Xn]$ with probability $1-2^{-\Omega(n)}$. Therefore we have
\begin{align}
|\overline{Z}| = \sum_{i = 1}^{\rho_Xn}|\overline{Z}_i| > \rho_Xfn^2/\log n \label{eq:distlowerboundlogicalZ}
\end{align}
for all $\overline{Z} \in L_Z(\mathscr{L}(C))$ with probability $1-2^{-\Omega(n)}$. Taking $c_0 = \rho_Xf$, it follows that $d_Z > c_0n^2/\log n$ with probability $1-2^{-\Omega(n)}$.
\end{proof}

\subsection{Energy Barrier of Random Layer Codes}\label{sec:layer_energy}

In this section, we prove that random layer codes have energy barrier (see Definition~\ref{def:energy_barrier}) $\Delta_{\scrL(C)} = \Omega(n/\log n)$ with high probability.

It will be convenient throughout this section and the next to work with $Z$-type Pauli operators that have been put into a canonical form. Let $P\in P_Z(\scrL(C))$ be a Pauli-$Z$ operator on the layer code $\scrL(C)$. By applying string operators, we will define an associated operator $\tilde{P} \in P_Z(\scrL(C))$. The precise definition of $\tilde P$ and its properties are stated in the following lemma and accompanying figures. 

\begin{lemma}[Mapping Lemma]\label{lem:mapping}
Let $C = (H_X,H_Z)$ be a CSS code and let $\scrL(C)$ be its layer code. By an application of string operators, we can associate to each $P\in P_Z(\scrL(C))$ another operator $\tilde{P} \in P_Z(\scrL(C))$ such that:
\begin{enumerate}
\item $\tilde{P}$ is supported entirely on $X$ and $Z$ layers,
\item\label{itm:mapping2} $\tilde{P}$ has the same excitations as $P$ on $X$ layers,
\item Restricted to the $Z$ layers, $\tilde{P}$ is a union of vertical strings located half-way between successive $Q$-layers. Each region between successive $Q$-layers contains at most one excitation---located at its center---and the vertical strings connect successive excitations. More explicitly, in each region, the vertical strings either: 
\begin{enumerate}
\item connects the top $X$-layer of the region to the bottom $X$-layer,
\item connects the top $X$-layer to an excitation located at the center of the region, or
\item connects the bottom $X$-layer to an excitation located at the center of the region.
\end{enumerate}
\item\label{itm:mapping4} The restriction of $\tilde{P}$ to $Z$-layers is a function of only $P$ and $H_Z$. In particular, it is independent of $H_X$, i.e., as long as the number of $X$ layers ($\rho_Xn$) is the same, the matrix $H_X$ does not matter. This will be important in the argument to follow. Note that the parts of $\tilde{P}$ on the $X$ layers, including its set of excitations, \emph{will} generally depend on $H_X$.
\item \label{itm:mapping5} The syndrome weights are related by $|\sigma(\tilde{P})| \le 2|\sigma(P)|$.
\item\label{itm:mapping6} If $P$ is a logical operator, then $\tilde{P}$ is stabilizer-equivalent to $P$.
\item\label{itm:mapping7} If $|P_0P_1| = 1$, then the $e$-configurations defined by $\tilde{P}_0$ and $\tilde{P}_1$ differ on at most a single slab boundary. 
\end{enumerate}
\end{lemma}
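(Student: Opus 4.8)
The plan is to take the two-stage recipe sketched in the statement (and figures) as the definition of $\tilde P$ and then verify properties (1)--(7) in the order given; all but the last are immediate, and property~(7) is where the work concentrates. I obtain $\tilde P$ from $P$ in two steps. First I clean $P$ off every $Q$-layer: on each $Q$-layer, $P$ restricts to a Pauli-$Z$ operator on a surface-code patch, and since $e$-excitations move freely inside a $Q$-layer and an $e$-flux passes onto an incident $Z$-layer at a $QZ$ line defect, one can multiply $P$ by string operators supported on the $Q$-layers that delete all $Q$-layer support, rerouting the flux onto the $Z$- and $X$-layers; the routing at $XZQ$ point defects is chosen so that the multiset of $X$-layer excitations is preserved. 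Second, I canonicalize the $Z$-layer support cell by cell: each $Z$-layer is cut by its $QZ$ defects (along $y$) and its $XZ$ defects (along $z$) into rectangular cells, each a surface-code patch of $O(1)$ size, and inside every cell I apply stabilizers and short strings to move all excitations to the cell center and all string segments onto the central vertical line, which is precisely cases (a)--(c). Properties (1), (2), and (3) then hold by construction.

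\textbf{Properties (4)--(6).} Property~(4) is checked by inspection: the first step couples a $Q$-layer only to those $Z$-layers and intersection points that are fixed by $H_Z$, and the second step is an intrinsic manipulation within each $Z$-layer, so $H_X$ enters only through the number $\rho_X n$ of $XZ$ cuts; hence the $Z$-layer restriction of $\tilde P$ is a function of $P$ and $H_Z$ alone. Property~(5) is a counting estimate: each step creates new excitations only in cells neighboring ones already carrying error or excitations, and a short accounting --- each $X$-layer excitation of $P$ being responsible for at most one extra $Z$-layer excitation of $\tilde P$, all remaining excitations being inherited --- gives $|\sigma(\tilde P)| \le 2|\sigma(P)|$. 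Property~(6) follows because, when $P$ has trivial syndrome, every string used at every step either closes up or condenses on a rough boundary, so no excitation is ever created and every multiplication is by a stabilizer; thus $\tilde P$ is stabilizer-equivalent to $P$.

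\textbf{Property (7): the main obstacle.} The delicate point is that the canonical form uses extended vertical strings, so a single-qubit change in $P$ must be prevented from rippling across many slabs. The plan is a locality argument: if $P_0$ and $P_1$ differ by $Z_q$ on a single qubit $q$, then $q$ lies strictly between two consecutive $X$-layers, hence in the interior of one slab $A_{i^*}$, and whatever type of layer $q$ belongs to, the extra factor $Z_q$ propagates under the two steps only to $Z$-layer cells lying in the $z$-range of $A_{i^*}$ --- a $Q$-layer qubit may spread over $\Theta(n)$ distinct $Z$-layers, but all within this one slab --- so $\tilde P_1$ differs from $\tilde P_0$ by an operator supported inside $A_{i^*}$. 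One then wants to conclude that the difference changes $E_i(\tilde P_0)$ versus $E_i(\tilde P_1)$ for at most one index $i$; this should follow once one checks that the cells meeting a fixed slab contribute to the $e$-configuration on only one slab boundary. The real work is in making this airtight: pinning down exactly which slab boundary each cell feeds, and handling the degenerate cases in which $q$ sits on a defect line or on an $X$-layer --- so that the propagation also touches $X$-layer excitations --- while verifying that a consistent tie-breaking rule in the canonicalization keeps the change within a single slab boundary. Everything else in the proof is routine surface-code bookkeeping.
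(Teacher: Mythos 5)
Your overall construction is essentially the paper's (it defines $\tilde P$ by three stages — cleaning, branching, merging — while you compress this into a clean-off-$Q$-layers step followed by a cell-by-cell canonicalization), and your verifications of properties (1)–(4) and (6) match the paper's reasoning. Two points deserve comment.

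First, your justification of property~(5) is not quite right. You attribute the factor of~$2$ to ``each $X$-layer excitation of $P$ being responsible for at most one extra $Z$-layer excitation of $\tilde P$,'' but $X$-layer excitations are preserved (that is exactly property~(2)) and do not spawn $Z$-layer excitations. In the paper's argument the doubling comes solely from the branching at $QZ$ defects: when a vertical string on a $Q$-layer is pushed onto the intersecting $Z$-layer it splits into two, so each $Q$-layer excitation can become two $Z$-layer excitations; the cleaning and merging stages can only decrease the count. Your stated mechanism would not support the bound, though the bound itself is correct.

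Second, and more importantly, your treatment of property~(7) has a genuine gap, and you largely acknowledge it. You argue that the difference $\tilde P_1\tilde P_0^{-1}$ is supported inside a single slab $A_{i^*}$, and then want to conclude that the $e$-configurations $E_i(\tilde P_0)$ and $E_i(\tilde P_1)$ differ on at most one slab boundary. But ``supported in one slab'' is not enough: the cells of a $Z$-layer that lie in a given slab are bounded above and below by two $X$-layers, so a perturbation confined to such cells can a priori change whether strings cross either of those two $X$-layers, i.e., flip $c^{\underline{k}}$ at two values of $k$. The paper closes this gap by analyzing what the canonicalization actually does to a one-qubit change: after cleaning and branching the excitation positions move by at most one lattice unit (less than the inter-layer spacing $K$), so the cell parities $b$ change in at most two adjacent cells, and the merging step then extends a single string segment across exactly one $X$-layer between those two cells. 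In other words, the essential content is a parity/direction argument (``the only thing affected is the merging process, which extends a string into one neighboring region''), not merely the locality statement that the perturbation stays in one slab. You name this as ``the real work'' and point toward the right question — which slab boundary each cell feeds — but you do not supply the argument, and the locality claim alone would permit two slab boundaries to change, which is exactly what property~(7) must rule out. The degenerate cases you flag ($q$ on a defect line or on an $X$-layer) are handled in the paper by the observation that an $X$-layer edge does not touch the $Q$- or $Z$-layer support at all (so $c$ is unchanged), which is a cleaner dispatch than a tie-breaking rule.
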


\begin{proof}
\textbf{Definition of $\tilde{P}$.}

For each $Q$-layer, fix some $Z$-layer which has non-trivial intersection with $Q$. We construct $\tilde{P}$ in three stages, which we call cleaning, branching, and merging: 
\begin{enumerate}
\item In the cleaning stage (Fig.~\ref{fig:cleaning}), we use string operators and stabilizers to move all excitations on each $Q$-layer horizontally to the defect line defined by the chosen $Z$-layer. This process does not modify the excitations on the $X$-layers, although the support of the operator on the $X$-layers may be modified by the addition of stabilizers. Any string operators remaining on the $Q$-layers can be cleaned away by stabilizers so as to be supported solely on the chosen $QZ$ line defect. This leaves us with an operator whose $Q$-support consists entirely of vertical string segments supported on the chosen $QZ$ line defects. This process may merge some excitations on the $Q$-layers but will not increase the number.
\begin{figure}[H]
	\centering
	\includegraphics[width=.8\linewidth]{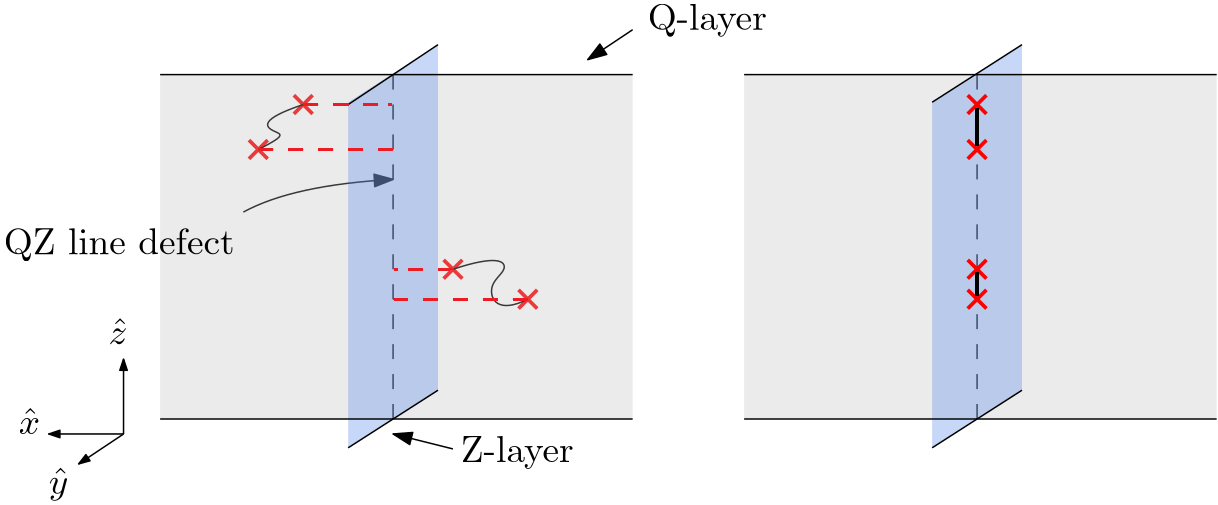}
	\caption{Note that this figure is rotated with respect to the other figures. Illustration of the ``cleaning'' step of the map $P\mapsto \tilde{P}$. In the left figure, a number of strings and excitations are extant on the $Q$-layer. The excitations can be moved by a horizontal string (red dashed) to the chosen $QZ$ line defect (dashed black). The right figure illustrates the end result of the cleaning step. The remaining operator on the $Q$-layer, after cleaning away any closed loops, consists of vertical strings supported on the $QZ$ line defect.
    }
	\label{fig:cleaning}
\end{figure}

\item In the branching stage (Fig.~\ref{fig:branchingproof}), we push the vertical strings on the $QZ$-defect from the $Q$-layer onto the $Z$-layers using the fusion rules. This will create two vertical string segments for each original one. We then move all strings on $Z$-layers until they are vertical and half-way in between successive $Q$-layers. Again, this step can be accomplished entirely using horizontal strings and stabilizers, which does not affect the excitations on the $X$-layers. Any excitations and string segments on the first and last boundary $Z$-layer regions can be simply condensed away at the boundaries. An example of both cleaning and branching steps applied to an operator is depicted in Fig.~\ref{fig:cleaning_branchingproof}.

\begin{figure}[H]
\centering
	\includegraphics[width=.9\linewidth]{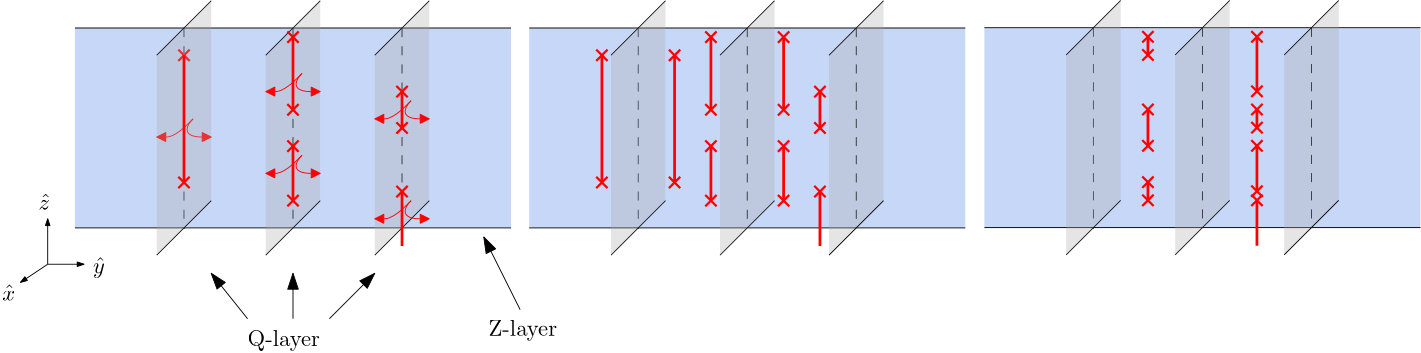}
	\caption{Illustration of the ``branching'' step of the map $P\mapsto \tilde{P}$. The first (left most) figure illustrates the end result of the ``cleaning'' step, a number of vertical string operators supported on $Q$-layers, lying on top of the $QZ$-defect. We can push these operators from the $Q$-layer onto the $Z$-layer using the fusion rules of the layer code. The result (middle figure) splits each vertical string into a pair of identical strings supported on the $Z$-layer, on either side of the defect. Finally, as illustrated on the last (right most) figure, we can move all of the branched strings to be supported in the middle of each region. This can be done solely through multiplication by horizontal strings or stabilizers. 
    }
	\label{fig:branchingproof}
\end{figure}

\begin{figure}[ht]
	\centering
	\includegraphics[width=.95\linewidth]{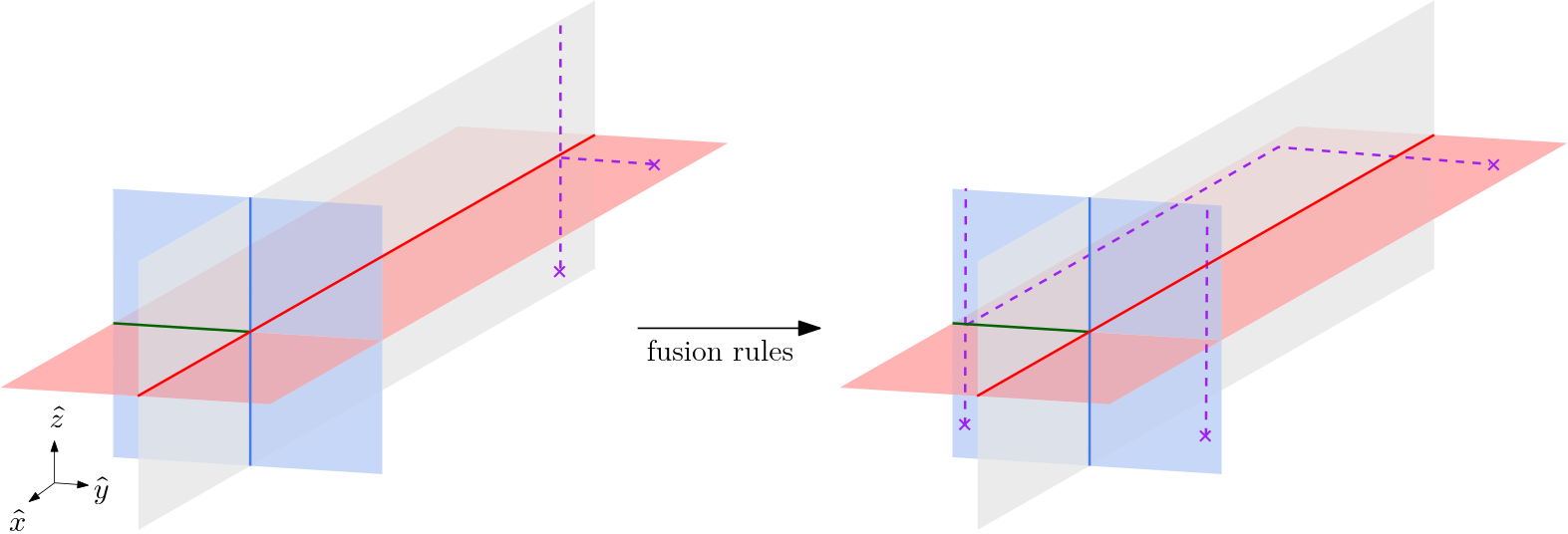}
	\caption{An illustration of the cleaning and branching steps together. The left figure illustrates an initial operator $P$ (dashed purple line) with its associated syndrome (purple crosses). The right figure illustrates the end result of the cleaning and branching steps. 
    The original operator can be transformed through multiplication by horizontal string operators and the fusion rules to the new operator.
    The vertical string that was originally on the $Q$-layer now branches into two segments on the $Z$-layer according to the fusion rules. 
    The string attached to the excitation on the $X$-layer must be carried along; for the illustrated junction, the $XZ$ defect line is to the left of the $Q$-layer, so the string attaches to the $Z$-layer on the left.
    }
	\label{fig:cleaning_branchingproof}
\end{figure}

\item Finally, in the merging stage (Fig.~\ref{fig:merging}), we combine the excitations on the $Z$-layers pairwise within each region so that each region contains at most one excitation. We can then translate the remaining excitation---if present---vertically until it is located at the center of the region. This step does not modify the support on the $X$-layers at all.

\begin{figure}[H]
	\centering
	\includegraphics[width=.9\linewidth]{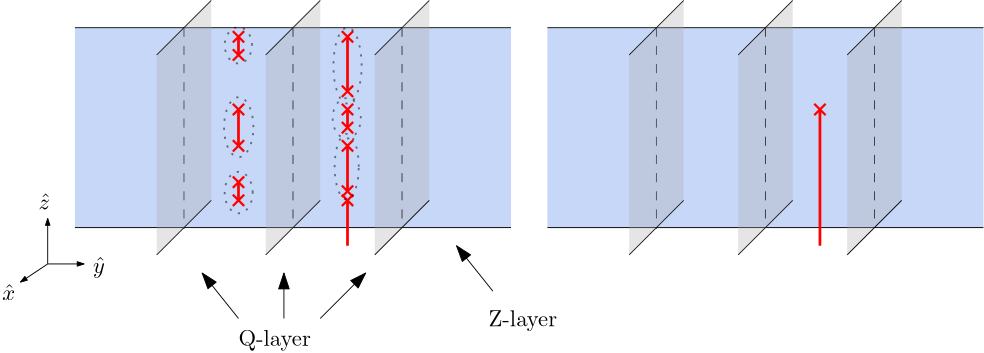}
	\caption{Illustration of the final ``merging'' step of the map $P\mapsto \tilde{P}$. The figure on the left illustrates the end result of the previous ``branching'' step where a number of vertical strings and excitations are located in the middle of each region on the $Z$-layer. Excitations within each region can be paired off and eliminated. This leaves at most one excitation remaining in each region, which can be shifted to the very center of the region (the right figure).
    }
	\label{fig:merging}
\end{figure}
\end{enumerate}

\textbf{Proof of the required properties.}

\begin{enumerate}
\item By the end of the cleaning and branching stages, the operator is already supported solely on $X$ and $Z$ layers. The merging stage does not change this fact.

\item The cleaning and branching stages can potentially change the existing string operators on the $X$-layers. The merging stage does not touch the $X$-layers at all. Note that $\tilde{P}$ and $P$ will differ on the $X$-layers in general; only the excitations (the endpoints of the string operators) remain unchanged.

\item This is essentially by definition of the mapping $P\mapsto \tilde{P}$.

\item The support on the $Z$-layers is entirely a function of the fusion rules between $Z$- and $Q$-layers, which does not depend on the choice of $H_X$. Note however, the mapping itself, and in particular the strings on the $X$-layers and where/whether they attach to the vertical strings on the $Z$-layers \emph{will} depend on the choice of $H_X$. The set of excitations on the $X$-layers will also depend on the choice of $H_X$.

\item The cleaning and merging stages can only decrease the number of excitations. The branching stage at most doubles the number of excitations. Therefore $|\sigma(\tilde{P})| \le 2|\sigma(P)|$.

\item The cleaning and branching steps are accomplished solely using horizontal strings to move excitations and stabilizers. If $P$ is a logical operator, then it has no excitations, and hence these two steps are accomplished solely using stabilizers. Since there are no excitations, the merging step is trivial. Therefore $\tilde{P}$ is stabilizer-equivalent to $P$ when $P$ is a logical operator.

\item Finally, suppose $P_0$ differs from $P_1$ on a single edge. If this edge is on an $X$-layer then it does not affect the $e$-configuration of $\tilde{P}_0$ vs. $\tilde{P}_1$ at all. If the edge is on a $Q$ or $Z$ layer, the only thing that it can affect is the merging process by extending a string segment initially terminating at the boundary between two regions into the neighboring region. This can happen on at most one slab boundary.
\end{enumerate}
\end{proof}

\begin{remark}
Note that the merging step in the construction of the operator $\tilde{P}$ is not essential for our arguments; its main purpose is to simplify things by putting the operator in a canonical form.
\end{remark}

Let $\tilde{P}$ be an operator satisfying the assumptions of Lemma~\ref{lem:mapping}. To each such operator we will associate a tensor which indexes the positions of the vertical strings in $\tilde{P}$.

\begin{definition}[String Indicator and Boundary]
Let $\scrL(C)$ be a layer code, let $P \in P_Z(\scrL(C))$ be a Pauli $Z$ operator, and let $\tilde{P}$ be the associated operator obtained by applying Lemma~\ref{lem:mapping}. To $\tilde{P}$ we associate an tensor $c(\tilde{P}) \in \mathbb{F}_2^{\rho_Zn}\otimes \mathbb{F}_2^{n+1}\otimes \mathbb{F}_2^{\rho_Xn+2}$ with components denoted by $c_{ij}^{\ \ k}$. The components are defined such that $c_{ij}^{\ \ k} = 1$ if and only if there exists a vertical string in $\tilde{P}$ located on the $i$-th $Z$-layer, in the region between $Q$-layers $j$ and $j+1$, which crosses the $k$-th $X$-layer. By convention, we will define $k=0$ and $k=\rho_Xn+1$ to be the indices corresponding to be the bottom and top boundaries, respectively, of the layer code, which are not actual $X$-layers. Similarly, $j=0$ and $j=n$ will be the left and right boundaries, respectively, which are not actual $Q$-layers. We call the tensor $c(\tilde{P})$ the \emph{string indicator} of $\tilde{P}$.

To each string indicator $c$ we may associate a boundary $b:= \partial c \in \mathbb{F}_2^{\rho_Zn}\otimes \mathbb{F}_2^{n+1}\otimes \mathbb{F}_2^{\rho_Xn+1}$ defined by components
\begin{align}
b_{ij}^{\ \ k} = c_{ij}^{\ \ k+1} - c_{ij}^{\ \ k}.
\end{align}
The string boundary $b = \partial c$ is an indicator variable for the endpoints of the strings defined by $c$. More precisely, $b_{ij}^{\ \ k} = 1$ if and only if there exists an excitation in $\tilde{P}$ on the $i$-th $Z$-layer, in the region between $Q$-layers $j$ and $j+1$ and $X$ layers $k$ and $k+1$. See Fig.~\ref{fig:bc_def} for an illustration of $c$ and $b$ for a given operator $\tilde P$.
\end{definition}

\begin{figure}[htpb]
	\centering
	\includegraphics[width=.7\linewidth]{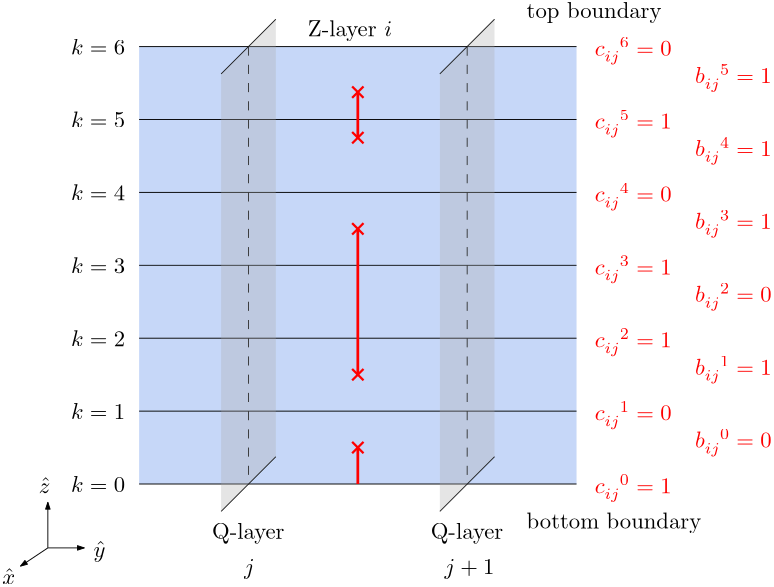}
	\caption{Definition of the indicators $c_{ij}^{\ \ k}$ and $b_{ij}^{\ \ k}$. The figure illustrates the regions on the $i$-th $Z$-layer, between $Q$-layers $j$ and $j+1$. The horizontal lines define the positions of the $X$-layers (regardless of whether or not they actually intersect the $Z$-layer), which are labeled from $k=1$ to $k=5$ (in this figure, $\rho_Xn = 5$). The values $k=0$ and $k=6$ define the bottom and top boundaries of the layer code. A given configuration for an operator $\tilde{P}$ is illustrated, and the corresponding values of $c_{ij}^{\ \ k}$ and $b_{ij}^{\ \ k}$ are provided on the right. The string indicator $c$ defines the presence of strings crossing some $X$-layer. The string boundary $b=\partial c$ defines the presence of excitations within a given bulk region.}
	\label{fig:bc_def}
\end{figure}

\begin{remark}
By Property~\ref{itm:mapping4} of Lemma~\ref{lem:mapping}, it follows that $c(\tilde{P})$ is equivalently a function of $P$ and $H_Z$. In other words, given a Pauli $Z$ operator $P\in \scrL(C)$, we may uniquely associate a string indicator $c(P,H_Z)=c(\tilde{P})$ to $P$ without first choosing $H_X$, i.e., as long as the number of $X$ layers ($\rho_Xn$) is the same, the matrix $H_X$ does not matter. This will be important in the argument to follow.
\end{remark}

\begin{definition}
In this section, we will adopt the convention that our slab boundaries are located right above each $X$-layer. Both $e$-excitations and $e$-configurations will be defined as usual with respect to these boundaries. With this convention in place, we may recover the $e$-configurations of $\tilde{P}$ from its string indicator $c(\tilde{P})$. More precisely, the $e$-configuration of $\tilde{P}$ at the slab boundary above the $k_0$-th $X$-layer is given by the slice of $c$ evaluated at $k_0$. We will denote evaluation of the third index at $k_0$ by an underline $\underline{k_0}$, i.e., we write $c_{ij}^{\ \ \underline{k_0}} \in \mathbb{F}_2^{\rho_Zn}\otimes \mathbb{F}_2^{n+1} \cong \mathcal{E}_Z$ for the tensor obtained from $c_{ij}^{\ \ k}$ by evaluating $k$ at the specific value $k_0$. See Fig.~\ref{fig:excitation1} for an example.   
\end{definition}

\begin{figure}[H]
	\centering
	\includegraphics[width=.7\linewidth]{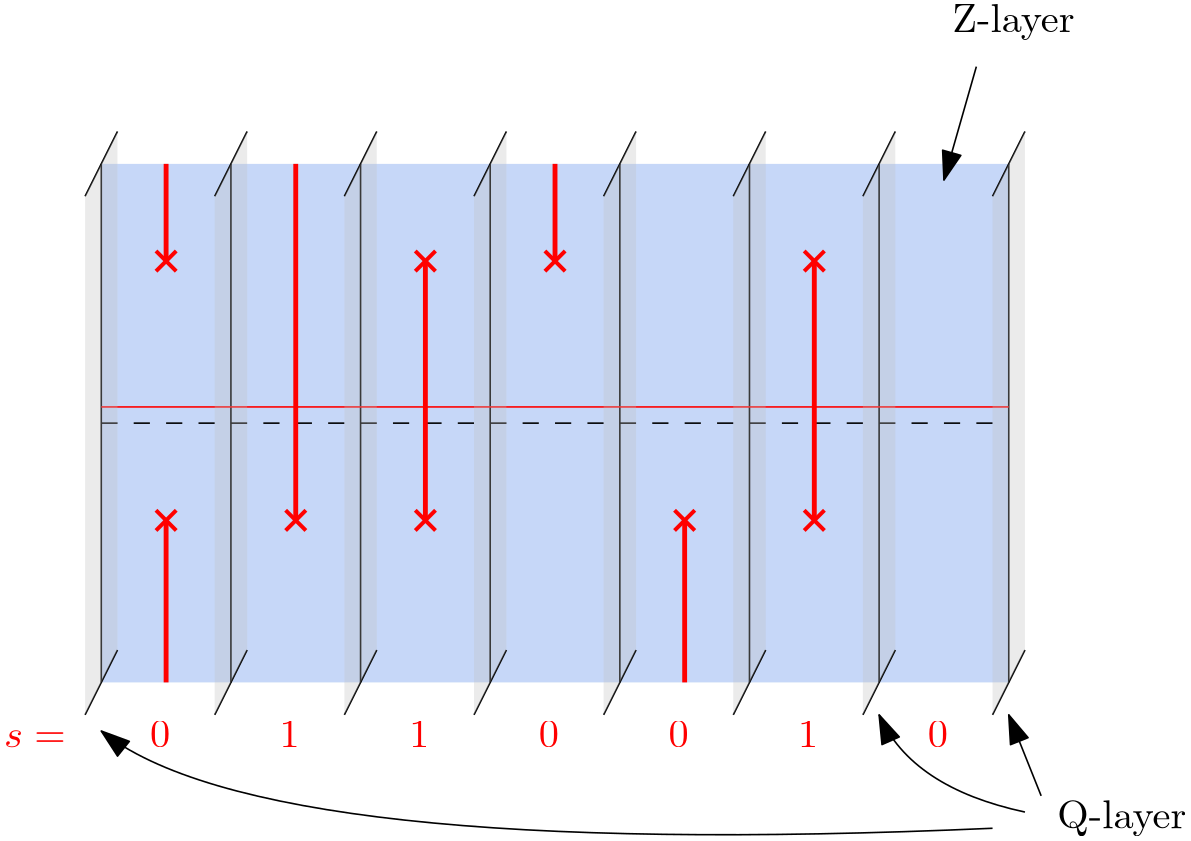}
	\caption{The $e$-configurations defined by the operator $\tilde{P}$. In the top figure, a portion of the operator $\tilde{P}$ (red) is shown on a given $Z$-layer. The dotted horizontal line denotes an $X$-layer location, and the solid horizontal red line denotes the location of a slab boundary, which we take to be just above the $X$-layer. The vertical black lines denote defect lines with $Q$-layers. The intersections of red string operators with the slab boundary define an $e$-configuration supported on the $Z$-layer, i.e., an element of $\cE_Z$, because $\tilde P$ has no support on $Q$-layers.
    The corresponding $e$-configuration of the given $Z$-layer is depicted at the bottom as a binary string $s$. If we take the $Z$-layer and $X$-layer depicted to be the $i$-th and $k$-th, respectively, the entries of the binary string would be $s_j = c_{ij}^{\ \ \underline{k}}$.}
	\label{fig:excitation1}
\end{figure}

\begin{definition}\label{def:setBf}
Let $f \in (0,1)$. Define $\mathfrak{B}_f \subseteq \mathbb{F}_2^{\rho_Zn}\otimes \mathbb{F}_2^{n+1}\otimes \mathbb{F}_2^{\rho_Xn+1}$ to be the set of all string boundaries~$b$ such that:
\begin{enumerate}
\item $|b| \le 2fn/\log n$, where $|b|$ denotes the Hamming weight of the tensor $b$, i.e., the total number of non-zero components,

\item $b = \partial c$ for some $c$ such that $c^{\underline{k}}$ defines a non-trivial $e$-configuration for exactly $\floor{(\rho_Xn-1)/2}$ values of $k \in \{1,2,\cdots,\rho_Xn\}.$
\end{enumerate}
\end{definition}

\begin{lemma}\label{lem:Bsize}
The size of the set $\mathfrak{B}_f$ is bounded above by
\begin{align}
|\mathfrak{B}_f| \le 2^{10fn}
\end{align}
for sufficiently large $n$.
\end{lemma}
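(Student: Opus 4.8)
The plan is a straightforward counting argument that uses only the first of the two conditions defining $\mathfrak{B}_f$; the second condition can only make the set smaller, so discarding it is harmless. First I would record that $\mathfrak{B}_f$ lives inside the $\mathbb{F}_2$-vector space $\mathbb{F}_2^{\rho_Zn}\otimes\mathbb{F}_2^{n+1}\otimes\mathbb{F}_2^{\rho_Xn+1}$, whose dimension is $N := \rho_Zn(n+1)(\rho_Xn+1)$, and observe that $\rho_X,\rho_Z < 1/2$ forces $N < n^3$ for all $n \ge 2$. Since every $b \in \mathfrak{B}_f$ has Hamming weight at most $\floor{2fn/\log n}$, we get
$$|\mathfrak{B}_f| \le \sum_{j=0}^{\floor{2fn/\log n}}\binom{N}{j} \le \sum_{j=0}^{\floor{2fn/\log n}}\binom{n^3}{j}.$$

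Next I would apply the standard entropy estimate $\sum_{j=0}^{k}\binom{M}{j} \le 2^{MH_2(k/M)}$ (valid for $0 \le k \le M/2$) with $M = n^3$ and $k = \floor{2fn/\log n}$, which satisfies $k \le M/2$ once $n$ is large, together with $H_2(p) \le p\log_2(e/p)$. This yields
$$\log_2|\mathfrak{B}_f| \le M\,H_2(k/M) \le k\log_2\!\left(\frac{en^3}{k}\right) \le \frac{2fn}{\log n}\,\log_2\!\left(\frac{en^2\log n}{2f}\right).$$
The logarithm on the right equals $2\log_2 n + \log_2\log n + O_f(1)$; since $\log_2\log n + O_f(1) \le 3\log_2 n$ for all sufficiently large $n$ (the threshold depending on $f$), the right-hand side is at most $\frac{2fn}{\log n}\cdot 5\log_2 n = 10fn$ when $\log = \log_2$, and for any other fixed logarithm base the extra factor $\log_2 e$ is absorbed into the same slack. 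This is precisely $|\mathfrak{B}_f| \le 2^{10fn}$.

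I do not expect a genuine obstacle here: the whole content of the lemma is that a weight budget of $\Theta(n/\log n)$ spread over a polynomial number $\Theta(n^3)$ of possible nonzero positions produces only $2^{\Theta(n)}$ (rather than $2^{\Theta(n^3)}$) tensors, because $\tfrac{n}{\log n}\cdot\log(n^3) = \Theta(n)$. The constant $10$ in the exponent is chosen generously on purpose, so that the subleading $\log\log n$ term and the base of the logarithm are irrelevant; the only two points needing a line of justification are $N < n^3$ (immediate from $\rho_X,\rho_Z<1/2$) and the applicability of the entropy bound, i.e.\ $k \le M/2$, both of which hold for all large $n$.
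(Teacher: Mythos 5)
Your proof is correct and takes essentially the same approach as the paper: both use only the Hamming-weight condition defining $\mathfrak{B}_f$, bound the count by $\sum_{j\le k}\binom{M}{j}$ for $M=\Theta(n^3)$, apply the binary-entropy estimate, and let the generous constant $10$ absorb the $\log\log n$ and $O_f(1)$ terms. The only cosmetic differences are that the paper bounds the ambient dimension by $2\rho_X\rho_Z n^3$ rather than your $n^3$, and carries the natural-log base explicitly to land on $6fn/\log 2 \le 10fn$.
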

\begin{proof}
The total number of $b \in \mathbb{F}_2^{\rho_Zn}\otimes \mathbb{F}_2^{n+1}\otimes \mathbb{F}_2^{\rho_Xn+1}$ satisfying $|b| \le 2fn/\log n$ is bounded above by
\begin{align}
\sum_{\ell = 0}^{2fn/\log n}\binom{\rho_Zn(n+1)(\rho_Xn+1)}{\ell} \le \sum_{\ell = 0}^{2fn/\log n}\binom{2\rho_X\rho_Zn^3}{\ell}\le 2^{2\rho_X\rho_Zn^3H_2\left(\frac{f}{\rho_X\rho_Zn^2\log n}\right)},
\end{align}
where the first inequality above hold for all $n$ sufficiently large. The binary entropy function satisfies
\begin{align}
H_2\left(\frac{f}{\rho_X\rho_Zn^2\log n}\right) &\le \frac{f}{\rho_X\rho_Zn^2\log(2)\log n}\left(1-\log\left(\frac{f}{2\rho_X\rho_Zn^2\log n}\right)\right)\\
&= \frac{f}{\rho_X\rho_Zn^2\log(2)\log n}\log\left(\frac{2e\rho_X\rho_Zn^2\log n}{f}\right)\\
&\le \frac{f}{\rho_X\rho_Zn^2\log(2)\log n}\log\left(n^3\right)\\
&\le \frac{3f}{\log(2)\rho_X\rho_Zn^2},
\end{align}
where the penultimate inequality holds for all $fn/\log n  \ge 2e\rho_X\rho_Z$. It follows that we have
\begin{align}
|\mathfrak{B}_f| \le 2^{2n^3\rho_X\rho_ZH_2\left(\frac{f}{\rho_X\rho_Zn^2\log n}\right)} \le 2^{2n^3\rho_X\rho_Z\frac{3f}{\log(2)\rho_X\rho_Zn^2}} \le 2^{10fn}
\end{align}
for all sufficiently large $n$.
\end{proof}

\begin{lemma}[Configuration]\label{lem:config}
Let $C = (H_X,H_Z)$ be a CSS code and let $\cP=\{P(t)\}_{t=0}^T$ be a Pauli $Z$ path on the layer code $\scrL(C)$ ending on a non-trivial logical operator $P(T)$. If $\Delta_Z(\cP) \le fn/\log n$, then there exists $t' \in [T]$ such that $\partial c(\tilde{P}(t')) \in \mathfrak{B}_f$. 
\end{lemma}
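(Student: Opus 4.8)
The plan is a discrete intermediate-value argument applied to the number of slab boundaries on which the canonical operator $\tilde{P}(t)$ carries a non-trivial $e$-configuration. For each $t$, let $\tilde{P}(t)$ be the operator associated to $P(t)$ by the Mapping Lemma~\ref{lem:mapping}, with string indicator $c(t):=c(\tilde{P}(t))$, and set
\begin{align}
N(t):=\#\bigl\{k\in\{1,\dots,\rho_X n\}\ :\ c(t)^{\underline{k}}\ \text{is a non-trivial }e\text{-configuration}\bigr\}.
\end{align}
For $n$ sufficiently large, $\floor{(\rho_X n-1)/2}$ is a positive integer lying strictly between $0$ and $\rho_X n$, so it suffices to prove: (i) $N(0)=0$; (ii) $N(T)=\rho_X n$; and (iii) $|N(t+1)-N(t)|\le 1$ for every $t$. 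Granting these, taking $t'$ to be the first time $N$ reaches the value $\floor{(\rho_X n-1)/2}$ yields $1\le t'\le T$ with $N(t')=\floor{(\rho_X n-1)/2}$; here $t'\neq 0$ because $N(0)=0$ is strictly below the target.

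For (i): $P(0)=I$, so $\tilde{P}(0)$ is trivial and every slab $e$-configuration is empty. For (ii): $P(T)$ is a non-trivial logical operator, so item~\ref{itm:mapping6} of Lemma~\ref{lem:mapping} makes $\tilde{P}(T)$ stabilizer-equivalent to a non-trivial logical $Z$ operator of $\scrL(C)$, hence itself such an operator; Lemma~\ref{lem:layer_slab_support} then guarantees its $e$-configuration is non-trivial (not boundary-equivalent to the empty configuration) on each of the $\rho_X n$ slabs, so $N(T)=\rho_X n$. For (iii): consecutive path operators satisfy $|P(t+1)P(t)|=1$, and item~\ref{itm:mapping7} of Lemma~\ref{lem:mapping} states that the $e$-configurations of $\tilde{P}(t)$ and $\tilde{P}(t+1)$ differ on at most one slab boundary, so $N$ changes by at most one per step. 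For this last point to be legitimate one must fix, once and for all, the auxiliary data entering the construction of Lemma~\ref{lem:mapping} (e.g.\ the $Z$-layer attached to each $Q$-layer), so that $P\mapsto\tilde{P}$ is a genuine function and item~\ref{itm:mapping7} applies to every consecutive pair along the path.

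It remains to check $b:=\partial c(t')\in\mathfrak{B}_f$. Condition~2 of Definition~\ref{def:setBf} holds by the defining property of $t'$: $b=\partial c$ for $c=c(t')$, and $c(t')^{\underline{k}}$ is non-trivial for exactly $N(t')=\floor{(\rho_X n-1)/2}$ values of $k$. Condition~1, $|b|\le 2fn/\log n$, in fact holds at every time along the path: the nonzero entries of $b=\partial c(\tilde{P}(t'))$ are in bijection with the $e$-excitations of $\tilde{P}(t')$ supported on $Z$-layers, so $|b|\le|\sigma(\tilde{P}(t'))|$, and item~\ref{itm:mapping5} of Lemma~\ref{lem:mapping} together with $|\sigma(P(t'))|\le\Delta_Z(\cP)\le fn/\log n$ gives $|\sigma(\tilde{P}(t'))|\le 2fn/\log n$. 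Hence $b\in\mathfrak{B}_f$, as desired. The only step that does any real work is (iii)---but that is precisely item~\ref{itm:mapping7} of the Mapping Lemma, so the genuine difficulty has already been absorbed there; what remains here is bookkeeping, namely matching the counting statement to condition~2 and reading off condition~1 from the syndrome-weight bound of item~\ref{itm:mapping5}.
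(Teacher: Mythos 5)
Your proof is correct and follows essentially the same route as the paper: track the count $N(t)$ of slab boundaries with non-trivial $e$-configuration, show it goes from $0$ to $\rho_X n$ in unit steps via Properties~\ref{itm:mapping6}, \ref{itm:mapping7} of Lemma~\ref{lem:mapping} and Lemma~\ref{lem:layer_slab_support}, apply the discrete intermediate-value argument, and then read off condition~1 of Definition~\ref{def:setBf} from Property~\ref{itm:mapping5} and the energy-barrier bound. Your explicit remark that the auxiliary choices in Lemma~\ref{lem:mapping} (e.g.\ the $Z$-layer assigned to each $Q$-layer) must be fixed once so that $P\mapsto\tilde P$ is well-defined is a useful clarification that the paper leaves implicit.
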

\begin{proof}
We will write $c(t) = c(\tilde{P}(t))$ and $c^{\underline{k}}(t)$ for the $e$-configurations defined by $\tilde{P}(t)$. At $t=0$ we begin with identity operator $P(0)=I$, so $c^{\underline{k}}(0)$ is trivial for all $k \in [\rho_Xn]$. At the last step $t=T$, we have a non-trivial logical operator $P(T)$, and so $\tilde{P}(T)$ is a stabilizer-equivalent logical operator by Property~\ref{itm:mapping6} of Lemma~\ref{lem:mapping}. It follows from Lemma~\ref{lem:layer_slab_support} that $c^{\underline{k}}(T)$ are non-trivial and boundary-equivalent to each other for all $k$.

At each step, $P(t)$ and $P(t+1)$ differ by weight $1$, and so by Property~\ref{itm:mapping7} of Lemma~\ref{lem:mapping}, the operators $\tilde{P}(t)$ and $\tilde{P}(t+1)$ can define boundary inequivalent $e$-configurations on at most one slab boundary, i.e., $c^{\underline{k}}(t)$ and $c^{\underline{k}}(t+1)$ can differ on at most a single value of $k$.  Since $c^{\underline{k}}(t)$ starts trivial for all $k$ at $t=0$, ends non-trivial for all $k$ at $t=T$, and differ on at most a single index $k$  on each time step, it follows that there must exist some $t' \in [T]$ such that $c^{\underline{k}}(t')$ is non-trivial for precisely $\floor{\frac{\rho_X-1}{2}}$ values of $k$.

Finally, since $\Delta_Z(\cP) \le fn/\log n$, it follows that
\begin{align}
|\partial c(t')| \le |\sigma(\tilde{P}(t'))| \le 2|\sigma(P(t'))| \le 2\Delta_Z(\cP) \le 2fn/\log n,
\end{align}
where the first inequality follows from the fact that $|\partial c(t')|$ is syndrome weight of $\tilde{P}(t')$ restricted to the $Z$-layers, and the second inequality follows from Property~\ref{itm:mapping5} of Lemma~\ref{lem:mapping}. Therefore we have $\partial c(t') \in \mathfrak{B}_f$. 
\end{proof}

\begin{lemma}\label{lem:equiv}
Suppose that $c_1,c_2 \in \mathbb{F}_2^{\rho_Zn}\otimes \mathbb{F}_2^{n+1}\otimes \mathbb{F}_2^{\rho_Xn+2}$ are two string indicators that share a common boundary $\partial c_1 = \partial c_2 = b$. If $b \in \mathfrak{B}_f$, then $c_1^{\underline{k}} \approx c_2^{\underline{k}}$ are boundary-equivalent configurations for all $k\in [\rho_Xn]$.
\end{lemma}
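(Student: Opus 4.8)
\emph{Proof plan.} The backbone is the observation that $\partial$ acts as a discrete difference in the third ($k$) index, so $\ker\partial$ is exactly the space of tensors that are constant in $k$. Hence $\partial c_1=\partial c_2$ forces $c_1-c_2$ to be constant in $k$; write $\delta:=(c_1-c_2)^{\underline k}\in\mathcal{E}_Z$ for its common slab restriction, so that $c_1^{\underline k}=c_2^{\underline k}+\delta$ for every $k$. Since boundary-equivalence $\approx$ is an equivalence relation compatible with the $\mathbb{F}_2$-vector-space structure on $\mathcal{E}$, it suffices to prove the single statement $\delta\approx 0$, which then yields $c_1^{\underline k}\approx c_2^{\underline k}$ for all $k\in[\rho_Xn]$ simultaneously. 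Thus the lemma reduces to: a string indicator that is constant in $k$ and that arises as the difference of two indicators sharing a boundary $b\in\mathfrak{B}_f$ has boundary-trivial slab restriction.

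The next step is to give $\delta$ its geometric meaning and push it down to the input code. A tensor constant in $k$ records, at the operator level, a union of full-length vertical $Z$-strings (type-(a) strings in the sense of item~3 of Lemma~\ref{lem:mapping}); on any slab it restricts to a collection of single $e$-excitations, one in each occupied region. Applying the fusion rules illustrated in Fig.~\ref{fig:distance1}, the excitation in the $j$-th region of the $i$-th $Z$-layer is boundary-equivalent to the truncated stabilizer row of $H_Z$ supported on indices $\le j$, i.e.\ to an element of $Y_C$. Hence $\delta$ is boundary-equivalent to some $E\in\mathcal{E}_Q$ with $P(E)\in\operatorname{span}(Y_C)$, and by Lemma~\ref{lem:layer_input_map} we get $\delta\approx 0$ if and only if $P(E)\in\row(H_Z)$, i.e.\ $P(E)$ is a genuine $Z$-stabilizer of $C$. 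So the target is now a purely linear-algebraic statement about certain sums of truncated rows of $H_Z$, and the only remaining task is to identify \emph{which} columns can contribute to $\delta$ and argue their contribution lands in $\row(H_Z)$.

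The crux is exploiting $b\in\mathfrak{B}_f$. Because every string indicator has its bulk excitations confined to region centers (item~3 of Lemma~\ref{lem:mapping}), the slab restriction $c^{\underline k}$ is constant for $k$ below the central slab and constant for $k$ above it, with the jump between the two values being the datum ``set of columns on which $b$ is supported'' — a function of $b$ alone. Consequently $c_1$ and $c_2$ agree on every branching column (one carrying a string with a bulk endpoint), so $\delta$ is supported only on the non-branching columns. The hypothesis $b\in\mathfrak{B}_f$ provides a witness $c_0$ with $\partial c_0=b$ whose slab restrictions are non-trivial on exactly $\floor{(\rho_Xn-1)/2}$ slabs; since the two slab ranges have unequal sizes, this count is attainable only if $c_0$ is boundary-trivial throughout one of them, which pins down a slab $k_0$ with $c_0^{\underline{k_0}}\approx 0$ and hence $P(c_0^{\underline{k_0}})\in\row(H_Z)$. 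I would then argue that passing from $c_0$ to any other preimage of $b$ only toggles full-length strings on non-branching columns, and that the corresponding alteration of the slab-$k_0$ configuration is itself a product of $Z$-layer stabilizers of $\mathscr{L}(C)$, so it cannot change the equivalence class on slab $k_0$; feeding this back through the first paragraph gives $\delta\approx 0$. The hard part is exactly this last step — controlling the full-length strings on the non-branching columns, some of which cross $X$–$Z$ line defects — and it is precisely here that the rigidity forced by the count $\floor{(\rho_Xn-1)/2}$ in the definition of $\mathfrak{B}_f$ is indispensable; I expect the bulk of the technical work to be in turning that count constraint into the required membership $P(E)\in\row(H_Z)$.
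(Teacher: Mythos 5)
Your opening two paragraphs match the paper: $\partial$ is a discrete difference in $k$, so $\delta := c_1-c_2 \in \ker\partial$ is $k$-independent, and the lemma reduces to $\delta\approx 0$. After that you and the paper diverge completely, and the route you take does not close. The paper finishes with a short pigeonhole: since $b\in\mathfrak{B}_f$, both $c_1$ and $c_2$ are taken to have exactly $\floor{(\rho_Xn-1)/2}$ non-trivial slab configurations, so the set of $k$ on which at least one of $c_1^{\underline k},c_2^{\underline k}$ is non-trivial has size at most $2\floor{(\rho_Xn-1)/2}\le\rho_Xn-1<\rho_Xn$; hence there is some $k$ where both are trivial, there $\delta=c_1^{\underline k}-c_2^{\underline k}\approx 0$, and $k$-independence spreads this to all $k$. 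You correctly sense that the count $\floor{(\rho_Xn-1)/2}$ is the mechanism and that it pins down a trivial slab of the witness $c_0$, but you never take the union over the two preimages to produce a \emph{shared} trivial slab — which is the entire argument.

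The geometric route you substitute has concrete errors. The assertion that "$c_1$ and $c_2$ agree on every branching column" is false: a column whose $b_{ij}^{\ \ k_0}=1$ can, consistently with the same boundary $b$, carry a string from the excitation up to the top boundary or down to the bottom boundary, and these two choices differ by a full-length string, so preimages can differ on branching columns as easily as on non-branching ones. The step you flag as "the hard part" — that toggling a full-length string on a $Q$-layer region does not change the slab equivalence class — is also false in general: pushed to $\mathcal{E}_Q$ via the fusion rules, such a toggle contributes a truncated row of $H_Z$, which is not in $\row(H_Z)$ generically, i.e., it is not stabilizer-equivalent to zero. That is exactly the obstruction the paper's counting argument routes around. (Separately, you may notice while comparing that the paper's own step asserting that \emph{both} $c_1$ and $c_2$ have exactly $\floor{(\rho_Xn-1)/2}$ non-trivial slabs deserves more justification, since Definition~\ref{def:setBf} only posits this for \emph{some} preimage; but that is an issue with the paper's write-up, not a rescue of your approach, which dispenses with the counting entirely.)
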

\begin{proof}
If $\partial (c_1-c_2)= 0$, then the strings defined by $c_1-c_2$ contain no excitations in the bulk. It follows that they must stretch the entire height of the layer code. This implies that the $e$-configurations defined by $c_1-c_2$ are $k$-independent. Since $b \in \mathfrak{B}_f$, both $c_1$ and $c_2$ define precisely $\floor{(\rho_Xn-1)/2}$ non-trivial configurations. Since the sum of trivial configurations remain trivial, it follows that $c_1-c_2$ can define at most $2\floor{(\rho_Xn-1)/2} \le \rho_Xn-1 < \rho_Xn$ non-trivial $e$-configurations. In particular, $(c_1-c_2)^{\underline{k}}$ must be trivial for some $k \in [\rho_Xn]$, and since its $e$-configurations are $k$-independent, it follows that the $e$-configurations defined by $c_1-c_2$ must be identically trivial. Hence $c_1^{\underline{k}} \approx c_2^{\underline{k}}$ for all $k \in [\rho_Xn]$.
\end{proof}

Given an operator $\tilde{P}$ obtained from Lemma~\ref{lem:mapping}, we can create another associated operator $\hat{P}$ by pushing every vertical string on the $Z$-layer all the way to the left (Fig.~\ref{fig:excitation2}). In this process, each vertical string will split as it passes an intersecting $Q$-layer according to the fusion rules. This process does not create any new excitations on the $X$-layers, and cleans the operator away from the $Z$-layers and onto the $Q$-layers. We then merge any duplicate strings on the $Q$-layers. Note that this is effectively the same process described in Fig.~\ref{fig:distance1}, but now at the operator level. The original $e$-configurations in $\cE_Z$ defined by $\tilde{P}$ will be mapped to a boundary-equivalent $e$-configuration in $\cE_Q$ defined by $\hat{P}$. The original $e$-configurations of $\tilde{P}$ are given by string indicator $c(\tilde{P})$, so we now define an associated string indicator $a(\hat{P})$ for $\hat{P}$.

\begin{figure}[H]
	\centering
	\includegraphics[width=.7\linewidth]{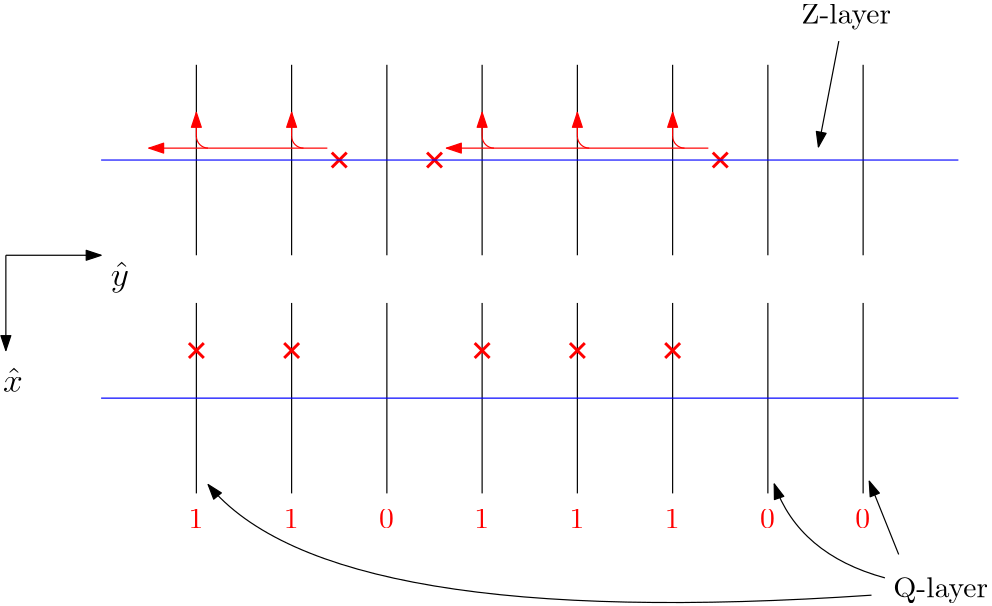}
	\caption{(Top panel) A top-down view of the same $e$-configuration depicted in Fig.~\ref{fig:excitation1}. The vertical (horizontal) lines are $Q$-layers ($Z$-layer) viewed from above. The original $Z$-layer-supported $e$-configuration can be converted into a $Q$-layer-supported $e$-configuration by applying the fusion rules, pushing each excitation on the $Z$-layer towards the left, where it will split at each intersecting $Q$-layer. This process continues until the excitation either annihilates with another excitation down the line, or it reaches the left-most boundary. Note that this is essentially the same process as previously depicted in Fig.~\ref{fig:distance1} (for simplicity, only the non-trivially intersecting $Q$-layers are depicted here). The same fusion rules also define the map from the operator $\tilde{P}$ to $\hat{P}$.
    (bottom panel) The end result of applying the fusion rules to obtain a boundary-equivalent $e$-configuration supported on the $Q$-layers, i.e., an element of $\cE_Q$. The new $e$-configuration is depicted at the bottom as as binary string. If the $Z$-layer depicted is the $k$-th layer, then this is the vector $a^{\underline{k}}$.} 
	\label{fig:excitation2}
\end{figure}

Formally, $a(\hat{P}) \in \mathbb{F}_2^n\otimes \mathbb{F}_2^{\rho_Xn+2}$ is defined to be
\begin{align}
a_j^{\ k}(\hat{P}) = \sum_{i=1}^{\rho_Zn}(H_Z)_{ij}\sum_{j'\ge j}c_{ij'}^{\ \ k}(\tilde{P}).\label{eq:ajk}
\end{align}
In the equation above, each summand $(H_Z)_{ij}\sum_{j'\ge j}c_{ij'}^{\ \ k}$ keeps tracks of the number (or more precisely, the parity) of strings that $Q$-layer $j$ picks up from the $i$-th $Z$-layer as the strings are moved right to left. The presence of the term $(H_Z)_{ij}$ simply encodes the fusion rules, so that the strings branch if $Q$-layer $j$ intersects non-trivially with $Z$-layer $i$, i.e., if $(H_Z)_{ij} = 1$, and that they do not branch in the case of a trivial intersection $(H_Z)_{ij}=0$. Then $a_j^{\ k}$ is obtained by summing over the contributions of all $Z$-layers. In this way, $a_j^{\ k} = 1$ if and only if there exists a vertical string in $\hat{P}$ on the $j$-th $Q$-layer passing through the $k$-th $X$-layer. The $e$-configurations of $\hat{P}$ are then given by $a_j^{\ \underline{k}}(\hat{P})$, with $a_j^{\ \underline{k}}(\hat{P}) \approx c_{ij}^{\ \ \underline{k}}(\tilde{P})$ for all $k\in [\rho_Xn]$ by construction.

In what follows, we will write $\sigma|_X(P,H_X)$ to denote the syndrome, \emph{restricted to the $X$-layers}, of an operator $P \in \cP_Z(\scrL)$. We write the dependence on $H_X$ explicitly to emphasize the dependence of the syndrome on the choice of the input $H_X$ to the layer code. The dependence on $H_Z$ is implicit.

\begin{lemma}\label{lem:Xsyndrome}
The syndrome of an operator $P\in P_Z(\scrL(C))$ on the $X$-layers has weight
\begin{align}
\bigg|\sigma|_X(P,H_X)\bigg| = \bigg|\sigma|_X(\tilde{P},H_X)\bigg| = \bigg|\sigma|_X(\hat{P},H_X)\bigg| \ge \sum_{k=1}^{\rho_Xn}\left|\sum_{j=1}^n(H_X)_{kj}a_j^{\ k}(\hat{P})\right|.
\end{align}
\end{lemma}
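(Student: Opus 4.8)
The plan is to dispatch the two equalities by inspection and then reduce the inequality to a mod-$2$ count on each individual $X$-layer.

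For the equalities: Property~\ref{itm:mapping2} of Lemma~\ref{lem:mapping} says that $\tilde P$ has the same $e$-excitations as $P$ on every $X$-layer, so $\sigma|_X(P,H_X)=\sigma|_X(\tilde P,H_X)$ as sets and in particular have equal weight. The operator $\hat P$ is obtained from $\tilde P$ by pushing the vertical $Z$-layer strings leftward onto the $Q$-layers using the fusion rules; this step only redistributes $e$-charge among the $Z$- and $Q$-layers and, as noted in the construction of $\hat P$, creates no new excitations on the $X$-layers (and, by the same token, removes none), so $\sigma|_X(\tilde P,H_X)=\sigma|_X(\hat P,H_X)$. Hence only the inequality requires an argument.

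Writing the $X$-layer syndrome as a disjoint union over $X$-layers, $\big|\sigma|_X(\hat P,H_X)\big|=\sum_{k=1}^{\rho_Xn}N_k$, where $N_k$ denotes the number of $e$-excitations of $\hat P$ located on the $k$-th $X$-layer. Since $|x|=x$ for $x\in\mathbb F_2$, the right-hand side of the claimed bound is $\#\{k:\sum_{j}(H_X)_{kj}a_j^{\ k}(\hat P)=1\}$, so it suffices to prove the congruence
\begin{equation*}
N_k \;\equiv\; \sum_{j=1}^n(H_X)_{kj}\,a_j^{\ k}(\hat P)\pmod 2\qquad\text{for every }k,
\end{equation*}
because then every $k$ with $\sum_j(H_X)_{kj}a_j^{\ k}(\hat P)=1$ forces $N_k\ge 1$, and summing over such $k$ yields the bound. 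To prove the congruence I would use two facts about $\hat P$, which is supported only on $Q$- and $X$-layers (by Lemma~\ref{lem:mapping} followed by the left-pushing step), with its $Q$-part a union of vertical strings whose crossing pattern through the $X$-layers is recorded by the indicator $a(\hat P)$ of Eq.~\eqref{eq:ajk}. First, the vertical $Q$-strings deposit exactly $\sum_j(H_X)_{kj}a_j^{\ k}(\hat P)$ excitations on the $k$-th $X$-layer: by the fusion rules at the $QX$-line defects, a vertical string on the $j$-th $Q$-layer that reaches the level of the $k$-th $X$-layer (that is, $a_j^{\ k}(\hat P)=1$) deposits exactly one $e$-excitation there when the intersection is non-trivial, $(H_X)_{kj}=1$, and none otherwise, with excitations from distinct $Q$-layers sitting on distinct defect lines; this is precisely the statement that the $k$-th $X$-layer implements the $k$-th $X$-check of the input code $C$ evaluated on the $\cE_Q$-configuration $a^{\underline k}(\hat P)$ of $\hat P$, i.e.\ it equals $\big(H_X\,a^{\underline k}(\hat P)\big)_k$. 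Second, the part of $\hat P$ supported directly on the $k$-th $X$-layer is a Pauli-$Z$ operator on a surface-code patch all of whose boundaries are smooth and hence condense only $m$-excitations; since moreover $\hat P$ has no $Z$-layer support, no $e$-charge escapes through the $XZ$-line defects, so this part can create $e$-excitations only in pairs (as endpoints of bulk strings or closed loops) and thus contributes an even number to the $k$-th $X$-layer. Adding the two contributions gives the congruence and completes the argument.

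The step I expect to be the main obstacle is the second fact: one must carefully track, through the fusion rules at the line and point defects lying inside an $X$-layer, that the $X$-layer-supported part of $\hat P$ cannot create an odd number of $e$-excitations on that layer. The crucial input making this go through is that $\hat P$ is supported entirely on $Q$- and $X$-layers, so that no $e$-charge can leak with odd parity onto a $Z$-layer; combined with the absence of $e$-condensing boundaries on the $X$-layers, this pins the parity down. The first fact, by contrast, is essentially the defining concatenation property of the layer code ($X$-layers measure the $X$-checks of $C$), and the bookkeeping there is routine once the $QX$ fusion rules are in hand.
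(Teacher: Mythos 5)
Your proof is correct and follows essentially the same approach as the paper's: the two equalities come directly from the fact that neither the $P\mapsto\tilde P$ nor the $\tilde P\mapsto\hat P$ map changes the $X$-layer excitations, and the inequality is a per-$X$-layer parity argument showing that the parity of non-trivial $Q$-string crossings lower-bounds the excitation count. You are a bit more explicit than the paper in separating the $Q$-string contribution from the $X$-layer-supported contribution and in noting that the smooth $X$-layer boundaries force the latter to be even, but the paper's phrase ``can only be condensed by another such crossing'' encodes exactly the same idea.
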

\begin{proof}
The mapping $P\mapsto \tilde{P}$ preserves the excitations on $X$-layers by Property~\ref{itm:mapping2} of Lemma~\ref{lem:mapping}, so $\sigma|_X(P,H_X)=\sigma|_X(\tilde{P},H_X)$. Likewise, the map $\tilde{P}\mapsto \hat{P}$ also preserves the excitations on the $X$-layers, so $\sigma|_X(\tilde{P},H_X) = \sigma|_X(\hat{P},H_X)$.

Now, fix an $X$-layer $\ell$ (say the $k$-th layer) and consider the excitations of $\hat{P}$ on this layer. Every vertical string that crosses $\ell$ on a $Q$-layer that intersects $\ell$ non-trivially will create a potential excitation in the bulk which can only be condensed by another such crossing. The parity of these crossings, given by
\begin{align}
\sum_{j=1}^n(H_X)_{kj}a_j^{\ k}(\hat{P}),
\end{align}
therefore lower bounds the number of excitations on $\ell$. The syndrome weight on all $X$-layers is therefore lower bounded by the sum of the parities above over all $X$-layers, i.e.,
\begin{align}
\bigg|\sigma|_X(\hat{P},H_X)\bigg| \ge \sum_{k=1}^{\rho_Xn}\left|\sum_{j=1}^n(H_X)_{kj}a_j^{\ k}(\hat{P})\right|.
\end{align}
Note that the absolute values indicate that the outer sum is not over $\mathbb{F}_2$ since distinct $X$-layers are independent.
\end{proof}

\begin{remark}\label{rmk:rowspace}
Since $c^{\underline{k}} \approx a^{\underline{k}}$, it follows from Lemma~\ref{lem:layer_input_map} that $c^{\underline{k}}$ is trivial if and only if $a^{\underline{k}} \in \mathrm{row}(H_Z)$. 
\end{remark}

\begin{lemma}\label{lem:energy_main}
Fix a $\rho_Zn \times n$ matrix $H_Z$ and some $b \in \mathfrak{B}_f$. Let $\mathfrak{C}(b) = \partial^{-1}(b)$ be the space of all string indicators $c$ with boundary $b=\partial c$. For a given $H_X$, let $\mathfrak{C}(H_X)$ be the space of all $c$ such that 
\begin{align}
\sum_{k=1}^{\rho_Xn}\left|\sum_{j=1}^n(H_X)_{kj}a_j^{\ k}(c)\right| \le \frac{\rho_Xn}{8}.
\end{align}
Then we have
\begin{align}
\Pr_{H_X}\left(\mathfrak{C}(b) \cap \mathfrak{C}(H_X) \neq \emptyset \mid H_Z\right) \le 2^{-\rho_Xn/15}
\end{align}
for all sufficiently large $n$, where the probability is over all $\rho_Xn\times n$ matrices $H_X$ orthogonal to $H_Z$.
\end{lemma}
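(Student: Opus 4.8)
The plan is to recognize that, despite the existential quantifier over the exponentially large coset $\mathfrak C(b)=\partial^{-1}(b)$, the event in question collapses to a single concentration inequality. Write $\langle u,v\rangle := \sum_{j=1}^n u_jv_j \in \mathbb F_2$, and for a row index $k$ let $(H_X)_k$ denote the $k$-th row of $H_X$; then membership of $c$ in $\mathfrak C(H_X)$ is governed by $\sum_{k=1}^{\rho_Xn}\bigl|\langle (H_X)_k, a^{\underline k}(c)\rangle\bigr|$. The first observation is that this quantity is the same for all $c\in\mathfrak C(b)$. Indeed, since $b\in\mathfrak B_f$ we may apply Lemma~\ref{lem:equiv}: any $c_1,c_2\in\mathfrak C(b)$ have $c_1^{\underline k}\approx c_2^{\underline k}$ for every $k$, hence by Remark~\ref{rmk:rowspace} and Lemma~\ref{lem:layer_input_map} the vectors $a^{\underline k}(c_1)$ and $a^{\underline k}(c_2)$ differ by an element of $\row(H_Z)$; and since $H_XH_Z^{\mathrm T}=0$ forces $(H_X)_k\in(\row H_Z)^\perp$, the pairing $\langle (H_X)_k, a^{\underline k}(c)\rangle$ depends only on the coset $\alpha_k:=[a^{\underline k}(c)]\in\mathbb F_2^n/\row(H_Z)$, a function of $b$ and $H_Z$ alone. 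Noting also that $\mathfrak C(b)\neq\emptyset$ because $b\in\mathfrak B_f$ is by definition a boundary, we conclude that $\mathfrak C(b)\cap\mathfrak C(H_X)\neq\emptyset$ holds if and only if $\sum_{k=1}^{\rho_Xn}|\langle (H_X)_k,\alpha_k\rangle|\le\rho_Xn/8$ — an event depending only on $H_X$ (with $b,H_Z$ fixed), with no quantifier.

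The second step is a binomial tail bound. Because $H_X$ is uniform among $\rho_Xn\times n$ matrices orthogonal to $H_Z$, its rows $(H_X)_1,\dots,(H_X)_{\rho_Xn}$ are i.i.d.\ uniform in $(\row H_Z)^\perp$. For each $k$ with $\alpha_k\neq 0$, the functional $w\mapsto\langle w,\alpha_k\rangle$ on $(\row H_Z)^\perp$ is nonzero — its vanishing would force $\alpha_k\in((\row H_Z)^\perp)^\perp=\row(H_Z)$ — so $\langle (H_X)_k,\alpha_k\rangle$ is a fair coin, and these coins are independent over such $k$. By Remark~\ref{rmk:rowspace}, $\alpha_k\neq0$ precisely when $c^{\underline k}$ defines a non-trivial $e$-configuration, and by the second condition of Definition~\ref{def:setBf} this occurs for exactly $\lfloor(\rho_Xn-1)/2\rfloor$ indices $k$; on the remaining indices the contribution is $0$. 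Hence $\sum_k|\langle (H_X)_k,\alpha_k\rangle|\sim\mathrm{Bin}(m,1/2)$ with $m=\lfloor(\rho_Xn-1)/2\rfloor$.

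Finally I would apply Hoeffding's inequality $\Pr[\mathrm{Bin}(m,1/2)\le m/2-t]\le e^{-2t^2/m}$. Here the mean $m/2$ is $\approx\rho_Xn/4$ and the threshold $\rho_Xn/8$ lies below it by $t=m/2-\rho_Xn/8\ge\rho_Xn/8-O(1)$, while $m\le\rho_Xn/2$, so $2t^2/m\ge(1-o(1))\,\rho_Xn/16$. Converting base $e$ to base $2$ (using $\log_2 e>1.44$) gives a bound of $2^{-(1-o(1))\,(\log_2 e)\,\rho_Xn/16}\le 2^{-\rho_Xn/15}$ for all sufficiently large $n$, which is the claim.

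The crux of the argument, and the only genuinely non-routine point, is the collapse in the first paragraph: a priori one would try to union-bound over $\mathfrak C(b)$, which is hopeless since $|\mathfrak C(b)|$ is the size of the cycle space, exponential in $n^2$, whereas the target probability is only $2^{-\Theta(n)}$. The resolution is that the $X$-layer syndrome sees $c$ only through the residues $a^{\underline k}(c)\bmod\row(H_Z)$, which Lemma~\ref{lem:equiv} pins down uniformly on $\mathfrak C(b)$; after that the estimate is a textbook tail bound. I do not expect other obstacles — in particular no rank assumption on $H_Z$ is needed, since non-degeneracy of the pairing $\mathbb F_2^n/\row(H_Z)\times(\row H_Z)^\perp\to\mathbb F_2$ holds unconditionally.
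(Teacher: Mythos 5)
Your proposal is correct and follows essentially the same route as the paper's proof: collapse the existential over $\mathfrak{C}(b)$ using Lemma~\ref{lem:equiv} and orthogonality of $H_X$ to $\row(H_Z)$, observe that the relevant quantity is a sum of $\lfloor(\rho_Xn-1)/2\rfloor$ i.i.d.\ fair coins (one per non-trivial $k$), and finish with Hoeffding. The only differences are cosmetic — you spell out the biduality $((\row H_Z)^\perp)^\perp=\row(H_Z)$ and the non-emptiness of $\mathfrak{C}(b)$, and use slightly different intermediate constants in the Hoeffding computation (the paper centers at a deviation of $\rho_Xn/9$ and gets $\exp(-4\rho_Xn/81)$), but both land comfortably inside $2^{-\rho_Xn/15}$.
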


\begin{proof}
Let $c_1,c_2 \in \mathfrak{C}(b)$. By Lemma~\ref{lem:equiv}, it follows that $c_1^{\underline{k}} \approx c_2^{\underline{k}}$ for all $k$. Consequently, $(a_1-a_2)^{\underline{k}} \in \mathrm{row}(H_Z)$ for all $k$ by Remark~\ref{rmk:rowspace}. Since the rows of $H_Z$ are orthogonal to the rows of $H_X$, it follows that
\begin{align}
\sum_{j=1}^n(H_X)_{kj}(a_1-a_2)_{j}^{\ k} = 0
\end{align}
for all $k$. This implies that
\begin{align}
\sum_{k=1}^{\rho_Xn}\left|\sum_{j=1}^n(H_X)_{kj}(a_1)_j^{\ k}\right| = \sum_{k=1}^{\rho_Xn}\left|\sum_{j=1}^n(H_X)_{kj}(a_2)_j^{\ k}\right|
\end{align}
It follows that if $\mathfrak{C}(b) \cap \mathfrak{C}(H_X) \neq \emptyset$, then we actually have $\mathfrak{C}(b) \subseteq \mathfrak{C}(H_X)$.

Now, fix an arbitrary $c \in \mathfrak{C}(b)$ and consider the quantities
\begin{align}
\alpha^k(H_X) = \sum_{j=1}^n(H_X)_{kj}a_j^{\ k}(c)
\end{align}
as we vary over random $H_X$. Since $\mathfrak{C}(b) \cap \mathfrak{C}(H_X) \neq \emptyset$ if and only if $c \in \mathfrak{C}(H_X)$, if and only if
\begin{align}
\sum_{k=1}^{\rho_Xn}|\alpha^k(H_X)| = \sum_{k=1}^{\rho_Xn}\left|\sum_{j=1}^n(H_X)_{kj}a_j^{\ k}(c)\right| \le \frac{\rho_Xn}{8},
\end{align}
it follows that we have
\begin{align}
\Pr_{H_X}\left(\mathfrak{C}(b) \cap \mathfrak{C}(H_X) \neq \emptyset\mid H_Z\right) = \Pr_{H_X}\left(\sum_{k=1}^{\rho_Xn}|\alpha^k(H_X)| \le \frac{\rho_Xn}{8}\ \bigg|\  H_Z\right).\label{eq:hoeffding1}
\end{align}
Since $H_X$ is uniformly random in the orthogonal complement of $H_Z$, each row is also i.i.d. uniformly random in the orthogonal complement. It follows that the $\alpha^k$'s, each being a linear functional of the $k$-th row of $H_X$, are also independent. Each $\alpha^k$ is either identically zero if $a^{\underline{k}} \in \mathrm{row}(H_Z)$, or else it is uniformly random on $\{0,1\}$ if $a^{\underline{k}}\notin \mathrm{row}(H_Z)$. From Lemma~\ref{lem:equiv} and the fact that $b\in \mathfrak{B}_f$, it follows that $c^{\underline{k}}$, and hence $a^{\underline{k}}$, defines a non-trivial configuration for exactly $\floor{(\rho_X n-1)/2}$ values of $k$. Therefore
\begin{align}
\sum_{k=1}^{\rho_Xn}|\alpha^k(H_X)| = \sum_{k=1}^{\rho_Xn}\left|\sum_{j=1}^n(H_X)_{kj}a_j^{\ k}(c)\right|
\end{align}
is a sum of $\floor{(\rho_X n-1)/2}$ independent Bernoulli random variables. The probability in Eq.~\eqref{eq:hoeffding1} can therefore be bounded by above by Hoeffding's inequality as
\begin{align}
\Pr_{H_X}\left(\sum_{k=1}^{\rho_Xn}|\alpha^k(H_X)| \le \frac{\rho_Xn}{8}\ \bigg|\  H_Z\right) &= \Pr_{H_X}\left(\sum_{k=1}^{\rho_Xn}|\alpha^k(H_X)| - \frac{1}{2}\floor{\frac{\rho_Xn-1}{2}} \le \frac{\rho_Xn}{8}-\frac{1}{2}\floor{\frac{\rho_Xn-1}{2}}\ \bigg|\  H_Z\right)\\
&\le \Pr_{H_X}\left(\sum_{k=1}^{\rho_Xn}|\alpha^k(H_X)| - \frac{1}{2}\floor{\frac{\rho_Xn-1}{2}} \le -\frac{\rho_Xn}{9}\ \bigg|\  H_Z\right)\\
&\le \exp\left(-\frac{2(\rho_Xn/9)^2}{\floor{(\rho_X n-1)/2}}\right)\\
&\le \exp\left(-\frac{4\rho_Xn}{81}\right)\\
&\le 2^{-\rho_Xn/15},
\end{align}
where we subtract off the mean $\frac{1}{2}\floor{\frac{\rho_Xn-1}{2}}$ in the first line and apply Hoeffding's inequality in the third.
\end{proof}

Now we combine everything to get the main theorem for the energy barrier.

\begin{theorem}[Layer Code Energy Barrier]\label{thm:layer_energy}
Let $\scrL(C)$ be a layer code with random CSS input $C\sim \mathrm{CSS}_n(\rho_X,\rho_Z)$. Then there exists a constant $f > 0$ such that
\begin{align}
\Pr_{C}\left(\Delta_{\scrL(C)} \le \frac{fn}{\log n}\right) \le 2^{-\Omega(n)}.
\end{align}
In particular, $\scrL(C)$ has energy barrier $\Delta_{\scrL(C)} > fn/\log n$ with high probability as $n\rightarrow \infty$.
\end{theorem}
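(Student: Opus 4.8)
The plan is to lower bound the $Z$-energy barrier of $\scrL(C)$ with high probability and then obtain the theorem by exchanging the roles of $X$ and $Z$ and taking a union bound. Concretely, I would fix a constant $f \in (0, \rho_X/150)$ and show that $\Pr_C(\Delta_Z(\scrL(C)) \le fn/\log n) = 2^{-\Omega(n)}$; since $\Delta_{\scrL(C)} = \min(\Delta_X(\scrL(C)), \Delta_Z(\scrL(C)))$, the theorem follows once the analogous bound is established for $\Delta_X$. The only inputs needed are Lemmas~\ref{lem:config},~\ref{lem:Bsize},~\ref{lem:Xsyndrome}, and~\ref{lem:energy_main}.

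First I would isolate a ``bad event'' depending only on the random choice of $H_X$: relative to a fixed $H_Z$, call $H_X$ \emph{bad} if there is some $b \in \mathfrak{B}_f$ with $\mathfrak{C}(b) \cap \mathfrak{C}(H_X) \neq \emptyset$. Lemma~\ref{lem:Bsize} gives $|\mathfrak{B}_f| \le 2^{10fn}$, a bound independent of $H_Z$, while Lemma~\ref{lem:energy_main} controls each individual $b$ by $2^{-\rho_X n/15}$, so a union bound over $\mathfrak{B}_f$ yields
\begin{align}
\Pr_{H_X}(H_X \text{ bad} \mid H_Z) \le 2^{10fn} \cdot 2^{-\rho_X n/15} = 2^{-n(\rho_X/15 - 10f)} = 2^{-\Omega(n)},
\end{align}
uniformly in $H_Z$, whence $\Pr_C(H_X \text{ bad}) = 2^{-\Omega(n)}$.

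Next I would show that whenever $H_X$ is \emph{not} bad we have $\Delta_Z(\scrL(C)) > fn/\log n$. Suppose instead that some Pauli-$Z$ path $\cP = \{P(t)\}_{t=0}^T$ ending on a non-trivial logical operator satisfies $\Delta_Z(\cP) \le fn/\log n$. By Lemma~\ref{lem:config} there is a time $t'$ with $b := \partial c(\tilde{P}(t')) \in \mathfrak{B}_f$; set $c := c(\tilde{P}(t')) \in \mathfrak{C}(b)$. Since $H_X$ is not bad, $\mathfrak{C}(b) \cap \mathfrak{C}(H_X) = \emptyset$, so $c \notin \mathfrak{C}(H_X)$, which means
\begin{align}
\sum_{k=1}^{\rho_X n} \left| \sum_{j=1}^n (H_X)_{kj}\, a_j^{\ k}(c) \right| > \frac{\rho_X n}{8}.
\end{align}
By Lemma~\ref{lem:Xsyndrome}, the left-hand side is at most $|\sigma|_X(P(t'), H_X)|$, hence at most $|\sigma(P(t'))|$, so $\Delta_Z(\cP) \ge |\sigma(P(t'))| \ge \rho_X n/8 > fn/\log n$ for all sufficiently large $n$ (using $f < \rho_X/150$ and $\log n \to \infty$), a contradiction. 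This establishes $\Pr_C(\Delta_Z(\scrL(C)) \le fn/\log n) = 2^{-\Omega(n)}$. For $\Delta_X$ I would repeat the argument with $X$ and $Z$ interchanged: up to an event of probability $2^{-\Omega(n)}$ (rank deficiency of $H_Z$, or of $H_X$ conditioned on $H_Z$), the ensemble $\mathrm{CSS}_n(\rho_X, \rho_Z)$ coincides with the uniform distribution over orthogonal, full-row-rank pairs $(H_X, H_Z)$, which is symmetric under $X \leftrightarrow Z$, so the bound transfers. A union bound over the two failure events completes the proof.

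The crux of the whole argument---now packaged into Lemmas~\ref{lem:Bsize} and~\ref{lem:energy_main}---is the need to union-bound over \emph{all} string boundaries $b$ that any low-barrier path could conceivably produce, rather than a single fixed one. This is feasible only because $\mathfrak{B}_f$ has size merely $2^{O(fn)}$ (Lemma~\ref{lem:Bsize}), which in turn hinges on passing to the canonical operators $\tilde{P}$ of the mapping lemma, whose $Z$-layer data is combinatorially sparse, and because the string indicator $c(\tilde{P}(t'))$---hence its boundary $b$---depends on $P(t')$ and $H_Z$ only, so the randomness in $H_X$ invoked by Lemma~\ref{lem:energy_main} is independent of which $b$ appears (Property~\ref{itm:mapping4} of Lemma~\ref{lem:mapping}). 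Once those two estimates are in hand, balancing $10fn$ against $\rho_X n/15$ is routine and forces $f < \rho_X/150$.
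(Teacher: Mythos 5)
Your proposal is correct and follows essentially the same route as the paper's proof: it invokes the same chain of lemmas (Lemma~\ref{lem:config} to locate a time $t'$ with $\partial c(\tilde{P}(t'))\in\mathfrak{B}_f$, Lemma~\ref{lem:Xsyndrome} to convert a low-energy-barrier assumption into the membership condition $c\in\mathfrak{C}(H_X)$, Lemma~\ref{lem:energy_main} for the per-boundary probability bound, and Lemma~\ref{lem:Bsize} for the union bound), with the same threshold $150f<\rho_X$ and the same $Z$-then-$X$ symmetry argument. The only cosmetic difference is that you package the argument in contrapositive form (define "bad'' $H_X$ first, then show non-bad implies a high barrier), whereas the paper derives the bad event directly from the assumption $\Delta_Z(\scrL(C))\le fn/\log n$; your treatment of the $X\leftrightarrow Z$ symmetry is also slightly more explicit about conditioning on full row rank, which the paper treats more casually in the remark following the definition of $\mathrm{CSS}_n(\rho_X,\rho_Z)$, but the substance is identical.
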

\begin{proof}
It suffices to prove the result for the $Z$ energy barrier. A symmetric argument will then imply that the $X$ energy barrier satisfies the same bound with high probability, and the full result follows by the union bound. We adopt the same notation used in the Lemma~\ref{lem:energy_main}.

First, let us fix some $H_Z$. Let $H_X$ be a random matrix orthogonal to $H_Z$. Let $C=(H_X,H_Z)$. We will write $\Delta_Z(\mathscr{L}(C))$ for the $Z$ energy barrier associated with $\scrL(C)$. Suppose that $\Delta_Z(\mathscr{L}(C)) \le fn/\log n$ for some constant $f$. Then there exists some Pauli path $\cP = \{P(t)\}_{t=0}^T$, ending on a non-trivial logical operator $P(T)$, such that $\Delta_Z(\cP) \le fn/\log n$. It follows by Lemma~\ref{lem:config} that there exists some $t' \in [T]$ such that $c=c(\tilde{P}(t'))$ satisfies $b = \partial c \in \mathfrak{B}_f$. It also follows from Lemma~\ref{lem:Xsyndrome} that
\begin{align}
\sum_{k=1}^{\rho_Xn}\left|\sum_{j=1}^n(H_X)_{kj}a_j^{\ k}(c)\right| \le \left|\sigma|_X(P(t'),H_X)\right| \le \Delta_Z(\cP) \le \frac{fn}{\log n}. 
\end{align}
In particular, for $n$ sufficiently large, we have $fn/\log n < \rho_Xn/8$, so that $\Delta_Z(\mathscr{L}(C))\le fn/\log n$ implies $\mathfrak{C}(b) \cap \mathfrak{C}(H_X) \neq \emptyset$. It follows that
\begin{align}
\Pr_{H_X}\left(\Delta_Z(\mathscr{L}(C)) \le \frac{fn}{\log n} \bigg|\ H_Z\right) \le \Pr_{H_X}\left(\exists b \in \mathfrak{B}_f:\ \mathfrak{C}(b) \cap \mathfrak{C}(H_X) \neq \emptyset \mid H_Z\right)
\end{align}
for sufficiently large $n$. Therefore, we have
\begin{align}
\Pr_{H_X}\left(\Delta_Z(\mathscr{L}(C)) \le \frac{fn}{\log n} \bigg|\ H_Z\right) &\le \Pr_{H_X}\left(\exists b \in \mathfrak{B}_f:\ \mathfrak{C}(b) \cap \mathfrak{C}(H_X) \neq \emptyset \mid H_Z\right)\\
&\le \sum_{b \in \mathfrak{B}_f}\Pr_{H_X}\left(\mathfrak{C}(b) \cap \mathfrak{C}(H_X) \neq \emptyset \mid H_Z\right)\\
&\le |\mathfrak{B}_f|\cdot 2^{-\rho_Xn/15}\\
&\le 2^{10fn}2^{-\rho_Xn/15},
\end{align}
where we apply a union bound on the second line, Lemma~\ref{lem:energy_main} on the third line, and Lemma~\ref{lem:Bsize} on the last line. It follows that if we choose $f$ sufficiently small so that $150f < \rho_X$, then
\begin{align}
\Pr_{H_X}\left(\Delta_Z(\mathscr{L}(C)) \le \frac{fn}{\log n} \bigg|\ H_Z\right) = 2^{-\Omega(n)}.
\end{align}
This result holds for any choice of $H_Z$, so randomizing over $H_Z$ gives
\begin{align}
\Pr_{H_X,H_Z}\left(\Delta_Z(\mathscr{L}(C)) \le \frac{fn}{\log n}\right) = 2^{-\Omega(n)},
\end{align}
which is our desired result. 
\end{proof}

\subsection{Partial Self-Correction of Random Layer Codes}\label{sec:partial_self_corr_HDPC}

We show that layer codes with random input codes are partially self-correcting with high probability (taken over the choice of input code). To do so, we first define a variant of the concatenated decoder (cf. section~\ref{sec:decoder_concatenated}) that is suitable for correcting against errors with low energy barrier which can then be used in Lemma~\ref{lem:BH} (Lemma~1 of Ref.~\cite{bravyi2011analytic}) to infer partial self-correction.

We define the decoder for $Z$ errors (with $X$ errors treated analogously). Let $\scrL(C)$ be a random layer code with input code $C\sim \mathrm{CSS}_n(\rho_X,\rho_Z)$. Let $R_Z$, $R_Q$, and $R_X$ denote the $Z$-, $Q$-, and $X$-layers of code $\scrL(C)$, respectively. Let $e_0$ be a $Z$-error on $\scrL(C)$ and $\sigma(e_0)=\sigma_0$ be its syndrome. The modified concatenated decoder then proceeds as follows:

\begin{enumerate}
	\item Match all $e$-excitations on $R_Z$ to the top boundary of each $Z$-layer (which condenses $e$-anyons). These matchings are done with straight vertical strings operators. This step may create additional excitations on $R_X$.
    
	\item Next, we match all $e$-excitations on $R_Q$ to the top boundary (which is also $e$-condensing) by straight vertical string operators. Again, this step might create new excitations on $R_X$, but none on $R_Z$.
    
	\item Let $\sigma_C\in \mathbb{F}_2^{n_X}$ be the indicator vector of all $X$-layers that now contain an odd number of $e$-excitations. Notice that $\sigma_C$ can be viewed as a syndrome to the input code $C$. Let $\operatorname{Dec}_{C,Y}^Z$ be a decoder of input code $C$ that outputs a correction $\hat f_C$ of minimal $Y$-weight (see Definition~\ref{def:YC}) such that $\sigma(\hat f_C)=\sigma_C$. For every qubit in $\hat f_C$, we apply a straight vertical string operator from the top to the bottom of the corresponding $Q$-layer of the layer code. As before, additional $e$-excitations may appear on $R_X$, but none on $R_Z$.
    
	\item Finally, apply MWPM on each $X$-layer to eliminate the remaining excitations.
\end{enumerate}

We now prove the correctness of the modified concatenated decoder for errors of sufficiently low energy barrier (cf. Lemma~\ref{lem:HDPCdecoderEnergyBarrier}). We begin with a preliminary result on the correctness of the input decoder $\operatorname{Dec}_{C,Y}^Z$.

\begin{lemma}\label{lem:DecY_dist}
    Let $C\sim \mathrm{CSS}_n(\rho_X,\rho_Z)$ be a random CSS code. There exists a constant $c_0>0$ such that with probability $1 - 2^{-\Omega(n)}$, there exists a decoder $\operatorname{Dec}_{C,Y}^Z$ that corrects all $Z$-error $e$ of $Y$-weight $|e|_Y\le c_0n/\log n$.
\end{lemma}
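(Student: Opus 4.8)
The plan is to prove that the minimum-$Y$-weight decoder $\operatorname{Dec}_{C,Y}^Z$ of the input code $C$ (the one invoked in Step~3 of the modified concatenated decoder) corrects every $Z$-error of $Y$-weight at most $c_0 n/\log n$, by the usual ``distance-over-two'' argument for minimum-weight decoding, but carried out with the $Y$-weight of Definition~\ref{def:YC} in place of the Hamming weight. The one substantive ingredient will be Lemma~\ref{lem:f_bound}: for any $f$ with $6f<\rho_X$ it already guarantees that, except with probability $2^{-\Omega(n)}$, no non-trivial $Z$-logical of $C$ lies in $B_C(f)$, i.e.\ every such logical has $Y$-weight strictly larger than $fn/\log n$. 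Everything else is elementary.

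First I would record two easy facts about $|\cdot|_Y$. (i) \emph{Subadditivity}: if $w_1=\sum_{s\in S_1}s$ and $w_2=\sum_{s\in S_2}s$ with $S_1,S_2\subseteq Y_C$, then $w_1+w_2=\sum_{s\in S_1\triangle S_2}s$, hence $|w_1+w_2|_Y\le|S_1|+|S_2|$; minimizing over $S_1,S_2$ yields $|w_1+w_2|_Y\le|w_1|_Y+|w_2|_Y$. (ii) \emph{$Y_C$ spans $\mathbb{F}_2^n$}, so $\operatorname{Dec}_{C,Y}^Z$ is well defined on every syndrome of a $Z$-error: by the standing assumption that every $Q$-layer meets some $Z$-layer non-trivially, each column $j\in[n]$ of $H_Z$ is nonzero, say $(H_Z)_{ij}=1$ with $s$ the $i$-th row of $H_Z$; then $s|_{\le j}$ and $s|_{\le j-1}$ both lie in $Y_C$ (for $j=1$ one has $e_1=s|_{\le 1}\in Y_C$ directly), and their sum is the standard basis vector $e_j$.

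The main step is then the decoding argument. Fix a constant $f\in(0,1)$ with $6f<\rho_X$ and set $c_0=f/2$. By Lemma~\ref{lem:f_bound}, with probability at least $1-2^{-n(\rho_X-6f)}=1-2^{-\Omega(n)}$ we have $B_C(f)\cap L_Z(C)=\emptyset$; condition on this event. Let $e$ be any $Z$-error with $|e|_Y\le c_0 n/\log n$, and let $\hat f_C=\operatorname{Dec}_{C,Y}^Z(\sigma(e))$. Since $e$ itself has syndrome $\sigma(e)$, minimality forces $|\hat f_C|_Y\le|e|_Y\le c_0 n/\log n$. From $\sigma(e+\hat f_C)=0$ we get $e+\hat f_C\in\ker H_X$, and subadditivity together with $2\floor{c_0 n/\log n}\le\floor{fn/\log n}$ (valid since $c_0=f/2$) gives $|e+\hat f_C|_Y\le fn/\log n$, i.e.\ $e+\hat f_C\in B_C(f)$. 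The conditioned event then forbids $e+\hat f_C$ from being a non-trivial logical, so $e+\hat f_C\in\mathrm{row}(H_Z)$; hence $\hat f_C$ is stabilizer-equivalent to $e$ and the correction succeeds. Since $e$ was an arbitrary $Z$-error of $Y$-weight at most $c_0 n/\log n$, the claim follows.

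I do not expect a genuine obstacle: this is essentially a warm-up lemma whose only real content---the absence of low-$Y$-weight non-trivial logicals---is exactly Lemma~\ref{lem:f_bound}, which is already proven. The only points that need a line of care are the subadditivity of $|\cdot|_Y$ recorded above and the elementary integrality inequality $2\floor{c_0 n/\log n}\le\floor{fn/\log n}$ used to compare the error bound against the logical $Y$-weight threshold.
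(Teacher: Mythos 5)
Your proof is correct and follows the same route as the paper's: the paper simply cites as a standard fact that the minimum-$Y$-weight decoder corrects up to half the $Y$-distance and then invokes Lemma~\ref{lem:f_bound}, whereas you additionally spell out the two ingredients of that standard fact (subadditivity of $|\cdot|_Y$ and that $Y_C$ spans $\mathbb{F}_2^n$) before running the same distance-over-two argument with the same union of Lemma~\ref{lem:f_bound}. The extra detail is harmless and, if anything, a useful sanity check that the usual decoding bound transfers cleanly to the modified weight function.
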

\begin{proof}
    This is an immediate corollary of Lemma~\ref{lem:f_bound} and the fact that the minimum $Y$-weight decoder corrects errors of $Y$-weight at most $(d_Y-1)/2$, where $d_Y = \min\{fn/\log n: B_C(f)\cap L_Z(C)\ne\emptyset\}$ is the $Y$-distance of the code.
\end{proof}

Next, we establish an analog of Lemma~\ref{lem:config} for errors of sufficiently low energy barrier.

\begin{lemma}\label{lem:ak_rowHz_atleasthalf}
Let $\mathscr{L}(C)$ be a random layer code. Let $f > 0$ be the energy barrier constant from Theorem~\ref{thm:layer_energy}. Then with probability $1 - 2^{-\Omega(n)}$, all Pauli-$Z$ operators with energy barrier $\Delta_Z(P)\le fn/\log n$ satisfies the property that $a^{\underline{k}}(\hat{P})\in \row(H_Z)$ for at least half of the values $k\in [\rho_X n]$.
\end{lemma}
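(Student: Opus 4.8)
The plan is to re-run the argument behind Theorem~\ref{thm:layer_energy}, with the role of ``$P(T)$ is a non-trivial logical operator'' (which forces a non-trivial $e$-configuration on \emph{every} slab) replaced by the weaker fact that a counterexample to the lemma has a non-trivial $e$-configuration on \emph{more than half} of the slabs. For a Pauli-$Z$ operator $P \in P_Z(\scrL(C))$, let $\hat{P}$ and its string indicator $a(\hat{P})$ be as in the construction around Eq.~\eqref{eq:ajk}, and set
\begin{align}
N(P) = \bigl|\{k \in [\rho_X n] : a^{\underline{k}}(\hat{P}) \notin \row(H_Z)\}\bigr|.
\end{align}
By Remark~\ref{rmk:rowspace}, $N(P)$ counts the slab boundaries on which $\tilde{P}$ carries a non-trivial $e$-configuration, and by Property~\ref{itm:mapping4} of Lemma~\ref{lem:mapping} together with Eq.~\eqref{eq:ajk} the tensor $a(\hat{P})$, hence $N(P)$, depends only on $P$ and $H_Z$, so the statement is well posed. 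The claim is that, with probability $1 - 2^{-\Omega(n)}$ over $C$, every $P$ with $\Delta_Z(P) \le fn/\log n$ has $N(P) \le \rho_X n/2$; since $N(P)$ is an integer, the complementary ``bad event'' requires some such $P$ with $N(P) > \floor{(\rho_X n - 1)/2}$.

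The core step is a ``crossing'' argument, identical in spirit to Lemma~\ref{lem:config}. Fix $H_Z$ and suppose the bad event holds for some $P$, witnessed by a Pauli-$Z$ path $\cP = \{P(t)\}_{t=0}^T$ with $P(T) = P$ and $\Delta_Z(\cP) \le fn/\log n$. Then $N(P(0)) = N(I) = 0$ while $N(P(T)) > \floor{(\rho_X n-1)/2}$, and by Property~\ref{itm:mapping7} of Lemma~\ref{lem:mapping} the $e$-configurations of $\tilde{P}(t)$ and $\tilde{P}(t+1)$ differ on at most one slab, so $|N(P(t+1)) - N(P(t))| \le 1$; hence some $t'$ has $N(P(t')) = \floor{(\rho_X n-1)/2}$. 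Writing $c = c(\tilde{P}(t'))$ and $b = \partial c$, the configuration $c^{\underline{k}}$ is non-trivial for exactly $\floor{(\rho_X n-1)/2}$ values of $k$, and by Property~\ref{itm:mapping5} of Lemma~\ref{lem:mapping} we get $|b| = |\partial c| \le |\sigma(\tilde{P}(t'))| \le 2|\sigma(P(t'))| \le 2\Delta_Z(\cP) \le 2fn/\log n$, so $b \in \mathfrak{B}_f$. Moreover, by Lemma~\ref{lem:Xsyndrome}, $\sum_{k=1}^{\rho_X n}\bigl|\sum_{j=1}^n (H_X)_{kj}\,a_j^{\ k}(c)\bigr| \le \bigl|\sigma|_X(P(t'),H_X)\bigr| \le |\sigma(P(t'))| \le \Delta_Z(\cP) \le fn/\log n < \rho_X n/8$ for $n$ large, so $c \in \mathfrak{C}(H_X)$. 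Thus the bad event implies that some $b \in \mathfrak{B}_f$ has $c \in \mathfrak{C}(b) \cap \mathfrak{C}(H_X)$, so this intersection is nonempty.

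It then remains to take the union bound over $\mathfrak{B}_f$ exactly as in Theorem~\ref{thm:layer_energy}: for each fixed $H_Z$,
\begin{align}
\Pr_{H_X}\!\bigl(\exists\, b \in \mathfrak{B}_f:\ \mathfrak{C}(b)\cap\mathfrak{C}(H_X)\ne\emptyset \mid H_Z\bigr) \;\le\; \sum_{b \in \mathfrak{B}_f} 2^{-\rho_X n/15} \;\le\; 2^{10fn}\,2^{-\rho_X n/15},
\end{align}
where the first inequality applies Lemma~\ref{lem:energy_main} termwise after a union bound and the second uses Lemma~\ref{lem:Bsize}. Since $f$ is the same energy-barrier constant as in Theorem~\ref{thm:layer_energy}, it already satisfies $150f < \rho_X$, making this $2^{-\Omega(n)}$; as the bound is uniform in $H_Z$, averaging over $H_Z$ completes the proof. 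The one genuine departure from the earlier proof, and the point meriting the most care, is checking that the hypothesis $N(P) > \floor{(\rho_X n-1)/2}$, combined with the at-most-unit increments of $N$ along a bounded-energy path, still yields an intermediate operator whose string boundary lies in $\mathfrak{B}_f$; once that is in hand, Lemmas~\ref{lem:Xsyndrome}, \ref{lem:energy_main}, and~\ref{lem:Bsize} apply verbatim. A secondary item is the well-definedness of $N(P)$ independently of the choices in the Mapping Lemma, which is Property~\ref{itm:mapping4}.
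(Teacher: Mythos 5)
Your proposal is correct and follows essentially the same approach as the paper's proof: define the "bad" set of operators, observe via Remark~\ref{rmk:rowspace} that a bad operator has a non-trivial $e$-configuration on more than half the slabs, run the crossing argument along a minimizing Pauli path to land on an intermediate operator whose string boundary lies in $\mathfrak{B}_f$, bound the $X$-syndrome via Lemma~\ref{lem:Xsyndrome}, and close with the union bound over $\mathfrak{B}_f$ using Lemmas~\ref{lem:energy_main} and \ref{lem:Bsize}. The only cosmetic difference is that you make the crossing argument explicit by introducing the counter $N(P)$, whereas the paper simply cites "an argument identical to the proof of Lemma~\ref{lem:config}."
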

\begin{proof}
Let $\mathfrak{P}$ denote the set of all Pauli $Z$ operators with energy barrier $\Delta_Z(P)\le fn/\log n$ and $a^{\underline{k}}\in \row(H_Z)$ for less than half the values of $k\in [\rho_Xn]$. We show that $\mathfrak{P}=\emptyset$ with high probability.

First, let $P\in \mathfrak{P}$ be arbitrary. By Remark~\ref{rmk:rowspace} it follows that $c^{\underline{k}}$ is non-trivial for more than half the values of $k$. Consider a Pauli path $\mathcal{P}=\{P(t)\}_{t=0}^T$ ending on $P$ and let $c(t)=c(\tilde{P}(t))$. By an argument identical to the proof of Lemma~\ref{lem:config}, there exists a time $t'$ such that $\partial c(t')\in \mathfrak{B}_f$. Let us write $c'=c(t')$ and $b'=\partial c'$.

By Lemma~\ref{lem:Xsyndrome}, it follows that
\begin{align}
\sum_{k=1}^{\rho_Xn}\left|\sum_{j=1}^n(H_X)_{kj}a_j^{\ k}(c')\right| \le \left|\sigma|_X(P(t'),H_X)\right| \le \Delta_Z(P) \le \frac{fn}{\log n}. 
\end{align}
For $n$ sufficiently large so that $fn/\log n < \rho_Xn/8$, it follows that $\mathfrak{P}\neq \emptyset$ implies that there exists $b'\in \mathfrak{B}_f$ such that $\mathfrak{C}(b')\cap \mathfrak{C}(H_X) \neq \emptyset$. Therefore
\begin{align}
\Pr_{H_X}\left(\mathfrak{P}\neq \emptyset \mid H_Z\right) \le \Pr_{H_X}\left(\exists b' \in \mathfrak{B}_f:\ \mathfrak{C}(b') \cap \mathfrak{C}(H_X) \neq \emptyset \mid H_Z\right)
\end{align}
for sufficiently large $n$. The latter probability is $2^{-\Omega(n)}$ by the proof of Theorem~\ref{thm:layer_energy}. Finally, randomizing over $H_Z$ gives the desired result.
\end{proof}

\begin{lemma} \label{lem:HDPCdecoderEnergyBarrier}
Let $\scrL(C)$ be a random layer code with input code $C\sim\mathrm{CSS}_n(\rho_X,\rho_Z)$. There is a constant $c>0$, such that with probability $1 - 2^{-\Omega(n)}$ over the choices of the input code $C$, the modified concatenated decoder is able to correct all Pauli $Z$ errors $e\in P_Z(\scrL(C))$ of energy barrier $\Delta_Z(e) \le cn/\log n$.
\end{lemma}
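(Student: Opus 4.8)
The plan is to follow the structure of Theorem~\ref{thm:concatenateddecoder} and Lemma~\ref{lem:ldpc_decoder_barrier}, but to replace the weight bounds on the residual error (which are vacuous for a random input code, whose sparsity is $\Theta(n)$) by the two probabilistic inputs provided by Lemmas~\ref{lem:DecY_dist} and~\ref{lem:ak_rowHz_atleasthalf}. Throughout I would condition on the event --- of probability $1-2^{-\Omega(n)}$ --- on which both the minimum-$Y$-weight decoder $\operatorname{Dec}_{C,Y}^Z$ corrects every input error of $Y$-weight at most $c_0n/\log n$ (Lemma~\ref{lem:DecY_dist}) and the conclusion of Lemma~\ref{lem:ak_rowHz_atleasthalf} holds; I set $c=\min(f,c_0/3)$, with $f$ the energy-barrier constant of Theorem~\ref{thm:layer_energy}, and take $e_0\in P_Z(\scrL(C))$ with $\Delta_Z(e_0)\le cn/\log n$. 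The easy part is as follows. Steps~1 and~2 always succeed (the $Z$- and $Q$-layer top boundaries condense $e$), and since they use straight vertical strings they create new excitations only on $R_X$; thus the residual operator $e_2=e_0+\hat f_Z+\hat f_Q^1$ has excitations only on $R_X$. Running the stabilizer-reduction parts of the proofs of Lemmas~\ref{lem:Zlayercleaning} and~\ref{lem:Qlayercleaning} --- the parts that do not rely on the minimum-weight property --- shows that $e_2$ is $\scrL(C)$-stabilizer-equivalent to an operator that is a union of an operator on $R_X$ and full-height ($z$-directed) vertical strings on some set $e_C$ of $Q$-layers. Exactly as in Lemma~\ref{lem:inputdecodervalid}, $e_C$ viewed as a Pauli-$Z$ operator of $C$ has input syndrome $\sigma_C$; and once $\operatorname{Dec}_{C,Y}^Z$ returns some $\hat f_C\sim_C e_C$, the parity argument of Theorem~\ref{thm:concatenateddecoder} shows that every $X$-layer then carries an even number of excitations, so Step~4 succeeds and the total correction returns the state to the codespace (and not to a nontrivial logical) by Lemmas~\ref{lem:Qlayerstringsstabilizer} and~\ref{lem:RXopisstab}.

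It remains to show that $\operatorname{Dec}_{C,Y}^Z$ decodes $\sigma_C$ correctly, for which --- given the conditioning above --- it suffices to show that $e_C$ is $C$-stabilizer-equivalent to an operator of $Y$-weight at most $c_0n/\log n$. This is the crux. Because the strings in $e_C$ run the full height in $z$, the $e$-configuration of $e_2$ on every slab boundary $k$, pushed onto the $Q$-layers by the fusion rules, equals the same vector $e_C\in\cE_Q$; the $R_X$-part contributes nothing since $X$-layer operators lie in planes of constant $z$ and do not cross slab boundaries. Expanding $e_2=e_0+\hat f_Z+\hat f_Q^1$ and using $\mathbb{F}_2$-linearity of the slab-boundary $e$-configuration map, together with the fact that the canonical-form map of Lemma~\ref{lem:mapping} preserves boundary-equivalence classes on slab boundaries, I would obtain for every $k$ a decomposition $e_C\approx a^{\underline k}(\hat e_0)+B_k$ in $\cE_Q$, where $B_k$ is the $\cE_Q$-pushforward of the slab-boundary configurations of the matching strings $\hat f_Z$ and $\hat f_Q^1$. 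Each matched $Z$-excitation contributes one element of $Y_C$ to $B_k$ (a truncated row of $H_Z$; cf.\ Fig.~\ref{fig:distance1}) and each matched $Q$-excitation contributes one $\cE_Q$ unit vector, which has $Y$-weight at most $2$ since every column of $H_Z$ is nonzero by assumption. Since the number of matched excitations is at most $|\sigma(e_0)|\le\Delta_Z(e_0)$, this gives $|B_k|_Y\le 3\Delta_Z(e_0)\le 3cn/\log n\le c_0n/\log n$ for every $k$. Finally, Lemma~\ref{lem:ak_rowHz_atleasthalf} provides some index $k_0$ with $a^{\underline{k_0}}(\hat e_0)\in\row(H_Z)$, so $P\bigl(a^{\underline{k_0}}(\hat e_0)\bigr)$ is a stabilizer of $C$ and hence $e_C\sim_C B_{k_0}$ with $\sigma_C=\sigma(B_{k_0})$. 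Since $|B_{k_0}|_Y\le c_0n/\log n$, $\operatorname{Dec}_{C,Y}^Z(\sigma_C)$ returns $\hat f_C\sim_C B_{k_0}\sim_C e_C$, which completes the argument.

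The main obstacle is precisely the $Y$-weight estimate on $e_C$. The route used for qLDPC inputs --- transporting the error to the input code via Lemma~\ref{lem:energy_map} and controlling its energy barrier --- is useless here because the $w^2$ factor blows up for a random code. The resolution is to bypass the sparsity entirely: work directly with slab-boundary $e$-configurations in $\cE_Q$, and use Lemma~\ref{lem:ak_rowHz_atleasthalf} to locate a slab boundary on which $e_0$'s configuration is trivial in $C$, so that $e_C$ is stabilizer-equivalent to the small, explicitly bounded contribution of the matching strings alone rather than to anything that scales with the (possibly enormous) weight of $e_0$. The remaining points --- that pushing configurations onto the $Q$-layers stays within $\operatorname{span}(Y_C)$, and that the map of Lemma~\ref{lem:mapping} preserves slab-boundary equivalence classes --- are routine consequences of the fusion-rule bookkeeping already developed in this appendix, but they do need to be checked carefully.
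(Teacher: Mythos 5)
Your proof is correct and follows essentially the same approach as the paper's: both use Lemma~\ref{lem:ak_rowHz_atleasthalf} to locate a slab boundary $k_0$ where $a^{\underline{k_0}}(\hat e_0)$ is trivial, bound the $Y$-weight of the matching-string contribution by $O(\Delta_Z(e_0))$ (each matched excitation adds at most one element of $Y_C$ if on a $Z$-layer, or two if on a $Q$-layer), and then invoke the correctness of $\operatorname{Dec}_{C,Y}^Z$ from Lemma~\ref{lem:DecY_dist}. The only differences are presentational (you phrase the argument via the cleaned operator $e_C$ and the stabilizer-reduction parts of Lemmas~\ref{lem:Zlayercleaning}--\ref{lem:Qlayercleaning}, while the paper tracks the $a^{\underline{k}}$ configurations directly as excitations are matched one at a time) and a harmless overcount --- the per-excitation contribution is at most $2$, not $3$, so $|B_k|_Y\le 2\Delta_Z(e_0)$ suffices and the paper takes $c=\min(c_0/2,f)$; your $c=\min(c_0/3,f)$ still works but is slightly wasteful.
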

\begin{proof}
Let $c=\min (c_0/2,f)$, where $c_0$ and $f$ are the constants appearing in Lemmas~\ref{lem:DecY_dist} and \ref{lem:ak_rowHz_atleasthalf}, respectively. Let $e_0$ be an error with energy barrier $\Delta_Z(e_0) \le cn/\log n$. Let $a^{\ \underline{k}}_0=a^{\underline{k}}(\hat{e}_0)$, where $\hat e_0$ is defined as in Fig.~\ref{fig:excitation2}. Let $e_1$ be an error obtained from $e_0$ after matching a single excitation in $e_0$, either in a $Z$-layer or a $Q$-layer, to the top boundary. Let $a_1^{\ \underline{k}}=a^{\underline{k}}(\hat{e}_1)$. Let us compute how $a_1^{\ \underline{k}}$ differs from $a_0^{\ \underline{k}}$.

First, suppose the excitation was originally on a $Z$-layer, below the $k_0$-th slab boundary. Then $e_0$ and $e_1$ differ by a vertical string operator matching the excitation to the top boundary. This string operator modifies the $e$-configurations at or above the $k_0$-th slab boundary, and after propagating the configurations onto the $Q$-layers, the result becomes
\begin{equation}
a^{\ \underline{k}}_1 = \begin{cases}
a^{\ \underline{k}}_0 + y^{k}, & \text{ if } k\ge k_0,\\
a^{\ \underline{k}}_0, & \text{ if } k < k_0,
\end{cases}
\end{equation}
where $y^k\in Y_C$ is some vector which depends on the location of excitation. Note that the vertical string will spawn additional excitations on the $X$-layers, but these do not contribute to $y^k$.

Next, suppose the excitation is on the $i$-th $Q$-layer, below the $k_0$-th slab boundary. The matching string will again modify the $e$-configurations at or above the $k_0$-th slab boundary, this time simply by the $i$-th elementary basis vector $e_i$. If the $Q$-layer has non-trivial intersection with the $j$-th $Z$-layer, then we can write $e_i = y_{i}-y_{i-1}$, where $y_i\in Y_C^i$ and $y_{i-1}\in Y_C^{i-1}$ are truncated versions of the $j$-th stabilizer generator. Therefore
\begin{equation}
a^{\ \underline{k}}_1 = \begin{cases}
a^{\ \underline{k}}_0 + y_{i}-y_{i-1}, & \text{ if } k\ge k_0,\\
a^{\ \underline{k}}_0, & \text{ if } k < k_0.
\end{cases}
\end{equation}
Either way, the matching of any $e$-excitation to the top boundary modifies each $a_0^{\ \underline{k}}$ by the addition of at most two vectors in $Y_C$. Iterating this process, it follows that the final $e$-configuration after all excitations have been matched to the top boundary is given by
\begin{align}
a_{2}^{\ \underline{k}} = a^{\ \underline{k}}_0 + \sum_{y\in S_k} y,
\end{align}
where $S_k\subseteq Y_C$ with cardinality at most $|S_k|\le 2cn/\log n$. Note that $a_2^{\ \underline{k}}$ are the $e$-configurations of the remaining error after the completion of the first two steps of the modified concatenated decoder. At this point, there are no more excitations on the $Z$- or $Q$-layers, so it follows by Lemma~\ref{lem:equivconfigurations} that the  $a^{\underline{k}}_2$ are boundary-equivalent for all $k\in [\rho_Xn]$.

In Step 3 of the modified concatenated decoder, the syndrome $\sigma_C$ given to the input code is the indicator of the $X$-layers that contain an odd number of $e$-excitations. Note that $\sigma_C$ is the syndrome corresponding to  $a_2^{\ \underline{k}}$, regarded as an error of the input code (this is true for all $k$ since boundary-equivalent $e$-configurations correspond to stabilizer-equivalent input errors). If $a_0^{\ \underline{k}}$ is trivial for some $k=k'$, and if $\operatorname{Dec}_C^Z$ successfully corrects all errors of $Y$-weight at most $2cn/\log n$, then $a_2^{\ \underline{k}'}$ is correctable under $\operatorname{Dec}_C^Z$. The resulting correction $\hat f_C$ is therefore equal to $\sum_{y\in S_{k'}} y$ up to a $Z$ stabilizer of the input code.

Applying vertical strings on the $Q$-layers corresponding to the correction $\hat f_C$ therefore makes the $k'$-th $e$-configuration trivial. Since this process does not create excitations on $Z$- or $Q$- layers, all $e$-configurations remain boundary-equivalent by Lemma~\ref{lem:equivconfigurations}, and so all $e$-configurations on all slab boundaries are trivial. The correctness of $\operatorname{Dec}_C^Z$ implies that every $X$-layer must have an even number of excitations. Therefore we can apply MWPM in Step 4 to eliminate the remaining excitations on the $X$-layers. Since all $e$-configurations of the resulting operator are trivial, it follows by Lemma~\ref{lem:layer_slab_support} that the final result is a stabilizer, and we have successfully corrected the error.

Throughout this process, the only two assumptions made are:
\begin{enumerate}
\item The correctness of the input decoder $\operatorname{Dec}_C^Z$ on errors of $Y$-weight less than $2cn/\log n$.

\item The existence of some $k'\in [\rho_Xn]$ such that $a_0^{\ \underline{k}'}$ is trivial, i.e., $a_0^{\ \underline{k}'}\in \row(H_Z)$, for all errors $e_0$ with energy barrier $\Delta_Z(e_0) \le cn/\log n$. 
\end{enumerate}
Since $2c \le c_0$, the first assumption holds with probability $1 - 2^{-\Omega(n)}$ by Lemma~\ref{lem:DecY_dist}. Since $c\le f$, the second assumption also holds with probability $1 - 2^{-\Omega(n)}$ by Lemma~\ref{lem:ak_rowHz_atleasthalf}. A union bound implies that both assumptions hold with high probability, and the result follows.
\end{proof}

\begin{theorem}\label{thm:psc_random_layer}
    Let $C = \{C_n\}_{n=1}^\infty$ be a random family of CSS codes, where each $C_n\sim\mathrm{CSS}_n(\rho_X,\rho_Z)$ independently with $\rho_X,\rho_Z = \frac 1 2(1-1/\log n)$. With probability one, the layer code family $\scrL=\{\scrL(C_n)\}_{n=1}^\infty$ is partially self-correcting using the modified concatenated decoder $\widetilde\Phi$ with input code decoder $\operatorname{Dec}_{C,Y}^Z$. In particular, maximum memory time $t^*_\mathrm{mem} = \exp(\exp(\Omega(\beta)))$ can be achieved with cutoff length $L^*=\exp(\Theta(\beta))$ for sufficiently large $\beta$.
\end{theorem}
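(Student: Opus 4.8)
The plan is to mirror the proof of Theorem~\ref{thm:psc_ldpc_layer} for quantum Tanner layer codes, substituting its deterministic inputs with the high-probability guarantees that hold for random layer codes, and then to upgrade ``with high probability for each $n$'' to ``with probability one for the family $\{C_n\}$'' via the Borel--Cantelli lemma. First I would pin down the parameters. For $C_n\sim\mathrm{CSS}_n(\rho_X,\rho_Z)$ with $\rho_X=\rho_Z=\tfrac12(1-1/\log n)$, a uniformly random $H_Z$ has full row rank $\rho_Zn$, and a uniformly random $H_X$ in its orthogonal complement has full row rank $\rho_Xn$, each failing only with probability $2^{-\Omega(n)}$, since the row counts $\rho_Zn$ and $\rho_Xn$ are a constant fraction below the ambient dimensions $n$ and $(1-\rho_Z)n$. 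Hence $k_n=(1-\rho_X-\rho_Z)n=n/\log n$ with probability $1-2^{-\Omega(n)}$, while $\scrL(C_n)$ has $N=\Theta(n^3)$ physical qubits and linear size $L=\Theta(n)$ because $\rho_X,\rho_Z=\Theta(1)$.

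The substantive ingredient is Lemma~\ref{lem:HDPCdecoderEnergyBarrier}: there is a constant $c>0$ such that, with probability $1-2^{-\Omega(n)}$ over $C_n$, the modified concatenated decoder $\widetilde\Phi$ (with input decoder $\operatorname{Dec}_{C,Y}^Z$) corrects every Pauli-$Z$ error on $\scrL(C_n)$ of energy barrier at most $cn/\log n$; the symmetric argument handles $X$ errors. A union bound over these two failure events together with the rank events above keeps the total per-$n$ failure probability at $2^{-\Omega(n)}$, which is summable, so Borel--Cantelli produces almost surely a random finite $n_0$ such that for all $n\ge n_0$ we simultaneously have $k_n=n/\log n$ and $\widetilde\Phi$ correcting all Pauli errors of energy barrier at most $cn/\log n$ on $\scrL(C_n)$. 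Discarding the finitely many exceptional $n<n_0$ is harmless because partial self-correction is an asymptotic statement in $\beta$ and the cutoff system size diverges with $\beta$. I would then invoke Corollary~\ref{cor:partial_self_corr_decoder} on the tail family $\{\scrL(C_n)\}_{n\ge n_0}$ with $\Phi=\widetilde\Phi$ and energy cutoff $m=cn/\log n-2f=\Theta(n/\log n)$, where $f=O(1)$ is the locality constant of Lemma~\ref{lem:BH}; the hypothesis $m=\Omega(\max(k_n,\log n))$ holds since $m=\Theta(k_n)$ and $\log n=o(n/\log n)$. Exactly as in Theorem~\ref{thm:psc_ldpc_layer}, at the cutoff where $\scrL(C_{n^*})$ has $e^{\beta/2}$ physical qubits --- so $n^*=\Theta(e^{\beta/6})$, $L^*=\exp(\Theta(\beta))$ and energy cutoff $m^*=\Theta(n^*/\log n^*)=\exp(\Theta(\beta))$ --- the corollary gives $t^*_\mathrm{mem}\ge\exp[\Omega(m^*\beta)]=\exp(\exp(\Omega(\beta)))$ for all sufficiently large $\beta$, which is the claimed scaling.

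Essentially all of the hard analysis has already been done inside Lemma~\ref{lem:HDPCdecoderEnergyBarrier} (and the energy-barrier estimate Theorem~\ref{thm:layer_energy} it rests on), so for this theorem the remaining work is bookkeeping: the rank computation, the Borel--Cantelli upgrade, and checking the corollary's hypotheses. The one point that genuinely needs care --- and where I expect any difficulty to concentrate --- is the scaling $\rho_X=\rho_Z=\tfrac12(1-1/\log n)$: it is tuned precisely so that $k_n=n/\log n$ is of the same order as the decoder's energy cutoff $\Theta(n/\log n)$ rather than $\Theta(n)$. In the Bravyi--Haah bound the exponent is $\tfrac12\beta m-k_n\log 2-\log N=\tfrac{n}{\log n}\bigl(\tfrac12 c\beta-\log 2\bigr)-\Theta(\log n)$, which turns positive and grows once $\beta$ exceeds a constant; with constant rates one would have $k_n=\Theta(n)$ instead, and at the relevant cutoff size the factor $2^{k_n}$ would dominate $e^{-\beta m/2}$ at any fixed temperature, rendering the bound vacuous.
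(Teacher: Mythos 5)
Your proposal matches the paper's proof line for line: high-probability correction of low-energy-barrier errors from Lemma~\ref{lem:HDPCdecoderEnergyBarrier}, Borel--Cantelli to obtain an almost-sure tail $n\ge n_0$, and then Corollary~\ref{cor:partial_self_corr_decoder} applied at cutoff size $\exp(\beta/2)$, with the welcome extra of making explicit the full-rank claim behind $k_n=n/\log n$, the corollary hypothesis $m=\Omega(\max(k_n,\log n))$, and the reason the rate $\rho_X=\rho_Z=\frac12(1-1/\log n)$ is tuned to match the energy cutoff. One small quantitative slip: a random $H_X$ orthogonal to a full-rank $H_Z$ fails to have full row rank with probability $2^{-\Omega(n/\log n)}$, not $2^{-\Omega(n)}$, since the dimension gap $(1-\rho_Z)n-\rho_Xn=n/\log n$ is a vanishing (not constant) fraction of the ambient dimension $(1-\rho_Z)n\approx n/2$ --- this is still summable, so the Borel--Cantelli step and the conclusion are unaffected.
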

\begin{proof}
    Let $c$ be the constant in Lemma~\ref{lem:HDPCdecoderEnergyBarrier}, and $M_n$ be the event that the decoder $\widetilde\Phi$ can correct errors of energy barrier up to $m=cn/\log n$ for the code $\sL(C_n)$. Thus, $\Pr(M_n)\ge 1-2^{-\gamma n}$, where $\gamma$ is some positive constant.
    Because $\sum_{n=1}^\infty 2^{-\gamma n} < \infty$, the Borel-Cantelli lemma implies that with probability one, there exists some $n_0$ such that $M_n$ holds for all $n\ge n_0$.
    
    For sufficiently large $\beta$, the cutoff size $\exp(\beta/2)$ for the layer code family $\sL(C_n)$---which gives cutoff length $L^*=\exp(\Theta(\beta))$---is such that the input code size is at least $n_0$. Therefore, the decoder $\widetilde\Phi$ can correct errors with energy barrier up to the cutoff $m^*=\Theta(L^*/\log L^*)$. The result then follows from Corollary~\ref{cor:partial_self_corr_decoder}.
\end{proof}

\begin{remark}\label{rem:psc_random_layer_efficient}
Note that the decoder $\operatorname{Dec}_C^Z$ of the random input code is not necessarily an \emph{efficient} algorithm, by similar reasoning as the NP-hardness of maximum likelihood decoding for random codes~\cite{Berlekamp78,PhysRevA.83.052331}.
Although the modified concatenated decoder is more efficient than the naive energy barrier decoder, its complexity means that the result of Theorem~\ref{thm:psc_random_layer} only implies that $\scrL(C)$ is only a partially self-correcting system.
However, if we restrict to errors of at most $O(\log n)$ energy barrier and codes with $k = O(\log n)$ logical qubits, we obtain a polynomial-time decoder.
The result is a partially self-correcting memory with maximum memory time scaling exponential in $\beta^2$ (instead of doubly exponential in $\beta$).
This is comparable to the cubic code~\cite{bravyi2011analytic} but with more logical qubits.
\end{remark}

\begin{remark}\label{rem:LscalingHDPC}
    Similar to Remark~\ref{rem:Lscaling}, we may consider the memory time scaling with the linear system size $L$ up to $L^*=\exp(\Theta(\beta))$. Following the same arguments, we obtain that for sufficiently large $\beta$ and $L$, the memory time scales as $t_{\mathrm{mem}} = \exp(\Omega(\beta \log L))$ or $t_{\mathrm{mem}} = \exp(\Omega(\beta L/\log L))$ for the modified concatenated decoder that is required or not required to be efficient, respectively.
\end{remark}

A summary of our results on the partially self-correcting quantum systems or memories based on layer codes is presented in Table~\ref{tab:results}.

\begin{table}[htpb]
    \centering
    \begin{tabular}{|c|c|c|c|c|}
        \hline
        \textbf{Input code} & $\boldsymbol k$ & $\boldsymbol{t^*_{\mathrm{mem}}}$ & \textbf{Decoder} & \textbf{Efficient?}\\\hline
        Quantum Tanner & $\Theta(L)$ & $\exp(\exp(\Omega(\beta)))$ & Concatenated & Yes\\\hline
        Random & $\Theta(L/\log L)$ & $\exp(\exp(\Omega(\beta)))$ & Modified Concatenated & No\\\hline
        Random & $\Theta(\log L)$ & $\exp(\Omega(\beta^2))$ & Modified Concatenated & Yes\\\hline
    \end{tabular}
    \caption{
    Overview of partially self-correcting layer codes with different types of input codes.  All results have cutoff length $L^*=\exp(\Theta(\beta))$. An efficient decoder implies that the code is also a SCQM.}
    \label{tab:results}
\end{table}

\section{Numerical Simulations}\label{sec:numerics}

To complement the analytical results we also numerically study the performance of layer codes. First, in Appendix~\ref{sec_est_mem_time}, we describe the simulations that we perform to estimate the memory time.
Then, in Appendix~\ref{sec:rand_layer_codes}, we choose a family of random layer codes and study their memory times.
We provide estimates of the maximum memory times $t_{\text{mem}}^*$ and cutoff system sizes $n^*$.
Finally, in Appendix~\ref{sec:num_3Dtoric}, we simulate the 3D toric code, which is self-correcting for $X$ errors, to contrast the behavior of partial self-correction of random layer codes.

\subsection{Estimating the Memory Time}
\label{sec_est_mem_time}

The Hamiltonian of a CSS code consists of pairwise commuting terms that decompose as $H_{\rm mem}=H_{X,\rm mem}+H_{Z,\rm mem}$, where $H_{X,\rm mem}$ and $H_{Z,\rm mem}$ comprise terms 
corresponding to the $X$- and $Z$-type stabilizer generators, respectively (cf. Eq.~\ref{eq:H_mem}).
The interaction of a CSS code with a heat bath of finite temperature can be studied for both $X$- and $Z$-type stabilizer generators separately. This allows us to use standard tools from statistical mechanics~\cite{QMFT,bravyi2011analytic}. In particular, the memory time can be simulated by a Monte Carlo algorithm with Glauber~\cite{GlauberDynamics} dynamics for $X$- and $Z$-type errors separately. Since the layer code construction treats Pauli $X$ and $Z$ errors symmetrically, it suffices to consider just $X$-type errors. We consider only $X$-type errors in the remainder of this appendix.

Consider the set $S_Z$ of $Z$-type stabilizer generators of a CSS code of block length $N$. We can associate a spin Hamiltonian $H:\{\pm 1\}^N\to \mathbb{R}$ to $S_Z$, defined as
\begin{align}
    H(\sigma) = -\frac12 \sum_{s\in S_Z}\prod_{i\in \supp(s)} \sigma_i  ,
\end{align}
which assigns energy to each spin configuration $\sigma=(\sigma_1,\ldots,\sigma_n)\in\{\pm 1\}^{N}$. 
The set of ground states for $H$ corresponds to the codewords of the classical code $C_Z=\ker H_Z$, where $H_Z$ is the parity check matrix associated with $S_Z$.

At time $t=0$ the state $\sigma$ is initialized in a ground state of the Hamiltonian $H$. The Monte Carlo simulation with Glauber dynamics models thermal noise due to the memory-bath interaction as a sequence of spin-flips, which corresponds to applications of the respective Pauli-$X$ errors. For each time step, a spin index $i\sim \mathrm{Uniform}([N])$ is drawn uniformly and independently at random. The spin $i$ is then flipped with probability depending on the energy change it causes, i.e., $\Delta E_i(\sigma)=H(\sigma_1,\ldots,\sigma_{i-1},-\sigma_i,\sigma_{i+1},\ldots, \sigma_N) - H(\sigma)$. We have
\begin{align}
    \sigma_i \mapsto \begin{cases}
        -\sigma_i, & \quad \text{ with probability } \frac{1}{1+\exp(\beta \Delta E_i(\sigma))},\\
        \sigma_i, & \quad \text{ otherwise} ,
    \end{cases}\label{eq:spinflip_acceptance}
\end{align}
where $\beta$ is the inverse temperature. This evolution then drives the state of the system to the Gibbs distribution.

Given a syndrome decoder for the CSS code, the memory time $t_{\text{fail}}$ is the first time at which the memory becomes corrupted, meaning that the encoded logical information cannot be recovered using the decoder. To determine $t_{\text{fail}}$, in each time step, we obtain the error syndrome from the spin configuration, attempt to decode it, and verify if the decoder introduces a logical error.

At low temperatures, the probability of accepting a spin flip in each round is  small, cf. Eq.~\eqref{eq:spinflip_acceptance}, which slows down the standard Monte Carlo approach. To address this, we employ the kinetic or $n$-fold-way Monte Carlo method~\cite{Young_1966,BORTZ1975}. This is a rejection-free algorithm which samples a spin $i$ with probability proportional to $(1+\exp(\beta\Delta E_i(\sigma)))^{-1}$, flips the resulting spin, and then advances the simulation time by a stochastic amount proportional to the waiting time in configuration $\sigma$. Here, the waiting time is the average time elapsed before a spin is flipped in the regular Monte Carlo method. This provides an effective algorithm which avoids numerous rejected moves due the low acceptance probability in Eq.~\eqref{eq:spinflip_acceptance}, thereby speeding up the simulation. A pseudocode for this approach is given in Algorithm~\ref{alg:nfoldwayMC}.

Since the decoder runtime is significantly slower than updating individual spins and we have to probe dynamics over exponentially-long periods of times, we perform decoding at geometrically-increasing time intervals. Specifically, we run the decoder if the simulation time has increased by at least $10\%$ relative to the time of the previous decoder run.

\begin{algorithm}[H]
    \caption{$n$-fold way Monte Carlo~\cite{BORTZ1975} memory time simulation using Glauber dynamics with geometric decoding schedule}\label{alg:nfoldwayMC}
    \begin{algorithmic}[1]
    \Require{$\operatorname{Dec}$ a syndrome decoder algorithm.}
        \State Initialize spins $\sigma=(\sigma_1,\ldots, \sigma_N)\gets (1,\ldots,1)$
        \State Initialize simulation time $t\gets 0$ and time until next decoding $t_{\rm dec}\gets 0$.
        \State Let $\Delta E(H)=\{\Delta E_i(\sigma) : i\in [N] \text{ and } \sigma\in \{\pm 1\}^N\}$ be all possible energy changes caused by a single flip.
        \State Initialize sets $\{M_j\}_{j\in \Delta E(H)}$, where each $M_j$ contains spin indices of $\sigma$ whose flip cause energy change $j$.
        \State Compute rates $\{P_{j}\}_{j\in \Delta E(H)}$ as $P_{j}\gets \frac{1}{1+\exp(\beta j)}$
        \Repeat
        \State Compute $Q_k =\sum_{j \in \Delta E(H) : j\le k} P_j |M_j|$ for all $k \in \Delta E(H)$ 
        \State Draw $R$ uniformly at random from interval $[0,\max_k(Q_{k}))$ 
        \State Find $\ell$ such that $Q_{\ell-1}\le R<Q_{\ell}$, where we set $Q_{m-1}=0$ for $m=\min \Delta E(H)$
        \State Draw a spin index $i\sim \mathrm{Uniform}(M_\ell)$ uniformly at random

        \State Flip spin with index $i$, i.e., set $\sigma_i\gets -\sigma_i$
        \State Update $\{M_j\}_{j\in \Delta E(H)}$ so that it corresponds to the new spin configuration $(\sigma_1,\ldots,\sigma_{i-1},-\sigma_i,\sigma_{i+1},\ldots, \sigma_N)$
        \State Compute $\Delta t= R'/\max_k(Q_{k})$ for a random value $R'\sim \mathrm{Exp}(1)$
        \State Advance the simulation time $t\gets t+\Delta t$ and decrease the time until next decoding $t_{\rm dec}\gets t_{\rm dec}-\Delta t$
        \If {$t_{\rm dec}\le 0$}
        \State Compute the error vector $e=\tfrac{1}{2}((1,\ldots,1)-\sigma)\in \mathbb{F}_2^N$ and its syndrome $s\in\mathbb{F}_2^{|S_Z|}$
        \State Decode syndrome $s$ with algorithm $\operatorname{Dec}$.
        \If{the decoder results in a logical error}
            \State \Return $t$
        \EndIf
        \State Advance the next decoding time geometrically $t_{\rm dec}\gets 0.1 t$
        \EndIf
        \ENDREPEAT
    \end{algorithmic}
\end{algorithm}

\begin{remark}
    All simulations of threshold and memory time for layer codes were done with the original layer code construction~\cite{williamson2023layer}. This construction differs slightly from the construction used to obtain theoretical bounds in previous sections since the $X$- and $Z$-check layers do not stretch from the first to the last qubit but stretch only through the qubits the checks are supported on. We note that the difference for random input codes is minimal with high probability. Additionally, the distance between the surface code layers was set to $K=1$, which we do not expect to affect the result.
\end{remark}

\subsection{Random layer codes}\label{sec:rand_layer_codes}

In this section, we study the memory times of random layer codes. For each input code size $n$, we sample $20$ random layer codes as follows: Let us say that a CSS code $C$ is \emph{balanced} if $d_X=d_Z$. We sample $1000$ balanced CSS codes with $k=1$ and keep the first $20$ with the highest observed distance. Our layer codes are then constructed from these $20$ random input codes.

Let us briefly comment on the choice behind this random input code family. We choose $k=1$ to simplify the numerics, and we choose our CSS codes to be balanced to maintain symmetry between $X$ and $Z$, since we only perform the simulation for $X$-errors. We maximize the distance in order to avoid small size effects associated small distance input codes. For instance, we display the frequency of $(d_X,d_Z)$ and their marginals for $10000$ randomly sampled CSS codes with $n=11$ and $k=1$ in Fig.~\ref{fig:LC_HDPC_threshold_dist_distr}(b).
We observe that codes with distance $1$ forms a large fraction of random CSS codes for such small $n$. A layer code constructed with a distance $1$ input code will effectively act as a 2D surface code, i.e., it will possess a logical operator which consists of a single string, see Section~\ref{sec:LC:logicals}. To avoid degenerate scenarios such as this, we post-select on our sampled codes for maximum distance. 

We first numerically find the logical error rate as a function of the physical error rate for random layer codes. The results are shown in Fig.~\ref{fig:LC_HDPC_threshold_dist_distr}(a).

\begin{figure}[!ht]
    \centering
    (a)\includegraphics[width=0.46\linewidth]{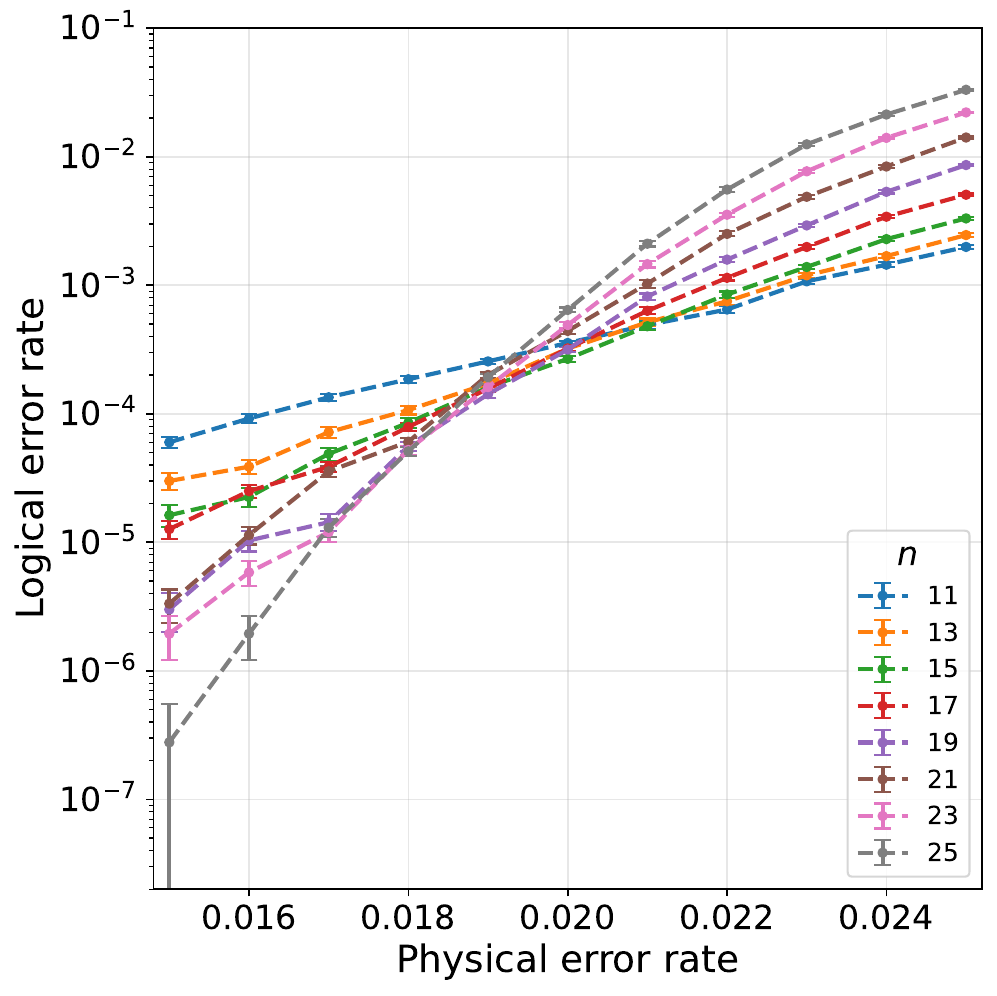}
    (b) \raisebox{2mm}[0pt][0pt]{\includegraphics[width=0.46\linewidth]{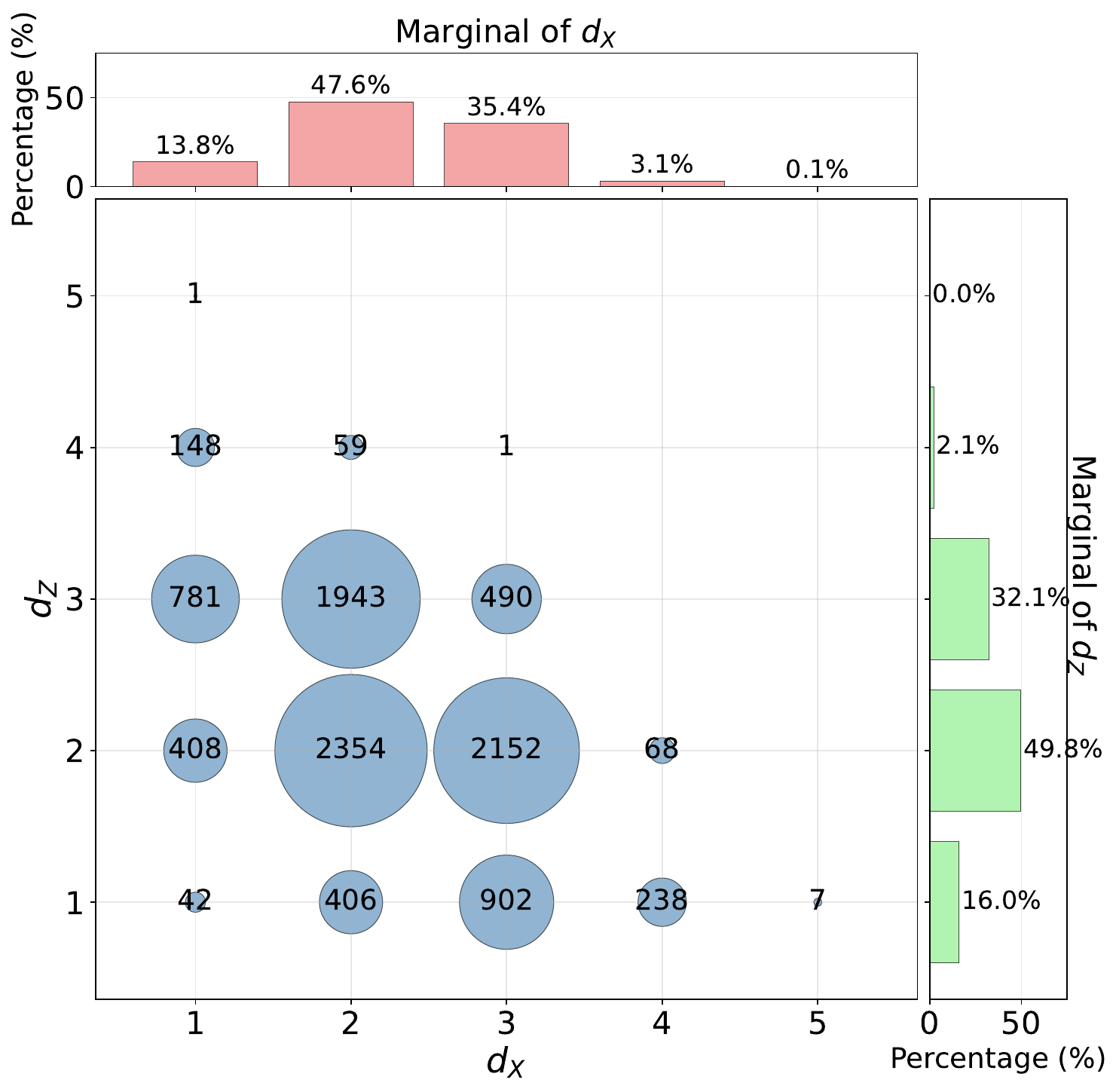}}
    \caption{(a) We numerically find the logical error rate achieved with cluster decoder as a function of the physical error rate for independent bit-flip errors for random layer codes. The error bars display standard error. (b) Frequencies of $(d_X,d_Z)$ distances and their marginals for random CSS codes of small length $n=11$ and $k=1$ obtained from $10000$ random samples. We can observe that the codes of distance $1$ have large representation.
    }
    \label{fig:LC_HDPC_threshold_dist_distr}
\end{figure}

We next performed memory time simulations on random layer codes subject to thermal noise as described in Appendix~\ref{sec_est_mem_time}. For each input size, bath temperature, and random layer code instantiation, we run $40$ simulations of the memory time. The memory time simulations are performed with the $n$-fold way Monte Carlo (see Algorithm~\ref{alg:nfoldwayMC}) using the cluster decoder. We then computed the sample average memory time $\langle t_{\rm fail}\rangle$ over all trials and codes for each pair $(n,\beta)$. The result is displayed in Fig.~\ref{fig_memory} of the main text for $\langle t_{\rm fail}\rangle$ as a function of $n$ for various $\beta$, and in Fig.~\ref{fig:coherence_time_random_LC_scaling_with_beta}(a) for $\langle t_\text{fail}\rangle$ as a function of $\beta$ for various $n$. The error bars in both figures depict the standard error of the mean (SEM), $\overline{\sigma}/\sqrt{m}$, where $\overline{\sigma}$ is the sample standard deviation and $m$ is the number of samples. 

We attempted to verify if the observed memory times $\langle t_{\mathrm{fail}}\rangle$ are consistent with the theoretical bounds obtained for random layer codes in Section~\ref{sec:partial_self_corr_HDPC}. We emphasize that the cluster decoder used to numerically simulate the memory time is different from the concatenated decoder for which we prove partial self-correction. We do not expect the cluster decoder to correct errors with large energy barrier. Moreover, since the analytical guarantees for the random layer codes are asymptotic in large $n$, finite size effects may be significant.

From the analytic bounds (see Theorem~\ref{thm:psc_random_layer} and Remark~\ref{rem:LscalingHDPC}) we expect a memory time scaling as
\begin{align}
    t_{\rm mem }\ge \exp(\Omega(\beta n/\log n)) \qquad \text{ for }\qquad n\le n^*=\exp(O(\beta)).
\end{align}
In contrast, it appears that a power law scaling of $\langle t_{\mathrm{fail}}\rangle$ with $n$ is a better fit to our numerical data obtained with cluster decoder. Consider the ansatz
\begin{align}
    \langle t_{\mathrm{fail}}\rangle = \Theta( n^{a\beta+b}) \qquad \text{ for }\qquad n\le n^*=\exp(O(\beta)), \label{eq:ansatz_poly}
\end{align}
which implies that $\langle t_{\mathrm{fail}}\rangle^*$ behaves quadratically. We determined a maximum size $n^*$ and the corresponding maximum memory time $\langle t_{\mathrm{fail}}\rangle^*$ for each inverse temperature $\beta$, and using this data, we fit a quadratic function to $\log  \langle t_{\mathrm{fail}}\rangle^*$ in Fig.~\ref{fig_memory}(c) of the main text. We obtain a best fit of $\log \langle t_{\mathrm{fail}}\rangle^*\approx 0.695\beta^2-7.112\beta+26.073$. Next, we also fit the cutoff size $\log n^*$ to a linear function in Fig.~\ref{fig_memory}(b) of the main text, obtaining a best fit function of $\log n^* \approx 0.448\beta-0.562$. Finally, we determine the constants $a$ and $b$ of the ansatz~\eqref{eq:ansatz_poly} as follows: For each $\beta$, we fit a linear function $\log \langle t_\mathrm{fail}\rangle \approx s(\beta)\log n+\mathrm{c}$ and extracted the slope $s(\beta)$. The linear fits for each $\beta$ are displayed in the Fig.~\ref{fig_memory} of the main text. We then fitted the slopes $s(\beta)$ to a linear function $s(\beta)\approx a\beta+b$ to extract the values of $a$ and $b$. The results are displayed in Fig.~\ref{fig:coherence_time_random_LC_scaling_with_beta}(b). We observe $a\approx 1.732$ and $b\approx -13.235$. The results appear to be consistent with the theoretical ansatz of Eq.~\eqref{eq:ansatz_poly}, which confirms partial self-correction as formally defined in Definition~\ref{def:pscqm} since we observe $\langle t_{\mathrm{fail}}\rangle^*=\exp(\Theta(\beta^2))=\exp(\omega(\beta))$. This scaling is similar to cubic code obtained in Ref.~\cite{bravyi2011analytic}. 

\begin{figure}[!ht]
    \centering
    (a)\includegraphics[width=0.45\linewidth]{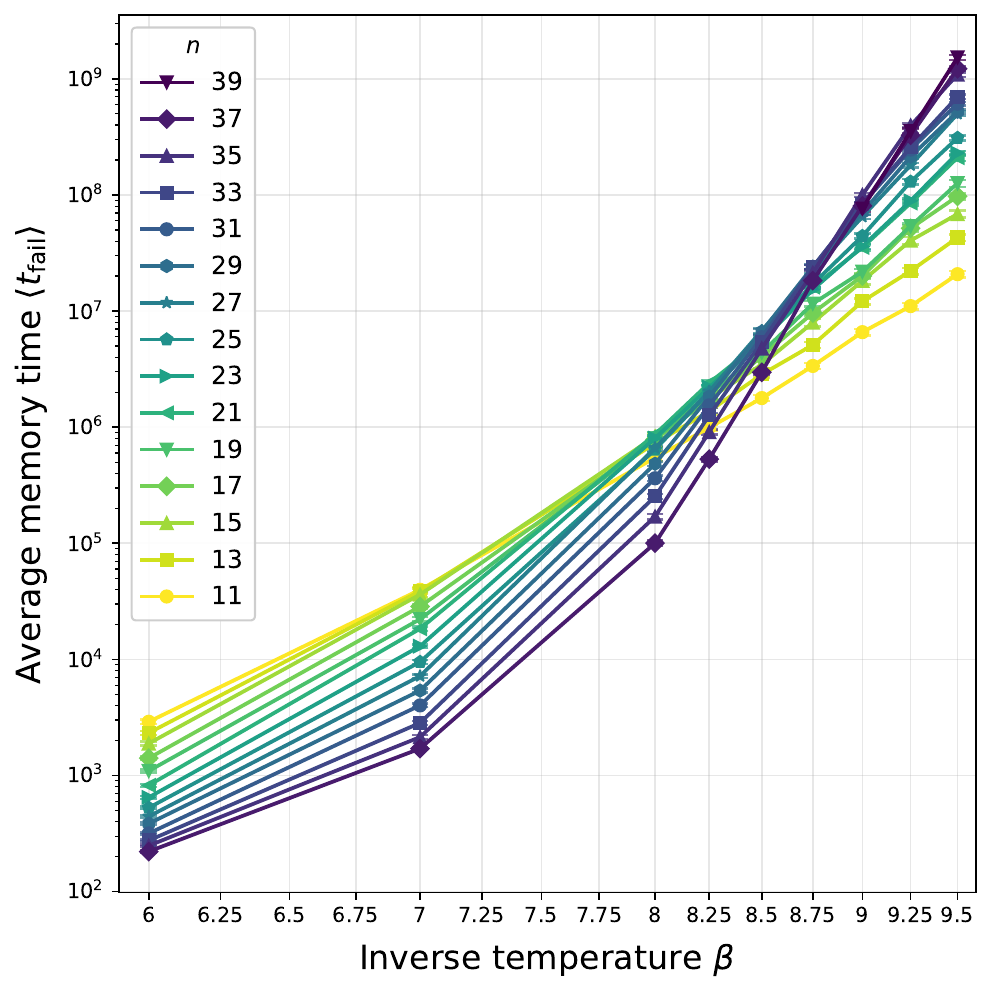}
    (b)\includegraphics[width=0.45\linewidth]{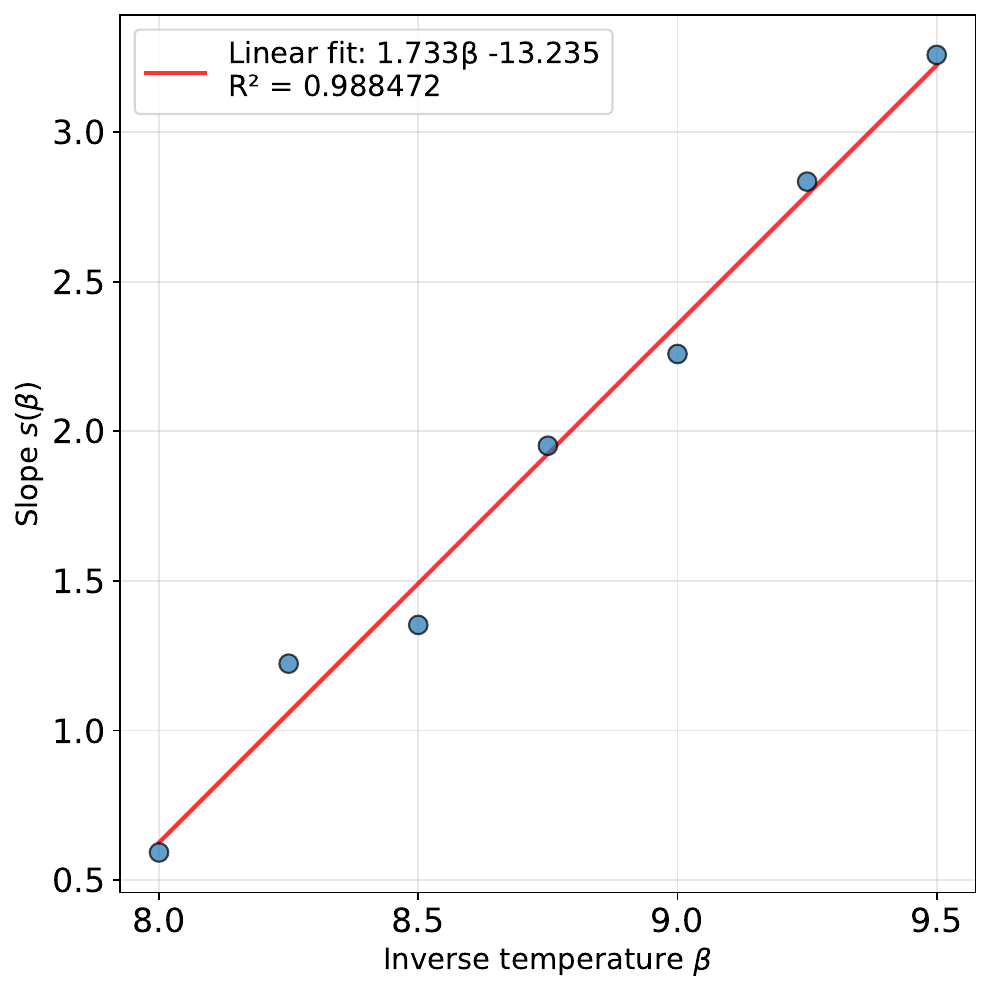}
    \caption{(a) Average memory time of random layer codes  as a function of inverse temperature $\beta$
    for various random input code lengths $n$.
    Error bars indicate the standard error of the mean (SEM). The solid lines are depicted for visual guidance.
    (b) Analysis of the power law exponent for memory time with ansatz $\log \langle t_{\rm fail}\rangle \approx (a\beta+b)\log n$. The obtained fit is $\log \langle t_{\rm fail}\rangle \approx (1.732\beta -13.235)\log n$.
    }\label{fig:coherence_time_random_LC_scaling_with_beta}
\end{figure}

\subsection{Comparison with the 3D Toric Code}\label{sec:num_3Dtoric}

To contrast the partial self-correcting behavior of random layer codes with a known self-correcting memory, we also simulate the memory time of the 3D toric code, which is a self-correcting memory for $X$-errors. Using the sweep decoder~\cite{KubicaPreskill,Vasmer2021}, we obtain the following results.
In Fig.~\ref{fig:coherence_time_3DtoriccodeZ}(a), we find the threshold of the sweep decoder under the bit-flip Pauli $X$ noise; our numerical estimate is in agreement with the previously reported values.
In Fig.~\ref{fig:coherence_time_3DtoriccodeZ}(b), we display the estimated memory time of the 3D toric code and clearly observe that the average memory time scales exponentially with the linear system size $L$, i.e., $\langle t_{\rm fail}\rangle=\exp(\Omega(\beta L))$, with no observed upper bounds. The memory time scaling is consistent with the analytical bounds of Ref.~\cite{alicki2008thermalstabilitytopologicalqubit}, demonstrating a qualitatively different behavior from that of the random layer codes depicted in Fig.~\ref{fig_memory} of the main text.

\begin{figure}[!ht]
    \centering
    (a)\includegraphics[width=0.46\textwidth]{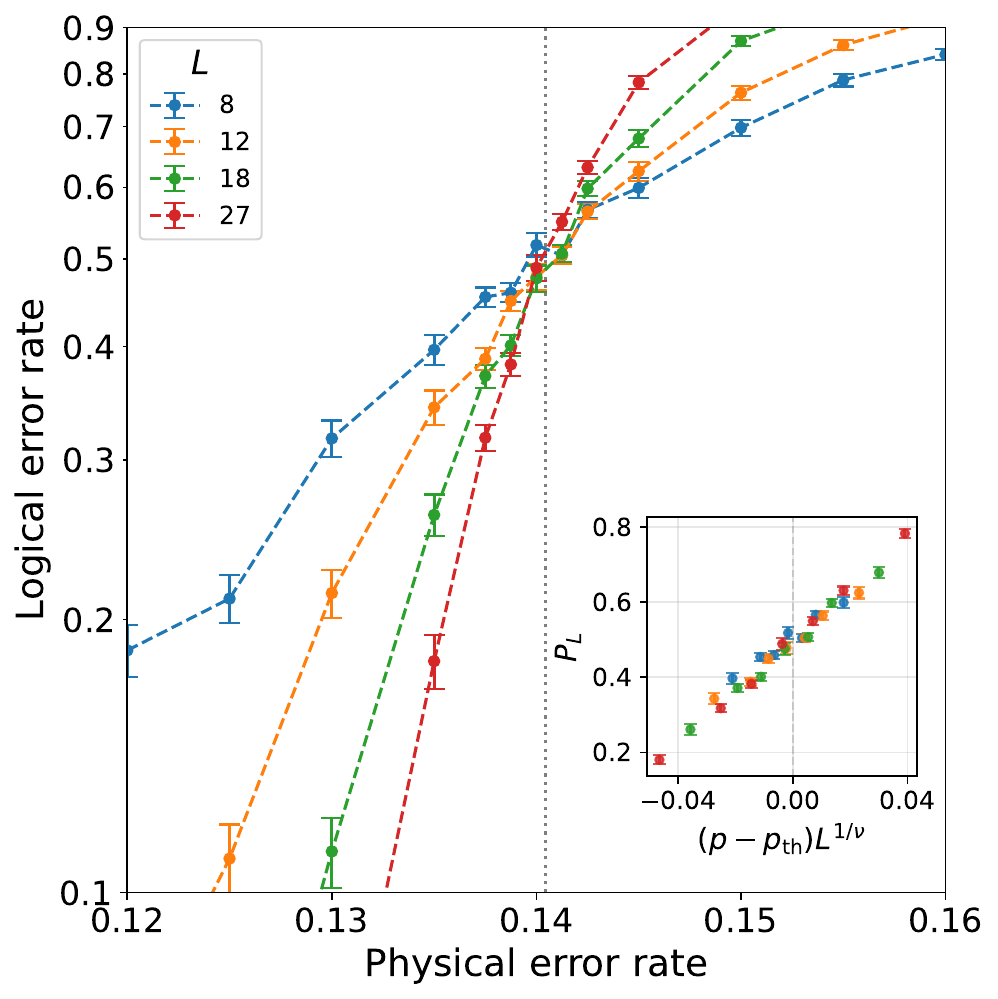}
    (b)\includegraphics[width=0.46\textwidth]{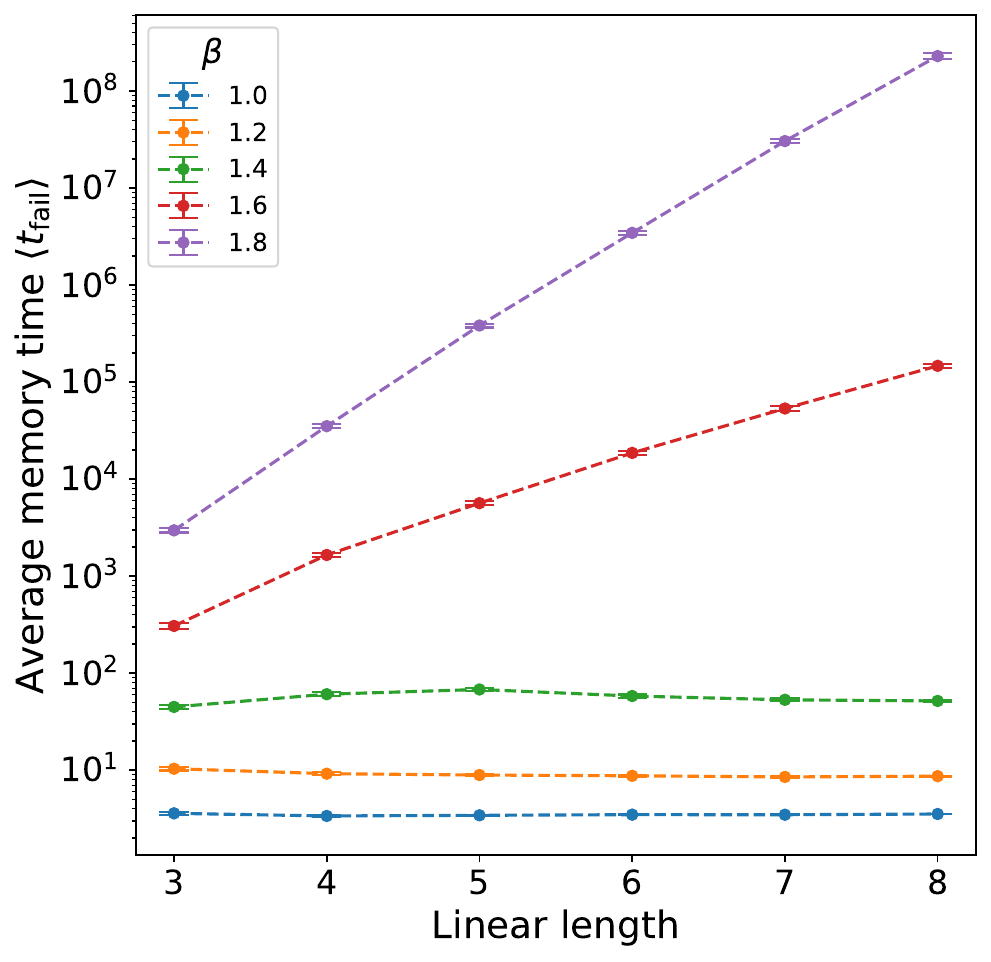}
    \caption{(a) The threshold of 3D toric code on the cubic lattice with periodic boundary under an independent and identically-distributed bit-flip noise using sweep decoder~\cite{KubicaPreskill,Vasmer2021} is estimated to be $p_{\text{th}}\approx 0.14$. In the inset, we display the same data using rescaled variable $(p-p_\text{th})L^{1/\nu}$ with linear length $L$ and fitting parameters $p_\text{th}$ and $\nu=1.535$.
    (b) Log plot of the sample average memory time of the $3D$ toric code for the bit-flip noise as a function of the linear system size $L$ for various inverse temperatures $\beta$.
    Each data point is obtained by taking $500$ samples.
    Error bars indicate standard error in both panels.
    The dashed lines are only displayed for visual guidance.
    }
    \label{fig:coherence_time_3DtoriccodeZ}
\end{figure}

\FloatBarrier
\bibliographystyle{alphaurl}
\bibliography{bib}

\end{document}